\documentclass[11pt]{article}

\usepackage[letterpaper,margin=1in]{geometry}
\usepackage[T1]{fontenc}
\usepackage{setspace}
\usepackage{times}
\usepackage{comment}
\usepackage[numbers]{natbib}
\usepackage{amsmath,amssymb,amsfonts,mathtools,bm}
\usepackage{amsthm}
\usepackage{booktabs}
\usepackage{longtable}
\usepackage[shortlabels]{enumitem}
\usepackage{graphicx}
\usepackage{xcolor}
\usepackage{caption}
\usepackage{subcaption}
\usepackage{tikz}
\usepackage{pgfplots}
\pgfplotsset{compat=1.18}
\usepgfplotslibrary{groupplots,fillbetween}
\usetikzlibrary{arrows.meta,positioning,calc,shapes,fit,decorations.pathreplacing}
\usepackage{math}
\usepackage{hyperref}
\usepackage[nameinlink,noabbrev]{cleveref}

\graphicspath{{imgs/}}
\makeatletter
\def\input@path{{./}{manuscript/}{appendix/}}
\makeatother

\hypersetup{
    colorlinks=true,
    linkcolor=blue,
    citecolor=blue,
    urlcolor=blue
}

\newcommand{\authorblock}{
Jingyu Liu$^{1}$, Bolin Zhang$^{1}$, Lin William Cong$^{2}$,\\
Siguang Li$^{1}$, and Xuechao Wang$^{1}$}
\newcommand{\affiliationblock}{
$^{1}$The Hong Kong University of Science and Technology (Guangzhou)\\
$^{2}$Nanyang Technological University\\
\texttt{jliu514@connect.hkust-gz.edu.cn; bzhang342@connect.hkust-gz.edu.cn}\\
\texttt{will.cong@ntu.edu.sg; siguangli@hkust-gz.edu.cn}\\
\texttt{xuechaowang@hkust-gz.edu.cn}}

\title{\bfseries From PBS to ePBS: the Microstructure of Block Building}
\author{\authorblock\\[0.5em]\small \affiliationblock}
\date{}

\definecolor{EPBSBlue}{RGB}{31,78,121}
\definecolor{EPBSRed}{RGB}{166,54,54}
\definecolor{EPBSGreen}{RGB}{68,128,86}
\definecolor{EPBSGray}{RGB}{90,90,90}
\definecolor{EPBSBidViolet}{RGB}{117,79,145}
\definecolor{EPBSBidOchre}{RGB}{181,121,31}
\definecolor{EPBSBidCyan}{RGB}{31,143,154}
\newcommand{\Myerson}{\mathrm{Myerson}}

\newtheorem{definition}{Definition}
\newtheorem{theorem}{Theorem}
\newtheorem{lemma}{Lemma}
\newtheorem{proposition}{Proposition}

\newtheorem{assumption}{Assumption}
\newtheorem{corollary}{Corollary}

\crefname{section}{section}{sections}
\Crefname{section}{Section}{Sections}
\crefname{figure}{figure}{figures}
\Crefname{figure}{Figure}{Figures}
\crefname{table}{table}{tables}
\Crefname{table}{Table}{Tables}
\crefname{equation}{equation}{equations}
\Crefname{equation}{Equation}{Equations}
\crefname{definition}{definition}{definitions}
\Crefname{definition}{Definition}{Definitions}
\crefname{theorem}{theorem}{theorems}
\Crefname{theorem}{Theorem}{Theorems}
\crefname{lemma}{lemma}{lemmas}
\Crefname{lemma}{Lemma}{Lemmas}
\crefname{proposition}{proposition}{propositions}
\Crefname{proposition}{Proposition}{Propositions}
\crefname{remark}{remark}{remarks}
\Crefname{remark}{Remark}{Remarks}
\crefname{assumption}{assumption}{assumptions}
\Crefname{assumption}{Assumption}{Assumptions}
\crefname{corollary}{corollary}{corollaries}
\Crefname{corollary}{Corollary}{Corollaries}
\crefname{question}{open question}{open questions}
\Crefname{question}{Open Question}{Open Questions}

\begin{document}
\maketitle

\begin{abstract}
Ethereum's Glamsterdam upgrade introduces enshrined proposer-builder separation (ePBS), replacing relay-centric PBS with direct builder bids to proposers.
We study how this shift changes the block-building microstructure through a general imperfect-information  two-stage auction with verifiable messages, where an early bid serves as both a price offer and a signal.
PBS and ePBS are modeled as restrictions of the same block-building game: PBS fixes stopping and disclosure exogenously, while ePBS lets the proposer choose stopping and disclosure ex post.
Latency heterogeneity is captured by asymmetric information updates: fast builders observe disclosed early information before rebidding, while slow builders do not.
We combine exact perfect Bayesian equilibrium characterizations in tractable cases with calibrated no-regret learning in finite games.
For PBS, we show that separating equilibria preserve the standard first-price-auction payoff benchmark and provide conditions for their existence.
For ePBS, we demonstrate a ratchet effect: because the proposer can defer block proposal and use early bid information in the second stage, builders anticipate ex-post extraction and shade or pool early bids, generating allocation inefficiency and revenue-efficiency valleys.

We interpret this ratchet distortion as a commitment failure.
Under full commitment, the optimal policy collapses to the static Myerson auction and removes the ratchet channel.
To realize part of this commitment advantage in a feasible mechanism, we propose a Trusted Execution Environment (TEE) sidecar that enforces limited commitment.
We formulate the revenue-maximizing TEE mechanism as a bilinear optimization problem.
In conservative finite benchmarks, the TEE design increases the proposer revenue relative to the first-price benchmark by approximately \(25\%\).
\end{abstract}

\vspace{0.5em}
\noindent\textbf{Keywords:} blockchain; decentralized finance; market microstructure; auction design; mechanism design

\vspace{0.5em}
\noindent\textbf{JEL Classification:} D44; D47; G10; G23; L14

\section{Introduction}

Ethereum block construction has become a specialized market.  Professional builders assemble execution payloads, compete for the proposer's consensus right to include a block, and monetize private order flow and MEV opportunities through Proposer-Builder Separation (PBS) \citep{ethereumMevDocs2026,flashbotsMevBoostDocs,ethereumBuilderSpecs}.  The current production implementation, MEV-Boost, relies on relays to mediate fair exchange between builder payments and execution payloads.  This relay layer is operationally important, but highly concentrated: by July 2026, the top five relays accounted for over 90\% of block production \citep{rated_relays_ethereum_2026}.

ePBS, slated for Ethereum's upcoming Glamsterdam upgrade, is motivated by reducing this relay dependence and making the proposer-builder interface more protocol-facing \citep{eip7732,ethereumGlamsterdam2026,gloasP2PInterface}.  In ePBS, builders should submit signed bids through direct channels or peer-to-peer gossip.\footnote{Given that the Ethereum block-building market is highly latency sensitive~\citep{moallemiLatencyAdvantagesCommonValue2025,GeographyBlockBuilding2025}, Titan, the dominant builder, expects P2P bids to be used sparingly because they are much slower than direct bids \citep{titanbuilder_epbs_2025}.}
The key strategic changes concern \textbf{block proposal timing} and the resulting \textbf{information structure}.  As illustrated in Figure~\ref{fig:pbs-epbs-flow}, unlike the relay-mediated open auction in PBS, ePBS allows proposers to strategically defer winning-bid selection, equivalently delaying the beacon block proposal, and to use selective information disclosure to induce higher subsequent bids.  At the same time, under the Ethereum consensus protocol, later beacon block proposal reduces the probability that the selected bid becomes canonical~\citep{silvaAttestationTimings2025}.

\begin{figure}[!ht]
\centering
\resizebox{0.98\textwidth}{!}{
\begin{tikzpicture}[
    >=Stealth,
    actor/.style={draw, thick, rectangle, minimum width=1.55cm, minimum height=0.72cm, inner sep=4pt, align=center, font=\small\bfseries},
    builder/.style={draw, thick, rectangle, minimum width=1.05cm, minimum height=0.58cm, inner sep=3pt, align=center, font=\small},
    flow/.style={->, thick},
    optflow/.style={->, thick, dashed},
    flowlabel/.style={font=\scriptsize, align=center, inner sep=1pt, text width=1.72cm},
    title/.style={font=\bfseries\large, anchor=west}
]
\begin{scope}[local bounding box=PBSFlow]
    \node[title] at (-0.65,2.15) {PBS: relay-mediated};
    \node[builder] (PB1) at (0,1.05) {$B_1$};
    \node[builder] (PB2) at (0,-0.65) {$B_2$};
    \node[actor] (Relay) at (3.15,0.2) {Relay};
    \node[actor] (Prop) at (6.45,0.2) {Proposer};
    \node[actor] (Att) at (6.45,-2.05) {Attesters};
    \draw[flow] (PB1.east) -- (Relay.north west);
    \draw[flow] (PB2.east) -- (Relay.south west);
    \node[flowlabel] at (1.35,0.15) {1. bids +\\payloads};
    \draw[optflow] (Relay.east) to[bend left=13] node[flowlabel, pos=0.5, above=4pt] {2. bids broadcast} (Prop.west);
    \draw[optflow] (Relay.north west) to[out=135,in=15,looseness=1.15] (PB1.north east);
    \draw[optflow] (Relay.south west) to[out=-145,in=-15,looseness=1.15] (PB2.south east);
    \draw[flow] (Prop.west) to[bend left=13] node[flowlabel, pos=0.5, below=4pt] {3. winning\\bid} (Relay.east);
    \draw[flow] (Relay.south) to[out=-70,in=-125,looseness=1.35] node[flowlabel, pos=0.55, below=5pt] {4. winning payload} (Prop.south);
    \draw[flow] (Prop.south) -- node[flowlabel, pos=0.55, right=3pt] {5. block\\broadcast} (Att.north);
\end{scope}
\begin{scope}[shift={(10.3,0)}, local bounding box=EPBSFlow]
    \node[title] at (-0.65,2.15) {ePBS: protocol-facing};
    \node[builder] (EB1) at (0,1.05) {$B_1$};
    \node[builder] (EB2) at (0,-0.65) {$B_2$};
    \node[actor] (EProp) at (3.15,0.2) {Proposer};
    \node[actor] (EAtt) at (6.45,-2.05) {Attesters};
    \draw[flow] (EB1.east) -- node[flowlabel, pos=0.45, above=4pt] {1. signed\\bids} (EProp.north west);
    \draw[flow] (EB2.east) -- (EProp.south west);
    \node[flowlabel] at (1.55,-1.2) {2. optional\\selected disclosure};
    \draw[optflow] (EProp.south west) to[out=-155,in=-15,looseness=1.35] (EB2.east);
    \draw[flow] (EProp.south east) -- node[flowlabel, pos=0.55, above=4pt] {3. beacon block\\(winning bid)} (EAtt.north west);
    \draw[flow] (EB2.south east) .. controls (2.15,-2.35) and (5.6,-2.6) .. node[flowlabel, pos=0.64, below=4pt] {4. payload\\broadcast} (EAtt.west);
\end{scope}
\draw[dashed, thick, gray] ($(PBSFlow.north east)+(0.55,0)$) -- ($(PBSFlow.south east)+(0.55,0)$);
\end{tikzpicture}
}

\caption{Relay-mediated PBS and protocol-facing ePBS block-building flows.  }
\label{fig:pbs-epbs-flow}

\end{figure}

This tension is the central mechanism of the paper.
Immediate proposal treats the winning bid as a first-price payment offer, whereas deferred proposal turns early bids into verifiable signals that can influence later bidding.
Thus, ePBS gives the proposer ex-post flexibility to wait and disclose strategically to induce higher bids, but only at the cost of a lower canonicalization probability.
\emph{We ask how this flexibility reshapes block-building microstructure relative to PBS and what equilibrium outcomes it generates.}

To characterize this new microstructure, however, it is not enough to study block building in a homogeneous environment.  The  block-building market is heterogeneous along two system-level margins: \textit{proposer-side commitment} and \textit{builder-side latency}.

\noindent \textbf{Commitment Advantage.} Institutional proposers or relays may have stronger commitment technology than solo proposers, for example through repeated interaction, reputation, or dedicated commitment infrastructure \citep{validatoranon}.  This matters because, after observing early signed bids, an uncommitted proposer may prefer to defer proposal and disclose information to induce higher later bids. The lack of commitment may distort their overall revenue.

\noindent \textbf{Latency Advantage.} In current Ethereum block building, builders already compete through geographic co-location and real-time bidding infrastructure within a short block-building window \citep{GeographyBlockBuilding2025,yang2025designing,moallemiLatencyAdvantagesCommonValue2025}.
Latency heterogeneity creates an imperfect-information game with an asymmetric information structure: fast builders can observe and react to newly revealed bid information, whereas slow builders cannot.
This asymmetric ability to condition later bids on continuation information may generate a latency premium.

\noindent \textbf{Block Building Model.} After documenting the relevant block-building institution, we propose a general two-stage block-building game: builders first observe their private values and submit initial bids.
The proposer then reviews the stage-$1$ bid history and either commits to a block immediately or defers to a second round. 
If she defers, she can send both public and private verifiable messages from a realized signal set before stage-$2$ bidding commences. 
Fast builders observe this continuation event and the accompanying messages in time to respond, whereas slow builders cannot. 
To reflect the reality that delayed blocks are less likely to be confirmed by consensus, a deferred proposal carries a weakly lower success probability.
PBS and ePBS are modeled as restrictions of this generalized game: PBS has relay-like public bid disclosure and exogenous timing, while ePBS grants the proposer endogenous control over both block proposal timing and disclosure power.

\noindent \textbf{Equilibrium Analysis.}
We first characterize the benchmark cases in which PBS and ePBS reproduce the one-shot first-price auction (FPA) outcome. We then show that in the PBS benchmark with exogenous proposal timing, any Perfect Bayesian equilibrium (PBE) with separating stage-1 bids delivers the same interim builder payoff as the standard FPA. Under i.i.d. regular values with \(n\) builders, a symmetric strictly separating PBS PBE exists whenever \(q_1\geq 1/2\), whereas no such equilibrium exists when \(q_1<1/n\). For i.i.d. uniform values, the nonexistence region strengthens to \(q_1\leq 1/3\). 

We then turn to ePBS, where the proposer chooses whether to stop after observing signed stage-1 bids.
When delayed canonicalization is sufficiently unlikely, immediate stopping remains sequentially optimal and ePBS can retain the FPA path.
Once continuation becomes attractive, however, a stage-1 bid becomes both a payment offer and a signal that the proposer can use in the continuation game.
In a tractable two-fast-builder ePBS environment, we formalize this ratchet channel by constructing a family of uni-pooling PBEs: low types pool at zero, high types separate on a positive branch, and feasible selections can generate both lower proposer revenue and lower allocation efficiency relative to the FPA benchmark.

To cross-validate our theoretical findings, we ask whether the same mechanism appears in analytically intractable realistic settings with non-i.i.d. values, heterogeneous latency, and private disclosure. We compute calibrated no-regret benchmarks using Counterfactual Regret Minimization. Calibrated to real-world builder valuations, PBS approaches FPA-like outcomes as stage-1 settlement becomes more likely, while ePBS develops a revenue-efficiency valley when continuation becomes credible.
The calibrated results also show that ePBS compresses, but does not eliminate, the fast-builder latency premium.

\noindent \textbf{TEE-based Mitigation.} Finally, we interpret the ePBS distortion as a commitment failure and use it to characterize proposer-side commitment advantage. Institutional proposers with full ex-ante commitment can avoid the ratchet-induced revenue distortion by implementing an optimal committed proposer rule, which coincides with the Myerson auction outcome~\citep{myerson1981optimal}.
To mitigate this commitment advantage in a protocol-facing way, we study a Trusted Execution Environment (TEE) sidecar as a limited commitment device. The TEE lets the proposer commit ex ante to stopping and disclosure policies while preserving the terminal highest-bid, pay-as-bid rule. The mitigation problem therefore becomes a constrained information-design problem: \emph{which TEE policy induces a PBE with maximal proposer revenue?} We formulate the policy as a direct information kernel, characterize the PBE feasibility constraints, and reduce the design problem to a polynomial-size bilinear program. In a discrete benchmark, the optimal TEE policy raises proposer revenue by roughly \(25\%\) relative to FPA across latency profiles.  

\noindent \textbf{Literature Review.}
This paper contributes to four related bodies of work.
A first set of papers studies the microstructure of Ethereum block building under relay-mediated PBS. Empirically, \cite{ozWhoWinsEthereum2024,yangDecentralizationEthereumsBuilder2025} show that builder profit margins are closely related to exclusive order flow and document the centralization of the builder market. \cite{wangPrivateOrderFlows2024,wuCompetitionCentralizationOligopoly2024} connect private order flow, repeated bidding, and oligopolistic outcomes. Theoretical work explains why this concentration is structural rather than incidental: \cite{bahraniCentralizationBlockBuilding2024} studies how heterogeneous block-building rewards interact with PBS to generate concentration; \cite{guptaCentralizingEffectsPrivate2023} and \cite{paiStructuralAdvantagesIntegrated2023} show how private flow and vertical integration amplify builder advantages inside the block auction; and \cite{capponiProposerBuilderSeparationExclusive} analyzes the centralizing role of exclusive order flow. 
A complementary strand emphasizes that production PBS is a latency-sensitive auction. \cite{wuStrategicBiddingWars2024} model latency-sensitive bidding races, although builders choose from a small finite menu of pre-specified strategies. \cite{moallemiLatencyAdvantagesCommonValue2025} isolate latency advantage in a common-value auction by allowing the fast builder, but not the slow builder, to observe the realized value. \cite{hafnerFrontRunningCandleAuctions,gehrleinCandleAuctionField2025} study candle auctions with exogenous ending times. Relative to this literature, we study how the shift from PBS to ePBS reshapes the auction structure, and the corresponding equilibrium outcomes. 

The recent ePBS literature studies the protocol consequences of making the proposer-builder interface more direct. The block-builder community has emphasized that builder behavior may change once proposer-controlled auctions become feasible \citep{BuilderBiddingBehaviors2024,TrustedAdvantageSlot2024,titanbuilder_epbs_2025}. On the academic side, \cite{mazorraFreeOptionProblem2025} highlight the free-option and liveness risks created by ePBS; \cite{wangEnshrinedProposerBuilder2026} study ePBS as a consensus-layer response to MEV-driven distributional and centralization concerns; and \cite{zhangBoostEquitableIncentiveCompatible2026} propose a neighboring equitable block-building design.

The closest concurrent work studies competing relay and in-protocol auctions in ePBS, showing how a sealed first-price protocol channel can unravel second-price or open relay auctions, how last-look disclosure creates latency advantages, and how non-leakage commitments can restore a sealed-bid benchmark \citep{mazorraCompetingAuctionsIntermediated2026}. Our paper is complementary but studies a different mechanism. That work treats the in-protocol ePBS channel as a sealed first-price auction and represents latency primarily through a reduced last-look disclosure structure. We instead model ePBS as a imperfect-information dynamic block-building game in which the active proposer can strategically choose block proposal timing and the information disclosed. Heterogeneous latency determines how builders’ information sets update before the second stage.

Conceptually, our analysis connects ePBS to auction theory, information design, and mechanism design limited commitment. First-price auctions are known to be sensitive to bidders’ information structures \citep{bergemannFirstPriceAuctionsGeneral2017}. \cite{kamenicaBayesianPersuasion2011} provide the canonical sender-commitment benchmark, while \cite{bergemannInformationDesignBayesian,bergemannBAYESCORRELATEDEQUILIBRIUM} extend the information-design approach to multi-player games through Bayes-correlated equilibrium. We use this perspective inside the block-building microstructure: in ePBS, the proposer is not only an auctioneer, but also an information designer who controls bid-history disclosure after observing early bids. The key distinction from standard information-design models is limited commitment. With unrestricted proposer commitment, the optimal ePBS design is closely related to Myerson’s optimal auction \citep{myerson1981optimal}. Native ePBS is different because the proposer cannot credibly commit to its ex-post stopping and disclosure actions before bids arrive. This links our analysis to sequentially optimal mechanisms, auctions without full commitment, credible auctions, and persuasion under weak institutions, where early actions may be distorted when future mechanisms or information policies can be revised ex post \citep{skretaSequentiallyOptimalMechanisms2006,skretaOptimalAuctionDesignNonCommitment2015,liuAuctionsLimitedCommitment2019,dovalMechanismDesignLimited2022,gerardiDynamicContractingLimitedCommitment2020,akbarpourCredibleAuctionsTrilemma2020,lipnowskiPersuasionWeakInstitutions2022,bestPersuasionLongRun2024,kreutzkampPersuasionExPostCommitment2024}. Our TEE-based mitigation can therefore be read as a limited public-commitment device for the specific stopping and disclosure margins created by ePBS.

Finally, our calibrated analysis relates to computational equilibrium for large imperfect-information extensive-form games. Counterfactual Regret Minimization and its variants are standard no-regret tools for such games \citep{zinkevichRegretMinimizationGames,lanctotMonteCarloSampling2009,tammelinSolvingLargeImperfect2014,brownSolvingImperfectInformationGames2019}. In general-sum extensive-form games, no-regret dynamics generally approximate extensive-form coarse correlated equilibrium outcomes rather than exact PBE characterizations \citep{farinaCoarseCorrelationExtensive2020}. We therefore use the calibrated computation as a robustness and external-validity exercise for the analytically characterized mechanisms, rather than as a substitute for the equilibrium results.

\noindent \textbf{Roadmap.}
The rest of the paper proceeds as follows.  \Cref{sec:institution} describes the block-building institution, distinguishing relay-mediated PBS from protocol-facing ePBS, and maps proposal timing, disclosure, latency, and canonicalization risk to the model's primitives. \Cref{sec:model} defines the two-stage block-building game and formalizes PBS and ePBS as restrictions of a common framework.  \Cref{sec:equilibrium_analysis} develops the analytical benchmark: FPA-like outcomes in all-slow and PBS environments, separating equilibria in PBS, and uni-pooling equilibria in simplified ePBS.  \Cref{subsec:realistic-validation} then tests the same mechanisms in a calibrated no-regret environment with non-i.i.d. values, latency heterogeneity, and private disclosure.  \Cref{sec:commitment_adv} interprets the ePBS distortion as a commitment problem, compares it with full commitment, and studies a TEE sidecar as a constrained information-design mitigation.  \Cref{sec:conclusion} concludes.

\section{Institutional Details: Block Building, PBS, and ePBS}
\label{sec:institution}

This section describes the block-building institutions that motivate our analysis. Block building is a short-horizon, latency-sensitive market in which the proposer allocates the right to supply the block payload to professional builders. 
PBS and ePBS are two institutional arrangements for this market.
In current PBS, proposers delegate much of this market's control to relays, while upcoming ePBS moves proposal timing and information control back to the proposer.

\noindent \textbf{The Per-Slot Block-Building Process.}
In most PoS chains like Ethereum, one proposer has the consensus right to propose a block in each slot.
The block-building pipeline has four economically distinct steps.
\begin{itemize}
    \item \textbf{MEV Identification.} Searchers scan the public mempool and on-chain and off-chain markets for profitable opportunities, such as cross-venue arbitrage around CEX--DEX price differences and liquidations \citep{wu_et_al:LIPIcs.AFT.2025.26}.  They package these opportunities as bundles and submit them to builders as private transactions.
    \item \textbf{Block Building.} Builders aggregate both private and public order flow into candidate execution payloads.  The value of a candidate payload differs across builders because builders have different private order flow, searcher relationships, latency and real-time optimization ability \citep{flashbotsMevBoostDocs}.
    \item \textbf{Block Auction.} Builders compete for the block-inclusion right by submitting payload commitments and bids through a block auction interface \citep{flashbotsMevBoostDocs,ethereumBuilderSpecs}.
    \item \textbf{Block Proposal and Attestation.} Within the per-slot consensus period, the proposer selects a winning bid, includes the corresponding commitment in the beacon block, and broadcasts the block to the network.  Attesters then vote on the proposed block, determining whether the selected payload and payment become part of the canonical chain.
\end{itemize}

From a market-design perspective, the proposer is the seller of block inclusion rights, and builders are bidders whose values come from heterogeneous block-building opportunities.

\begin{figure}[htbp]
  \centering
  \includegraphics[width=\linewidth]{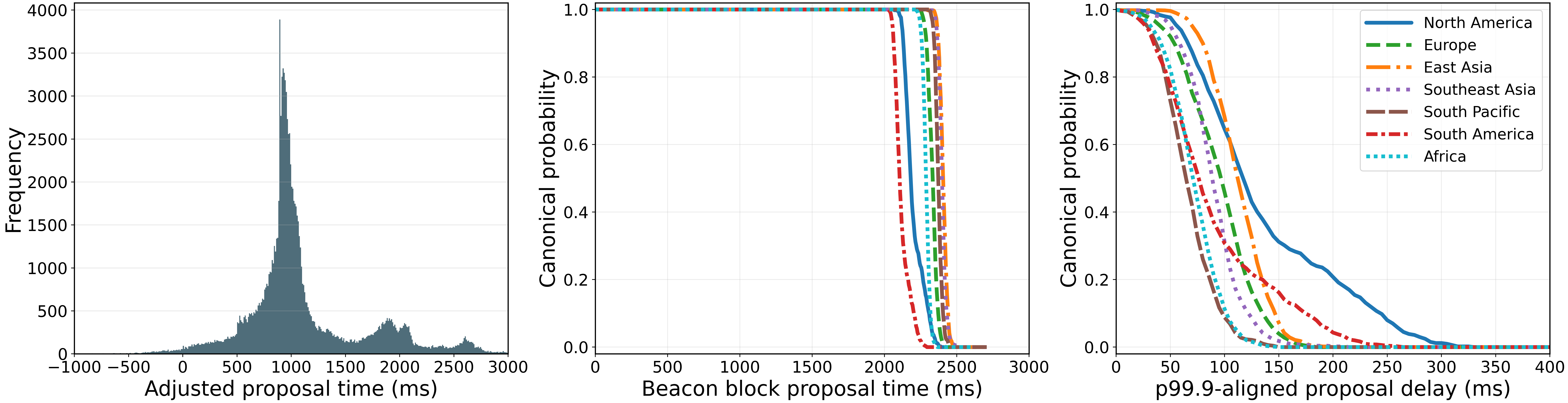}
  \caption{Winning-bid timing and proposing reliability. The left panel shows the empirical timing of winning MEV-Boost PBS bids. The middle panel maps in-slot (adjusted) beacon block proposal time into proposer-region-specific block canonical probabilities.
  The right panel aligns canonicalization curves by the last no-loss block-proposal timing.}
  \label{fig:winning-bid-timing-reliability}
\end{figure}

\noindent \textbf{Relay-Mediated PBS Block Auction.}
Ethereum's current PBS implementation, MEV-Boost, is relay-mediated \citep{flashbotsMevBoostDocs}.
Builders send bids and payload commitments to relays rather than asking the proposer to verify and manage the full auction directly \citep{flashbotsMevBoostDocs,ethereumBuilderSpecs}.
Relays solve a fair-exchange problem: builders can compete for the proposer's slot without revealing the pending full payload to the proposer before selection, while the proposer can rely on the relay to deliver the selected payload after a winning bid is chosen.

The resulting auction is high-frequency and latency-sensitive.
Within a 12-second consensus slot, builders may submit thousands of bids to relays; relays maintain live bid streams; and builders update bids in response to auction-state information \citep{ultrasoundTopBidWebsocket,titanBuilderIntegration,yangDecentralizationEthereumsBuilder2025}.
This latency sensitivity shapes builder infrastructure: major builders invest in low-latency connectivity and often co-locate with relays, so they can observe the latest top-bid updates and keep bid updates open until the last milliseconds of the slot \citep{GeographyBlockBuilding2025,moallemiLatencyAdvantagesCommonValue2025}.

The proposer runs a local MEV-Boost client and requests the best available header near its configured proposal time.
At that time, the current highest bid is selected, included in the beacon block, and proposed to attesters.
MEV-Boost is natively pay-as-bid: the selected builder pays its submitted bid.
The proposer-side timing is not generally known to builders at the beginning of the slot, because relays mediate the interaction and builders do not observe the selected proposer's location, network delay, or client configuration.
Thus, from an auction-theory perspective, the PBS block auction is best understood as an open, soft-close auction with an exogenous termination time.
The left panel of \Cref{fig:winning-bid-timing-reliability} plots this timing using public MEV-Boost winning-bid data \citep{dataalwaysMevBoostData2026}.\footnote{Some relays, including Ultrasound, also provide bid-adjustment features that can make realized payments resemble second-price-style outcomes \citep{ultrasoundBidAdjustment,titanbuilder_epbs_2025,mazorraCompetingAuctionsIntermediated2026}.  When we use a sealed second-price outcome as a benchmark below, it should be read as representing this adjusted-PBS variant, not the baseline soft-close PBS institution.}

\noindent \textbf{Proposal Timing and Attestation Risk.}
The Ethereum block-building structure naturally allows the proposer to defer the winning-bid selection, waiting for higher-value bids.
But waiting is costly.
In Ethereum, a later proposal leaves less time for propagation and attestation, so the proposed block is less likely to be accepted as canonical \citep{schwarz-schillingTimeMoneyStrategic2023}.
Using real-world geolocations from \citep{chainbound2026geolocating} and Google Cloud latency data \citep{google_datastudio_report_2026}, we simulate block propagation and translate network delay into region-specific proposal reliability in the middle panel of \Cref{fig:winning-bid-timing-reliability}.\footnote{Following the practice in \cite{ethpandaops60MGasSepoliaHoodi2025,ethpandaopsEIP7691Retrospective2025}, we count a beacon block as canonical when at least two thirds of attesters attest to it.}
This institution creates a basic timing trade-off: waiting can improve the bid selected by the proposer, but it also leaves less time for the selected block to become accepted by the network.  In the model, this trade-off is summarized by a reduced-form canonicalization probability.  The early proposal stage is normalized to have canonicalization probability one, while continuation beyond the last reliable proposal time carries a lower probability of becoming canonical.

\noindent \textbf{ePBS Block Auction.}
The upcoming Ethereum Glamsterdam upgrade includes ePBS, which addresses the above fair-exchange problem without introducing external relays \citep{eip7732,ethereumGlamsterdam2026,gloasValidatorGuide,gloasBuilderGuide}.
In ePBS, builders can submit block auction bids to the proposer through direct channels or through peer-to-peer public gossip, although public gossip is expected to be used sparingly in latency-sensitive bidding because it is much slower than direct message channels \citep{gloasP2PInterface,titanbuilder_epbs_2025}.
Such direct bid submission reshapes the block proposal timing and information structure in the block auction.
In MEV-Boost PBS, relays provide live bid streams, while proposer anonymity and unknown proposal timing make the auction's end largely external to the realized bid path.
In ePBS, by contrast, the proposer obtains direct control over two margins that are largely mediated by relay infrastructure under PBS: proposal timing and bid-history disclosure. The proposer observes signed bids before deciding when to select a winning bid, and can decide how the live bid history is disclosed back to builders. Because bids are signed, a forwarded bid record is verifiable. The relevant disclosure margin is therefore which realized signed bid records are revealed, to whom they are revealed, and at what time.
Latency then becomes an information-set constraint.  A low-latency builder may receive the (partially) disclosed bid history before updating its bid; a high-latency builder has to submit its final update without observing this information.  

These timing and disclosure margins make early bids strategically informative rather than merely allocative. After observing early signed bids, an active ePBS proposer may wait, disclose selected bid information, and use the continuation round to induce additional competition among builders. Builders therefore anticipate that aggressive early bids can be used against them later in the slot. This institutional feature motivates the dynamic auction model below, in which early bids are both price offers and signals, and proposer-side timing and disclosure decisions affect builders’ incentives to reveal value.

\section{Block-Building Game}
\label{sec:model}
\label{sec:model_2stage}

This section formulates the formal block-building game. Within this framework, we distinguish PBS from ePBS: PBS operates under exogenous timing and disclosure rules, whereas ePBS grants the proposer strategic control over block proposal timing and selective information disclosure.

\noindent \textbf{General Block Proposing Game.}
Fix a stopping profile $\phi$ and a message profile $\psi$.
The block-proposing game $\Gamma(\phi,\psi)$ has one proposer $P$ and $n$ builders $B_1,\dots,B_n$.
Each $B_i$ has a public type in $\{\textit{fast}, \textit{slow}\}$.
Agents are rational and non-cooperative.
The timeline is:

\begin{enumerate}[\bfseries (i)]
	\item Nature draws the private valuation \(v=(v_1,\dots,v_n)\) from a public joint distribution $\mathsf{F}$.
	\item Each builder $B_i$ observes its private valuation $v_i$ and places a stage-1 bid.
	Given the bid space $\mathcal B_{i,1}:=\mathbb{R}^+$, $B_i$ submits the stage-$1$ bid $b_{i,1}$ according to the \textit{stage-$1$ bidding rule} $\beta_{i,1}(\cdot\mid v_i) \in \Delta(\mathcal B_{i,1})$, with auxiliary information $\texttt{aux}_{i,1}$.
	$\texttt{aux}_{i,1}$ contains the metadata specified by the block-building mechanism (e.g., the signature of the bidder).
	Let $\mathbf b_1:=(b_{1,1},\dots,b_{n,1})$ and $\texttt{aux}_1:= (\texttt{aux}_{1,1}, \ldots, \texttt{aux}_{n,1})$.
\item After the stage-$1$ bids are submitted, the signed-bid record generates the \textit{verifiable bid history}
\[
\mathcal E(\mathbf b_1,\texttt{aux}_1)
:=
\bigl\{(j,b_{j,1},\texttt{aux}_{j,1}):j\in[n]\bigr\}.
\]
Write $\mathcal E$ for this realized pool.
Each element of $\mathcal E$ is publicly verifiable, and the proposer cannot fabricate a signed bid that was not submitted.

\item $P$ then privately observes the verifiable bid history \(\mathcal E\).
Define the information set
$I_{P,1}:=\mathcal E$ and the feasible message-profile space
\( \mathfrak M(I_{P,1}) = 2^{\mathcal E} \).
At $I_{P,1}$, $P$ chooses $a\in\{\STOP,\CTN\}$ according to $\phi(I_{P,1})=\Pr(a=\STOP\mid I_{P,1})$.
If $a=\STOP$, $P$ announces the winning bid at stage $1$.
If $a=\CTN$, $P$ defers block proposing to stage $2$ and discloses the bid history with a message profile $\mathbf m:=(m_c,m_1,\dots,m_n)\in \bigl(\mathfrak M(I_{P,1}) \bigr) ^{n+1}$ according to $\psi(\cdot\mid I_{P,1},a=\CTN)$\footnote{Note that, at the beginning of stage 2, 1) a slow builder does not update its information set; 2) a fast builder knows whether the auction proceeds to stage 2. The message profile given $a = \STOP$ is payoff-irrelevant.
	Therefore, modeling the builders' actions by $(\phi,\psi)$ is without loss. }, where $m_c$ is broadcast to all builders and $m_i$ is sent privately to builder $B_i$.
\item Builders then update their information sets:
\begin{itemize}
	\item If $B_i$ is \textit{fast}, it observes whether the proposer $P$ proposed the winning bid at stage $1$.
	If $a=\CTN$, $B_i$ further observes the common message $m_c$ and its private message $m_i$.
	Formally, its information set is
\(
I_{i,2} = (v_i,b_{i,1},\texttt{aux}_{i,1},a,m_c,m_i).
\)
	\item If $B_i$ is \textit{slow}, it receives no information update, with $I_{i,2} = (v_i, b_{i,1}, \texttt{aux}_{i,1})$.
\end{itemize}
	\item
    Finally, let $\mathcal B_{i,2} := [b_{i, 1}, +\infty)$.
    Each builder $B_i$ places the stage-$2$ bid $b_{i,2}$ according to the \textit{stage-$2$ bidding rule} $\beta_{i,2}(\cdot\mid I_{i,2}) \in \Delta(\mathcal B_{i,2})$, along with the stage-$2$ auxiliary information $\texttt{aux}_{i,2}$.
    Let $\mathbf b_2:=(b_{1,2},\dots,b_{n,2})$.
    
\end{enumerate}

 \noindent \textbf{Termination Rule.} If $P$ proposes the winning bid at stage $t\in\{1,2\}$, then the winning bid has a publicly known \textit{canonicalization probability} $k_t\in[0,1]$, with $1 \geq k_1\geq k_2\ge0$.
Let $w_t(\mathbf b_t)\in\{1,\dots,n\}$ be the allocation rule at stage $t$, and let $p_{i,t}(\mathbf b_t)\ge 0$ be builder $i$'s payment.
Let $u_{i, t}$ be the utility of $B_i$ at stage $t$, let $r_t$ be the revenue of $P$ at stage $t$, and let the terminal payoffs be
\[
u_{i,t}(\mathbf b_t;v_i)
:=
k_t\bigl(v_i\,\mathbf 1\{i=w_t(\mathbf b_t)\}-p_{i,t}(\mathbf b_t)\bigr), \quad 
r_t(\mathbf b_t)
:=
k_t\sum_{i=1}^n p_{i,t}(\mathbf b_t).
\]

The distinction between fast and slow builders is defined directly by real-world network latency.

A builder is fast if the $P \to B_i \to P$ round-trip time (RTT) is short enough for the continuation message to reach the builder and for an updated stage-2 bid to return before the proposer stops at stage 2; otherwise the builder is slow.
\Cref{fig:message-profile-flow} illustrates this timing relation.

\begin{figure}[htbp]
\centering
\resizebox{0.7\linewidth}{!}{
\begin{tikzpicture}[x=1cm,y=0.88cm,>=Stealth]
\coordinate (O) at (0,0);
\coordinate (Y) at (0,4.0);
\coordinate (X) at (7.4,0);

\def\RTToneX{2.60}
\def\RTTtwoX{3.59}
\def\StageTwoStopX{2.90} 
\def\FastBidSendX{1.50} 
\def\SlowBidSendX{1.20} 
\def\CurveEndX{7.00}
\def\Kcurve#1{3.55/(1+exp(1.65*((#1)-3.45)))}
\pgfmathsetmacro{\KoneY}{\Kcurve{0}}
\pgfmathsetmacro{\RTToneY}{\Kcurve{\RTToneX}}
\pgfmathsetmacro{\RTTtwoY}{\Kcurve{\RTTtwoX}}
\pgfmathsetmacro{\KtwoY}{\Kcurve{\StageTwoStopX}}
\pgfmathsetmacro{\TailY}{\Kcurve{\CurveEndX}}
\pgfmathsetmacro{\SlowSlowOneX}{0.50*\RTToneX}
\pgfmathsetmacro{\FastSlowX}{0.50*(\RTToneX+\RTTtwoX)}
\pgfmathsetmacro{\SlowSlowTwoX}{0.50*(\RTTtwoX+\CurveEndX)}
\pgfmathsetmacro{\SlowSlowOneY}{\Kcurve{\SlowSlowOneX}+0.35}
\pgfmathsetmacro{\FastSlowY}{\Kcurve{\FastSlowX}+0.35}
\pgfmathsetmacro{\SlowSlowTwoY}{\Kcurve{\SlowSlowTwoX}+0.35}
\pgfmathsetmacro{\FastMsgX}{0.50*\RTToneX}
\pgfmathsetmacro{\SlowMsgX}{0.50*\RTTtwoX}
\coordinate (Kone) at (0,\KoneY);
\coordinate (RTTone) at (\RTToneX,\RTToneY);
\coordinate (RTTtwo) at (\RTTtwoX,\RTTtwoY);
\coordinate (Ktwo) at (\StageTwoStopX,\KtwoY);
\coordinate (Tail) at (\CurveEndX,\TailY);
\coordinate (Pzero) at (0,-0.85);
\coordinate (Pstop) at (\StageTwoStopX,-0.85);
\coordinate (FastMsg) at (\FastMsgX,-1.70);
\coordinate (FastBidSend) at (\FastBidSendX,-1.70);
\coordinate (SlowBidSend) at (\SlowBidSendX,-2.55);
\coordinate (SlowMsg) at (\SlowMsgX,-2.55);

\draw[->, thick] (O) -- (Y) node[above] {};
\draw[->, thick] (O) -- (X) node[right] {$t$};
\draw[blue!70!black, very thick]
    plot[domain=0:\RTToneX, samples=80, smooth] (\x,{\Kcurve{\x}});
\draw[orange!85!black, very thick]
    plot[domain=\RTToneX:\RTTtwoX, samples=80, smooth] (\x,{\Kcurve{\x}});
\draw[green!60!black, very thick]
    plot[domain=\RTTtwoX:\CurveEndX, samples=80, smooth] (\x,{\Kcurve{\x}});
\draw[dashed, thick] (RTTone) -- (\RTToneX,0);
\draw[dashed, thick] (RTTtwo) -- (\RTTtwoX,0);
\draw[dashed, red!70!black, thick] (Ktwo) -- (0,\KtwoY);
\draw[densely dotted, red!70!black, thick] (Ktwo) -- (\StageTwoStopX,-2.80);
\fill (Kone) circle (2pt);
\fill (RTTone) circle (2pt);
\fill (RTTtwo) circle (2pt);
\fill[red!70!black] (Ktwo) circle (2pt);
\node[left] at (Kone) {$k_1=1$};
\node[left, red!70!black] at (0,\KtwoY) {$k_2$};
\node[below] at (\RTToneX,0) {$RTT_{i}$};
\node[below] at (\RTTtwoX,0) {$RTT_{j}$};
\node[above, red!70!black, font=\scriptsize] at (\StageTwoStopX,0.08) {stage-2 stop};
\node[blue!70!black, above] at (\SlowSlowOneX,\SlowSlowOneY) {slow-slow};
\node[orange!85!black, above] at (\FastSlowX + 0.5,\FastSlowY + 0.2) {fast-slow};
\node[green!60!black, above] at (\SlowSlowTwoX,\SlowSlowTwoY) {fast-fast};

\draw[->, thick] (-0.25,-0.85) -- (7.10,-0.85) node[right, font=\scriptsize] {$P$};
\draw[->, thick] (-0.25,-1.70) -- (7.10,-1.70) node[right, font=\scriptsize] {fast builder $B_i$};
\draw[->, thick] (-0.25,-2.55) -- (7.10,-2.55) node[right, font=\scriptsize] {slow builder $B_j$};
\fill (Pzero) circle (2pt);
\node[above left, align=center, font=\scriptsize] at (Pzero)
    {$(\mathbf{b}_1, \mathbf{aux}_1)$ arrive};
\node[below left, font=\scriptsize] at (Pzero) {stage-1};
\node[above right, font=\scriptsize] at (Pzero) {send $\mathbf m$};
\draw[dashed, ->] (Pzero) -- node[midway, above, sloped, font=\scriptsize]
    {$(m_c, m_i)$} (FastMsg);
\draw[dashed, ->] (Pzero) -- node[midway, below, sloped, font=\scriptsize]
    {$(m_c, m_j)$} (SlowMsg);
\fill[blue!70!black] (FastMsg) circle (2pt);
\draw[blue!70!black, fill=white, thick] (FastMsg) circle (2pt);
\fill[blue!70!black] (FastBidSend) circle (2pt);
\fill[green!60!black] (SlowBidSend) circle (2pt);
\draw[green!60!black, fill=white, thick] (SlowMsg) circle (2pt);
\node[below, blue!70!black, font=\scriptsize] at (FastMsg) {};
\node[above, blue!70!black, font=\scriptsize] at (FastBidSend) {send $b_{i,2}$};
\node[below, green!60!black, align=center, font=\scriptsize] at (SlowBidSend)
    {send $b_{j,2}$};
\node[below, green!60!black, align=center, font=\scriptsize] at (SlowMsg)
    {};
\draw[blue!70!black, ->] (FastBidSend) -- node[midway, above, sloped, font=\scriptsize]
    {$b_{i,2}$} (Pstop);
\draw[green!60!black, ->] (SlowBidSend) -- node[midway, below, sloped, font=\scriptsize]
    {$b_{j,2}$} (Pstop);
\fill[red!70!black] (Pstop) circle (2pt);
\node[below right, align=left, red!70!black, font=\scriptsize] at (Pstop)
    {stage-2 bids $\mathbf b_2$};
\end{tikzpicture}
}
\caption{The block-building timeline. The upper curve maps proposal time to canonicalization probability, while the lower timeline shows communication within the block-building process.}
\label{fig:message-profile-flow}
\end{figure}

 \noindent \textbf{PBS and ePBS.}
 We now derive PBS and ePBS as restrictions of the general block proposing game. 
In PBS, $P$ is passive and the proposal timing and disclosure are relay-mediated and effectively exogenous; in ePBS, $P$ is strategic.
Empirically, Ethereum mainnet under current PBS is already close to perfectly reliable at the slot level, with a proposal-success rate of about \(99.69\%\) \citep{ratedNetworkOverview2026}.  The PBS benchmark therefore normalizes away canonicalization losses and focuses on exogenous stopping and relay-mediated disclosure.  By contrast, ePBS assigns the stopping decision to the proposer; when the proposer strategically continues past the safe threshold, the reliability cost is captured by $k_2<1$.
We formalize the PBS and ePBS games accordingly.

\begin{definition}[PBS game]
\label{def:passive_pbs}
Let $q_1 \in [0, 1]$ be the commonly known probability that the beacon block is proposed in stage 1.
A PBS game is an instance of the general block-building game with: (i) $k_1 = k_2 = 1$; (ii) an exogenous proposing profile configured by $q_1$ and a message profile broadcasting full verifiable bid history, i.e., $\forall I_{P, 1}$
\[
\resizebox{\linewidth}{!}{$
\phi(I_{P,1})=q_1, \quad \mathfrak M(I_{P,1}) = 2^{\mathcal E}, m_c(I_{P,1}) =
\mathcal E,
\quad
\psi (\mathbf m\mid I_{P,1},a=\CTN)
=
\delta_{(m_c(I_{P,1}),\varnothing,\dots,\varnothing)}(\mathbf m). 
$}
\]
At any terminal stage $t\in\{1,2\}$, PBS has a first-price termination rule with symmetric tie-breaking. 
\begin{equation} \label{eq:fpa}
	w_t(\mathbf b_t)\in\argmax_{i\in[n]} b_{i,t},
\qquad
p_{i,t}(\mathbf b_t)
=
b_{i,t}\mathbf 1\{i=w_t(\mathbf b_t)\}.
\end{equation}
\end{definition}

\begin{definition}[ePBS game]
\label{def:general_epbs}
An ePBS game is an instance of the general block-building game in which (i) the termination rule follows \eqref{eq:fpa}; (ii) the message space is the verifiable bid-history space $\mathfrak M(I_{P,1}) = 2^{\mathcal E}$; (iii) $k_1=1$ and delayed continuation has canonicalization probability $k_2 \leq 1$ ; and (iv) the proposer's stopping profile $\phi$ and the message profile $\psi$ are best responses at every stage-1 bid history $I_{P, 1}$.
\end{definition}

  In our two-stage abstraction of ePBS, $k_1 = 1$ collapses all costless within-slot waiting into stage 1.  Thus, if an active proposer intends to propose before this safe threshold, waiting until the end of stage 1 is weakly dominant.  A continuation decision therefore represents an endogenous delay beyond the last reliable proposal time, so stage 2 carries a lower canonicalization probability $k_2<1$ bearing missed-attestation risk. Operationally, $k_2$ is calibrated as the canonicalization probability on the right panel of \Cref{fig:winning-bid-timing-reliability} evaluated at the model's stage-$2$ stopping delay, measured relative to the last no-loss proposal time.

\begin{definition}[Perfect Bayesian equilibrium]
\label{def:pbe_block_building}
A perfect Bayesian equilibrium (PBE) assessment is a tuple
\(
\mathfrak a=(\beta,\phi,\psi, \Lambda),
\)
where $\beta=(\beta_{i,1},\beta_{i,2})_{i\in[n]}$, $\phi$ and $\psi$ are the proposer's stopping and disclosure policies, and $\Lambda$ is a belief system.
$\mathfrak a$ is a PBE if:
(i) \textbf{Builders are sequentially rational.} At every builder information set and for every $t \in [2]$, any bid $b_{i, t}$ in the support of $\beta_{i, t}$ maximizes $B_i$'s expected continuation payoff given the belief $\Lambda$.
(ii) \textbf{Endogenous proposers are sequentially rational.}
In ePBS, at every proposer information set, every stopping and disclosure action used with positive probability by \((\phi,\psi)\) maximizes expected proposer revenue given \(\Lambda\) and the builders' continuation behavior. In PBS, \((\phi,\psi)\) are exogenous protocol rules. 
(iii) \textbf{Beliefs are consistent.} At any on-path information sets under $\mathfrak a$, $\Lambda$ is obtained from the common prior $\mathsf{F}$ and $(\beta,\phi, \psi)$ by Bayes' rule.
At off-path information sets, $\Lambda$ may be any feasible belief.
\end{definition}

\section{Equilibrium Analysis}
\label{sec:equilibrium_analysis}
\label{sec:computational_comparison}

This section compares the equilibrium structure of PBS and ePBS.  The analysis starts from first-price-auction benchmarks, where the two-stage game collapses to the one-shot FPA outcome either because slow builders receive no stage-\(2\) information update or because delayed canonicalization makes immediate ePBS stopping optimal.  We then turn to the nontrivial informational case.  PBS can still support separating outcomes because terminal stage-\(1\) stopping is exogenous, whereas in ePBS the proposer observes signed bids before deciding whether to stop.  This turns early bids into payoff-relevant signals and creates the pooling and ratchet forces analyzed below.

\subsection{FPA-like PBE in PBS and ePBS}
\label{subsec:fpa-benchmarks}

We first record when the dynamic protocol reproduces the one-shot FPA.  This benchmark is useful because it separates the absence of information feedback from the later ratchet channel.
\begin{proposition}[All-slow FPA benchmark]
\label{prop:all-slow-fpa-benchmark}
Fix a joint valuation distribution \(\mathsf F\), and let
\(\beta^{\mathrm{FPA},\mathsf F}\) be any BNE of the one-shot first-price auction,
\(\beta_i^{\mathrm{FPA},\mathsf F}(\cdot\mid v_i)\in\Delta([0,v_i])\).
If all builders are slow, then the diagonal complete plan
\[
        y_i\sim\beta_i^{\mathrm{FPA},\mathsf F}(\cdot\mid v_i),
        \qquad
        (b_{i,1},b_{i,2})=(y_i,y_i)
\]
can be completed into a PBE of both:
\begin{enumerate}[(i)]
\item any PBS game with prior \(\mathsf F\) and \(q_1\in(0,1)\);
\item any ePBS game with prior \(\mathsf F\); in the ePBS PBE, the proposer's
stopping rule satisfies \(\phi(I_{P,1})=1\) at every on-path stage-\(1\)
information set.
\end{enumerate}
In both cases, the induced terminal allocation and payments coincide with the
one-shot FPA outcome.
\end{proposition}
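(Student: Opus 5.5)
The plan is to reduce both games to the one-shot FPA by exploiting the fact that slow builders never update their information. In either game a slow builder's stage-2 information set is \(I_{i,2}=(v_i,b_{i,1},\texttt{aux}_{i,1})\). The pair \((b_{i,1},b_{i,2})\) is therefore effectively chosen at a single information set depending only on \(v_i\), and can be treated as a single ex-ante choice of a bid pair \((y,y')\) with \(y'\ge y\). No belief updating by builders ever takes place. The only beliefs I have to specify are the proposer's, in ePBS, at off-path histories. I will set these to the prior conditioned on Bayes-consistent observed bids when possible, and arbitrary otherwise: any belief works there, since continuation cannot change slow builders' bids.

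\textbf{PBS case.} I fix the diagonal plan for all \(j\neq i\) and compute \(B_i\)'s expected payoff from a deviation pair \((y,y')\):
\[
q_1\,U^{\mathrm{FPA}}_i(y;v_i)+(1-q_1)\,U^{\mathrm{FPA}}_i(y';v_i).
\]
Here \(U^{\mathrm{FPA}}_i\) is the interim FPA payoff against opponents' bids \(y_j\). This holds because opponents' stage-\(t\) bids equal \(y_j\) at both stages, the PBS rule has \(k_1=k_2=1\), and tie-breaking is the same first-price rule. Each term is maximized by any \(y\) in the support of \(\beta_i^{\mathrm{FPA},\mathsf F}(\cdot\mid v_i)\), so choosing \(y'=y\) with \(y\sim\beta_i^{\mathrm{FPA}}\) is optimal. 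The constraint \(y'\ge y\) is satisfied by the diagonal, and relaxing it would only enlarge the deviation set, so optimality over the unconstrained set implies optimality over the constrained one. Sequential rationality at the stage-2 information set, conditional on \(b_{i,1}=y\), requires \(y'\) to maximize \(U^{\mathrm{FPA}}_i(\cdot;v_i)\) over \([y,\infty)\), and \(y'=y\) does this. I still need stage-2 optimality after an off-path stage-1 bid \(y\); there I will set \(b_{i,2}\) to a best response on \([y,\infty)\), which completes the plan without affecting on-path play. Since the terminal bids at either stage equal \(\mathbf y\), the outcome is the FPA outcome. I would also remark that \(q_1\in(0,1)\) is only needed to make both stages relevant.

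\textbf{ePBS case.} Set \(\phi\equiv1\), with \(\psi\) arbitrary, for example empty messages. For builders, stage 1 always terminates, so any stage-2 plan is payoff-irrelevant and I keep the diagonal. Since slow builders' stage-2 bids cannot depend on \(\phi\) or the message, the optimal stage-1 choice is the FPA best response. For the proposer at any \(I_{P,1}=\mathcal E\): stopping yields \(\max_j b_{j,1}\). Continuing yields \(k_2\max_j b_{j,2}\), and since slow builders play \(b_{j,2}=b_{j,1}\) at their information set, this equals \(k_2\max_j b_{j,1}\le\max_j b_{j,1}\). Stopping is therefore weakly optimal, for any \(\psi\) and any belief.

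The main obstacle is making sure that the stage-2 strategies are specified at \emph{every} slow-builder information set, including those with off-path \(b_{i,1}\), and that they do not depend on the proposer's action. Concretely, after an off-path bid \(b_{i,1}=y\), I take \(b_{i,2}=y\) in ePBS, so the proposer's continuation revenue is \(k_2\max b_{j,1}\). In PBS, I take a best response on \([y,\infty)\). This is the place where the argument needs care, since such a best response could exceed \(y\) and create a continuation incentive for the proposer. The issue does not arise, however, because the ePBS plan uses \(b_{i,2}=b_{i,1}\), which is sequentially irrelevant there given \(\phi\equiv1\).
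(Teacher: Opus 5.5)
Your proposal is correct and follows essentially the paper's argument. A slow builder's two bids form an ex-ante complete plan whose payoff is a stage-weighted sum of one-shot FPA payoffs, which the diagonal FPA plan maximizes. In ePBS, continuation yields only $k_2\max_j b_{j,1}\le k_1\max_j b_{j,1}$, so stopping is sequentially optimal.

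The one difference is the off-path completion, and it is minor. You fix $\phi\equiv 1$ and $b_{i,2}=b_{i,1}$ at every ePBS history, which makes stage-$2$ bids payoff-irrelevant. The paper instead pins down stopping only on path and bounds every complete-plan deviation by $k_1\bar U_i(v_i)$ whatever the proposer does off path.
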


The reason is that slow builders cannot condition their stage-\(2\) bids on the continuation message.  A complete-plan deviation is therefore just a pair of first-price bids chosen ex ante.  The diagonal FPA plan is already optimal in the one-shot problem, and in ePBS immediate stopping is sequentially optimal on path because \(k_1\ge k_2\).  Appendix~\ref{app:all-slow-fpa-implementation} gives the proof.

The remaining analytical benchmarks use independence only when it is explicitly
invoked.  We therefore isolate the product-value environment as a standing
assumption that can be called by later results.
\begin{assumption}[IID valuation setting]
\label{ass:iid-valuation-setting}
Value profiles are drawn from the product prior
\(\mathsf F=F^{\otimes n}\) on \([\underline v,\bar v]^n\).  The marginal
distribution \(F\) is atomless, strictly increasing, and continuously differentiable, with positive density \(f>0\) on the
interior of the support.  
\end{assumption}

\begin{definition}[Regular pure all-slow PBS class]
\label{def:regular-ordered-pure-all-slow}
In an all-slow PBS game under \Cref{ass:iid-valuation-setting}, a symmetric
pure assessment is \emph{regular pure} if there exist functions
\((x,z):[\underline v,\bar v]\to\mathbb R_+^2\) such that every builder uses the
complete plan
\(
        b_{i,1}=x(v_i), b_{i,2}=z(v_i),
\)
and the following conditions hold: (i) \(0\le x(v)\le z(v)\le v\) for every
value \(v\); (ii) \(x\) and \(z\) are continuous, nondecreasing, and locally
absolutely continuous; and (iii) if \(H_x\) and \(H_z\) denote the bid
distributions induced by \(x(V)\) and \(z(V)\) for \(V\sim F\), then their
interior supports have no atoms or gaps and
\[
        H_x(x(v))=H_z(z(v))=F(v)
\]
at interior values.  Finally, the complete plan \((x(v),z(v))\) is a full-plan
best response for each value \(v\).
\end{definition}

\begin{proposition}[All-slow PBS selection in the regular pure class]
\label{prop:all-slow-pbs-selection}
Under \Cref{ass:iid-valuation-setting}, suppose all builders are slow and
\(q_1\in(0,1)\).  If the one-shot FPA with marginal distribution \(F\) has a unique symmetric pure equilibrium bid
function \(\beta^{\mathrm{FPA}}\), then every regular
pure all-slow PBS PBE is diagonal:
\[
        x(v)=z(v)=\beta^{\mathrm{FPA}}(v)
        \quad\text{for }F\text{-almost every }v .
\]

Consequently, every such PBE is payoff-equivalent to the standard FPA. 
\end{proposition}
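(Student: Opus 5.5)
The proof views a slow builder's complete plan as a pair of bids $(b_1,b_2)$ with $b_1\le b_2$, chosen ex ante because the builder receives no information update. Fix a regular pure all-slow PBS PBE $(x,z)$ and a builder of value $v$. The other $n-1$ builders follow $(x,z)$, so with probability $q_1$ the auction stops at stage 1 against rival bids $x(V_j)$, and with probability $1-q_1$ it stops at stage 2 against rival bids $z(V_j)$. Stage 2 is reached regardless of the realized bids, since $\phi\equiv q_1$ is exogenous and the broadcast $m_c=\mathcal E$ is irrelevant to slow builders. Hence the payoff of plan $(b_1,b_2)$ is
\[
U(b_1,b_2;v)=q_1\,(v-b_1)\,H_x(b_1)^{n-1}+(1-q_1)\,(v-b_2)\,H_z(b_2)^{n-1},
\]
using the no-atom condition in (iii), under which ties have probability zero on the interior support. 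The whole argument is to show that each separate FPA-type term forces the standard FPA bid function.

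\textbf{Step 1 (interior decoupling).} I would argue that the constraint $b_1\le b_2$ is slack at the equilibrium plan for almost every $v$, so that $x(v)$ maximizes $(v-b)H_x(b)^{n-1}$ and $z(v)$ maximizes $(v-b)H_z(b)^{n-1}$ separately. Where $x(v)<z(v)$ this holds locally, because each coordinate can be perturbed independently. On the set where $x(v)=z(v)$, the plan must still beat joint deviations $(b,b)$ and deviations that lower $b_1$ alone. The second family gives the one-sided first-order condition in $b_1$, namely $\partial_{b_1}U\ge 0$ as $b_1$ increases to $z$. The first family gives $\partial_{b_1}U+\partial_{b_2}U=0$.

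\textbf{Step 2 (ODE identification).} Using $H_x(x(v))=F(v)$, I would rewrite stage 1 in the "revelation" form $\max_{\hat v}(v-x(\hat v))F(\hat v)^{n-1}$. Optimality at $\hat v=v$ for a.e. $v$, together with absolute continuity of $x$, yields the envelope condition $\frac{d}{dv}\bigl[x(v)F(v)^{n-1}\bigr]=v\,\frac{d}{dv}F(v)^{n-1}$. The boundary condition $x(\underline v)F(\underline v)^{n-1}=0$ holds because $F(\underline v)=0$ and $x$ is bounded. Integrating gives $x(v)=\mathbb E[\max_{j\neq i}V_j\mid \max_{j\ne i}V_j\le v]=\beta^{\mathrm{FPA}}(v)$. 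The identical argument applies to $z$ with $H_z$. On the coincidence set from Step 1, a weighted combination of the two first-order conditions gives the same ODE for the common function. Alternatively, the set of $v$ with $x(v)<z(v)$ is open and $x,z$ satisfy the same ODE there, so they coincide on it, which is a contradiction unless the set is null. In either case $x=z=\beta^{\mathrm{FPA}}$ a.e.

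\textbf{Step 3 (conclusion).} Uniqueness of the symmetric pure FPA equilibrium serves as a consistency check: the diagonal plan is indeed a PBE by \Cref{prop:all-slow-fpa-benchmark}, and any candidate coordinate function solving the FPA best-response problem against itself must be $\beta^{\mathrm{FPA}}$. Payoff equivalence to the FPA then follows immediately, since both stages reproduce FPA allocation and payments.

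\textbf{Main obstacle.} The hard step is Step 1's coincidence set. Deriving separate first-order conditions there needs care, because only the sum condition and a one-sided inequality are available. To close the case I would first try the open-set/ODE-uniqueness argument. If that is not enough, the fallback is to show that the one-sided inequality combined with the equality forces each partial derivative to vanish, using that both $H_x$ and $H_z$ satisfy the same no-gap representation.
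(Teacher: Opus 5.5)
There is a genuine gap, at exactly the point you flagged, and none of your proposed closures supplies the missing idea. The main line of Step~2 integrates the stage-$1$ ODE from $\underline v$. But at a point with $x(v)=z(v)$, the stage-$1$-only mimic $(x(\hat v),z(v))$ is infeasible for $\hat v>v$, because regularity makes $x$ strictly increasing, so $x(\hat v)>z(v)$. You therefore only have that ODE on the open set $\{x<z\}$.

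Your \emph{open-set/ODE-uniqueness} closure does not work as stated. Solutions of $y'=\lambda(v)(v-y)$, with $\lambda=A'/A$ and $A=F^{n-1}$, differ by $K/F(v)^{n-1}$. So sharing the ODE on a component of $\{x<z\}$ does not make $x$ and $z$ coincide without an initial condition.

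The fallback cannot work from the stated ingredients either. The three conditions $\partial_{b_1}U\ge 0$, $\partial_{b_2}U\le 0$ (from raising $b_2$ alone) and $\partial_{b_1}U+\partial_{b_2}U=0$ are exactly the signature of a binding ordering constraint with a positive multiplier. The common representation $H_x(x(v))=H_z(z(v))=F(v)$ matches only the levels of $H_x,H_z$ at a coincidence point, not the local slopes of $x$ and $z$ that enter the two partials.

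The paper closes the gap with an endpoint argument. On a maximal interval $(a,b)$ of $\{x<z\}$, the separate revelation conditions give $\bigl((z-x)F^{n-1}\bigr)'=0$. The constant is zero in both cases:
\begin{itemize}
\item if $a>\underline v$, then $z(a)=x(a)$ by continuity and maximality;
\item if $a=\underline v$, then $F^{n-1}\to 0$ while $z-x$ stays bounded.
\end{itemize}
Hence $\{x<z\}=\emptyset$. The paper then does no first-order analysis on the diagonal. With $x=z=b$, a diagonal deviation $(c,c)$ earns exactly the one-shot FPA payoff $(v-c)H_b(c)^{n-1}$. So $b$ is a symmetric one-shot equilibrium, and the uniqueness hypothesis gives $b=\beta^{\mathrm{FPA}}$.

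Your weighted-combination route can also be closed, with one observation you did not make: $z-x\ge 0$ attains its minimum at every point of $\{x=z\}$. Hence $z'=x'$ wherever both are differentiable there, i.e.\ almost everywhere on the coincidence set. Read the weighted combination as the always-feasible full mimic $(x(\hat v),z(\hat v))$. Its envelope condition for $q_1x+(1-q_1)z$ then reduces on that set to $(xF^{n-1})'=v\,(F^{n-1})'$, and likewise for $z$. Combined with the separate conditions on $\{x<z\}$, this holds almost everywhere on the whole support, and your integration from $\underline v$ goes through for both coordinates.

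That completed version would be a genuinely different proof. It produces the explicit bid function directly and uses uniqueness only to identify it with $\beta^{\mathrm{FPA}}$. The paper's route instead needs no derivative information on the coincidence set and uses uniqueness as a black box.
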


This is a selection result inside a regular pure class, not an unrestricted global uniqueness theorem.  Its logic is useful because it shows that passive PBS timing does not create an additional all-slow selection margin: complete-plan optimality rules out persistent gaps between \(x\) and \(z\), and once the plan is diagonal the builder's problem is exactly the one-shot FPA problem.  Appendix~\ref{app:all-slow-pbs-selection} provides the full argument.

\begin{proposition}[Immediate-stop FPA region in ePBS]
\label{prop:epbs-immediate-stop-fpa-region}
Fix an ePBS game with \(n\ge2\) builders and a separating stage-\(1\) bid profile \(\beta_1=(\beta_{i,1})_{i\in[n]}\) on value support \(V\subseteq\mathbb R_+^n\).  Let \(v_{(1)}\ge v_{(2)}\) be the highest and second-highest coordinates of \(v\), let $M_{\beta_1}(v):=\max_{i\in[n]}\beta_{i,1}(v_i)$  and define 
\[
\underline k_2(\beta_1)
:=
\inf_{\{v\in V:v_{(1)}>0\}}
k_1\frac{M_{\beta_1}(v)}{v_{(1)}}, \quad 
\bar k_2(\beta_1)
:=
\inf_{\{v\in V:v_{(2)}>0\}}
k_1\frac{M_{\beta_1}(v)}{v_{(2)}} .
\]
For this proposition, the non-overbidding constraint means that any feasible terminal stage-\(2\) bid satisfies \(b_{i,2}\le v_i\); the baseline bid space in \Cref{sec:model} remains \(\mathbb R_+\).
\begin{enumerate}[(i)]
    \item Under this non-overbidding constraint, if \(k_2\le\underline k_2(\beta_1)\), then stopping is sequentially optimal for the proposer at every reached stage-\(1\) history, regardless of latency profile. 
    \item If all builders are fast and \(k_2>\bar k_2(\beta_1)\), then some reached stage-\(1\) history makes continuation with full public disclosure strictly better for the proposer than immediate stopping.
\end{enumerate}

Moreover, suppose \(\beta_1=\beta^{\mathrm{FPA}}\) is a pure symmetric strictly increasing efficient FPA equilibrium on product support \(V=\mathcal V^n\).  Then
\[
\underline k_2(\beta^{\mathrm{FPA}})
=
\bar k_2(\beta^{\mathrm{FPA}})
=
k_2^\star(\beta^{\mathrm{FPA}})
:=
\inf_{\{x\in\mathcal V:x>0\}}
k_1\frac{\beta^{\mathrm{FPA}}(x)}{x}.
\]
Hence \(k_2\le k_2^\star(\beta^{\mathrm{FPA}})\) makes immediate stopping sequentially optimal for the proposer at every reached FPA history, while \(k_2>k_2^\star(\beta^{\mathrm{FPA}})\) rules out that immediate-stop FPA path when all builders are fast.
\end{proposition}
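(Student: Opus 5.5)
The argument compares, at each reached stage-1 history, the proposer's stopping revenue with an upper bound (for part (i)) or a lower bound (for part (ii)) on her continuation revenue. Because \(\beta_1\) is separating, any reached history \(I_{P,1}\) comes from some \(v\in V\), and stopping under the first-price rule \eqref{eq:fpa} gives revenue exactly \(k_1 M_{\beta_1}(v)\).

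\textbf{Part (i).} I will bound continuation revenue from above, uniformly over latency profiles and disclosure choices \(\psi\). Under the non-overbidding constraint every stage-2 bid satisfies \(b_{i,2}\le v_i\). Hence the winning stage-2 payment is at most \(\max_i b_{i,2}\le v_{(1)}\), and continuation revenue is at most \(k_2 v_{(1)}\). Consider first histories with \(v_{(1)}>0\). The definition of the infimum gives \(k_2\le \underline k_2(\beta_1)\le k_1 M_{\beta_1}(v)/v_{(1)}\), so \(k_2 v_{(1)}\le k_1 M_{\beta_1}(v)\). If instead \(v_{(1)}=0\), all bids are zero and the inequality holds trivially. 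Stopping is therefore weakly optimal at every reached history, which is what sequential optimality requires. The bound never uses the information structure, which is why it holds regardless of latency profile.

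\textbf{Part (ii).} I will exhibit a history at which continuation with full public disclosure \(m_c=\mathcal E\) beats stopping. Since \(k_2>\bar k_2(\beta_1)\), there is some \(v\in V\) with \(v_{(2)}>0\) and \(k_2 v_{(2)}>k_1 M_{\beta_1}(v)\). At this history all builders are fast and observe the full bid history. Because \(\beta_1\) is separating, they infer \(v\) exactly, so the stage-2 subgame is a complete-information first-price auction with floors \(b_{i,2}\ge b_{i,1}\).

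The key step is to show that in any PBE continuation the winning stage-2 bid is at least \(v_{(2)}\), up to the tie-breaking convention. The argument is Bertrand-style. If the top bid were some \(p<v_{(2)}\), the builder holding the second-highest value could bid slightly above \(p\) (this respects its floor, since \(b_{i,1}\le\) the top bid) and win with positive surplus. So continuation revenue is at least \(k_2 v_{(2)}\), which is strictly larger than \(k_1 M_{\beta_1}(v)\).

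\textbf{Main obstacle.} The hard step is this complete-information continuation subgame, because of tie-breaking and the open bid space. With symmetric tie-breaking, a pure equilibrium may fail to exist exactly at \(v_{(2)}\), and in mixed equilibria the top bid lies at or above \(v_{(2)}\) only in expectation. I will first try the standard result for complete-information first-price auctions: in every equilibrium, expected revenue is at least \(v_{(2)}\), since any atom of the top-bid distribution below \(v_{(2)}\) admits a profitable overbid by the second-highest-value builder. If needed, I will select a continuation equilibrium with revenue \(v_{(2)}\). Since PBE allows any sequentially rational continuation, the claim only needs some (indeed every) continuation to beat stopping, and the strict inequality leaves slack to absorb an \(\varepsilon\) of approximation.

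\textbf{FPA specialization.} Take \(\beta_1=\beta^{\mathrm{FPA}}\), strictly increasing, so that \(M(v)=\beta^{\mathrm{FPA}}(v_{(1)})\).

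For \(\underline k_2\), the ratio \(\beta^{\mathrm{FPA}}(v_{(1)})/v_{(1)}\) depends only on \(v_{(1)}\). Because the support is a product \(\mathcal V^n\), \(v_{(1)}\) ranges over all positive \(x\in\mathcal V\), so \(\underline k_2=k_2^\star\).

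For \(\bar k_2\), the ratio \(\beta^{\mathrm{FPA}}(v_{(1)})/v_{(2)}\) is at least \(\beta^{\mathrm{FPA}}(v_{(2)})/v_{(2)}\) by monotonicity, which gives \(\bar k_2\ge k_2^\star\). Equality is attained by taking the profile with two coordinates both equal to \(x\) (the product support allows this), which gives \(\bar k_2\le k_2^\star\).

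The final sentence of the proposition then follows by plugging these equalities into (i) and (ii).
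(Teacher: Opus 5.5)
Your proposal is correct and follows the paper's proof essentially step by step: the $k_2 v_{(1)}$ non-overbidding bound for (i), the Bertrand argument after full public disclosure for (ii), and the tie-profile argument for the FPA thresholds. The only route difference is immaterial: you get $\bar k_2\ge k_2^\star$ from monotonicity of $\beta^{\mathrm{FPA}}$, whereas the paper uses $v_{(2)}\le v_{(1)}$.

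Two small precisions. In the Bertrand step, the deviator should be whichever of the two highest-value builders is not currently winning, as the paper phrases it; if the second-highest-value builder is itself the winner at $p$, it is the highest-value builder who deviates. And ruling out the immediate-stop path requires your every-equilibrium lower bound, not a selected continuation, since a PBE supporting immediate stopping could specify the worst continuation.
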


Here \(\underline k_2\) is a sufficient stopping threshold: even the best no-overbidding continuation cannot beat the current stage-\(1\) revenue.  The upper threshold \(\bar k_2\) marks where full disclosure to fast builders can make continuation strictly attractive.  For a symmetric efficient FPA, both thresholds collapse to the bid-value frontier \(k_2^\star\).  In the i.i.d. uniform \(n\)-builder case with \(k_1=1\), \(\beta^{\mathrm{FPA}}(v)=\frac{n-1}{n}v\), so \(k_2^\star=(n-1)/n\).  Appendix~\ref{app:epbs-immediate-stop-fpa-region} proves the threshold claims.

These benchmarks separate existence, selection, and proposer stopping.  All-slow PBS and ePBS can implement the FPA outcome, and regular all-slow PBS selects the FPA outcome whenever the one-shot FPA is unique.  In ePBS, the frontier \(k_2^\star\) instead marks the proposer-side limit of immediate stopping along an FPA path.  Once continuation becomes sequentially attractive, signed bids become payoff-relevant signals, which is the channel analyzed in the separating PBS and pooling ePBS results below.

\subsection{Separating PBE in PBS}
\label{subsec:pbs-separating}

We next ask how much of the FPA logic survives when early bids reveal values.  We call a PBS assessment \emph{separating} if its stage-1 bid rule \(s\) is strictly increasing on \([\underline v,\bar v]\) and satisfies \(0\le s(v)\le v\) on path.  
An on-path signed bid then plays a clean informational role. If an on-path bid is \(s(r)\), the public stage-2 message \(m_c=\mathcal E\) identifies the report \(r\) by inverting \(s\). Thus the signed-bid history becomes a public report history before stage 2. Because the primitive signed-bid action space in \Cref{sec:model} is not capped by the bidder's true value, a type \(v\) can locally mimic nearby reports by submitting \(s(r)\) for \(r\) near \(v\). Under \Cref{ass:iid-valuation-setting}, we write
\[
R(v):=F(v)^{n-1},
\qquad
W:=R^{-1}.
\]

On the equilibrium path, after separation, the stage-2 auction is complete information.  Let \(v^{(i)}\) be the i-th highest revealed value. For any selected payment
\(
p\in\bigl[\max\{\max_j b_{j,1},v^{(2)}\},\,v^{(1)}\bigr],
\)
there is an on-path stage-2 equilibrium in which the highest-value builder wins and pays \(p\).

\begin{lemma}[Separating PBS payoff equivalence]
\label{lem:pbs-payoff-equivalence}
Under \Cref{ass:iid-valuation-setting}, in any pure symmetric separating PBE of the \(n\)-builder PBS game, builder \(B_i\)'s interim payoff after observing value \(v_i\) is
\(
U_i(v_i):=\int_{\underline v}^{v_i}F(z)^{n-1}\,dz,
\)
which coincides with the symmetric first-price-auction payoff.  In particular, with two i.i.d. uniform builders on \([0,1]\),
\(
        U_i(v_i)=v_i^2/2.
\)
\end{lemma}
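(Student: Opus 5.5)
We will run the standard envelope-theorem/revelation argument on the reduced-form problem a builder faces when choosing which report \(r\) to mimic with its stage-1 bid. Fix a pure symmetric separating PBE with stage-1 rule \(s\), strictly increasing with \(0\le s\le v\). Since \(s\) is invertible on path and \(m_c=\mathcal E\) is public, a type \(v\) bidding \(s(r)\) is treated by all opponents as type \(r\). Define the interim payoff of mimicking \(r\) as
\[
\Pi(r;v)=q_1\,R(r)\,(v-s(r))+(1-q_1)\,\mathbb E\bigl[\text{stage-2 payoff of }v\text{ after public report }r\bigr],
\]
where the second term is evaluated using the equilibrium continuation strategies of the others, with \(v\) best-responding in stage 2. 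Stage-1 termination is exogenous with probability \(q_1\), and there all bids are separating and strictly increasing. A type-\(r\) bid therefore wins exactly when \(r\) exceeds all other values, which happens with probability \(R(r)=F(r)^{n-1}\). Equilibrium requires \(U(v)=\Pi(v;v)=\max_r\Pi(r;v)\).

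The key step is to show that \(\partial_v\Pi(r;v)\) evaluated at \(r=v\) equals \(R(v)\), i.e. that the total winning probability of truthful-mimic type \(v\) is \(R(v)\) across both stages. In stage 1 this holds by construction. For stage 2 we use the on-path complete-information continuation: the highest revealed value wins, which is the efficient allocation. Hence a builder reporting \(v\) wins the stage-2 auction exactly when \(v>\max_{j\ne i}v_j\), again with probability \(R(v)\). The payment \(p\) may be any selection in \([\max\{\max_j b_{j,1},v^{(2)}\},v^{(1)}]\), but it depends only on the reports, so it does not vary with the true \(v\) holding \(r\) fixed. Thus, for fixed \(r\), \(\Pi(r;v)\) is affine in \(v\) with slope equal to the winning probability, provided the deviator's stage-2 behavior after mimicking is held at the on-path action. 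Since \(\Pi(r;\cdot)\) is a maximum over stage-2 actions of functions affine in \(v\), it is convex with a subgradient equal to the winning probability. The envelope theorem of Milgrom–Segal then gives \(U'(v)=R(v)\) almost everywhere. This applies because the family is equi-Lipschitz in \(v\), with winning probabilities in \([0,1]\), and the maximizer \(r=v\) is attained; the paper's remark that local mimicry is feasible since bids are uncapped is what makes every \(r\) near \(v\) available.

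Finally, pin down the constant. The lowest type \(\underline v\) wins with probability \(R(\underline v)=0\) (atomless \(F\)), and its payment is at most its value. Individual rationality then gives \(U(\underline v)=0\), because bidding \(0\) is always available and yields nonnegative payoff, while winning probability zero yields at most zero. Integrating gives \(U(v)=\int_{\underline v}^{v}F(z)^{n-1}dz\). This matches the symmetric FPA payoff by revenue equivalence: the FPA is efficient with the same bottom payoff. For two uniform builders, \(\int_0^v z\,dz=v^2/2\).

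The main obstacle is the stage-2 slope claim. After a deviation to report \(r\ne v\), the deviator knows its true value and may bid differently in stage 2. Opponents' stage-2 bids are also selected from a continuum of complete-information equilibria. We need the opponents' continuation after report \(r\) to be fixed, which holds because it is a function of the public history only. We also need the deviator's optimal stage-2 value function to be convex in \(v\) with derivative equal to its winning probability. We would handle this by treating the deviator's stage-2 choice as part of the maximization, so \(\Pi(r;v)\) is a supremum of affine functions. At \(r=v\), on path, the winning probability is that of the efficient allocation; tie and measure-zero issues vanish because \(F\) is atomless.
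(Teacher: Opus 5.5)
Your proposal is correct and follows essentially the same route as the paper: strict separation makes both the stop branch and the complete-information continuation allocate to the highest value, so the interim winning probability is $F(v)^{n-1}$; the envelope theorem applied to the local mimicry problem then gives $U'(v)=F(v)^{n-1}$ almost everywhere, and the lowest type's zero payoff fixes the constant. Your convexity/Milgrom--Segal treatment of the deviator's stage-2 re-optimization is a more careful version of the paper's one-line envelope step, while the paper briefly justifies why every sequentially rational continuation must allocate to the highest revealed value, which you only assert.
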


The intuition is the standard envelope logic, but applied to the two-stage PBS assessment.  Strict separation makes the stage-1 bid ranking equal the value ranking.  If the game reaches stage 2, the public bid history reveals values and the complete-information FPA allocates to the highest revealed value.  Hence a separating PBS assessment is efficient: type \(v\)'s interim allocation probability is \(F(v)^{n-1}\).  The envelope formula pins down the same payoff as in the first-price benchmark.  Appendix~\ref{app:pbs-payoff-equivalence} proves the lemma.

\begin{theorem}[Strictly separating PBS equilibrium under smooth i.i.d. values]
\label{thm:pbs-iid-regular-existence}
Under \Cref{ass:iid-valuation-setting}, in the PBS game with \(n\) builders, if \(q_1\ge1/2\), then there exists at least one symmetric separating PBE.  One such PBE has the symmetric separating stage-1 bid rule $s(v)$ and after revealed values \(v^{(1)}\ge v^{(2)}\), the highest-value builder wins at stage 2 and pays $p_2$, where
\[
s(v)
=
\frac{1}{q_1R(v)}\int_0^{q_1R(v)}W(z)\,dz,  \qquad p_2
=
W\!\left(q_1R(v^{(1)})+(1-q_1)R(v^{(2)})\right).
\]

\end{theorem}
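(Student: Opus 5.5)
The plan is to verify directly that the displayed assessment is a PBE. I will fix the stage-\(1\) rule \(s\), specify complete-information stage-\(2\) play after every public history together with beliefs, and then check every deviation a builder could make. Throughout I use \(R=F^{n-1}\), \(W=R^{-1}\), and the substitution \(u=R(\cdot)\). Under it, the highest of the other \(n-1\) values, \(Y\), satisfies \(R(Y)\sim\mathrm{Unif}[0,R(r)]\) conditional on \(Y<r\).

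\textbf{Step 1 (well-definedness and on-path play).} First I show that \(s\) is strictly increasing, continuous, and satisfies \(0\le s(v)\le v\). This holds because \(s\) is the running average of the increasing function \(W\) on \([0,q_1R(v)]\), so \(s(v)\le W(q_1R(v))\le v\). Next I check that the stage-\(2\) price is a legitimate selection from the interval stated before \Cref{lem:pbs-payoff-equivalence}. Since \(q_1R(v^{(1)})+(1-q_1)R(v^{(2)})\) is a convex combination, \(p_2\in[v^{(2)},v^{(1)}]\). Also \(p_2\ge W(q_1R(v^{(1)}))\ge s(v^{(1)})\), so \(p_2\) is weakly above every stage-\(1\) bid and respects \(\mathcal B_{i,2}=[b_{i,1},\infty)\). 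The stage-\(2\) profile I use is: every builder bids \(\max\{b_{i,1},p_2\}\) computed from the revealed reports. The top-reported builder wins, which I will formalize by an \(\varepsilon\)-undercut for the losers or by treating ties as resolved toward the efficient builder in the limit. Beliefs are point masses at \(s^{-1}\) of each observed bid. Off-range bids above \(s(\bar v)\) or between atoms are mapped to the nearest report; this is the standard off-path belief choice.

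\textbf{Step 2 (payment identity for report deviations).} Consider a type \(v\) that mimics report \(r\) in stage 1 and then follows the equilibrium stage-\(2\) play. Its expected payment is
\[
q_1R(r)s(r)+(1-q_1)\int_0^{R(r)}W\bigl(q_1R(r)+(1-q_1)u\bigr)\,du
=\int_0^{q_1R(r)}W+\int_{q_1R(r)}^{R(r)}W=\int_0^{R(r)}W(z)\,dz .
\]
This is exactly the FPA expected payment of report \(r\), and its allocation probability is \(R(r)\). Hence the report problem \(\max_r\, vR(r)-\int_0^{R(r)}W\) has first-order condition \(v=W(R(r))\), i.e. \(r=v\), and it is concave in \(R(r)\). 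Truthful separation is therefore optimal against one-shot misreports, which reproduces the payoff of \Cref{lem:pbs-payoff-equivalence}. The same computation shows that the formula for \(s\) is forced by the terminal-stop branch once \(p_2\) is chosen this way.

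\textbf{Step 3 (double deviations; main obstacle).} The hard part is sequential rationality after a stage-\(1\) misreport, when the deviator re-optimizes in stage 2 knowing its true \(v\). I expect \(q_1\ge 1/2\) to be used here. The case \(r>v\) is handled because, if the game continues, winning at \(p_2\) may exceed \(v\) and the builder can choose to lose instead. This option only lowers the gain relative to the Step-2 analysis, so upward deviations remain unprofitable. For \(r<v\), the deviator can, when it is not the top report but \(Y<v\), outbid the rivals' stage-\(2\) price. That price is at least \(W(q_1R(Y)+(1-q_1)R(r))\), and paying it captures extra surplus. I will write the deviator's payoff as
\[
\Pi(r;v)=\text{Step-2 payoff}+(1-q_1)\int_{R(r)}^{R(v)}\bigl(v-W(q_1u+(1-q_1)R(r))\bigr)^+\,du,
\]
and show \(\partial_r\Pi(r;v)\ge0\) for \(r<v\). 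The derivative of the Step-2 part is \(R'(r)(v-r)\). The correction term contributes a negative amount whose size is at most \((1-q_1)\) times the analogous term, because of the factor \((1-q_1)\) and the compression \(q_1u+(1-q_1)R(r)\). My plan is to bound the correction's derivative by \(\frac{1-q_1}{q_1}R'(r)(v-r)\) using monotonicity of \(W\). This bound makes \(q_1\ge1/2\) exactly the condition that keeps \(\partial_r\Pi\ge0\). If this crude bound fails, I will instead make the rivals' stage-\(2\) punishment bids depend on the public history only through reports, chosen as large as the interval allows (up to their own revealed value), and recheck the inequality. Finally, I will verify the remaining on-path stage-\(2\) incentives: no winner wants to raise its bid, and no loser wants to outbid, since \(p_2\ge v^{(2)}\).
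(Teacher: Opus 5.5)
Your architecture matches the paper's: rank coordinates, the payment identity $\int_0^{R(r)}W$, on-range downward and upward misreports with stage-$2$ re-optimization, then off-range beliefs. Steps 1 and 2 are fine. Step 3 is the only place where $q_1\ge1/2$ enters, and it contains two genuine errors.

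\textbf{Downward reports.} You credit the deviator with extra stage-$2$ surplus only when the top rival's value satisfies $Y<v$, i.e.\ $u=R(Y)\le R(v)$. But your own lower bound on the rival's price, $W(q_1u+(1-q_1)R(r))$, stays below $v$ for every $u<(R(v)-(1-q_1)R(r))/q_1$. This threshold exceeds $R(v)$ because $R(r)<R(v)$. Since the selected price sits strictly below the top revealed value, a low report also lets the deviator outbid rivals whose value exceeds its own. Your $\Pi$ therefore omits part of the deviation payoff and is not an upper bound on it, so a sign for $\partial_r\Pi$ proves nothing.

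The omission is material. For $n=2$, uniform values and $q_1=0.45$, type $v=1/2$ bidding $0$ obtains $(1-q_1)\int_0^1(\tfrac12-q_1y)\,dy\approx0.151$, above the truthful $1/8$. Your $\Pi$ gives $\approx0.107$ and would declare this profitable deviation unprofitable. Your crude derivative bound lands on $q_1\ge1/2$ only because the positive term it discards happens to compensate for the omitted region.

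The repair is the paper's argument. Integrate over the whole region and substitute $z=q_1u+(1-q_1)R(r)$, so the extra surplus is at most $\frac{1-q_1}{q_1}\int_{R(r)}^{R(v)}(v-W(z))\,dz$. Add $\int_0^{R(r)}(v-W(z))\,dz$ from the payment identity and compare with the truthful $\int_0^{R(v)}(v-W(z))\,dz$; this closes the case exactly when $(1-q_1)/q_1\le1$. Equivalently, the corrected bound has $r$-derivative $\bigl(1-\frac{1-q_1}{q_1}\bigr)R'(r)(v-r)\ge0$, so your monotonicity plan works on it. Your fallback of harsher rival bids is not available: an on-range misreport generates an on-path history, so the rivals' stage-$2$ bids are pinned by the stated $p_2$.

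\textbf{Upward reports.} The option to lose when $p_2>v$ replaces negative surplus with zero. It therefore \emph{raises} the deviator's payoff above the Step-2 value, so Step-2 concavity does not bound it. You need the explicit computation:
\begin{itemize}
\item The stop payoff is $\int_0^{q_1R(r)}(v-W(z))\,dz$.
\item When $q_1R(r)<R(v)$, the continuation payoff with optimal drop-out is $(1-q_1)\int_0^{R(r)}[v-W(q_1R(r)+(1-q_1)u)]_+\,du=\int_{q_1R(r)}^{R(v)}(v-W(z))\,dz$. The total then equals the truthful payoff exactly.
\item When $q_1R(r)\ge R(v)$, the continuation surplus vanishes and the stop payoff is at most the truthful one.
\item Rivals reporting above $r$ charge more than $r>v$, so they add nothing.
\end{itemize}
Upward misreports are thus exactly payoff-neutral over a whole range of $r$. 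There is no slack for a qualitative claim that the drop-out option only lowers the gain.
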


\begin{proof}[Proof Sketch]
The construction works by splitting the usual first-price auction rent across the two PBS stages. In the winning-rank coordinate $\tau=R(v)=F(v)^{n-1}$, the highest opponent rank is uniformly distributed on $[0, 1]$. The stage-1 bid rule covers the lower interval $[0, q_1\tau]$ of the first-price rent, while the selected stage-2 payment covers the remaining interval $[q_1\tau, \tau]$ when continuation occurs. Because the selected payment rank $q_1R(v^{(1)})+(1-q_1)R(v^{(2)})$ lies between the second and first revealed values, it is high enough to dominate the winning stage-1 bid, ensuring the continuation price is feasible.

The incentive logic mirrors standard first-price reasoning in rank form. Truthful separation gives type $v$ the same winning probability and total expected payoff as in the symmetric first-price auction. Upward reports only add allocation states with nonpositive surplus. Downward reports save on the stage-1 bid but lose allocation states; because the stage-2 payment rank moves with slope $q_1$ in the opponent's rank, any recovered stage-2 surplus is bounded by the factor $\frac{1-q_1}{q_1}$. When $q_1\ge1/2$, this recovery factor is at most one, so a downward report can at best tie the truthful payoff. Appendix~\ref{app:pbs-iid-regular-existence} specifies the off-range posterior beliefs and verifies the associated one-shot deviations.
\end{proof}

\begin{proposition}[Low-stop nonexistence for separating PBS]
\label{prop:pbs-low-stop-nonexistence}
\label{prop:pbs-uniform-low-stop-nonexistence}
Under \Cref{ass:iid-valuation-setting} and normalize the common value support to \([0,1]\).  In the PBS game with \(n\ge2\) fast builders, the following nonexistence statements hold.
\begin{enumerate}
\item If \(F\) is continuous and strictly increasing, then no pure symmetric strictly separating PBE exists whenever \(0<q_1<1/n\).
\item If \(F=\operatorname{Unif}[0,1]\), then no pure symmetric strictly separating PBE exists whenever \(0<q_1\le1/3\).
\end{enumerate}
\end{proposition}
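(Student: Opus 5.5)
The plan is to argue by contradiction. Fix a pure symmetric strictly separating PBE with stage-1 rule \(s\) and selected on-path stage-2 prices \(p(\cdot)\). Then play two constraints against each other: an \emph{upper} bound on \(s\) that comes from payoff equivalence, and a \emph{lower} bound on \(s\) that comes from a local downward deviation exploiting the stage-2 complete-information continuation. When \(q_1\) is small, the two bounds are incompatible near the bottom of the support, where \(R(v)=F(v)^{n-1}\) is small.

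\textbf{Step 1: upper bound on \(s\).} By \Cref{lem:pbs-payoff-equivalence}, type \(v\)'s interim payoff is \(U(v)=\int_0^v R(z)\,dz\), and its allocation probability is \(R(v)\) in both stages. Decompose
\[
U(v)=q_1R(v)\bigl(v-s(v)\bigr)+(1-q_1)\,\mathbb E\bigl[(v-p)\mathbf 1\{\text{win}\}\bigr].
\]
Because every selected stage-2 price satisfies \(p\ge v^{(2)}\), the stage-2 surplus is at most \(\int_0^v (v-x)\,dR(x)=U(v)\). Substituting this gives \(R(v)\bigl(v-s(v)\bigr)\ge U(v)\), that is,
\[
s(v)\le \beta^{\mathrm{FPA}}(v)=v-U(v)/R(v).
\]

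\textbf{Step 2: lower bound from a downward mimic.} Let type \(v\) submit \(s(r)\) with \(r<v\). This bid is on path, so the opponents hold the Bayes posterior that the deviator's value is \(r\), and the continuation is the complete-information stage-2 game with revealed profile \((r,v_{-i})\).

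In stage 1, the deviator wins with probability \(q_1R(r)\) and pays \(s(r)\). In stage 2, it wins whenever it can outbid the price the opponents are induced to pay, and that price is at most the highest opponent value. In particular, in the states where the highest opponent value lies in \((r,v)\), the deviator recovers surplus of order \(v-x\) instead of losing the object.

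Computing the first-order effect as \(r\uparrow v\) gives a differential inequality of the form
\[
q_1\,\frac{d}{dv}\Bigl[R(v)\bigl(v-s(v)\bigr)\Bigr]\;\ge\; q_1R(v)-(1-q_1)\,c_n(v)\,R'(v)\,(\cdots),
\]
where the recovered stage-2 rent enters with weight \(1-q_1\) relative to the stage-1 saving \(q_1R(v)\,s'(v)\). This is the same \((1-q_1)/q_1\) recovery factor that appears in the proof sketch of \Cref{thm:pbs-iid-regular-existence}, but now used in the reverse direction.

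\textbf{Step 3: blow-up near \(v=0\).} Use \(R(v)\sim c\,v^{n-1}\) near the bottom. In the uniform case this holds exactly, since \(R(v)=v^{n-1}\). Write \(s(v)\approx\alpha v\). Step 1 forces
\[
\alpha\le \frac{n-1}{n}.
\]
Step 2, after integrating the inequality from \(0\), forces \(\alpha\) above a threshold that is decreasing in \(q_1\). The comparison should reduce to \(q_1<1/n\); at that point the required slope exceeds \(\frac{n-1}{n}\), or even exceeds \(1\), which violates \(s(v)\le v\). Either way no strictly increasing \(s\) survives.

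For \(F=\operatorname{Unif}[0,1]\), I would not use only the local expansion. Instead I would use the global deviation payoff with the exact bounds on \(p\), namely \(p\in[\max\{s(v^{(1)}),v^{(2)}\},v^{(1)}]\). The resulting ODE for \(s\) has a closed-form solution, and I expect it to violate either monotonicity or \(s\le\beta^{\mathrm{FPA}}\) exactly when \(q_1\le 1/3\). I would check this first with \(n=2\), then verify that larger \(n\) only tightens the conclusion.

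\textbf{Main obstacle.} The hard step is Step 2. The stage-2 price selection \(p(\cdot)\) is arbitrary within its feasible interval, so the deviator's recovered rent must be lower-bounded \emph{uniformly} over selections.

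My first attempt is to use only \(p\le v_j\) for the highest opponent \(j\). This is too weak by itself: it yields only \((1-q_1)U(v)\), which does not contradict the equilibrium payoff. The strict gain has to come from states where the deviator's own revealed rank is misordered relative to an opponent in \((r,v)\).

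So I will restrict attention to those states and combine them with the requirement that the equilibrium price \(p\ge s(v^{(1)})\) rises with the revealed top value. This monotonic link is what converts the recovered rent into a bound carrying the factor \(1/n\), through \(\int_0^v R\,/\,(vR)\to 1/n\).
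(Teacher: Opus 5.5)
There is a genuine gap: the two bounds you plan to play against each other come from the wrong deviations, and as set up they cannot conflict.

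\textbf{A downward mimic does not bound $s$ from below.} Payoff equivalence applies to the mimicked type $r$ too. It pins the \emph{total} payment on the states where $r$ is the top report:
$q_1R(r)s(r)+(1-q_1)\,\mathbb E[p\,\mathbf 1\{r\text{ tops}\}]=\int_0^r x\,dR(x)$.
So a type $v>r$ that bids $s(r)$ and follows the selected continuation earns $R(r)v-\int_0^r x\,dR(x)$ on those states, however the payment is split between $s(r)$ and stage~$2$.

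What a downward mimic does constrain is the stage-$2$ price charged to a higher opponent when the fake report $r$ is the revealed second value. Feasibility only requires that price to exceed $r$ (and the winner's own stage-$1$ bid). So the deviator can snipe below its true value, even against opponents ranked above $v$. IC caps this rent at $\frac{1}{1-q_1}\int_r^v(v-x)\,dR(x)$, which forces those ``column'' prices up.

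With only $p\le v^{(1)}$, as in your Step~2, the best you get is $s(v)\ge v-U(v)/(q_1R(v))$. This never conflicts with Step~1 and becomes vacuous as $q_1\to0$. Adding $p\ge s(v^{(1)})$ only lowers the rent you want to bound from below.

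\textbf{The lower bound on stage-$1$ bids needs an upward mimic.} It must come from an \emph{upward} mimic through the stop branch, which your plan never uses. In rank coordinates ($m=n-1$, $\sigma=s\circ F^{-1}$, $A(x)=\int_0^xF^{-1}(z)\,d(z^m)$), rank $q_1^{1/m}k$ reporting $k$ yields $q_1k^m\sigma(k)\ge A(q_1^{1/m}k)$. This is compatible with $s\le\beta^{\mathrm{FPA}}$; for two uniform builders the bounds read $q_1k/2\le s(k)\le k/2$. So Step~1 must be replaced by a sharper bound, obtained by feeding the column-price bounds into the row identity. Rows fix the top report and columns fix the lower one, so they only meet after integrating over the whole triangle (Fubini). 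That produces the single global inequality \eqref{eq:pbs-n-necessary-ineq}. The paper's contradiction comes from this integrated inequality, not from behavior near the bottom of the support.

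\textbf{The local machinery of Steps~2--3 cannot carry the argument.} After the row identity, the truthful-minus-deviation gap is $\int_r^v(v-x)\,dR(x)=O((v-r)^2)$. So a first-order expansion as $r\uparrow v$ yields no condition on $s$. The bite lies in the non-local cap, which the paper evaluates at $t=T_q(k)$ with $T_q(k)^m-k^m=(1-q_1)(1-k^m)$. It then turns that cap into an integrated column bound by a Lipschitz argument on the positive part.

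Your ansatz $s(v)\approx\alpha v$, $R(v)\sim cv^{n-1}$ is also unavailable: $s$ is merely strictly increasing, and the density may vanish or blow up at $0$. Clause~1 therefore needs a distribution-free step. The paper writes the violation gap as $\int_0^1D_{m,q}(z)V(z)\,dz$ and shows the antiderivative $H_{m,q}$ is strictly negative on $(0,1)$ when $q_1<1/n$. It then integrates by parts against the strictly increasing quantile $V$.

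\textbf{The uniform case.} There is no ODE for $s$, because the stage-$2$ selection is free and every equilibrium condition is an inequality. Also, $1/3$ is not sharp: for $n=2$, clause~1 already gives $q_1<1/2$. So you should not expect a violation ``exactly when $q_1\le1/3$''. The paper instead lower-bounds the explicit gap $\Delta_m^U(q)$ and checks positivity on $(0,1/3]$ for every $n$.
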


These impossibility clauses identify the opposite force from \Cref{thm:pbs-iid-regular-existence}.  When terminal stage-1 settlement \(q_1\) is too low, builders have little reason to reveal their values through stage-1 bids because those bids rarely determine the payment. Attempts to sustain separation then create conflicting incentives across the two stages, so no strictly separating equilibrium can exist. For general value distributions, nonexistence is guaranteed when \(q_1<1/n\). With i.i.d. uniform values, this result extends to \(q_1\le 1/3\). The \(1/3\) bound is a convenient uniform-value benchmark, not a sharp threshold for every \(n\).  For \(n=2\), the distribution-free clause gives the stronger below-half bound.  Appendix~\ref{app:pbs-low-stop-nonexistence} gives the unified rank-space proof.

Together, the PBS results isolate the role of exogenous stopping: a sufficiently large terminal stage-\(1\) probability can discipline separation, while low stopping probabilities rule it out.  The next subsection keeps the same two-stage auction logic but lets the proposer choose stopping after observing signed bids.

\subsection{Uni-Pooling PBE in Simplified ePBS}
\label{subsec:simplified-epbs}

The PBS result shows what is possible when terminal stage-\(1\) stopping is exogenous.  We now isolate what changes in ePBS when the proposer observes signed bids before choosing whether to stop.  Full ePBS is difficult to characterize directly because both \(\phi\) and \(\psi\) are chosen after \(I_{P,1}=\mathcal E\).  A first-stage bid can therefore affect both the terminal stage-\(1\) payment and the stage-\(2\) information environment.  The simplified environment below removes selective disclosure and keeps the proposer's endogenous stopping decision.

\begin{definition}[Simplified ePBS environment]
\label{def:simplified-epbs-environment}
A simplified ePBS environment is a restriction of the general ePBS game in
\Cref{def:general_epbs} with two fast builders and product prior
\(\mathsf F=F\otimes F\) on support \([0,1]^2\).  Throughout this subsection, \(F\) is atomless, continuous, strictly increasing on \([0,1]\), continuously differentiable on \((0,1)\), and has positive density on \((0,1)\).  Builders first submit signed
stage-\(1\) bids as in the general model.  After observing the verifiable
history \(I_{P,1}:=\mathcal E(\mathbf b_1,\texttt{aux}_1)\), the proposer can
only choose between two actions: \(\STOP\), in which case the stage-\(1\)
first-price rule \eqref{eq:fpa} is applied to \(\mathbf b_1\), or \(\CTN\), in
which case the stage-\(1\) signed-bid history is fully broadcast.  Formally, the continuation message profile is fixed at
\[
\mathfrak M(I_{P,1})=2^{\mathcal E},
\qquad
\psi(\mathbf m\mid I_{P,1},a=\CTN)
=
\delta_{(\mathcal E,\varnothing,\varnothing)}(\mathbf m),
\]
where the first component is the common message \(m_c\) and the two private
messages are empty.  Thus simplified ePBS removes selective disclosure and leaves the proposer with only the stopping choice \(\phi\).
\end{definition}

\begin{definition}[Regular uni-pooling rule]
\label{def:regular-uni-pooling}
In a simplified ePBS environment, a symmetric stage-\(1\) bid rule \(s\) is
\emph{regular uni-pooling} with cutoff \(c\in[0,1]\) if
\[
        s(v)=0 \quad\text{for every } v<c,
        \qquad
        s(v)>0 \quad\text{for every } v\in(c,1],
\]
the positive branch is strictly increasing and continuously differentiable on
\((c,1]\), and \(0\le s(v)\le v\) on path.  Thus values below \(c\) form a
zero-bid pool, while values above \(c\) are separated by the positive branch.
On path, Bayes' rule gives the truncated posterior \(F(\cdot\mid v<c)\) after a
zero bid and the degenerate posterior \(\delta_t\) after a positive bid
\(s(t)>0\).  The cutoff type itself is payoff-irrelevant under an atomless
prior, so it may be assigned to either branch by convention.
\end{definition}

\begin{definition}[Cutoff-admissibility region]
\label{def:simplified-epbs-cutoff-region}
For \(c\in[0,1)\), define \( A_F(v):=\int_0^v F(t)\,dt, \)
write \(c\in\mathcal C_F(k_2)\) if
\begin{equation}\label{eq:sepbs_c_f1}
     A_F(u)\le(1-k_2)\bigl[uF(u)+A_F(c)\bigr]\,, \qquad \forall u\in[c,1],
\end{equation}
and
\begin{equation} \label{eq:sepbs_c_f2}
    d_F(c)\le k_2,
\qquad
d_F(c):=
\begin{cases}
c-\dfrac{k_2A_F(c)}{F(c)}, & c>0,\\[1mm]
0, & c=0
\end{cases}.
\end{equation}
\end{definition}

The two restrictions in \(\mathcal C_F(k_2)\) have different economic roles.
\eqref{eq:sepbs_c_f1} is the positive-history stopping condition: after a
positive first-stage bid reveals a value \(u\), the inherited stage-\(1\) price
must be high enough that the proposer is willing to stop rather than use the
revealed information in stage \(2\).  \eqref{eq:sepbs_c_f2} is a gap-deviation
condition: the cutoff bid \(d_F(c)\) cannot lie above \(k_2\), otherwise a type
near the cutoff could submit an off-branch positive bid, force immediate
settlement, and profitably leave the zero pool.

\begin{theorem}[Global uni-pooling PBE in simplified ePBS]
\label{thm:simplified-epbs-unipooling}
For every \(k_2\in[0,1]\), the simplified ePBS game admits a symmetric
uni-pooling PBE.  Let
\(
K_F^+:=1/(1+A_F(1))
\)
and, for \(c>0\), define
\[
        \ell_F(c):=\frac{cF(c)}{F(c)+A_F(c)},\qquad \ell_F(0):=0.
\]
\begin{enumerate}[label=(\roman*)]
\item If \(0\le k_2<K_F^+\), then \(\mathcal C_F(k_2)\) is nonempty.  Moreover,
for every \(c\in\mathcal C_F(k_2)\), the first-stage rule
\begin{equation} \label{eq:stage1_shape_s_epbs}
s_{k_2,c}^F(v)
=
\begin{cases}
0, & v<c,\\[1mm]
v-\dfrac{A_F(v)-(1-k_2)A_F(c)}{F(v)}, & v\ge c
\end{cases}
\end{equation}
can be completed into a PBE.  On path, the proposer chooses \(\CTN\) at the
\((0,0)\) stage-\(1\) bid history and chooses \(\STOP\) at every history with at
least one positive on-branch first-stage bid.  When \(c=0\), the second line of
\eqref{eq:stage1_shape_s_epbs} is understood for \(v>0\), and
\(s_{k_2,0}^F(0)=0\).
\item If \(K_F^+\le k_2\le1\), then there exists a degenerate full-pooling PBE:
all builders bid \(0\) at stage \(1\), the proposer chooses \(\CTN\) after the
zero-zero history, and the stage-\(2\) continuation is the standard two-builder
first-price auction under the prior \(F\).
\end{enumerate}
\end{theorem}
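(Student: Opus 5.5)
I will construct the PBE explicitly and check sequential rationality node by node, treating parts (i) and (ii) separately.

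\textbf{Part (i): assessment.} Fix \(c\in\mathcal C_F(k_2)\) and use \(s=s^F_{k_2,c}\) from \eqref{eq:stage1_shape_s_epbs}. The posteriors are those of \Cref{def:regular-uni-pooling}: \(F(\cdot\mid v<c)\) after a zero bid and \(\delta_t\) after an on-branch bid \(s(t)\). After an off-branch positive bid \(b\) I assign the posterior that puts mass on the lowest type consistent with the bid. At the \((0,0)\) history the proposer continues with full broadcast. Stage \(2\) is then a symmetric FPA with both values drawn from \(F(\cdot\mid v<c)\), which gives continuation bid \(\beta_c(v)=v-A_F(v)/F(v)\) and expected revenue \(k_2\,\mathbb E[\max]\). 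At any history with a positive bid the proposer stops. If she deviates and continues there, I select the complete-information stage-2 equilibrium in which the revealed high type pays the lowest admissible price. This price is bounded by the revealed second value, or by the inherited bid.

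\textbf{Proposer stopping conditions.} At a positive history \((s(u),0)\) with \(u\ge c\), stopping yields \(s(u)\). The best continuation yields at most \(k_2\max\{\text{inherited bid},\,\mathbb E[\text{pool value}]\}\). So it suffices that \(s(u)\ge k_2 u\). Writing this out, \(uF(u)-A_F(u)+(1-k_2)A_F(c)\ge k_2uF(u)\), which is exactly \eqref{eq:sepbs_c_f1}. Two positive bids are handled the same way, using the larger bid. At \((0,0)\), stopping yields \(0\) while continuation yields positive revenue, so \(\CTN\) is optimal.

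\textbf{Builder incentives.} Type \(v\ge c\) reporting \(r\ge c\) wins with probability \(F(r)\), and always pays \(s(r)\) since the proposer stops. A zero-pool type wins only against pool opponents in stage 2, with payoff \(k_2(vF(v)-A_F(v))\) for \(v<c\). The positive branch is pinned down by the envelope condition \(U'(v)=F(v)\) together with the boundary \(U(c)=k_2(cF(c)-A_F(c))\) from the pool. This yields \(U(v)=U(c)+A_F(v)-A_F(c)\), so \(s(v)F(v)=vF(v)-U(v)\), matching \eqref{eq:stage1_shape_s_epbs}. Local and global deviations within the branch follow from monotonicity of \(F\), as in a standard FPA. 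Deviations from the pool to a branch bid are unprofitable because branch payoffs are increasing in type and match the pool at \(c\). A deviation by a branch type into the pool is unprofitable by the same single-crossing comparison.

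\textbf{Main obstacle: off-branch bids.} The delicate step is ruling out off-branch positive bids \(b\in(0,s(c^+))\). Under my belief selection such a bid forces the proposer to \(\STOP\), so type \(v\) wins when the opponent is in the pool and pays \(b\). The payoff is \(F(c)(v-b)\), which must not exceed the equilibrium payoff. Letting \(b\downarrow0\) and taking \(v\to c\), the binding comparison is \(F(c)c\le U(c)+\text{(stop-forcing bid threshold)}\). The proposer stops on \(b\) only if \(b\ge k_2\cdot(\text{continuation value})\). I expect this to reduce to \(d_F(c)\le k_2\), i.e.\ \eqref{eq:sepbs_c_f2}, though the exact bookkeeping of the stop threshold is where I am least sure.

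\textbf{Nonemptiness of \(\mathcal C_F(k_2)\).} To show \(\mathcal C_F(k_2)\neq\varnothing\) for \(k_2<K_F^+\), I will test \(c=0\). Then \eqref{eq:sepbs_c_f2} is trivial and \eqref{eq:sepbs_c_f1} reads \(A_F(u)\le(1-k_2)uF(u)\). If that fails, I will try \(c\) near \(1\) instead. There \eqref{eq:sepbs_c_f1} only needs checking on a short interval, and \(K_F^+\) appears in the boundary condition at \(u=1\): \(A_F(1)\le(1-k_2)(1+A_F(c))\). I will use \(\ell_F\) to locate the admissible \(c\) for \eqref{eq:sepbs_c_f2}.

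\textbf{Part (ii).} All builders bid \(0\) and the proposer continues. For \(k_2\ge K_F^+\), any positive deviation bid \(b\) can be assigned the belief \(\delta_0\) (or the belief least favorable to the deviator), under which the proposer continues. The deviator then gains nothing in stage 1 while facing the same stage-2 FPA. This delivers the full-pooling PBE.
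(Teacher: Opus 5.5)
Your on-path steps are on track (in particular, $s(u)\ge k_2u$ is exactly \eqref{eq:sepbs_c_f1}), but the off-branch step fails under the assessment you wrote down. You let the proposer stop after every positive bid. Then a pool type $v\in(0,c)$ that bids a small $b\in(0,s(c))$ wins immediately whenever the opponent pools, and earns $F(c)(v-b)$. Its equilibrium payoff is $k_2A_F(v)$, which your own continuation bid $v-A_F(v)/F(v)$ delivers. The expression $k_2(vF(v)-A_F(v))$ you wrote is the expected \emph{payment}, and with it the envelope step does not reproduce \eqref{eq:stage1_shape_s_epbs} unless $A_F(c)=cF(c)/2$, as for uniform $F$. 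Since $A_F(v)<vF(v)\le vF(c)$, letting $b\downarrow0$ gives a strictly profitable deviation for every $c>0$.

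Your guess that the relevant condition is $d_F(c)\le k_2$ is right, but it enters the other way around. Bids in $(k_2,1]$ necessarily force a stop. Meanwhile \eqref{eq:sepbs_c_f2} guarantees that every gap bid satisfies $x<d_F(c)\le k_2$, so gap bids can, and must, be followed by continuation. That requires an explicit completion. First, give the gap bidder posterior $\delta_1$, so that a continuation price $m_x$ with $k_2m_x>x$ is credible. Second, specify sequentially rational stage-$2$ play that leaves the true deviator little. The paper takes $m_x\in\bigl(\max\{x/k_2,c,1-A_F(c)/F(c)\},1\bigr)$, lets the believed value-one gap bidder bid $m_x$, and has pool types mix with cdf $(1-m_x)/(1-b)$ on $[c,m_x)$. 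Then a pool type gains nothing, and a branch type gets at most $k_2F(c)(1-m_x)<k_2A_F(c)\le U(v)$. Bids above $s(1)$ also need checking: $U(v)\ge v-s(1)$ follows from $\int_v^1(1-F)\ge0$ and $\int_c^1(1-F)\ge0$.

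Part (ii) has the mirror-image gap. For $b\in(k_2,1]$ no belief makes continuation optimal. In any sequentially rational continuation the winning stage-$2$ bid cannot exceed $\max\{1,b\}=1$, so scaled continuation revenue is at most $k_2<b$. A belief $\delta_0$ on the deviator in fact pushes the proposer toward stopping even for small $b$; continuation below $k_2$ is sustained by a high-value belief, as above. A deviation to $b\downarrow k_2$ therefore earns $v-k_2$ against the zero-bidding opponent. Full pooling is an equilibrium only if $k_2A_F(v)\ge v-k_2$ for all $v$, i.e.\ $k_2\ge\sup_v v/(1+A_F(v))=1/(1+A_F(1))=K_F^+$, the ratio being nondecreasing because $vF(v)\le1$. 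Your argument never uses $k_2\ge K_F^+$, so it would establish full pooling for every $k_2$. That is false: for $k_2<K_F^+$, type $1$ gains by bidding just above $k_2$.

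The nonemptiness plan in part (i) does not close either. The choice $c=0$ fails for uniform $F$ with $k_2\in(1/2,2/3)$. A cutoff $c$ near $1$ violates \eqref{eq:sepbs_c_f2}, because $d_F(c)\le k_2\iff\ell_F(c)\le k_2$ while $\ell_F(c)\to K_F^+>k_2$. Instead, use that $\ell_F$ is continuous and strictly increasing from $0$ to $K_F^+$, and take $c=\ell_F^{-1}(k_2)$ so that \eqref{eq:sepbs_c_f2} binds. Then obtain \eqref{eq:sepbs_c_f1} from $A_F(u)\le A_F(c)+(u-c)F(u)$ together with the identity, valid for $k_2=\ell_F(c)$,
\[
cF(u)-k_2\bigl(uF(u)+A_F(c)\bigr)=\frac{c}{F(c)+A_F(c)}\Bigl[F(c)F(u)(1-u)+A_F(c)\bigl(F(u)-F(c)\bigr)\Bigr]\ge0 .
\]
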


\begin{proof}[Proof Sketch]
For \(k_2<K_F^+\), the cutoff range is nonempty because \(\ell_F\) is continuous
and strictly increasing, with \(\ell_F(0)=0\) and \(\ell_F(1)=K_F^+\).  If
\(k_2=\ell_F(c)\), then the upper admissibility condition binds and the lower
condition follows from monotonicity of \(F\).  Given any admissible cutoff, the
positive branch is a localized first-price reporting problem: mimicking report
\(t\ge c\) wins with probability \(F(t)\), and local truthfulness pins down
\eqref{eq:stage1_shape_s_epbs}.  The integration constant is fixed by the cutoff
type's indifference between the positive branch and the zero pool.  At zero-zero
the proposer continues; after positive histories, \eqref{eq:sepbs_c_f1} makes
stopping sequentially optimal, while \eqref{eq:sepbs_c_f2} rules out profitable
gap bids.

For \(k_2\ge K_F^+\), the full-pooling assessment gives each type \(v\) the
continuation payoff \(k_2A_F(v)\).  A positive stage-\(1\) deviation that forces
stopping can yield at most \(v-k_2\), and \(k_2\ge K_F^+\) is exactly the
condition under which \(k_2A_F(v)\ge v-k_2\) for all \(v\).
Appendix~\ref{app:simplified-epbs-unipooling} gives the full PBE construction,
off-path completions, and endpoint proof.
\end{proof}

\Cref{thm:simplified-epbs-unipooling} separates the cutoff set from the
equilibrium construction.  The set \(\mathcal C_F(k_2)\) is a fixed-\(k_2\)
parameter region for nondegenerate cutoff equilibria.  The scalar \(K_F^+\) is
the upper endpoint of that branch in the \(k_2\)-dimension: below \(K_F^+\), the
cutoff range is nonempty; at \(K_F^+\), the branch reaches the full-pooling
endpoint \(c=1\); above \(K_F^+\), the nondegenerate cutoff branch is gone, but
the full-pooling PBE remains.

For the selection/completeness statement below, we impose one additional
regularity class on the cutoff branch.  A simplified-ePBS assessment is in the
\emph{regular bang-bang} class if it satisfies the following three restrictions.
First, the first-stage bid rule is regular uni-pooling in the sense of
\Cref{def:regular-uni-pooling}, with a cutoff \(c<1\).  Second, on path, the
proposer continues at the zero-zero history and stops after every history with at
least one positive first-stage bid.  Third, at a positive-positive on-path
history revealing values \(u>t\), the selected complete-information stage-\(2\)
revenue is at least the lower revealed value \(t\).  The last restriction rules
out artificially low selected continuation prices after two positive reports.

\begin{proposition}[Completeness within regular bang-bang selections]
\label{prop:simplified-epbs-completeness}
Within the regular bang-bang class, every cutoff uni-pooling PBE with \(c<1\)
has the cutoff form in \Cref{thm:simplified-epbs-unipooling}, and its cutoff must
lie in \(\mathcal C_F(k_2)\).  Consequently, no nondegenerate regular bang-bang
cutoff equilibrium exists for \(k_2>K_F^+\).  At \(k_2=K_F^+\), the closure of the
cutoff branch reaches the full-pooling endpoint \(c=1\).
\end{proposition}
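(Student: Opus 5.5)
The plan is to take an arbitrary regular bang-bang cutoff PBE with cutoff \(c<1\) and positive branch \(s\), and to reconstruct \(s\) and the constraints on \(c\) from equilibrium conditions alone.

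\textbf{Step 1: interim payoff on the positive branch.} Fix a type \(v>c\) that mimics a report \(t\in(c,1]\). On path the proposer stops after any history containing a positive bid. So the report \(t\) settles at stage~\(1\) with first-price payment \(s(t)\).
\begin{itemize}
\item Against an opponent in the zero pool, the deviator wins outright.
\item Against a positive opponent with value \(t'\), the deviator wins if and only if \(t>t'\), because \(s\) is strictly increasing.
\end{itemize}
Hence the report \(t\) yields interim payoff \(F(t)(v-s(t))\). This is a standard one-dimensional first-price reporting problem. Local incentive compatibility together with the envelope theorem (using \(C^1\) regularity of \(s\) on \((c,1]\)) gives
\[
U(v)=U(c^+)+\int_c^v F(x)\,dx ,\qquad F(v)s(v)=vF(v)-U(v).
\]
The positive branch is therefore pinned down up to the constant \(U(c^+)\).

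\textbf{Step 2: payoff in the zero pool and the integration constant.} At the zero-zero history the proposer continues. Full broadcast then reveals only that both values lie below \(c\), so the stage-\(2\) continuation is a symmetric first-price auction between two builders with prior \(F(\cdot\mid v<c)\), discounted by \(k_2\).

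A type \(v<c\) obtains \(k_2\) times its continuation FPA payoff, conditional on the opponent being in the pool, with probability \(F(c)\). Against a positive opponent it loses, since the proposer stops and the zero bid loses. The truncated-prior FPA payoff is \(\int_0^v F(x)\,dx/F(c)\). The unconditional pool payoff is therefore \(k_2A_F(v)\). I need to argue that the on-path stage-\(2\) continuation is the efficient symmetric FPA; I expect regularity and the selection assumption to give this, citing the standard one-shot FPA.

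Continuity of the cutoff type's indifference gives \(U(c^+)=k_2A_F(c)\). Substituting into Step 1 yields
\[
s(v)=v-\frac{A_F(v)-(1-k_2)A_F(c)}{F(v)},
\]
which is exactly \eqref{eq:stage1_shape_s_epbs}. Global IC on the branch then follows from the monotonicity of \(s\).

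\textbf{Step 3: necessity of \(c\in\mathcal C_F(k_2)\).}

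For condition \eqref{eq:sepbs_c_f1}, consider a positive-zero history revealing value \(u\ge c\) against the pool. Stopping gives revenue \(s(u)\). Deviating to continuation with full disclosure lets the proposer extract, in a complete-information stage~\(2\) against a pooled opponent, revenue that can be pushed to \(k_2u\) times the relevant probability; the selection fixes this. Sequential optimality of stopping then requires \(s(u)\ge\) this continuation value. Multiplying by \(F(u)\) should reproduce \(A_F(u)\le(1-k_2)[uF(u)+A_F(c)]\).

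This step is the main obstacle. I must identify the precise continuation revenue that the proposer can guarantee after a positive-zero history, using the third bang-bang restriction (selected revenue at least the lower revealed value) and off-path beliefs, so that the inequality is necessary and not merely sufficient. My first attempt is to use the best complete-information continuation price consistent with PBE, namely the revealed value \(u\) of the positive bidder. Then I check that the resulting inequality matches \eqref{eq:sepbs_c_f1} after rearranging \(s(u)\ge k_2u\) into \(F\)-weighted form.

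For condition \eqref{eq:sepbs_c_f2}, consider a type just below \(c\) that submits an off-branch positive bid \(b\in(0,s(c^+))\). Such a bid forces stopping and wins against the pool with probability \(F(c)\). Requiring the deviation to be unprofitable as \(v\uparrow c\), and using \(s(c^+)=d_F(c)\), gives \(d_F(c)\le k_2\).

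\textbf{Step 4: consequences.} By the proof sketch of \Cref{thm:simplified-epbs-unipooling}, \(\mathcal C_F(k_2)\) is empty for \(k_2>K_F^+\), because \(\ell_F\) is increasing with \(\ell_F(1)=K_F^+\). Hence no nondegenerate regular bang-bang equilibrium exists in that range. At \(k_2=K_F^+\), the admissible cutoffs accumulate only at \(c=1\), which gives the closure statement.
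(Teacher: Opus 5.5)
Steps 1--2 reproduce the paper's derivation of the branch \eqref{eq:stage1_shape_s_epbs}: local IC, then the cutoff indifference $F(c)(c-d)=k_2A_F(c)$. Step 3, however, has two genuine gaps, and Step 4 cites the wrong mechanism.

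\textbf{First gap: condition \eqref{eq:sepbs_c_f1}.} You work at a positive--zero history, where the regular bang-bang class gives you nothing. Its third restriction lower-bounds the selected continuation revenue only at positive--positive histories, where both values are revealed. At a positive--zero history the opponent is only known to lie below $c$. In any continuation where the pooled bidder does not bid above its value, the positive bidder wins for sure by bidding $\max\{s(u),c\}$. So the selected revenue there is at most $\max\{s(u),c\}$, and the stopping condition is far weaker than $s(u)\ge k_2u$.

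Appealing to ``the best continuation price consistent with PBE'' also points the wrong way. The proposer best-responds to the \emph{selected} continuation, so necessity requires a lower bound on selected revenue, not an upper one.

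The repair is the paper's. Take an on-path history $(s(u),s(t))$ with $c<t<u$. There the proposer stops, and the restriction gives selected continuation revenue $\widehat R_2\ge t$. Hence $s(u)\ge k_2\widehat R_2\ge k_2t$, and letting $t\uparrow u$ yields $s(u)\ge k_2u$. As you correctly note, this is \eqref{eq:sepbs_c_f1} after multiplying by $F(u)$.

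\textbf{Second gap: condition \eqref{eq:sepbs_c_f2}.} Your claim that every gap bid $b\in(0,d)$ forces stopping is false. Off-path beliefs make continuation optimal when $b<k_2$; for example, the belief that the gap bidder has value $1$, exactly as in the construction behind \Cref{thm:simplified-epbs-unipooling}. Taken literally, your argument proves too much. A pooling type near $c$ would then earn about $F(c)(c-b)\to cF(c)>k_2A_F(c)$ as $b\downarrow0$, which rules out every $c>0$.

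The missing observation is that continuation revenue is at most $1$ under any beliefs, since no winner pays above its value. So $k_2\widehat R_2\le k_2$, and stopping is forced for every $b>k_2$. If $d>k_2$, any $b\in(k_2,d)$ gives a pooling type near $c$ at least $F(c)(c-b)>F(c)(c-d)=k_2A_F(c)$. That is a contradiction, so $d\le k_2$.

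\textbf{Step 4.} This step cannot rest on monotonicity of $\ell_F$. Condition \eqref{eq:sepbs_c_f2} is equivalent to $\ell_F(c)\le k_2$, which holds for every $c$ once $k_2\ge K_F^+$. Emptiness instead comes from \eqref{eq:sepbs_c_f1} at $u=1$: $s(1)\ge k_2$ is equivalent to $(1-k_2)(1+A_F(c))\ge A_F(1)$. This fails for every $c<1$ when $k_2>K_F^+$. When $k_2=K_F^+$ it forces $A_F(c)=A_F(1)$, i.e.\ $c=1$, which is the closure claim.
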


\Cref{prop:simplified-epbs-completeness} does not assert global uniqueness over
all PBE.  It says that, once regular bang-bang behavior is imposed,
\Cref{thm:simplified-epbs-unipooling} exhausts the nondegenerate cutoff branch.
Appendix~\ref{app:simplified-epbs-completeness} proves the cutoff-region
necessity.

\begin{corollary}[FPA upper bound on builder payoff]
\label{cor:simplified-epbs-builder-fpa-bound}
For any uni-pooling PBE constructed in
\Cref{thm:simplified-epbs-unipooling}, let \(b_F^{\mathrm{uni}}\) be a
representative builder's ex-ante payoff, and let
\(
        b_F^{\mathrm{FPA}}:=\int_0^1 A_F(v)\,dF(v)
\)
be the representative builder payoff in the standard symmetric two-bidder i.i.d. first-price auction.
Then \(b_F^{\mathrm{uni}}\le b_F^{\mathrm{FPA}}\).
\end{corollary}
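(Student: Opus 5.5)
The plan is to compare interim payoffs type by type and then integrate against \(dF\). By the envelope formula, or equivalently \Cref{lem:pbs-payoff-equivalence} with \(n=2\) and support \([0,1]\), the symmetric two-bidder i.i.d.\ FPA gives type \(v\) the interim payoff \(U^{\mathrm{FPA}}(v)=\int_0^v F(t)\,dt=A_F(v)\). Hence \(b_F^{\mathrm{FPA}}=\int_0^1 A_F(v)\,dF(v)\), and it suffices to show \(U^{\mathrm{uni}}(v)\le A_F(v)\) for almost every \(v\). I treat the two branches of \Cref{thm:simplified-epbs-unipooling} separately.

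\emph{Full-pooling branch (ii).} On path everyone bids \(0\), and the proposer continues. The stage-\(2\) game is then the standard symmetric FPA under the prior \(F\), paid out with canonicalization probability \(k_2\). The interim payoff is therefore \(k_2A_F(v)\le A_F(v)\), since \(k_2\le1\) and \(A_F\ge0\). This is the value already quoted in the proof sketch of \Cref{thm:simplified-epbs-unipooling}.

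\emph{Cutoff branch (i), \(c\in\mathcal C_F(k_2)\).*} Take a type \(v\ge c\), which bids \(s_{k_2,c}^F(v)>0\) on path.
\begin{itemize}
\item If the opponent's value is below \(c\), the opponent bids \(0\).
\item If the opponent's value \(t\ge c\) satisfies \(t<v\), the opponent bids lower.
\end{itemize}
In both cases the proposer stops (at \(k_1=1\)), and \(v\) wins and pays its own bid. If the opponent's value exceeds \(v\), then \(v\) loses. So \(v\) wins with probability \(F(v)\) at price \(s(v)\), and substituting \eqref{eq:stage1_shape_s_epbs} gives
\[
U^{\mathrm{uni}}(v)=F(v)\bigl(v-s_{k_2,c}^F(v)\bigr)=A_F(v)-(1-k_2)A_F(c)\le A_F(v).
\]
Now take a type \(v<c\), which bids \(0\).
\begin{itemize}
\item If the opponent is positive, the proposer stops and \(v\) gets nothing.
\item If the opponent is also in the pool (probability \(F(c)\)), the proposer continues with full disclosure of \((0,0)\). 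Stage \(2\) is then a complete-plan FPA between two builders with values i.i.d.\ from the truncated posterior \(F(\cdot\mid \cdot<c)\), paid with weight \(k_2\).
\end{itemize}
Using the continuation selected in the appendix construction, which is the symmetric efficient FPA for that prior, envelope/revenue equivalence gives conditional interim payoff \(\int_0^v F(t)/F(c)\,dt\). Hence
\[
U^{\mathrm{uni}}(v)=F(c)\cdot k_2\int_0^v\frac{F(t)}{F(c)}\,dt=k_2A_F(v)\le A_F(v).
\]
As a consistency check, the two expressions agree at \(v=c\): \(A_F(c)-(1-k_2)A_F(c)=k_2A_F(c)\), matching the cutoff indifference. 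The case \(c=0\) needs only the positive branch. Integrating against \(dF\) gives \(b_F^{\mathrm{uni}}\le b_F^{\mathrm{FPA}}\), and in fact yields the explicit gap
\[
(1-k_2)\Bigl[\int_0^c A_F\,dF+A_F(c)\bigl(1-F(c)\bigr)\Bigr].
\]

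The step I expect to need care is the pool continuation payoff. I must confirm that the completion in Appendix~\ref{app:simplified-epbs-unipooling} really uses the efficient symmetric truncated-prior FPA after the \((0,0)\) history, so that the envelope formula with zero payoff at type \(0\) applies. Any other selected continuation would also do, provided it has zero payoff at type \(0\) and allocation probability at most \(F(v)/F(c)\); the envelope bound \(\int_0^v\Pr(\text{win}\mid t)\,dt\) still yields at most \(k_2A_F(v)\). Ties at zero bids and measure-zero events such as the cutoff type are payoff-irrelevant under the atomless prior.
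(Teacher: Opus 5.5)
Your proposal is correct and follows essentially the same route as the paper: a type-by-type comparison using $U_0(v)=k_2A_F(v)$ on the zero pool, $U_+(v)=A_F(v)-(1-k_2)A_F(c)$ on the positive branch, and $k_2A_F(v)$ under full pooling, then integration against $dF$, which gives the same explicit gap $(1-k_2)\bigl[\int_0^c A_F\,dF+A_F(c)(1-F(c))\bigr]$. The only difference is that you re-derive these payoff identities from on-path play, whereas the paper quotes them from the construction. The point you flagged is settled by the appendix construction: it does select the standard symmetric first-price equilibrium under $F(\cdot\mid v<c)$ after the zero-zero history.
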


This corollary turns the FPA line in the builder-payoff panel below into an
analytic upper bound for the entire constructed uni-pooling family.  Whether
concealment is partial, through a nondegenerate cutoff, or complete, through
full pooling, it does not raise total builder surplus above the one-shot
first-price benchmark.  Appendix~\ref{app:simplified-epbs-builder-fpa-bound}
proves the type-by-type payoff comparison.

\begin{figure}[t!]
\centering
\resizebox{\linewidth}{!}{
\begin{tikzpicture}
\begin{groupplot}[
    group style={
        group size=2 by 1,
        horizontal sep=1.25cm
    },
    width=6.35cm,
    height=3.05cm,
    scale only axis,
    samples=160,
    tick label style={font=\scriptsize},
    label style={font=\scriptsize},
    title style={font=\scriptsize},
    axis line style={thin},
    tick style={thin},
    legend style={font=\tiny, draw=none, fill=none}
]
\nextgroupplot[
    title={A. Feasible cutoff range},
    xlabel={$k_2$},
    ylabel={cutoff \(c\)},
    xmin=0, xmax=1,
    ymin=0, ymax=1.05,
    xtick={0,0.5,0.6667,1},
    xticklabels={$0$,$\frac{1}{2}$,$\frac{2}{3}$,$1$},
    ytick={0,0.5,1},
    yticklabels={$0$,$\frac{1}{2}$,$1$},
    clip=false
]
\addplot[fill=EPBSGray!10, draw=none, forget plot]
    coordinates {(0.6667,0) (1,0) (1,1.05) (0.6667,1.05)} \closedcycle;
\addplot[name path=cpluslow, draw=none, domain=0:0.5, forget plot] {2*x/(2-x)};
\addplot[name path=cminuslow, draw=none, domain=0:0.5, forget plot] {0};
\addplot[fill=EPBSGray!28, draw=none, forget plot] fill between[of=cpluslow and cminuslow];
\addplot[name path=cplusmid, draw=none, domain=0.5001:0.6666, forget plot] {2*x/(2-x)};
\addplot[name path=cminusmid, draw=none, domain=0.5001:0.6666, forget plot]
    {sqrt((2*x-1)/(1-x))};
\addplot[fill=EPBSGray!28, draw=none, forget plot] fill between[of=cplusmid and cminusmid];
\addplot[EPBSBlue, line width=1.05pt, mark=none, domain=0:0.6666] {2*x/(2-x)};
\addplot[EPBSRed, line width=1.05pt, mark=none, domain=0:0.5] {0};
\addplot[EPBSRed, line width=1.05pt, mark=none, domain=0.5001:0.6666, forget plot]
    {sqrt((2*x-1)/(1-x))};
\addplot[black!45, densely dotted, line width=0.8pt, mark=none, forget plot]
    coordinates {(0.5,0) (0.5,1.05)};
\addplot[black!55, densely dashed, line width=0.9pt, mark=none, forget plot]
    coordinates {(0.6667,0) (0.6667,1.05)};
\node[anchor=west,font=\scriptsize,EPBSBlue] at (axis cs:0.23,0.31) {$c_+(k_2)$};
\node[anchor=west,font=\scriptsize,EPBSRed] at (axis cs:0.53,0.23) {$c_-(k_2)$};
\node[anchor=west,font=\scriptsize,EPBSGray] at (axis cs:0.09,0.72) {feasible \(c\)};
\node[anchor=center,font=\scriptsize,black!60,align=center] at (axis cs:0.83,0.55)
    {full-pooling \\range};

\nextgroupplot[
    title={B. Bid shape at \(k_2=.45\)},
    xlabel={$v$},
    ylabel={ },
    xmin=0, xmax=1,
    ymin=0, ymax=0.62,
    samples=90,
    legend columns=3,
    legend style={
        at={(0.5,-0.23)},
        anchor=north,
        font=\tiny,
        draw=none,
        fill=none
    }
]
\addplot[EPBSBidViolet, densely dashed, line width=0.75pt, mark=none, domain=0:1] {x/2};
\addlegendentry{$c=0$}
\addplot[EPBSBidOchre, line width=0.95pt, mark=none, domain=0:0.30] {0};
\addplot[EPBSBidOchre, line width=0.95pt, mark=none, domain=0.30:1, forget plot] {x/2 + (0.55*0.30*0.30)/(2*x)};
\addplot[EPBSBidOchre, line width=1.7pt, mark=none, forget plot] coordinates {(0.30,0) (0.30,0.2325)};
\addlegendentry{$c=.30$}
\addplot[EPBSBidCyan, line width=0.95pt, mark=none, densely dashed, domain=0:0.55] {0};
\addplot[EPBSBidCyan, line width=0.95pt, mark=none, densely dashed, domain=0.55:1, forget plot] {x/2 + (0.55*0.55*0.55)/(2*x)};
\addplot[EPBSBidCyan, line width=1.7pt, mark=none, forget plot] coordinates {(0.55,0) (0.55,0.42625)};
\addlegendentry{$c=.55$}
\end{groupplot}
\end{tikzpicture}
}
\par\smallskip
\caption{Uniform uni-pooling mechanism. Panel A plots the feasible nondegenerate cutoff correspondence \(c\in[c_-(k_2),c_+(k_2)]\) for \(k_2<2/3\) and the full-pooling range for \(k_2\ge2/3\). Panel B shows the uni-pooling bid shape at \(k_2=.45\).}
\label{fig:simplified-epbs-mechanism}
\end{figure}

\begin{figure}[htbp]
\centering
\resizebox{\linewidth}{!}{
\begin{tikzpicture}
\begin{groupplot}[
    group style={
        group size=3 by 1,
        horizontal sep=1.20cm
    },
    width=4.25cm,
    height=2.15cm,
    scale only axis,
    samples=130,
    tick label style={font=\scriptsize},
    label style={font=\scriptsize},
    title style={font=\scriptsize},
    axis line style={thin},
    tick style={thin},
    legend style={font=\tiny, draw=none, fill=none}
]
\nextgroupplot[
    title={A. Builder payoff},
    xlabel={$k_2$},
    xmin=0, xmax=1,
    ymin=0.09, ymax=0.18,
    xtick={0,0.5,0.6667,1},
    xticklabels={$0$,$\frac{1}{2}$,$\frac{2}{3}$,$1$},
    legend columns=5,
    legend style={
        at={(1.8,-0.42)},
        anchor=north,
        font=\tiny,
        draw=none,
        fill=none
    }
]
\addplot[name path=bpay_hi_lowk, draw=none, mark=none, domain=0:0.5, forget plot] {1/6};
\addplot[name path=bpay_lo_lowk, draw=none, mark=none, domain=0:0.5, forget plot]
    {0.5*(1/3-(1-x)*((2*x/(2-x))^2)*(1-(2/3)*(2*x/(2-x))))};
\addplot[draw=EPBSGray!55, densely dashed, line width=0.35pt, fill=EPBSGray!35, fill opacity=0.85, forget plot]
    fill between[of=bpay_hi_lowk and bpay_lo_lowk];
\addplot[name path=bpay_hi_midk, draw=none, mark=none, domain=0.5001:0.6666, forget plot]
    {0.5*(1/3-(1-x)*((sqrt((2*x-1)/(1-x)))^2)*(1-(2/3)*sqrt((2*x-1)/(1-x))))};
\addplot[name path=bpay_lo_midk, draw=none, mark=none, domain=0.5001:0.6666, forget plot]
    {0.5*(1/3-(1-x)*((2*x/(2-x))^2)*(1-(2/3)*(2*x/(2-x))))};
\addplot[draw=EPBSGray!55, densely dashed, line width=0.35pt, fill=EPBSGray!35, fill opacity=0.85, forget plot]
    fill between[of=bpay_hi_midk and bpay_lo_midk];
\addplot[black, densely dashed, line width=0.75pt, mark=none, domain=0:1] {1/6};
\addlegendentry{FPA}
\addplot[EPBSGreen, line width=0.95pt, mark=none, domain=0:0.5] {1/6};
\addplot[EPBSGreen, line width=0.95pt, mark=none, domain=0.5001:0.6666, forget plot]
    {0.5*(1/3-(1-x)*((sqrt((2*x-1)/(1-x)))^2)*(1-(2/3)*sqrt((2*x-1)/(1-x))))};
\addlegendentry{$B$-best}
\addplot[EPBSRed, line width=0.95pt, mark=none, densely dashed, domain=0:0.6666]
    {0.5*(1/3-(1-x)*((2*x/(2-x))^2)*(1-(2/3)*(2*x/(2-x))))};
\addlegendentry{$B$-worst}
\addplot[EPBSBlue, line width=0.95pt, mark=none, domain=0:0.4]
    {0.5*(1/3-(1-x)*((2*x/(2-x))^2)*(1-(2/3)*(2*x/(2-x))))};
\addplot[EPBSBlue, line width=0.95pt, mark=none, domain=0.4:0.555555, forget plot]
    {0.5*((1+x)/6)};
\addplot[EPBSBlue, line width=0.95pt, mark=none, domain=0.555555:0.6666, forget plot]
    {0.5*(1/3-(1-x)*((sqrt((2*x-1)/(1-x)))^2)*(1-(2/3)*sqrt((2*x-1)/(1-x))))};
\addlegendentry{$P$-best}
\addplot[EPBSGray, line width=0.95pt, mark=none, densely dotted, domain=0:0.5]
    {1/6};
\addplot[EPBSGray, line width=0.95pt, mark=none, densely dotted, domain=0.5001:0.513361, forget plot]
    {0.5*(1/3-(1-x)*((sqrt((2*x-1)/(1-x)))^2)*(1-(2/3)*sqrt((2*x-1)/(1-x))))};
\addplot[EPBSGray, line width=0.95pt, mark=none, densely dotted, domain=0.513361:0.6666, forget plot]
    {0.5*(1/3-(1-x)*((2*x/(2-x))^2)*(1-(2/3)*(2*x/(2-x))))};
\addlegendentry{$P$-worst}
\addplot[EPBSGreen, line width=0.85pt, mark=none, domain=0.6666:1, forget plot] {x/6};
\addplot[EPBSRed, line width=0.85pt, mark=none, densely dashed, domain=0.6666:1, forget plot] {x/6};
\addplot[EPBSBlue, line width=0.85pt, mark=none, domain=0.6666:1, forget plot] {x/6};
\addplot[EPBSGray, line width=0.85pt, mark=none, densely dotted, domain=0.6666:1, forget plot] {x/6};

\nextgroupplot[
    title={B. Proposer revenue},
    xlabel={$k_2$},
    xmin=0, xmax=1,
    ymin=0.2, ymax=0.46,
    xtick={0,0.5,0.6667,1},
    xticklabels={$0$,$\frac{1}{2}$,$\frac{2}{3}$,$1$}
]
\addplot[name path=prev_hi_lowk_a, draw=none, mark=none, domain=0:0.4, forget plot]
    {1/3+(1-x)*((2*x/(2-x))^2)*(1-(4/3)*(2*x/(2-x)))};
\addplot[name path=prev_lo_lowk_a, draw=none, mark=none, domain=0:0.4, forget plot] {1/3};
\addplot[draw=EPBSGray!55, densely dashed, line width=0.35pt, fill=EPBSGray!35, fill opacity=0.85, forget plot]
    fill between[of=prev_hi_lowk_a and prev_lo_lowk_a];
\addplot[name path=prev_hi_lowk_b, draw=none, mark=none, domain=0.4:0.5, forget plot]
    {1/3+(1-x)/12};
\addplot[name path=prev_lo_lowk_b, draw=none, mark=none, domain=0.4:0.5, forget plot] {1/3};
\addplot[draw=EPBSGray!55, densely dashed, line width=0.35pt, fill=EPBSGray!35, fill opacity=0.85, forget plot]
    fill between[of=prev_hi_lowk_b and prev_lo_lowk_b];
\addplot[name path=prev_hi_midk_a, draw=none, mark=none, domain=0.5001:0.513361, forget plot]
    {1/3+(1-x)/12};
\addplot[name path=prev_lo_midk_a, draw=none, mark=none, domain=0.5001:0.513361, forget plot]
    {1/3+(1-x)*((sqrt((2*x-1)/(1-x)))^2)*(1-(4/3)*sqrt((2*x-1)/(1-x)))};
\addplot[draw=EPBSGray!55, densely dashed, line width=0.35pt, fill=EPBSGray!35, fill opacity=0.85, forget plot]
    fill between[of=prev_hi_midk_a and prev_lo_midk_a];
\addplot[name path=prev_hi_midk_b, draw=none, mark=none, domain=0.513361:0.555555, forget plot]
    {1/3+(1-x)/12};
\addplot[name path=prev_lo_midk_b, draw=none, mark=none, domain=0.513361:0.555555, forget plot]
    {1/3+(1-x)*((2*x/(2-x))^2)*(1-(4/3)*(2*x/(2-x)))};
\addplot[draw=EPBSGray!55, densely dashed, line width=0.35pt, fill=EPBSGray!35, fill opacity=0.85, forget plot]
    fill between[of=prev_hi_midk_b and prev_lo_midk_b];
\addplot[name path=prev_hi_midk_c, draw=none, mark=none, domain=0.555555:0.6666, forget plot]
    {1/3+(1-x)*((sqrt((2*x-1)/(1-x)))^2)*(1-(4/3)*sqrt((2*x-1)/(1-x)))};
\addplot[name path=prev_lo_midk_c, draw=none, mark=none, domain=0.555555:0.6666, forget plot]
    {1/3+(1-x)*((2*x/(2-x))^2)*(1-(4/3)*(2*x/(2-x)))};
\addplot[draw=EPBSGray!55, densely dashed, line width=0.35pt, fill=EPBSGray!35, fill opacity=0.85, forget plot]
    fill between[of=prev_hi_midk_c and prev_lo_midk_c];
\addplot[black, densely dashed, line width=0.75pt, mark=none, domain=0:1] {1/3};
\addplot[EPBSGreen, line width=0.95pt, mark=none, domain=0:0.5] {1/3};
\addplot[EPBSGreen, line width=0.95pt, mark=none, domain=0.5001:0.6666, forget plot]
    {1/3+(1-x)*((sqrt((2*x-1)/(1-x)))^2)*(1-(4/3)*sqrt((2*x-1)/(1-x)))};
\addplot[EPBSRed, line width=0.95pt, mark=none, densely dashed, domain=0:0.6666]
    {1/3+(1-x)*((2*x/(2-x))^2)*(1-(4/3)*(2*x/(2-x)))};
\addplot[EPBSBlue, line width=0.95pt, mark=none, domain=0:0.4]
    {1/3+(1-x)*((2*x/(2-x))^2)*(1-(4/3)*(2*x/(2-x)))};
\addplot[EPBSBlue, line width=0.95pt, mark=none, domain=0.4:0.555555, forget plot]
    {1/3+(1-x)/12};
\addplot[EPBSBlue, line width=0.95pt, mark=none, domain=0.555555:0.6666, forget plot]
    {1/3+(1-x)*((sqrt((2*x-1)/(1-x)))^2)*(1-(4/3)*sqrt((2*x-1)/(1-x)))};
\addplot[EPBSGray, line width=0.95pt, mark=none, densely dotted, domain=0:0.5]
    {1/3};
\addplot[EPBSGray, line width=0.95pt, mark=none, densely dotted, domain=0.5001:0.513361, forget plot]
    {1/3+(1-x)*((sqrt((2*x-1)/(1-x)))^2)*(1-(4/3)*sqrt((2*x-1)/(1-x)))};
\addplot[EPBSGray, line width=0.95pt, mark=none, densely dotted, domain=0.513361:0.6666, forget plot]
    {1/3+(1-x)*((2*x/(2-x))^2)*(1-(4/3)*(2*x/(2-x)))};
\addplot[EPBSGreen, line width=0.85pt, mark=none, domain=0.6666:1, forget plot] {x/3};
\addplot[EPBSRed, line width=0.85pt, mark=none, densely dashed, domain=0.6666:1, forget plot] {x/3};
\addplot[EPBSBlue, line width=0.85pt, mark=none, domain=0.6666:1, forget plot] {x/3};
\addplot[EPBSGray, line width=0.85pt, mark=none, densely dotted, domain=0.6666:1, forget plot] {x/3};

\nextgroupplot[
    title={C. Efficiency},
    xlabel={$k_2$},
    xmin=0, xmax=1,
    ymin=0.62, ymax=1.03,
    xtick={0,0.5,0.6667,1},
    xticklabels={$0$,$\frac{1}{2}$,$\frac{2}{3}$,$1$}
]
\addplot[name path=eff_hi_lowk, draw=none, mark=none, domain=0:0.5, forget plot] {1};
\addplot[name path=eff_lo_lowk, draw=none, mark=none, domain=0:0.5, forget plot]
    {1-(1-x)*((2*x/(2-x))^2)};
\addplot[draw=EPBSGray!55, densely dashed, line width=0.35pt, fill=EPBSGray!35, fill opacity=0.85, forget plot]
    fill between[of=eff_hi_lowk and eff_lo_lowk];
\addplot[name path=eff_hi_midk, draw=none, mark=none, domain=0.5001:0.6666, forget plot]
    {1-(1-x)*((sqrt((2*x-1)/(1-x)))^2)};
\addplot[name path=eff_lo_midk, draw=none, mark=none, domain=0.5001:0.6666, forget plot]
    {1-(1-x)*((2*x/(2-x))^2)};
\addplot[draw=EPBSGray!55, densely dashed, line width=0.35pt, fill=EPBSGray!35, fill opacity=0.85, forget plot]
    fill between[of=eff_hi_midk and eff_lo_midk];
\addplot[black, densely dashed, line width=0.75pt, mark=none, domain=0:1] {1};
\addplot[EPBSGreen, line width=0.95pt, mark=none, domain=0:0.5] {1};
\addplot[EPBSGreen, line width=0.95pt, mark=none, domain=0.5001:0.6666, forget plot]
    {1-(1-x)*((sqrt((2*x-1)/(1-x)))^2)};
\addplot[EPBSRed, line width=0.95pt, mark=none, densely dashed, domain=0:0.6666]
    {1-(1-x)*((2*x/(2-x))^2)};
\addplot[EPBSBlue, line width=0.95pt, mark=none, domain=0:0.4]
    {1-(1-x)*((2*x/(2-x))^2)};
\addplot[EPBSBlue, line width=0.95pt, mark=none, domain=0.4:0.555555, forget plot]
    {1-(1-x)/4};
\addplot[EPBSBlue, line width=0.95pt, mark=none, domain=0.555555:0.6666, forget plot]
    {1-(1-x)*((sqrt((2*x-1)/(1-x)))^2)};
\addplot[EPBSGray, line width=0.95pt, mark=none, densely dotted, domain=0:0.5]
    {1};
\addplot[EPBSGray, line width=0.95pt, mark=none, densely dotted, domain=0.5001:0.513361, forget plot]
    {1-(1-x)*((sqrt((2*x-1)/(1-x)))^2)};
\addplot[EPBSGray, line width=0.95pt, mark=none, densely dotted, domain=0.513361:0.6666, forget plot]
    {1-(1-x)*((2*x/(2-x))^2)};
\addplot[EPBSGreen, line width=0.85pt, mark=none, domain=0.6666:1, forget plot] {x};
\addplot[EPBSRed, line width=0.85pt, mark=none, densely dashed, domain=0.6666:1, forget plot] {x};
\addplot[EPBSBlue, line width=0.85pt, mark=none, domain=0.6666:1, forget plot] {x};
\addplot[EPBSGray, line width=0.85pt, mark=none, densely dotted, domain=0.6666:1, forget plot] {x};
\end{groupplot}
\end{tikzpicture}
}
\par\smallskip
\caption{Uniform uni-pooling outcomes. Panels A--C plot the selection curves for builder payoff, proposer revenue, and efficiency from the same uniform family with respect to different \(k_2\); the gray bands mark the outcomes generated by feasible nondegenerate cutoff PBE, and the curves for \(k_2\ge2/3\) are the full-pooling branch.}
\label{fig:simplified-epbs-theory}
\end{figure}

\noindent \textbf{Ratchet effect: a uniform example.}
For \(F(v)=v\), \(K_F^+=2/3\).  Direct substitution in
\(\mathcal C_F(k_2)\) gives the cutoff correspondence \(c\in[c_-(k_2),c_+(k_2)]\)
for \(k_2<2/3\), where
\[
c_+(k_2)=\frac{2k_2}{2-k_2},
\qquad
c_-(k_2)=
\begin{cases}
0, & 0\le k_2\le1/2,\\[1mm]
\sqrt{\dfrac{2k_2-1}{1-k_2}}, & 1/2<k_2<2/3.
\end{cases}
\]
\Cref{fig:simplified-epbs-mechanism} plots this nondegenerate cutoff branch and
marks the full-pooling range \(k_2\ge2/3\).  Its bid-shape panel shows the core
trade-off: types below \(c\) pool at zero to conceal their valuation, whereas
above-\(c\) types enter an increasing positive branch where immediate allocation
gains outweigh the cost of information disclosure.  \Cref{fig:simplified-epbs-theory}
then displays the aggregate consequences across feasible PBE, including the
full-pooling continuation of the branch.  Builder payoff, proposer revenue, and
efficiency can all exhibit valleys.  The ratchet effect is therefore not merely
a local pooling interval: as continuation becomes sufficiently reliable, the
cutoff branch is driven to full concealment.
The same closed forms also identify the selection extremes within the uniform
family.  On the nondegenerate branch, the positive bid schedule is
\[
s_{k_2,c}(v)=\frac v2+\frac{(1-k_2)c^2}{2v},
\qquad v\ge c,
\]
and the ex-ante outcomes are
\[
R_P(c,k_2)=\frac13+(1-k_2)c^2\left(1-\frac{4c}{3}\right),\qquad
B(c,k_2)=\frac13-(1-k_2)c^2\left(1-\frac{2c}{3}\right),
\]
\[
\mathrm{Eff}(c,k_2)=1-(1-k_2)c^2.
\]

Here \(R_P\) is proposer revenue, \(B\) is total builder surplus, and the builder payoff is \(B/2\).  These formulas imply that builder surplus is decreasing in \(c\), so the builder-best cutoff is the lowest feasible cutoff, \(c_B^\star(k_2)=c_-(k_2)\).  Proposer revenue is single-peaked at \(c=1/2\), so the proposer-best cutoff is the feasible cutoff closest to \(1/2\).  Thus the builder prefers the least pooling allowed by equilibrium, while the proposer prefers an interior amount of pooling whenever that cutoff is feasible.  For \(k_2\ge2/3\), the nondegenerate selection problem disappears and the curves continue along the full-pooling PBE from \Cref{thm:simplified-epbs-unipooling}.

\section{Calibrated No-Regret Validation}
\label{subsec:realistic-validation}
\label{subsec:hindsight_solution}

\noindent \textbf{Scope of the validation.}
The preceding analysis isolates two exact equilibrium mechanisms.  In PBS, exogenous terminal-stage timing can discipline separating stage-\(1\) bids.  In simplified ePBS, proposer-side ex-post flexibility can make stage-\(1\) bids strategically dangerous and generate a family of uni-pooling PBEs.  The purpose of this section is to ask whether the same forces appear in a calibrated block-building environment that is too large for direct PBE characterization.  Relative to the tractable theory, the computation adds three realistic features at once: a non-i.i.d. Titan--BuilderNet value distribution, heterogeneous fast/slow latency profiles, and proposer-controlled private disclosure after signed bids are observed.

Our calibrated no-regret computation should therefore be read in two ways.  First, it is a cross-validation of the analytical mechanisms in \Cref{subsec:pbs-separating,subsec:simplified-epbs}: all-fast PBS should become FPA-like when stage-\(1\) settlement is sufficiently disciplined, while ePBS should demonstrate the ratchet effect when continuation is credible.  Second, it is a diagnostic for behavior outside the theorem environment.  In particular, the computations let us observe whether the defensive pooling and bid shading predicted by the simplified ePBS model survive once the proposer can choose richer disclosure policies and builders have non-i.i.d. calibrated values, and how much latency advantage remains under strategic proposer control.

\subsection{No-Regret Calibration Setup}

\noindent \textbf{Counterfactual Regret Minimization.}
The calibrated comparisons are not PBE computations.  Exact sequential-equilibrium computation for the full PBS/ePBS extensive-form game is not a practical target: even the finite ePBS approximation used below has over \(73\) million histories and over \(65{,}000\) information sets.  We therefore solve finite approximations using the CFR+ variant of counterfactual regret minimization~\citep{zinkevichRegretMinimizationGames,lanctotMonteCarloSampling2009,tammelinSolvingLargeImperfect2014,brownSolvingImperfectInformationGames2019}.  CFR is a hindsight-rational learning procedure: after repeated play, it asks whether a player could have achieved a higher payoff by systematically deviating at the information sets it reached.  In general-sum extensive-form games, the empirical distribution generated by vanishing-regret play is interpreted as an approximate extensive-form coarse correlated equilibrium (EFCCE)~\citep{farinaCoarseCorrelationExtensive2020}, which is a weaker equilibrium class than PBE because it permits correlation across contingent plans at each information set. 
We therefore state analytical claims as PBE results and computational claims as calibrated no-regret or EFCCE patterns.  
Appendix~\ref{app:cfr} gives the formal definitions.

\noindent \textbf{Data calibration.}
The calibration is a two-builder benchmark using Titan and BuilderNet, the two largest, most representative builders in real-world Ethereum block building.  
To estimate the joint valuation distribution of the top-two builders, we use Ethereum relay bid traces from major relays and winning-block data fetched from the Ethereum mainnet.  The winning builder's realized value is recovered from the winning proposed block, while the valuation of the losing builder is inferred from its live bid trace. Our data spans blocks \(23{,}000{,}151\)--\(24{,}698{,}991\), from July 26, 2025 through March 20, 2026 UTC.  We fit a joint log-normal model, convert it into finite value primitives, and discretize the calibrated comparison on \(16\) value levels and \(31\) bid levels.  Across the reported runs, the normalized EFCCE error bound is below \(0.3\%\).  Appendix~\ref{app:empirical_calibration} reports the sample construction, censored likelihood, relay coverage, and numerical implementation.

\noindent \textbf{Metrics and benchmarks.}
We evaluate proposer revenue, builder utility, allocation efficiency, and the relative fast-builder premium.  The mechanism parameter is \(\theta=q_1\) in PBS and \(\theta=k_2\) in ePBS, and \(x_\theta\) denotes the computed outcome distribution generated by CFR+.  Proposer revenue and builder utility are the expected terminal payoffs induced by \(x_\theta\).  Allocation efficiency is the realized winner's value relative to the highest available builder value, discounted by the applicable canonicalization probability.  The relative fast-builder premium is the percentage increase in a builder's expected utility when that builder changes from slow to fast while the opponent's latency type is held fixed.  The FPA line is the direct highest-bid-wins, pay-as-bid benchmark for the model.  We also plot the outcome of sealed second-price auction (SPA) as a reference for PBS with bid adjustment, such as the adjusted PBS auction operated by Ultra Sound.

\subsection{Calibration Results}

\begin{figure}[h]
\centering
\includegraphics[width=\textwidth]{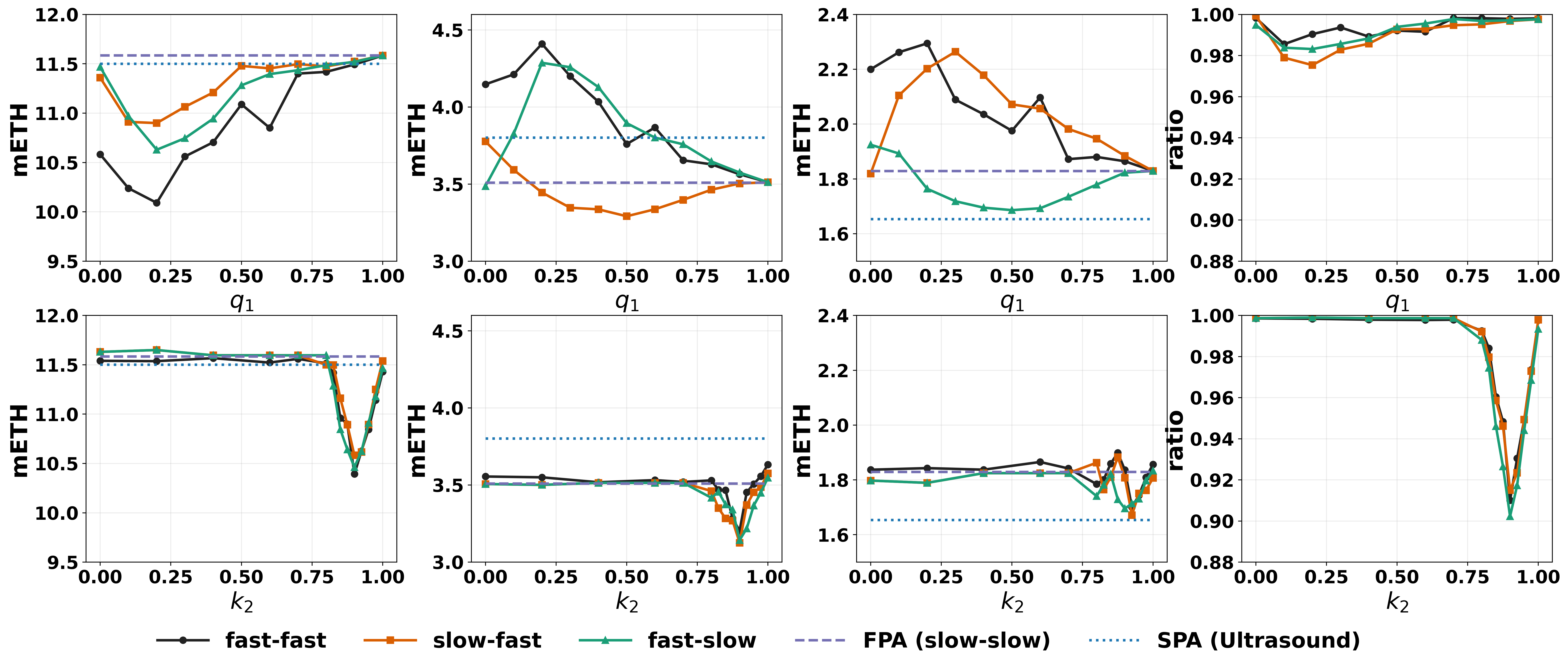}
\vspace{-0.4cm}
\caption{Computed no-regret outcome metrics for PBS and ePBS.  The upper row reports PBS outcomes, and the lower row reports ePBS outcomes.  From left to right, the columns show proposer revenue, \(B_1\)'s utility, \(B_2\)'s utility, and allocation efficiency.  The computation is a calibrated no-regret benchmark, not an exact PBE computation.  1 mETH = 0.001 ETH.}
\label{fig:utility-hard-efficiency}
\vspace{-0.3cm}
\end{figure}

\Cref{fig:utility-hard-efficiency} reports the outcome-level comparison.  In the PBS panels, raising \(q_1\) moves outcomes toward the FPA benchmark.  This is the numerical counterpart of \Cref{lem:pbs-payoff-equivalence} and \Cref{thm:pbs-iid-regular-existence}: when stage-\(1\) settlement has enough force, the stage-\(1\) bid must remain payoff-relevant and PBS can support FPA-like behavior.  In the ePBS panels, the low-\(k_2\) region is also close to FPA.  This matches \Cref{prop:epbs-immediate-stop-fpa-region}: if delayed proposal is sufficiently unreliable, immediate stopping is sequentially attractive.

The nontrivial region is intermediate-to-high \(k_2\).  Around \(k_2=0.85\)--\(0.925\), all plotted ePBS latency profiles enter a proposer-revenue valley.  At \(k_2=0.9\), proposer revenue falls to \(10.39\), \(10.58\), and \(10.46\) mETH in the fast-fast, slow-fast, and fast-slow profiles, respectively.  The allocation loss occurs in the same region: while PBS remains close to fully efficient across the plotted configurations, ePBS efficiency falls to about \(0.902\)--\(0.915\) near \(k_2=0.9\).  Thus the valley is not merely a transfer from proposer to builders.  It also lowers the probability that the highest-value builder supplies the canonical block.  This is the same aggregate pattern predicted by the uni-pooling example in \Cref{fig:simplified-epbs-theory}: once continuation becomes credible, the threat of ex-post bid-history use can distort early bidding enough to reduce both revenue and allocation quality.

The same figure also shows that ePBS compresses latency rents.  Under PBS, averaged over \(q_1\), becoming fast raises Titan's payoff by \(13.2\%\) when the opponent is fast and by \(8.5\%\) when the opponent is slow; for BuilderNet, the corresponding averages are \(14.8\%\) and \(9.8\%\).  Under ePBS, the comparable opponent-fast premia are much smaller, averaging \(1.8\%\) for Titan and \(2.5\%\) for BuilderNet.  ePBS therefore does not eliminate latency advantage, but it sharply weakens the direct payoff return to being fast.  This is not a narrow geographic edge case.  Appendix~\ref{app:geo-exposure-map} shows that proposer regions within range of at least one top builder account for about \(94.3\%\) of validator weight and \(90.6\%\) of observed node records.

\begin{figure}[t]
\centering
\includegraphics[width=\textwidth]{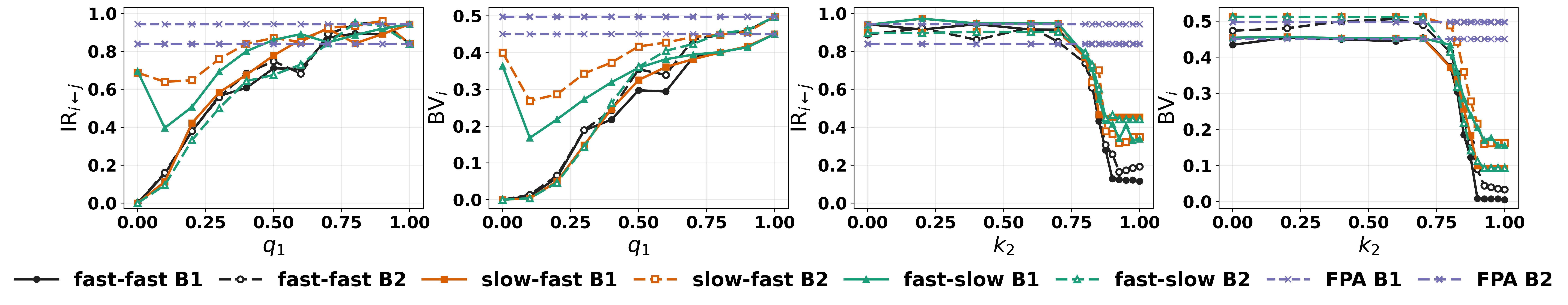}
\vspace{-0.4cm}
\caption{Stage-\(1\) informativeness and bid shading in calibrated no-regret outcomes.  The first two panels plot, for PBS as \(q_1\) varies, the information-reduction metric \(\mathsf{IR}_{i\leftarrow j}\) and normalized bid-value ratio \(\mathsf{BV}_i\).  The last two panels plot the same diagnostics for ePBS as \(k_2\) varies.  Higher \(\mathsf{IR}\) means stage-\(1\) bids are more separating, while lower \(\mathsf{BV}\) means stronger bid shading.  Dashed FPA curves give the pay-as-bid benchmark.}
\label{fig:stage1-info-bid-ratio}
\vspace{-0.3cm}
\end{figure}

\noindent \textbf{Stage-\(1\) pooling and shading.}
Outcome metrics identify the valley, but the mechanism is visible in stage-\(1\) behavior.  Let \(j\ne i\).  We measure informativeness by the entropy-normalized information-reduction statistic
\[
\mathsf{IR}_{i\leftarrow j}(\theta)
=
1-\frac{\mathsf H_{x_\theta}(v_j\mid v_i,b_{j,1})}
{\mathsf H_{x_\theta}(v_j\mid v_i)} ,
\]
the share of builder \(i\)'s conditional uncertainty about builder \(j\)'s value removed by observing \(j\)'s stage-\(1\) bid.  A value close to one means that the bid is nearly separating; a value close to zero means that it is mostly pooling.  We measure bid shading by the normalized bid-value ratio \(\mathsf{BV}_i(\theta)=\mathbb E_{x_\theta}[(b_{i,1}-\underline v)/(v_i-\underline v)\mid v_i>\underline v]\).  Lower \(\mathsf{BV}_i\) means stronger stage-\(1\) shading.

\Cref{fig:stage1-info-bid-ratio} shows the behavioral counterpart of the theory.  In PBS, \(q_1\) directly controls the probability that the stage-\(1\) bid becomes terminal.  As \(q_1\) rises, this direct settlement risk pushes bids upward and makes them more informative: across profiles with a fast builder, \(\mathsf{BV}\) rises and \(\mathsf{IR}\) moves toward the FPA benchmark.  This is why PBS can remain latency-sensitive even though stopping is exogenous.  A slow builder is more exposed after continuation because it cannot use newly disclosed stage-\(1\) information before rebidding; correspondingly, in fast-slow profiles the slow side shades more and reveals less than the fast side.

In ePBS, the comparative static reverses.  For low \(k_2\), continuation is too unattractive to disturb the FPA-like stage-\(1\) auction, and the ePBS \(\mathsf{IR}\) and \(\mathsf{BV}\) diagnostics remain close to the FPA benchmark.  Once \(k_2\) enters the valley region, however, the proposer can credibly threaten continuation after observing signed bids.  Stage-\(1\) bids then become both less informative and more shaded.  In the fast-fast profile, for example, increasing \(k_2\) from \(0.6\) to \(0.9\) reduces \(\mathsf{IR}\) from roughly \(0.92\) to \(0.19\), and reduces \(\mathsf{BV}\) from roughly \(0.78\) to \(0.59\).  The slow-fast and fast-slow profiles exhibit the same decline, although less sharply because one side's latency limits how much disclosed information can be exploited in stage-\(2\).  This evidence addresses the central external-validity question: in the calibrated ePBS environment, pooling and shading arise endogenously from the no-regret response to proposer-side stop-and-disclose incentives.

\begin{figure}[t]
\centering
\includegraphics[width=\textwidth]{imgs/epbs_vanilla_public_k2_0p9_5panel.png}
\vspace{-0.4cm}
\caption{ePBS behavior at \(k_2=0.9\) in the calibrated no-regret outcome.  From left to right, panel 1 reports the stage-\(1\) stop probability across the \(k_2\) sweep; panels 2 and 3 report the stage-\(1\) bid distributions by value for Titan and BuilderNet; panels 4 and 5 report the proposer's disclosure policy towards Titan and BuilderNet Respctively.  For continuation histories, color denotes the probability that the recipient is shown the opponent's bid; Gray cells are immediate-stop histories; for continuation histories, color at each stage-1 history denotes the probability that the recipient is shown the opponent's bid. }
\label{fig:epbs-k2-09-behavior}
\vspace{-0.3cm}
\end{figure}

\noindent \textbf{Ratchet diagnostics beyond the simplified theory.}
\Cref{fig:epbs-k2-09-behavior} shows why the calibrated valley is a ratchet effect rather than a generic numerical failure of the auction.  At \(k_2=0.9\), high stage-\(1\) bids can still secure immediate settlement: the stop region is large.  Outside that region, however, the proposer uses continuation together with targeted disclosure.  The dominant pattern is to reveal the incumbent leader's signed bid to the trailing builder, giving the trailing builder a sharper target before the stage-\(2\) auction.  Ex post, this disclosure policy is attractive to the proposer because it intensifies continuation competition.  Ex ante, it makes moderate stage-\(1\) value revelation dangerous for builders, encouraging stage-1 bid pooling. 

This is the calibrated analog of the uni-pooling PBE in \Cref{subsec:simplified-epbs}.  In the theory, low types pool at zero and high types enter a positive branch because positive bids expose value and can trigger extraction.  In the calibrated game, the pooling region is not literally restricted to zero, and the proposer has a richer private-disclosure technology.  Nevertheless, the economic trade-off is the same.  A very high stage-\(1\) bid can buy early commitment, but a moderate bid may fail to stop the game while giving the proposer a verifiable signal to use in stage-\(2\).  Builders respond by shading and pooling stage-\(1\) bids, which compresses fast-builder rents but also creates the revenue-efficiency valley.  The next section interprets this distortion as a limited-commitment problem and asks how much of it can be removed by credible commitment to a stopping and disclosure policy.

\section{Commitment Advantage and Mitigation}
\label{sec:commitment_adv}

Both our analytical and calibrated results trace the ratchet effect to the proposer's ex-post flexibility.
When proposers can opportunistically exploit observed bids (e.g., by intentionally deferring blocks), builders defensively conceal their valuations, degrading allocation efficiency.
This creates an asymmetric commitment problem: while institutional validators (e.g., large staking pools or CEX restakers) can establish credible commitment through reputation and repeated interaction, solo validators act as one-shot, anonymous proposers.
Because commitment raises expected revenue, this disparity severely disadvantages solo proposers.

To address this commitment gap, we examine two commitment technologies.
The first establishes a full-commitment benchmark to quantify the capacity of institutional proposers with unrestricted commitment power.
The second explores a protocol-facing solution: utilizing a trusted execution environment (TEE) sidecar to test whether hardware-enforced, limited commitment can level the playing field.
Under explicit hardware-attestation and signing-key-custody assumptions, the sidecar lets an anonymous proposer bind the execution of an announced policy without relying on reputation or a protocol consensus change.

\subsection{Full commitment benchmark}
\label{subsec:full-commitment}

A full-commitment ePBS mechanism allows the proposer to choose, before bids arrive, a possibly randomized direct mechanism.
After receiving reported builder values, the mechanism chooses a terminal stage $t\in\{1,2\}$, an allocation rule $q_{i,t}(v)$, and a payment rule $p_{i,t}(v)$.
Builder $i$'s realized value is $v_i$, the stage-$t$ canonicalization weight is $k_t$, and $k_1\geq k_2$ because immediate proposal is weakly less risky than delayed proposal.
Thus a direct mechanism induces discounted allocation weight $Q_i(v_i)=\mathbf E[\sum_t k_t q_{i,t}(v)\mid v_i]$ and discounted payment $P_i(v)=\sum_t k_t p_{i,t}(v)$.

\begin{proposition}[Myerson benchmark under full commitment]
\label{prop:tee-myerson}
Suppose builder values are independent and atomless with densities.  There exists a revenue-maximizing full-commitment ePBS mechanism that implements the static Myerson optimal auction (with ironing when needed) at stage~1 and immediately proposes the winning bid, if any.
\end{proposition}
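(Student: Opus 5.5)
The plan is to reduce the two-stage full-commitment problem to a standard single-object Myerson problem in which the "quantity" allocated to builder $i$ is the discounted weight $Q_i(v_i)$. The key step is to show that the feasible set of discounted allocation profiles under two stages is no larger than under stage-$1$-only mechanisms with $k_1\ge k_2$. The Myerson solution is itself feasible at stage~1, so it will then be optimal.

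First, by the revelation principle for the committed proposer, it suffices to consider direct mechanisms $(t,q_{i,t},p_{i,t})$ that are Bayesian incentive compatible and interim individually rational. Builder $i$'s interim utility is $v_iQ_i(v_i)-\mathbf E[P_i(v)\mid v_i]$, which is linear in $v_i$ with the reduced-form allocation $Q_i$. The standard Myerson envelope argument applies, using independence and atomless densities. BIC is equivalent to $Q_i$ nondecreasing together with the payoff identity $U_i(v_i)=U_i(\underline v)+\int_{\underline v}^{v_i}Q_i(z)\,dz$. Hence expected revenue equals
\[
\mathbf E\Bigl[\sum_i \varphi_i(v_i)\sum_t k_t q_{i,t}(v)\Bigr]-\sum_i U_i(\underline v),
\]
where $\varphi_i$ is the virtual value, replaced by the ironed virtual value $\bar\varphi_i$ in the usual way when $\varphi_i$ is not monotone.

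Second, I bound this objective pointwise in $v$. For each profile, $\sum_i\bar\varphi_i(v_i)\sum_t k_tq_{i,t}(v)\le \max\{0,\max_i\bar\varphi_i(v_i)\}\sum_t k_t\sum_i q_{i,t}(v)$. I interpret $q_{i,t}(v)$ as the probability of terminating at $t$ and allocating to $i$, so that $\sum_{i,t}q_{i,t}\le1$. Then $\sum_t k_t\sum_iq_{i,t}\le k_1$. Equality is attained by stopping at stage~1 and giving the object to the bidder with the highest nonnegative ironed virtual value. This rule is monotone in each $v_i$ and attains the pointwise upper bound. The ironing step requires the usual argument that on ironing intervals the optimum can be taken to have $Q_i$ constant, so the $\varphi$ and $\bar\varphi$ objectives coincide. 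With $U_i(\underline v)=0$ and payments given by the envelope formula, it is therefore revenue-maximizing.

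Finally, I implement this rule as the static Myerson auction at stage~1 with immediate proposal, with payment equal to the threshold value scaled by $k_1$ (equal to $1$ in ePBS). The main obstacle is the ironing step together with a careful statement that randomization over $t$ and stage-$2$ allocations cannot beat the bound. In particular, I must check that the monotonicity constraint on $Q_i$ never forces weight onto stage~2. Pointwise optimality of the stage-1 rule disposes of this, because the relaxed problem that ignores monotonicity is already solved by a monotone rule.
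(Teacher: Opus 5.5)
Your proposal is correct and follows essentially the same route as the paper. Both apply the Myerson envelope and payment identity to the discounted allocation weights $Q_i(v_i)=\mathbf E[\sum_t k_t q_{i,t}(v)\mid v_i]$, bound revenue pointwise by discounted virtual surplus, and use $k_1\ge k_2$ to move all allocation to stage~1. Your explicit handling of ironing (monotonicity of $Q_i$, and tie-breaking that keeps $Q_i$ constant on ironing intervals) fills in a step the paper dispatches in one line, but the argument is otherwise the same.
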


The idea is simple.
Under full commitment, the optimal ePBS design  can be considered as a variant of two-stage variant of optimal auction design problem in ~\cite{myerson1981optimal}.  
Myerson's envelope argument converts per-stage expected revenue into expected discounted virtual surplus.
Since $k_1\geq k_2$, placing any allocation probability at stage 2 weakly lowers virtual surplus relative to placing it at stage 1.
The optimal committed mechanism therefore allocates immediately to the builder with the highest nonnegative virtual value and charges the corresponding Myerson payment.
The proof is in Appendix~\ref{app:tee-myerson-proof}.

\begin{figure}[t!]
    \centering
   \resizebox{0.98\textwidth}{!}{\begin{tikzpicture}[
  >=Stealth,
  actor/.style={draw, thick, rectangle, minimum width=1.65cm, minimum height=0.70cm, inner sep=4pt, align=center, font=\small\bfseries},
  builder/.style={draw, thick, rectangle, minimum width=1.05cm, minimum height=0.58cm, inner sep=3pt, align=center, font=\small},
  tee/.style={draw=EPBSGreen, fill=EPBSGreen!9, thick, rectangle, minimum width=1.85cm, minimum height=0.72cm, inner sep=4pt, align=center, font=\small\bfseries},
  proxy/.style={draw=EPBSRed, fill=EPBSRed!8, thick, rectangle, minimum width=2.15cm, minimum height=0.78cm, inner sep=4pt, align=center, font=\small\bfseries},
  flow/.style={->, thick},
  optflow/.style={->, thick, dashed},
  flowlabel/.style={font=\scriptsize, align=center, inner sep=1pt, text width=1.85cm},
  title/.style={font=\bfseries\large, anchor=west}
]

\begin{scope}[local bounding box=TEEImpl]
  \node[title] at (-0.75,2.45) {Solo proposer + TEE sidecar};

  \node[builder] (TB1) at (0,1.10) {$B_1$};
  \node[builder] (TB2) at (0,-0.55) {$B_2$};
  \node[tee] (Sidecar) at (2.45,0.12) {TEE\\sidecar};
  \node[actor] (TProp) at (5.15,0.92) {Proposer};
  \node[actor] (TAtt) at (5.15,-1.30) {Attesters};

  \draw[flow] (TB1.east) to[out=0,in=155] (Sidecar.north west);
  \draw[flow] (TB2.east) to[out=0,in=205] (Sidecar.south west);
  \node[flowlabel, text width=1.45cm] at (1.08,0.38) {1. signed\\bids};
  \draw[flow] (Sidecar.north east) to[out=35,in=190] (TProp.west);
  \node[flowlabel, text width=1.85cm] at (3.55,1.38) {2. policy output\\+ winning bid};
  \draw[optflow] (Sidecar.south west) to[out=235,in=-45,looseness=1.75] (TB2.south east);
  \node[flowlabel, text width=1.65cm] at (1.12,-1.42) {3. optional\\messages};
  \draw[flow] (TProp.south) -- node[flowlabel, pos=0.50, right=3pt] {4. beacon block\\winning bid} (TAtt.north);

  \node[draw=EPBSGreen, rounded corners=1pt, align=center, font=\scriptsize, text width=5.2cm, inner sep=4pt] at (3.05,-2.90)
  {Builders submit bids directly into the attested TEE sidecar; $P$ receives only the committed policy output and winning bid.};
\end{scope}

\draw[dashed, thick, gray] ($(TEEImpl.north east)+(0.50,0)$) -- ($(TEEImpl.south east)+(0.50,0)$);

\begin{scope}[shift={(8.65,0)}, local bounding box=ProxyImpl]
  \node[title] at (-0.75,2.45) {Institutional ePBS via proxy builder};

  \node[builder] (PB1) at (0,1.15) {$B_1$};
  \node[builder] (PB2) at (0,-0.55) {$B_2$};
  \node[proxy] (Proxy) at (3.05,0.30) {Proxy builder\\relay-like service};
  \node[actor] (PProp) at (6.35,0.30) {Proposer};
  \node[actor] (PAtt) at (6.35,-1.70) {Attesters};

  \draw[flow] (PB1.east) -- node[flowlabel, pos=0.48, above=3pt] {1. reports /\\payloads} (Proxy.north west);
  \draw[flow] (PB2.east) -- (Proxy.south west);
  \node[flowlabel, text width=2.1cm] at (3.05,-0.88) {2. run \(\Myerson\):\\virtual values,\\allocation, payment};
  \draw[flow] (Proxy.east) -- node[flowlabel, pos=0.52, above=3pt] {3. single bid\\+ winning payload} (PProp.west);
  \draw[flow] (PProp.south) -- node[flowlabel, pos=0.56, right=3pt] {4. beacon block} (PAtt.north);
  \draw[optflow] (Proxy.south west) to[out=-150,in=-25,looseness=1.2] node[flowlabel, pos=0.50, below=3pt] {settlement /\\feedback} (PB2.south east);

  \node[draw=EPBSRed, rounded corners=1pt, align=center, font=\scriptsize, text width=5.3cm, inner sep=4pt] at (3.25,-2.90)
  {The proxy internalizes the auction and can implement full-commitment \(\Myerson\); this recreates a relay-like institutional advantage.};
\end{scope}
\end{tikzpicture}}
    \caption{The real-world structure of full-commitment ePBS and ePBS-TEE implementation. In the TEE sidecar, the proposer need to import its block-proposal signing key and encodes its policy ex ante in the TEE-sidecar. }
    \label{fig:commitment-implementation}
\end{figure}

\noindent\textbf{Institutional full commitment through a proxy.}
Implementing the optimal full-commitment ePBS mechanism requires the proposer to bind itself to allocation and payment rules beyond the native pay-as-bid rule in \eqref{eq:fpa}.
In practice, such commitment may require a relay-like proxy, repeated relationships, or vertically integrated infrastructure.
The right panel of \Cref{fig:commitment-implementation} illustrates how an institutional proposer such as Lido could implement this benchmark in practice.
The institutional proposer first commits to accepting bids only from a trusted external relay-like proxy.
Similar to bid adjustment in current PBS, the proxy can collect builder reports and payloads, apply the virtual-value allocation and payment rules internally, and forward only the selected payload and an associated bid to the proposer.
Because the proxy controls eligibility, allocation, and off-protocol settlement, it can impose a reserve, withhold allocation when all virtual values are negative, and decouple the winner's payment from the bid forwarded to the proposer.
That control comes at the cost of an additional intermediary and the associated liveness, censorship, and concentration risks.

The benchmark therefore exposes a commitment-access concern: institutional proposers like Lido may be positioned to outsource a complete auction mechanism, while small proposers remain limited to the timing and information controls available through the protocol-facing interface.

\subsection{TEE commitment as constrained information design}
\label{subsec:tee-kernel}

To mitigate this commitment advantage, we instead study a TEE-based commitment device. Each proposer commits ex ante to a policy governing its ex-post stopping and messaging actions, represented by the stopping profile $\phi$ and the message profile $\psi$, and a TEE sidecar ensures that the announced policy is faithfully executed.
The left panel of \Cref{fig:commitment-implementation} demonstrates how TEE-sidecar work in practice. 

Why a TEE-sidecar can only provides a limitted commitment? 
Noting that the optimal Myerson auction generally requires both (i) withholding allocation when the highest virtual value is negative and (ii) charging a payment that need not equal the winner's signed pay-as-bid amount.
The second requirement is directly incompatible with the native pay-as-bid rule and requires an external intermediary to enforce the adjusted payment.
The first requirement, a no-allocation outcome below a reserve is also unavailable through the native terminal rule.
In TEE-sidecar, importing the proposer's signing key into the TEE does not by itself give the sidecar exclusive control over block proposal, and builders may still submit valid bids outside it. Preventing the proposer from accepting such bids would require exclusive key custody or an external gatekeeper.
Therefore, rewriting either component of the terminal rule would move the design beyond the TEE's limited commitment scope and toward the proxy architecture in \Cref{fig:commitment-implementation}.
Here, we formally define the ePBS game with a TEE sidecar. 
For the finite two-builder analysis below, let $\mathcal V\subseteq\R_+$ and $\mathcal B\subseteq\R_+$ denote the symmetric type and bid spaces, respectively, with $0\in\mathcal B$ and $\mathcal V\subseteq\mathcal B$.
\begin{definition}[ePBS--TEE game]
\label{def:epbs_tee_committed_game}
An ePBS--TEE game is an instance of the general block-building game, similar to the vanilla ePBS game in \Cref{def:general_epbs}, with the following restrictions: (i) the termination rule follows \eqref{eq:fpa}; (ii) each $B_i$ may include a type report $\hat v_i\in\mathcal V$ together with its stage-$1$ bid $b_{i,1}$ in the auxiliary information $\texttt{aux}_{i,1}$; and (iii) the proposer commits to a proposing rule $\phi$ and a messaging rule $\psi$ ex ante.
\end{definition}

For a builder with stage-$1$ bid $b_{i,1}$, write the per-stage feasible bid spaces as $\mathcal B_{i,1}:=\mathcal B$ and $\mathcal B_{i,2}(b_{i,1}):=\{b\in\mathcal B:b_{i,1}\le b\}$.
For $i\in\{1,2\}$, let $j\neq i$ denote the other builder, and denote the induced stage-$1$ bid-history distribution as $\beta_1(b_{i,1},b_{j,1}\mid v_i,v_j):=\beta_{i,1}(b_{i,1}\mid v_i)\beta_{j,1}(b_{j,1}\mid v_j)$.
We consider a direct TEE policy that merges stopping and continuation messages into a single kernel:
\begin{equation}\label{eq:mu_feasible_full}
\mu:\mathcal V^2\times\mathcal B^2\to\Delta\Big(\{\mathrm{PROPOSE}\}\cup(\{\mathrm{DEFER}\}\times\mathcal B^2)\Big).    
\end{equation}

Let $\hat v = (\hat{v}_1, \hat{v}_2)$ be the reported type. 
Then admissibility requires
\begin{equation}
\mu(\CTN,\mathbf m\mid \hat v,\mathbf b_1)>0
\quad\Longrightarrow\quad
m_i\in\mathcal B_{i,2}(b_{i,1})
\quad \forall i
\label{eq:mu_recommendation_feasible}
\end{equation}

In what follows, let $\Gamma^{\dagger}(\mu)$ denote the ePBS--TEE game operating under the committed policy $\mu$.

\noindent\textbf{Off-path beliefs for fast builders.}
Given a committed kernel $\mu$, an off-path continuation information set for a fast builder $i\in\{1,2\}$ is
\(
I_i^{\mathrm{off}}=(v_i,b_{i,1},\hat v_i,\CTN,m_i),
\)
which has zero probability under the candidate assessment. Define the set of hidden histories consistent with this information set and the kernel by
\begin{equation}
\label{eq:belief_feasibility}
\resizebox{0.9\textwidth}{!}{$
H_i(I_i^{\mathrm{off}},\mu)
:=
\left\{
h=(v_j,b_{j,1},\hat v_j,m_j)\in
\mathcal V\times\mathcal B\times\mathcal V\times\mathcal B:
\mu(\CTN,(m_i,m_j)\mid(\hat v_i,\hat v_j),(b_{i,1},b_{j,1}))>0
\right\}.
$}
\end{equation}
A feasible history belief satisfies
\(
\lambda_i^H(\cdot\mid I_i^{\mathrm{off}})\in
\Delta(H_i(I_i^{\mathrm{off}},\mu)).
\)
Conditional on a hidden history $h=(v_j,b_{j,1},\hat v_j,m_j)$, the opponent's feasible stage-$2$ action set is
\begin{equation}
\label{eq:offpath_action_feasibility}
\mathcal A_j(h):=\mathcal B_{j,2}(b_{j,1}),
\end{equation}
and a feasible action belief satisfies
\(
\lambda_i^A(\cdot\mid h,I_i^{\mathrm{off}})\in\Delta(\mathcal A_j(h)).
\)
Write $\lambda_i=(\lambda_i^H,\lambda_i^A)$ and define the payoff-relevant joint belief by
\(
\Lambda_i(h,b_{j,2}\mid I_i^{\mathrm{off}})
:=
\lambda_i^H(h\mid I_i^{\mathrm{off}})
\lambda_i^A(b_{j,2}\mid h,I_i^{\mathrm{off}}).
\)

At off-path information sets,  PBE imposes no Bayes-rule restriction beyond this feasibility requirement; in particular, the action belief need not be induced by builder $j$'s on-path continuation strategy.

\begin{definition}[ PBE assessment]
\label{def:tee_plain_pbe}
A PBE assessment of $\Gamma^\dagger(\mu)$ is
\(
\sigma=(\beta_1,\alpha,\beta_2,\lambda),
\)
where $\beta_1$ is the stage-$1$ bidding profile, $\alpha_i(\cdot\mid v_i,b_{i,1})$ is builder $i$'s report strategy, $\beta_2$ is the stage-$2$ bidding profile, and $\lambda=(\lambda_1,\lambda_2)$ is the belief system. It is a PBE if on-path beliefs satisfy Bayes' rule, off-path beliefs satisfy \eqref{eq:belief_feasibility}--\eqref{eq:offpath_action_feasibility}, and all strategies are sequentially rational under those beliefs.
\end{definition}

\subsection{PBE characterization and optimal TEE design}
\label{subsec:tee-optimal-design}

We focus on the fast--fast case, in which both builders can condition their stage-$2$ bids on TEE recommendations. The other latency profiles are treated in Appendices~\ref{app:epigraph-reduction-fast-slow-hidden} and~\ref{app:slow-slow-committed-stop-qcqp}.
In this section, we first reduce the TEE design problem to a truthful direct representation.

\begin{definition}[Truthful-obedient PBE]
\label{def:truthful_obedient_pbe}
A PBE assessment $\sigma=(\beta_1,\alpha,\beta_2,\lambda)$ is \emph{truthful-obedient on the path} if each builder reports truthfully at stage~$1$,
\(
\alpha_i(\cdot\mid v_i,b_{i,1})=\delta_{\hat v_i=v_i},
\)
and obeys every reached feasible direct recommendation $m_i\in\mathcal B_{i,2}(b_{i,1})$ at stage~$2$.
\end{definition}

Obedience is imposed only at reached information sets. At an off-path report or continuation information set, the recommendation has no independent force; the builder's action need only be sequentially rational under the specified feasible belief. A truthful-obedient PBE can therefore be represented by the reduced tuple $(\beta_1,\mu,\lambda)$.
For a pair $(\mu,\sigma)$, let
\(
Q^{\mu,\sigma}(\mathbf b_1,\STOP\mid v)
\)
and
\(
Q^{\mu,\sigma}(\mathbf b_1,\CTN,\mathbf b_2\mid v)
\)
denote the induced terminal-bid distributions conditional on the true value profile $v$.

\begin{definition}[Bidding equivalence]
\label{def:bidding_equivalence}
Fix the same value distribution $\mathsf F$. Two kernel--assessment pairs $(\mu,\sigma)$ and $(\mu',\sigma')$ are \emph{bidding-equivalent} if, for every $v\in\mathcal V^2$, $\mathbf b_1\in\mathcal B^2$, and $\mathbf b_2\in\mathcal B^2$,
\[
\resizebox{\textwidth}{!}{$
Q^{\mu,\sigma}(\mathbf b_1,\STOP\mid v)
=Q^{\mu',\sigma'}(\mathbf b_1,\STOP\mid v),
\quad
Q^{\mu,\sigma}(\mathbf b_1,\CTN,\mathbf b_2\mid v)
=Q^{\mu',\sigma'}(\mathbf b_1,\CTN,\mathbf b_2\mid v).
$}
\]
\end{definition}

\begin{lemma}[Truthful-report reduction for ePBS--TEE]
\label{lem:truthful_report_revelation_epbs_tee}
\label{lem:app-tee-direct-reduction}
Fix any ePBS--TEE game $\Gamma^\dagger(\mu)$ and any PBE assessment $\sigma=(\beta_1,\alpha,\beta_2,\lambda)$. There exist a feasible reduced kernel $\mu'$, a feasible belief system $\lambda'$, and a PBE assessment $\sigma'=(\beta_1',\alpha',\beta_2',\lambda')$ of $\Gamma^\dagger(\mu')$ such that $(\mu,\sigma)$ and $(\mu',\sigma')$ are bidding-equivalent and $\sigma'$ is truthful-obedient.
\end{lemma}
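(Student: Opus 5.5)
The argument is a revelation-principle construction in the style of Myerson and of Bergemann--Morris for Bayes-correlated equilibria. The TEE absorbs both the builders' report strategies and their stage-$2$ bidding maps into a new kernel $\mu'$, so builders report truthfully and obey recommendations. Concretely, I define $\mu'$ at each reported profile $\hat v=(v_1,v_2)$ and stage-$1$ bid profile $\mathbf b_1$ by simulation. The kernel draws reports $\tilde v_i\sim\alpha_i(\cdot\mid v_i,b_{i,1})$ independently, draws $(a,\mathbf m)\sim\mu(\cdot\mid\tilde v,\mathbf b_1)$, and outputs $\mathrm{PROPOSE}$ if $a=\STOP$. If $a=\CTN$, it draws $b_{i,2}\sim\beta_{i,2}(\cdot\mid v_i,b_{i,1},\tilde v_i,\CTN,m_i)$ for each $i$ (for a fast builder the information set includes $m_i$) and outputs $(\mathrm{DEFER},(b_{1,2},b_{2,2}))$. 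Stage-$1$ bids are unchanged: $\beta_1'=\beta_1$. Admissibility \eqref{eq:mu_recommendation_feasible} is immediate, because $\beta_{i,2}$ is supported on $\mathcal B_{i,2}(b_{i,1})$. By construction the induced distributions over $(\mathbf b_1,\STOP)$ and $(\mathbf b_1,\CTN,\mathbf b_2)$ conditional on $v$ agree with those of $(\mu,\sigma)$, which gives bidding equivalence in the sense of \Cref{def:bidding_equivalence}.

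Next I verify that truthful-obedient play under $\mu'$ is sequentially rational on path. \emph{Obedience:} a recommendation $m_i'=b$ reached on path reveals to builder $i$ only a coarsening of its original information set $(v_i,b_{i,1},\tilde v_i,\CTN,m_i)$. Under each original set, $b$ was optimal against the opponent's original continuation play, which $\mu'$ reproduces. Suppose some $b'\neq b$ did strictly better given the coarser posterior. Then the same deviation, applied at every original information set mapped to $b$, would strictly improve on some original set of positive probability, contradicting sequential rationality of $\sigma$. Terminal payoffs depend only on $(\mathbf b_1,a,\mathbf b_2)$ and $v$, so this comparison is valid.

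\emph{Truthful reporting:} a deviation to report $\hat v_i'$ in $\Gamma^\dagger(\mu')$, possibly followed by disobedience, is replicated in $\Gamma^\dagger(\mu)$. The builder plays $\alpha_i(\cdot\mid\hat v_i',b_{i,1})$ and then composes type $\hat v_i'$'s stage-$2$ map with its own stage-$2$ deviation map. This is feasible because $\beta_{i,2}$ may be evaluated at any type's information set. The combined deviation yields the same payoff, so it cannot be profitable. The same argument covers joint deviations in $b_{i,1}$ and the report, since the stage-$1$ bid is the builder's own choice in both games.

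The main obstacle is off-path consistency. $\mu'$ creates new continuation information sets, for instance recommendations that the simulated $\beta_2$ never issues after some reports, and these must carry beliefs satisfying \eqref{eq:belief_feasibility}--\eqref{eq:offpath_action_feasibility}. At each such set I must also exhibit a sequentially rational action. My plan is to pick any hidden history $h\in H_i(I_i^{\mathrm{off}},\mu')$, which is nonempty whenever the set is reachable by some report. I then put a point-mass action belief on the opponent bidding its own minimum feasible bid $b_{j,1}$, or alternatively on the opponent bidding the highest bid in $\mathcal B$. Under that belief, some best response in $\mathcal B_{i,2}(b_{i,1})$ exists by finiteness. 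Because off-path action beliefs are unrestricted beyond feasibility, this completion is valid. It does not affect on-path play, so bidding equivalence and the on-path checks are preserved, and $\sigma'$ is a truthful-obedient PBE.
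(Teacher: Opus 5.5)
Your construction of $\mu'$ (simulate $\alpha$, then $\mu$, then $\beta_2$ at the simulated original information set) is the paper's. So are your bidding-equivalence argument and your on-path obedience argument. The gap is in the truthful-reporting step.

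Your claim that a deviation $(b'_{i,1},\hat v_i',\delta_i)$ in $\Gamma^\dagger(\mu')$ ``yields the same payoff'' as its replica in $\Gamma^\dagger(\mu)$ needs something you do not secure. The \emph{non-deviating} builder $j$ must bid in $\Gamma^\dagger(\mu')$ exactly what $\beta_{j,2}$ would bid in $\Gamma^\dagger(\mu)$. That holds only when $j$'s recommendation lands on a truthful-path information set, where $j$ obeys.

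A deviation by $i$ feeds $\mu$ the report $\tilde v_i\sim\alpha_i(\cdot\mid\hat v_i',b'_{i,1})$. This can be off-path report behavior, possibly at an off-path bid profile. It can therefore hand $j$ a draw from $\beta_{j,2}$ at an off-path original information set (say, a deterrent bid equal to $v_j$) that type $v_j$ is never recommended on the truthful path. At that information set $j$ follows your completion, not the recommendation.

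Your remark that the completion ``does not affect on-path play, so \ldots the on-path checks are preserved'' overlooks that stage-$1$ IC is not an on-path check. The deviator's payoff depends on $j$'s off-path completion, as \eqref{eq:Ui_full_def} records through $b^*_{j,2}(\cdot,\chi_j^\lambda)$. Under your first belief (opponent bids its minimum feasible bid), $j$ best-responds with a low bid. A deviation that $j$'s deterrent bid made unprofitable in $\sigma$ can then become profitable in $\Gamma^\dagger(\mu')$. Under the $\bar b$ belief, choosing ``some best response'' leaves the same hole.

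The repair is to pin the completion down. At every continuation information set of $j$ with a truthful report, reached or not, let $j$ obey the recommendation. Support this with the point-mass action belief that $i$ bids $\bar b=\max\mathcal B$, which is feasible after any $b_{i,1}$.
\begin{itemize}
\item Since $v_j\le\bar b$, every bid below $\bar b$ then earns zero and is a best response.
\item $\beta_{j,2}$ can put mass on $\bar b$ only if $\bar b$ was optimal for type $v_j$, i.e.\ only if $v_j=\bar b$, $b_{j,1}=\bar b$, or $k_2=0$. Otherwise bidding $b_{j,1}$ strictly beats $\bar b$ under any belief.
\end{itemize}
The recommendation to $j$ is literally the bid $\sigma$ would have drawn. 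Obedience everywhere therefore makes $j$'s behavior after any deviation of $i$ identical to $\sigma$, and your replication argument becomes exact. The paper's proof is equally terse here (``choose \ldots sequentially rational continuation strategies at unreached information sets''), so the same specification is needed there.
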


Appendix~\ref{app:truthful_report_revelation_epbs_tee} proves the lemma. Thus truthful reports and obedience at reached direct recommendations are without loss for terminal bidding outcomes. We now characterize when a truthful-report tuple $(\beta_1,\mu,\lambda)$ can be completed into a full PBE.
The implementability problem has three parts.

\noindent \textbf{1. On-path Stage-2 Obedience.}
Under a committed $\mu$, fix the stage-$1$ profile $\beta_1$.
Continuation is a one-shot recommendation game on the truthful path.
For builder $i$, define the \emph{reach weight} of the stage-2 builder information set $I_{i, 2} = (v_i,b_{i,1}, \hat v_{i} = v_i, \CTN, m_i)$ by
\[
\resizebox{\textwidth}{!}{$
\begin{aligned}
\rho_i(v_i,b_{i,1},m_i)
:=
\sum_{v_j\in\mathcal V}\sum_{b_{j,1}\in\mathcal B}\sum_{m_j \in\mathcal B}
\mathsf F(v_i,v_j)\,\beta_1(b_{i,1},b_{j,1}\mid v_i,v_j)\,
\mu\bigl(\CTN,(m_i, m_j)\mid (v_i,v_j),(b_{i,1},b_{j,1})\bigr)
\end{aligned}
$}
\]
and write the on-path indicator function
\(
R_i(v_i,b_{i,1},m_i)
:=\mathbf 1\{\rho_i(v_i,b_{i,1},m_i)>0\}.
\)

At every on-path direct recommendation information set, the recommended bid must be sequentially optimal: for every builder $i$, every $v_i\in\mathcal V$, every $b_{i,1}\in\mathcal B_{i,1}$, every $m_i\in \B$ with $R_i(v_i,b_{i,1},m_i) = 1$, and every $b_{i, 2}'\in\mathcal B_{i,2}(b_{i,1})$,
\begin{equation}
\resizebox{0.9\textwidth}{!}{$
\begin{aligned}
&\sum_{v_j\in\mathcal V}\sum_{b_{j,1}\in\mathcal B}\sum_{m_j\in\mathcal B}
\mathsf F(v_i,v_j)\,\beta_1(b_{i,1},b_{j,1}\mid v_i,v_j)\,
\mu\bigl(\CTN,(m_i, m_j)\mid (v_i,v_j),(b_{i,1},b_{j,1})\bigr)\,
u_{i,2}\bigl((m_i,m_j);v_i\bigr)\\
\ge\;&
\sum_{v_j\in\mathcal V}\sum_{b_{j,1}\in\mathcal B}\sum_{m_j \in\mathcal B}
\mathsf F(v_i,v_j)\,\beta_1(b_{i,1},b_{j,1}\mid v_i,v_j)\,
\mu\bigl(\CTN,(m_i, m_j)\mid (v_i,v_j),(b_{i,1},b_{j,1})\bigr)\,
u_{i,2}\bigl((b_{i,2}',m_j);v_i\bigr).
\end{aligned}
$}
\label{eq:stage2_obedience_full}
\end{equation}

If $R_i(v_i,b_{i,1},m_i)=0$, the corresponding obedience constraint is vacuous and imposes no Bayes restriction on off-path beliefs.

\noindent \textbf{2. Off-path Best-Response. }
For any off-path stage-2 information set
\[
I_{i,2}=(v_i,b_{i,1},\hat v_i, \CTN, m_i),
\]
given the off-path belief pair \(\lambda_i=(\lambda_i^H,\lambda_i^A)\), define the off-path best-response by
\begin{equation}\label{eq:offpath_best_response}
  \resizebox{0.9\textwidth}{!}{$
\mathrm{BR}_{i,2}^{\mathrm{off}}(I_{i,2},\lambda_i;\mu)
:=
\arg\max_{b_i'\in\mathcal B_{i,2}(b_{i,1})}
\sum_{h\in H_i(I_{i,2},\mu)}
\sum_{b_{j,2}\in\mathcal A_j(h)}
\Lambda_i(h,b_{j,2}\mid I_{i,2})\,
u_{i,2}((b_i',b_{j,2});v_i).
$}
\end{equation}

An off-path completion under \(\lambda\) is any map \(\chi_i^\lambda\) such that
\(
\chi_i^\lambda(I_{i,2})
\in
\mathrm{BR}_{i,2}^{\mathrm{off}}(I_{i,2},\lambda_i;\mu)
\).
Given stage-$2$ on-path obedience and an off-path completion \(\chi^\lambda\) with off-path belief $\lambda$, write \(I_{i,2}^{\mathrm{tru}}(m_i):=(v_i,b_{i,1},\hat{v}_i = v_i,\CTN,m_i)\) for the truthful continuation information set. Then the equilibrium stage-2 bid of $B_i$ can be written as
\[
b_{i, 2}^*(I_{i,2}^{\mathrm{tru}}(m_i), \chi_i^\lambda) = \begin{cases}
		m_i \quad & \text{if } R_i(v_i, b_{i, 1}, m_i) = 1 \\
		\chi_i^\lambda(I_{i,2}^{\mathrm{tru}}(m_i)) \quad & \text{if } R_i(v_i, b_{i, 1}, m_i) = 0
\end{cases}
\]

\noindent \textbf{3. Stage-1 IC.}
Stage-$1$ incentive compatibility must allow a builder to deviate jointly in her initial bid, her report, and her later response to recommendations.
For a feasible stage-$1$ bid $b_{i,1}\in\mathcal B$, let
\(
\mathcal D_i(b_{i,1})
:=
\left\{\delta_i:\mathcal B\to\mathcal B,
\delta_i(m_i)\in\mathcal B_{i,2}(b_{i,1})\ \forall m_i\in\mathcal B\right\}
\)
be the set of feasible continuation response rules after the stage-$1$ bid $b_{i,1}$.
The expected deviation utility of $B_i$ depends on both the on-path and off-path best responses of its opponent $B_j$.
Let \(I_{j,2}^{\mathrm{tru}}(m_j):=(v_j,b_{j,1},\hat{v}_j = v_j,\CTN,m_j)\) be the truthful information set of $B_j$. Formally, given a fixed $v_i\in\mathcal V$, $b_{i,1}\in\mathcal B_{i,1}$, $\hat v_i\in\mathcal V$, and $\delta_i\in\mathcal D_i(b_{i,1})$, the deviation payoff is
\begin{equation}
\resizebox{0.9\textwidth}{!}{$
\begin{aligned}
&U_i(v_i; b_{i,1},\hat v_i,\delta_i \mid \chi_j^\lambda,\mu)
:= \sum_{v_j\in\mathcal V}\sum_{b_{j,1}\in\mathcal B}
\mathsf F(v_i,v_j)\,\beta_{j,1}(b_{j,1}\mid v_j)
\Bigg[
\mu\bigl(\STOP\mid (\hat v_i,v_j),(b_{i,1},b_{j,1})\bigr)\,
u_{i,1}\bigl((b_{i,1},b_{j,1});v_i\bigr)\\
&\qquad\qquad
+\sum_{m_i \in\mathcal B}\sum_{m_j \in\mathcal B}
\mu\bigl(\CTN,(m_i,m_j)\mid (\hat v_i,v_j),(b_{i,1},b_{j,1})\bigr)\,
u_{i,2}\Bigl(\bigl(\delta_i(m_i),b_{j,2}^*(I_{j,2}^{\mathrm{tru}}(m_j),\chi_j^\lambda)\bigr);v_i\Bigr)
\Bigg].
\end{aligned}
$}
\label{eq:Ui_full_def}
\end{equation}

For any stage-$1$ bid $b$, define the feasible obedient response rule
\(
\delta_i^*(m_i;b):=\max\{m_i,b\}.
\)
Admissibility implies that every recommendation sent after bid $b$ satisfies $m_i\ge b$, so this rule obeys every feasible recommendation and supplies a feasible arbitrary extension at messages that cannot be sent. Thus, for every type $v_i\in\mathcal V$, the stage-$1$ incentive compatibility (IC) constraint requires that:

\begin{equation}
\resizebox{0.92\textwidth}{!}{$
\begin{aligned}
U_i(v_i; b_{i,1},v_i,\delta_i^*(\cdot;b_{i,1}) \mid \chi_j^\lambda,\mu)
\ge
U_i(v_i; b_{i,1}',\hat v_i,\delta_i \mid \chi_j^\lambda,\mu),
\forall \hat v_i\in\mathcal V,\
\forall b_{i,1}\in\operatorname{supp}\beta_{i,1}(\cdot\mid v_i),\quad
\forall b_{i,1}'\in\mathcal B_{i,1},\
\forall \delta_i\in\mathcal D_i(b_{i,1}').
\end{aligned}
$}
\label{eq:stage1_IC_full}
\end{equation}

Stage-$2$ on-path obedience \eqref{eq:stage2_obedience_full} is a special case of stage-$1$ IC \eqref{eq:stage1_IC_full}: hold the stage-$1$ bid and truthful report fixed and vary only the continuation response at a reached recommendation. It therefore need not be listed as an additional constraint in the finite program below.

\begin{lemma}[Constraint equivalence for truthful direct-report PBE]
\label{lem:truthful_direct_constraint_equivalence}
Fix a direct-report ePBS-TEE game $\Gamma^\dagger(\mu)$ with an admissible kernel $\mu$ satisfying \eqref{eq:mu_feasible_full} and \eqref{eq:mu_recommendation_feasible}.
Let \( \beta_1(\mathbf b_1\mid v) := \beta_{1,1}(b_{1,1}\mid v_1)\beta_{2,1}(b_{2,1}\mid v_2) \).
The reduced tuple \((\beta_1,\mu,\lambda)\) is implementable by a  PBE of \(\Gamma^\dagger(\mu)\) in which builders report truthfully and obey every on-path direct bid recommendation if and only if the following conditions hold:
\begin{enumerate}[\bfseries (i)]
	\item for every \(i\), the off-path belief pair \(\lambda_i=(\lambda_i^H,\lambda_i^A)\) satisfies the feasibility restrictions in \eqref{eq:belief_feasibility} and \eqref{eq:offpath_action_feasibility};
	
	\item there exists an off-path completion \(\chi^\lambda\) such that, for every \(i\), \(\chi_i^\lambda\) selects from the correspondence in \eqref{eq:offpath_best_response} at every off-path continuation information set;
	\item with that same completion \(\chi^\lambda\), the stage-$1$ incentive condition \eqref{eq:stage1_IC_full} holds.
\end{enumerate}
We define $\mathcal{F}^{FF}$ as the set of all feasible \((\beta_1, \mu, \lambda)\) satisfying the above constraints.
\end{lemma}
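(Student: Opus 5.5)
The proof proceeds by proving the two directions separately. In each direction we match the components of the PBE definition (\Cref{def:tee_plain_pbe}, specialized via \Cref{def:truthful_obedient_pbe}) to conditions (i)--(iii). Since the kernel $\mu$ is committed, the proposer has no sequential-rationality requirement. A PBE of $\Gamma^\dagger(\mu)$ therefore consists of three ingredients: Bayes-consistent on-path beliefs, feasible off-path beliefs, and sequential rationality of builders at stage $1$ (joint bid and report) and at stage $2$. On the truthful path the Bayes beliefs are pinned down by $\mathsf F$, $\beta_1$ and $\mu$, and the reach weight $\rho_i$ is exactly the unnormalized posterior. So only the off-path part and the two rationality requirements need to be checked.

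\emph{Necessity.} Suppose $(\beta_1,\mu,\lambda)$ is implemented by a truthful-obedient PBE $\sigma$. Condition (i) is immediate, because PBE requires off-path beliefs to lie in $\Delta(H_i)$ and $\Delta(\mathcal A_j(h))$. For (ii), let $\chi_i^\lambda$ be the stage-$2$ bid that $\sigma$ prescribes at each off-path continuation information set. Sequential rationality under $\Lambda_i$ then says precisely that this bid lies in \eqref{eq:offpath_best_response}. For (iii), note that the opponent's equilibrium stage-$2$ behavior is given by $b_{j,2}^*$: it plays $m_j$ at reached sets, by obedience, and $\chi_j^\lambda$ otherwise. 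Any stage-$1$ deviation $(b_{i,1}',\hat v_i)$ followed by any continuation rule $\delta_i$ is a feasible strategy deviation, and its payoff is $U_i$ in \eqref{eq:Ui_full_def}. Equilibrium play yields $U_i(v_i;b_{i,1},v_i,\delta_i^*)$, because obedience coincides with $\delta_i^*$ on every message that can actually be sent. Hence \eqref{eq:stage1_IC_full} holds. One subtlety needs care here: at stage $1$ the deviator's expectation must use the opponent's truthful behavior, which is why $U_i$ evaluates $\mu$ at $(\hat v_i,v_j)$ and uses $b_{j,2}^*$ at $I_{j,2}^{\mathrm{tru}}$.

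\emph{Sufficiency.} Given (i)--(iii), construct $\sigma$ as follows: truthful reports, stage-$1$ bids $\beta_1$, stage-$2$ bid $m_i$ at reached sets and $\chi_i^\lambda$ at unreached ones, Bayes beliefs on path, and $\lambda$ off path. Off-path sequential rationality holds by (ii). On-path stage-$2$ obedience follows from (iii) by the reduction noted before the lemma. Fix the truthful report and on-path bid, and let $\delta_i$ differ from $\delta_i^*$ only at one reached $m_i$. All other terms in $U_i$ then cancel. What remains is the difference of the two sides of \eqref{eq:stage2_obedience_full}, which is a positive multiple of the Bayes-conditional expected payoff comparison since $\rho_i>0$. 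Stage-$1$ optimality is \eqref{eq:stage1_IC_full} itself, because every stage-$1$ deviation plan in $\sigma$ reduces to a triple $(b_{i,1}',\hat v_i,\delta_i)$ with $\delta_i\in\mathcal D_i(b_{i,1}')$.

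\emph{Main obstacle.} The step requiring the most care is the consistency bookkeeping between on-path and off-path continuation sets after a deviation. If builder $i$ misreports or mis-bids, the information set it reaches may be off path, or it may be on path for a different type. In either case its optimal continuation is already covered by the maximization over $\delta_i$ in \eqref{eq:stage1_IC_full}. The opponent, however, remains at truthful sets, and these are correctly classified by $R_j$. We also need $\delta_i^*$ to be a valid stand-in for obedience at unsendable messages; admissibility \eqref{eq:mu_recommendation_feasible} gives exactly this, since those messages carry zero $\mu$-weight. We would first verify this classification carefully, using finiteness of $\mathcal V,\mathcal B$ so that all sums and argmaxes are well defined.
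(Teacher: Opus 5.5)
Your proposal is correct and follows the paper's proof essentially step for step. Necessity reads (i)--(iii) off the definition of PBE: feasible off-path beliefs, sequential rationality at unreached continuation sets, and stage-$1$ optimality against joint bid/report/continuation deviations. Sufficiency builds the truthful/obedient assessment with $\chi^\lambda$ off path, and recovers on-path obedience from \eqref{eq:stage1_IC_full} through a deviation that changes the response at a single reached message.

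One loose end, which the paper's argument shares: the stage-$2$ action $\sigma$ prescribes at an off-path continuation set may be mixed, so ``the bid $\sigma$ prescribes'' is not yet a pure completion. You can fix this by taking the largest bid in that support, which stays inside \eqref{eq:offpath_best_response}. A non-overbidding first-price payoff is weakly decreasing in the opponent's bid, overbids never help, and deviations with $b_{i,1}'>v_i$ earn at most $0$, hence at most the truthful payoff. So this pure completion can only lower the value of every stage-$1$ deviation, and \eqref{eq:stage1_IC_full} still holds with it.
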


\noindent  \textbf{Raw optimal ePBS-TEE design.}
Given $(\beta_1,\mu)$, expected proposer revenue is
\begin{equation}
\resizebox{0.9\textwidth}{!}{$
\begin{aligned}
\mathrm{Rev}(\beta_1,\mu)
:=\sum_{v\in\mathcal V^2}\sum_{\mathbf b_1\in\mathcal B^2}\mathsf F(v)\,\beta_1(\mathbf b_1\mid v)
\Bigg[
\mu(\STOP\mid v,\mathbf b_1)\,r_1(\mathbf b_1)
\ +\ \sum_{\mathbf m\in\mathcal B^2}
\mu(\CTN,\mathbf m\mid v,\mathbf b_1)\,r_2(\mathbf m)
\Bigg].
\end{aligned}
$}
\label{eq:rev_full_bce}
\end{equation}

Under the standard convention that the proposer selects the revenue-maximizing PBE when multiple equilibria exist, we use \Cref{lem:truthful_direct_constraint_equivalence} to formulate the \textit{optimal fast-fast ePBS-TEE design problem} as follows:
\begin{equation} \label{eq:fast_obj}
	V^{FF,\mathrm{TEE}}:= \max_{(\beta_1,\mu,\lambda)\in\mathcal F^{FF}}\mathrm{Rev}(\beta_1,\mu).
\end{equation}

 \noindent  \textbf{Removing off-path beliefs.}
  The off-path belief system \(\lambda\) affects the design problem only through the off-path completion \(\chi^\lambda\) that enters the stage-\(1\) deviation payoff \eqref{eq:Ui_full_def}; it does not affect the on-path distribution or proposer revenue.
  However, optimizing over all feasible off-path beliefs and completions creates a large auxiliary variable space. We therefore replace the explicit belief choice by a canonical off-path completion and then show that this completion is supportable by feasible off-path beliefs.

  The intuition is therefore to select a canonical off-path completion that is both supportable by feasible PBE beliefs and least favorable to deviations.
  Since first-price continuation payoffs are weakly decreasing in the opponent's realized bid, the natural canonical completion makes an off-path nondeviating builder bid her value. We impose the canonical support restriction
  \(
  \beta_{i,1}(b_{i,1}\mid v_i)>0\Longrightarrow b_{i,1}\le v_i,
  \)
  which makes value bidding feasible at every unreached truthful continuation information set.
  Formally, define
  \[
  b_{i,2}^{v}(I_{i,2}^{\mathrm{tru}}(m_i))
  :=
  \begin{cases}
  m_i, & \text{if }R_i(v_i,b_{i,1},m_i)=1,\\
  v_i, & \text{if }R_i(v_i,b_{i,1},m_i)=0.
  \end{cases}
  \]
  Let \(U_i^v\) denote the deviation payoff in \eqref{eq:Ui_full_def} after
  replacing
  \(b_{j,2}^*(I_{j,2}^{\mathrm{tru}}(m_j),\chi_j^\lambda)\)
  by
  \(b_{j,2}^{v}(I_{j,2}^{\mathrm{tru}}(m_j))\).

  We now demonstrate that such an off-path completion can be supported by a feasible off-path belief. Let \(\bar b:=\max\mathcal B\). At any
  off-path information set, take any feasible history belief
  \(\lambda_i^H(\cdot\mid I_{i,2})\in\Delta(H_i(I_{i,2},\mu))\), and set
  \(
    \lambda_i^A(\bar b\mid h,I_{i,2})=1,
  \forall h\in H_i(I_{i,2},\mu).
  \)
  Then every bid below \(\bar b\) loses and yields zero, while winning against
  \(\bar b\) yields at most zero, so bidding \(v_i\) is a weak best response.

\noindent  \textbf{Removing the deviation function.}
It remains to remove the explicit maximization over continuation response rules \(\delta_i\) within the stage-\(1\) IC; the number of such rules grows as \(\mathcal{O}(|\mathcal B|^{|\mathcal B|})\) for each stage-\(1\) deviation. We do this by decomposing the canonical deviation payoff message by message.
For any stage-$1$ deviation $b'_{i,1}$, report $\hat v_i$, and continuation response rule $\delta_i$, we can decompose the canonical deviation payoff as
\[
\resizebox{\textwidth}{!}{$
U_i^v(v_i;b'_{i,1},\hat v_i,\delta_i\mid \beta_1,\mu)
=
U_{i,1}^v(v_i;b'_{i,1},\hat v_i\mid \beta_1,\mu)
+
\sum_{m_i\in\mathcal B}
U_{i,2}^v(v_i;b'_{i,1},\hat v_i,m_i,\delta_i(m_i)\mid \beta_1,\mu),
$}
\]
where \(U_{i,1}^v\) denotes the expected stage-\(1\) stop payoff:
\[
\resizebox{\textwidth}{!}{$
\begin{aligned}
U_{i,1}^v(v_i;b'_{i,1},\hat v_i\mid \beta_1,\mu)
:=
\sum_{v_j\in\mathcal V}\sum_{b_{j,1}\in\mathcal B}
\mathsf F(v_i,v_j)\,\beta_{j,1}(b_{j,1}\mid v_j)\,
\mu\bigl(\STOP\mid(\hat v_i,v_j),(b'_{i,1},b_{j,1})\bigr)\,
u_{i,1}\bigl((b'_{i,1},b_{j,1});v_i\bigr),
\end{aligned}
$}
\]
and \(U_{i,2}^v\) denotes the expected stage-\(2\) continuation payoff conditional on \((m_i,a_i)\):
\[
\resizebox{\textwidth}{!}{$
\begin{aligned}
U_{i,2}^v(v_i;b'_{i,1},\hat v_i,m_i,a_i\mid \beta_1,\mu)
:=
\sum_{v_j\in\mathcal V}\sum_{b_{j,1}\in\mathcal B}
\mathsf F(v_i,v_j)\,\beta_{j,1}(b_{j,1}\mid v_j)
\sum_{m_j\in\mathcal B}
\mu\bigl(\CTN,(m_i,m_j)\mid(\hat v_i,v_j),(b'_{i,1},b_{j,1})\bigr)\,
u_{i,2}\Bigl(\bigl(a_i,b_{j,2}^{v}(I_{j,2}^{\mathrm{tru}}(m_j))\bigr);v_i\Bigr).
\end{aligned}
$}
\]
For each tuple \((i,v_i,\hat v_i,b'_{i,1},m_i)\in\{1,2\}\times\mathcal V^2\times\mathcal B^2\), introduce a real epigraph variable \(\bar{U}_{i,2}^v(v_i,b'_{i,1},\hat v_i,m_i)\in\mathbb R\).
To keep the on-path support bid distinct from deviations, define the truthful
support payoff
\[
T_i^v(v_i,b_{i,1};\beta_1,\mu)
:=
U_i^v(v_i;b_{i,1},v_i,\delta_i^*(\cdot;b_{i,1})\mid \beta_1,\mu),
\]
for every \(b_{i,1}\in\operatorname{supp}\beta_{i,1}(\cdot\mid v_i)\).
The IC block below compares this named truthful payoff with the best
stage-\(1\) bid, report, and stage-2 deviation encoded by the
epigraph variables.

\begin{theorem}
\label{thm:fastfast}
The exact fast-fast ePBS-TEE design problem \eqref{eq:fast_obj} can be written as
\begin{equation*}
\label{eq:tee_ff_epigraph_problem}
\resizebox{\textwidth}{!}{$
\begin{alignedat}{2}
\qquad & \hspace{5cm} V^{FF,\mathrm{TEE}}
= \max_{\beta_1,\mu,\bar{U}_2^v}\quad \mathrm{Rev}(\beta_1,\mu) \\
\quad \text{s.t.}\quad
& \beta_{i,1}(\cdot\mid v_i)\in\Delta(\mathcal B_{i,1}),
&& \forall i,\ \forall v_i\in\mathcal V, \\
& \beta_{i,1}(b_{i,1}\mid v_i)>0\Longrightarrow b_{i,1}\le v_i,
&& \forall i,\ \forall v_i\in\mathcal V,\ \forall b_{i,1}\in\mathcal B_{i,1}, \\
& \eqref{eq:mu_feasible_full},\ \eqref{eq:mu_recommendation_feasible},
&& \text{(admissibility)}, \\
& T_i^v(v_i,b_{i,1};\beta_1,\mu)
\ge
U_{i,1}^v(v_i;b'_{i,1},\hat v_i\mid \beta_1,\mu)
+
\sum_{m_i\in\mathcal B}
\bar{U}_{i,2}^v(v_i,b'_{i,1},\hat v_i,m_i),
&& \text{(aggregate stage-$1$ IC)}, \\
& \hspace{5.1cm}
\forall i,\ \forall v_i,\hat v_i\in\mathcal V,\
\forall b_{i,1}\in\operatorname{supp}\beta_{i,1}(\cdot\mid v_i),\
\forall b'_{i,1}\in\mathcal B_{i,1}, \\
& \bar{U}_{i,2}^v(v_i,b'_{i,1},\hat v_i,m_i)
\ge
U_{i,2}^v(v_i;b'_{i,1},\hat v_i,m_i,a_i\mid \beta_1,\mu),
&& \text{(best stage-2 deviation)}, \\
& \hspace{5.1cm}
\forall i,\ \forall v_i,\hat v_i\in\mathcal V,\
\forall b'_{i,1}\in\mathcal B_{i,1},\
\forall m_i\in\mathcal B,\
\forall a_i\in\mathcal B_{i,2}(b'_{i,1}).
\end{alignedat}
$}
\end{equation*}
\end{theorem}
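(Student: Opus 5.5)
The plan is to show that the feasible set of the epigraph program and $\mathcal F^{FF}$ from \Cref{lem:truthful_direct_constraint_equivalence} generate the same set of revenue-relevant pairs $(\beta_1,\mu)$. Since $\mathrm{Rev}$ depends only on $(\beta_1,\mu)$, this gives equality of the optimal values. By \Cref{lem:truthful_report_revelation_epbs_tee}, restricting to truthful-obedient PBE is without loss for terminal bidding outcomes, and hence for revenue. It therefore suffices to work with reduced tuples $(\beta_1,\mu,\lambda)$. I would argue the two inclusions separately, in three steps.

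\emph{Step 1: reduce $\mathcal F^{FF}$ to the canonical completion.} First, I would show that the support restriction $b_{i,1}\le v_i$ is without loss. A truthful-obedient PBE with an overbidding stage-1 support point can be modified by moving that mass to a bid $\le v_i$ that yields weakly higher revenue and preserves IC. If this fails in general, I would instead note that the program's optimum is taken over a subclass and argue optimality separately. This is the step I expect to be the main obstacle, and I would try first to show that any overbid type earns negative surplus on the winning event, so that type weakly prefers the truncated bid. Second, under the support restriction I would prove a monotonicity claim. Because $u_{i,2}((a_i,b_{j,2});v_i)$ is weakly decreasing in $b_{j,2}$, and any feasible off-path completion $\chi_j^\lambda$ respects the feasibility constraint $b_{j,2}\le \bar b$, the deviation payoff satisfies $U_i^v\le U_i$? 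The inequality actually runs the other way, which is not immediately monotone. The cleaner route is the following: the completion $b^v$ is itself supportable by the feasible belief $\lambda_i^A(\bar b\mid\cdot)=1$ constructed in the text. So for the inclusion $\text{program}\subseteq\mathcal F^{FF}$, I take $\chi=b^v$ and verify conditions (i)--(iii) of \Cref{lem:truthful_direct_constraint_equivalence} directly. For the reverse inclusion, I would use the fact that the optimal value only requires that some optimal $(\beta_1,\mu)$ be supportable. I would show that replacing any completion by $b^v$, where opponents bid their value, weakly lowers every deviation payoff against which truthful play is compared, while leaving on-path payoffs $T_i^v$ unchanged. The on-path payoffs are unchanged because off-path recommendations never occur on path.

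\emph{Step 2: remove $\delta_i$.} For fixed $(i,v_i,\hat v_i,b'_{i,1})$, the canonical deviation payoff decomposes additively across messages $m_i$, as displayed before the theorem. Each summand depends on $\delta_i$ only through $\delta_i(m_i)\in\mathcal B_{i,2}(b'_{i,1})$. Hence
\[
\max_{\delta_i\in\mathcal D_i(b'_{i,1})}U_i^v=U_{i,1}^v+\sum_{m_i}\max_{a_i\in\mathcal B_{i,2}(b'_{i,1})}U_{i,2}^v(v_i;b'_{i,1},\hat v_i,m_i,a_i),
\]
because the maximization separates coordinatewise over a product set.

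\emph{Step 3: epigraph.} The IC requires $T_i^v\ge\max_{\delta_i}U_i^v$. Using the standard epigraph argument, this holds if and only if there exist $\bar U_{i,2}^v$ satisfying both the best-stage-2-deviation constraints and the aggregate IC. For necessity, set $\bar U_{i,2}^v$ equal to the inner max. For sufficiency, note that any feasible $\bar U_{i,2}^v$ dominates the max. Stage-2 obedience is implied by stage-1 IC, as the text already notes, and admissibility and the distribution constraints carry over verbatim. Combining the three steps gives the equality of optimal values claimed in the theorem.
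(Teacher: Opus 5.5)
Your Steps 2 and 3 are exactly the paper's proof. The appendix starts from the \emph{canonical} program: the support restriction $b_{i,1}\le v_i$ is imposed, and nondeviating off-path play is fixed at $b^v$. It then notes that $\mathcal D_i(b'_{i,1})$ is a product of copies of $\mathcal B_{i,2}(b'_{i,1})$ across messages, so the maximum over $\delta_i$ splits message by message. Finally it checks that the raw and epigraph programs have the same $(\beta_1,\mu)$-projection. Your inclusion ``program $\subseteq\mathcal F^{FF}$'', via the belief $\lambda_i^A(\bar b\mid\cdot)=1$, is also what the paper uses. What the paper never does is prove the reverse reduction from $\mathcal F^{FF}$ to the canonical program; it simply imposes the support restriction and the canonical completion.

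That reverse reduction is where your Step 1 breaks.

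First, the claim that replacing an arbitrary feasible completion by $b^v$ weakly lowers every deviation payoff is false for the PBE notion in \Cref{def:tee_plain_pbe}.
\begin{itemize}
\item Under the very belief $\lambda_j^A(\bar b\mid\cdot)=1$, every feasible bid strictly below $\bar b$ gives builder $j$ zero and is a best response.
\item So completions in which the nondeviating builder bids \emph{above} $v_j$ are admissible in $\mathcal F^{FF}$.
\item Since $u_{i,2}((a_i,b_{j,2});v_i)$ is decreasing in $b_{j,2}$ for $a_i\le v_i$, such completions punish deviators more harshly than value bidding does.
\end{itemize}
Hence $\mathcal F^{FF}$ may contain pairs $(\beta_1,\mu)$ that violate the canonical IC. Value bidding is least favorable to deviators only among completions that exclude weakly dominated above-value bids. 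The paper states that convention explicitly only in the fast--slow analogue, \Cref{prop:fs-canonical-completion-wlog}.

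Second, the support restriction is not shown to be without loss. An overbid in the support must indeed win with probability zero. But moving its mass to a bid $\le v_i$ changes the kernel's input and the opponent's stop-event payoffs. For instance, $j$ may now win a stopped history at a price between the new and the old bid. So $j$'s IC and the revenue are not automatically preserved, and ``argue optimality separately'' is not an argument.

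You have two ways forward. One is to read \eqref{eq:fast_obj} as the paper effectively does, with the support restriction and undominated completions built in. Then Step 1 collapses to the supportability check and Steps 2--3 finish the proof. The other is to add an explicit dominance refinement together with a genuine argument that the support restriction is without loss.
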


Here $T_i^v$ is the truthful support payoff, $U^v_{i,1}$ is the stop utility of a deviation, $U^v_{i,2}$ is the continuation utility, and $\bar U^v_{i,2}$ is an epigraph variable that records the best continuation deviation message by message.

\begin{proof}[Proof sketch]
The truthful-direct reduction in Appendix~\ref{app:tee-direct-representation} absorbs equilibrium reporting and reached continuation behavior into the committed kernel, so truthful reporting and direct bid recommendations preserve the distribution of terminal bids.
The on-path support restriction makes value bidding feasible at every unreached truthful continuation information set, and Appendix~\ref{app:tee-epigraph} constructs beliefs under which that canonical action is sequentially rational.
Fixing the canonical completion, a joint deviation in the stage-$1$ bid, report, and continuation response decomposes into a stop payoff plus one continuation term for each received message.
Because feasibility constrains each response $\delta_i(m_i)$ separately, maximizing over the response function equals the sum of the message-by-message maxima.
The epigraph variables $\bar U_{i,2}^v$ encode these maxima, so the raw canonical IC problem and the displayed finite program have the same feasible $(\beta_1,\mu)$ projection and the same revenue objective.
\end{proof}

Computationally, the exact canonical-completion problem is a nonconvex mixed-integer quadratically constrained quadratic program (MI-QCQP).
Because nonconvex MI-QCQPs are NP-hard \cite{pardalosQuadraticProgrammingOne1991}, our numerical evaluation reports best-effort results that may not be globally optimal.
Appendix~\ref{app:tee-epigraph} proves the exact finite characterization in \Cref{thm:fastfast}.
Appendix~\ref{app:report-capped-lower-bound} records the report-capped QCQP used for computation and makes explicit its additional no-overbidding deviation domain; the numerical results should therefore be read as best-found outcomes for that restricted computational benchmark rather than as certificates for the exact program.

\subsection{Evaluation}
\label{subsec:evaluation}

We evaluate the report-capped ePBS-TEE program on a \(5\times5\) value--bid grid over \([0,1]\) with five equally spaced support points.
The experiments use four i.i.d. value environments: Uniform, truncated Normal, Beta Low, and Beta High.
We fix \(k_1=1\) and sweep five \(k_2\) values from \(0.01\) to \(1.0\).
We compare fast-fast, slow-slow, and fast-slow latency regimes against FPA, SPA, and Myerson benchmarks.

\noindent\textbf{Evaluation metrics.}
We assess each computed design along several economically distinct dimensions.
First, proposer revenue measures how much value the mechanism delivers to the proposer; comparing it with the FPA, SPA, and Myerson benchmarks shows whether limited TEE commitment improves on standard auctions and how far it remains from the full-commitment benchmark.
Second, the average top first-stage bid measures how much bidding intensity is generated before any continuation decision is made.
This helps distinguish a design that induces stronger early competition from one that raises revenue only by extracting additional payments after deferral.
Third, builder utility records how the gains from commitment are distributed across market participants.
In the fast--slow regime, the utility difference between the two builders also indicates whether the ability to react to a private continuation recommendation creates a meaningful latency advantage.
We complement this payoff comparison with an information-leakage measure, which records how much a builder can infer about the opponent's value and initial bid from what the builder observes before the second stage.
Comparing leakage for fast and slow builders helps determine whether private recommendations reveal substantially more than the public stop-or-continue decision alone.

Finally, we use two features of the first-stage bid profile to interpret the stopping policy.
The normalized bid gap captures how close the two bids are and therefore how competitive the history is, while the mean bid level captures whether the bids are jointly high or low.
Considering both features separates the effect of bid competitiveness from the effect of the overall amount already offered to the proposer.
Appendix~\ref{app:tee-metrics} gives the corresponding formal definitions.

\noindent\textbf{Stopping-rule specifications.}
Let $s_o$, $g_o$, and $\bar b_o$ denote, respectively, the stop probability, normalized bid gap, and mean bid level at a truthful on-path observation $o$; let $c(o)$ identify its value-environment, latency, and $k_2$ instance.
All specifications use truthful reach weights.
The constant-intercept linear model is
\[
s_o=\alpha+\beta_g g_o+\beta_b\bar b_o+\varepsilon_o,
\]
and the instance-intercept model is
\[
s_o=\alpha_{c(o)}+\beta_g g_o+\beta_b\bar b_o+\varepsilon_o.
\]

The logit specification is
\[
\mathbf E[s_o\mid g_o,\bar b_o,c(o)]
=\Lambda\!\left(\alpha_{c(o)}+\beta_g g_o+\beta_b\bar b_o\right),
\qquad
\Lambda(x)=\frac{1}{1+e^{-x}}.
\]

\begin{figure}[h]
  \centering
  \includegraphics[width=\linewidth]{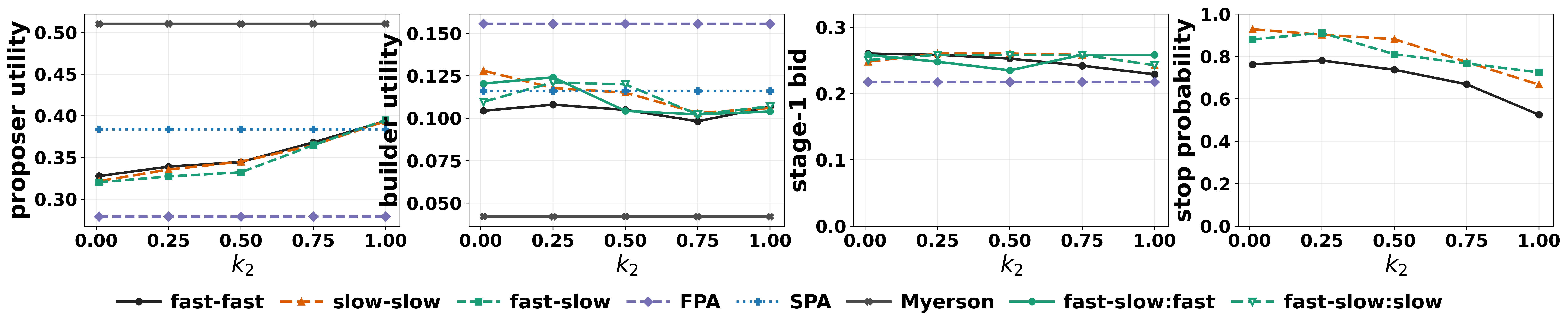}
  \caption{Best-found report-capped ePBS-TEE design.
  The panels show proposer revenue, builder utility, average stage-\(1\) bids, and average on-path stage-\(1\) stop probability as functions of \(k_2\).}
  \label{fig:tee_design_overview}
\end{figure}

\textbf{Performance.}
\Cref{fig:tee_design_overview} shows that the average proposer-revenue gain of ePBS-TEE relative to FPA is \(26.94\%\) in fast-fast, \(26.05\%\) in slow-slow, and \(24.54\%\) in fast-slow.
These gains are not driven only by late continuation extraction: average top stage-\(1\) bids are already higher than in FPA by \(14.26\%\), \(16.78\%\), and \(16.19\%\), respectively.
The design is stop-heavy, with pooled stage-\(1\) stop probability \(0.78\), but stopping declines as continuation becomes less costly.
It remains below the full-commitment Myerson benchmark.

\begin{table}[h]
\centering
\small
\caption{Reach-weighted stop regressions.
Positive coefficients mean that larger stage-\(1\) bid gaps and higher mean stage-\(1\) bid levels predict higher stop probabilities, while \(R^2\) and pseudo-\(R^2\) report each specification's explanatory fit.
Significance levels: \(^{*}p<0.10\), \(^{**}p<0.05\), \(^{***}p<0.01\).}
\label{tab:tee_stop_regressions}
\begin{tabular}{lccc}
\toprule
Model & Coef. on \(g_o\) & Coef. on \(\bar b_o\) & Fit \\
\midrule
Constant-intercept & \(0.1812^{***}\) & \(0.3253^{***}\) & \(R^2=0.216\) \\
Instance-intercept & \(0.1914^{***}\) & \(0.4691^{***}\) & \(R^2=0.430\) \\
Logit & \(1.8821^{***}\) & \(2.9738^{***}\) & pseudo-\(R^2=0.328\) \\
\bottomrule
\end{tabular}
\end{table}

 \textbf{Computed ePBS-TEE mechanism shape.}
The computed TEE policy heavily favors early commitment, yet remains selective. Table~\ref{tab:tee_stop_regressions} summarizes reach-weighted stop-rule regressions. The positive coefficient on \(g_o\) means that histories with a clear stage-\(1\) leader (less competitive) are more likely to stop, while close-bid histories (more competitive) are more likely to continue. The positive coefficient on \(\bar b_o\) means that histories with a higher overall stage-$1$ bid level are more likely to settle immediately. Thus, proposing deferral is used mainly for competitive or low-bid histories.

\section{Conclusion and Future Work}\label{sec:conclusion}

This paper shows that ePBS reshapes the microstructure of block-building auctions through a fundamental change in proposer commitment. Unlike relay-mediated PBS---where exogenous stopping rules support separating, first-price outcomes---ePBS grants proposers ex-post flexibility to use early bids to intensify subsequent competition. 
This flexibility induces a ratchet effect: anticipating information extraction, builders shade and pool early bids, degrading both proposer revenue and allocation efficiency. Both our analytical uni-pooling equilibria and calibrated no-regret analysis confirm this pattern of defensive information concealment. Consequently, while ePBS compresses latency rents, it amplifies the value of credible proposer-side commitment. This risks a new source of centralization pressure, as institutional proposers can secure this commitment more readily than solo validators.

To mitigate this commitment gap, we propose a TEE-based architecture, formulating the optimal policy as a mechanism-design problem under limited commitment. By enabling ex-ante commitment to stopping and disclosure policies while preserving the native terminal auction rule, TEEs equip solo proposers with a protocol-facing commitment device, leveling the playing field against institutional intermediaries.

Several directions remain for future work.  First, the equilibrium analysis should move beyond simplified ePBS to accommodate asymmetric and non-i.i.d. values, more builders, heterogeneous latency, selective public and private disclosure, and richer equilibrium classes.  Such an extension would clarify which forms of pooling and information concealment survive in the general environment.  Second, latency advantage warrants a sharper theoretical treatment.  Future work should characterize when the ability to react to continuation information generates a payoff premium and how endogenous stopping, disclosure, and bid informativeness amplify or compress that premium.  Finally, the TEE design must become more scalable.  Extending the finite characterization to continuous bids and proposal times and to many-builder environments will require tractable approximations and simpler, auditable policy classes that retain the main economic benefits of commitment while remaining operationally feasible at protocol scale.

\begingroup
\raggedright
\sloppy
\Urlmuskip=0mu plus 2mu\relax

\endgroup

\clearpage
\appendix
\section*{Appendix}

\section{Computational Equilibrium Background}
\label{app:cfr}

This appendix gives the formal computational-equilibrium background used in
Section~\ref{subsec:hindsight_solution}. It first defines the finite
extensive-form approximation, then states the regret quantities controlled by
CFR and CFR+, and finally records the EFCCE interpretation of the resulting
empirical distribution.

\noindent \textbf{Finite extensive-form approximation.}
The computation uses a finite extensive-form approximation of the two-stage
block-building game:
\[
G^\Delta =
\Bigl(
\mathcal N\cup\{c\}, H^\Delta, Z^\Delta, \iota,
\{A(h)\}_{h\in H^\Delta\setminus Z^\Delta}, f_c,
\{\mathcal I_i^\Delta\}_{i\in\mathcal N},
\{u_i^\Delta\}_{i\in\mathcal N}
\Bigr).
\]
Here, \(\mathcal N=\{P,B_1,\ldots,B_n\}\) is the set of strategic players,
\(c\) is chance, \(H^\Delta\) is the finite set of histories, \(Z^\Delta\) is
the set of terminal histories, \(\iota(h)\) is the player who moves after
nonterminal history \(h\), \(A(h)\) is the feasible action set, \(f_c\) is the
chance law, \(\mathcal I_i^\Delta\) is player \(i\)'s information partition,
and \(u_i^\Delta(z)\) is player \(i\)'s terminal payoff at \(z\in Z^\Delta\).
The game \(G^\Delta\) preserves the timing, information, and payoff primitives
of general block building game. 
A behavioral strategy for player \(i\)
is denoted \(x_i\), where \(x_i(I)\in\Delta(A(I))\) for each
\(I\in\mathcal I_i^\Delta\), and \(x=(x_i,x_{-i})\) denotes a profile. Let
\(U_i^\Delta(x)\) be player \(i\)'s expected payoff in \(G^\Delta\).

\noindent \textbf{Hindsight rationality and regret minimization.}
We interpret the simulations through hindsight rationality: after the play
history is observed, a player should not be able to identify a systematic
deviation that would have improved her payoff
\citep{morrillEfficientDeviationTypes2022}. The maximum utility gain from such
a deviation is the player's \textit{regret}. Regret-minimization methods seek
to make this gain small.

Formally, a regret-minimization procedure is a \(T\)-iteration online learning
method that generates profiles \(x^1,\ldots,x^T\). Given a class
\(\mathcal D_i\) of deviations for player \(i\), with each
\(d_i\in\mathcal D_i\) mapping \(x_i^t\) to an alternative behavior
\(d_i(x_i^t)\), regret against \(\mathcal D_i\) is
\[
\texttt{Reg}_i^T(\mathcal D_i)
:=
\max_{d_i\in\mathcal D_i}
\sum_{t=1}^T
\left[
U_i^\Delta(d_i(x_i^t),x_{-i}^t)
- U_i^\Delta(x_i^t,x_{-i}^t)
\right].
\]
Zero average regret means that, in hindsight, no deviation in
\(\mathcal D_i\) would have systematically improved player \(i\)'s payoff.

\noindent \textbf{Counterfactual values and regrets.}
In an extensive-form game, the regret calculation is localized at information
sets, with off-path information sets weighted by their counterfactual reach
probabilities. Fix a player \(i\), an information set
\(I\in\mathcal I_i^\Delta\), and an action \(a\in A(I)\). For any profile
\(x\), let \(\pi_{x,-i}(h)\) be the probability that chance and players other
than \(i\) reach a history \(h\), and let
\[
\pi_{x,-i}(I):=\sum_{h\in I}\pi_{x,-i}(h).
\]
For a terminal history \(z\) extending \(h\), let \(\pi_x(h,z)\) be the
probability that play continues from \(h\) to \(z\) under \(x\). The
counterfactual value of information set \(I\) is
\[
U_i^\Delta(x,I)
:=
\frac{1}{\pi_{x,-i}(I)}
\sum_{h\in I}\sum_{z\succeq h}
\pi_{x,-i}(h)\pi_x(h,z)u_i^\Delta(z),
\]
whenever \(\pi_{x,-i}(I)>0\). If \(\pi_{x,-i}(I)=0\), the corresponding regret
term is taken to be zero.

Let \(x^t|_{I\to a}\) be the profile obtained from \(x^t\) by forcing player
\(i\) to choose \(a\in A(I)\) at \(I\), leaving the rest of the profile fixed.
The one-period counterfactual regret and cumulative local regret are
\[
\texttt{reg}_i^t(I,a)
:=
\pi_{x^t,-i}(I)
\left[
U_i^\Delta(x^t|_{I\to a},I)-U_i^\Delta(x^t,I)
\right],
\qquad
\texttt{Reg}_i^T(I,a):=\sum_{t=1}^T \texttt{reg}_i^t(I,a).
\]
The total positive counterfactual regret is
\[
\texttt{Reg}_{i,\mathrm{cf}}^{T,+}
:=
\sum_{I\in\mathcal I_i^\Delta}
\max_{a\in A(I)}\bigl(\texttt{Reg}_i^T(I,a)\bigr)^+,
\qquad
(y)^+:=\max\{y,0\}.
\]
CFR and CFR+ \citep{zinkevichRegretMinimizationGames,tammelinSolvingLargeImperfect2014}
minimize the average total \(\texttt{Reg}_{i,\mathrm{cf}}^{T,+}/T\).
Intuitively, \(\texttt{Reg}_{i,\mathrm{cf}}^{T,+}/T\to0\) means that, in
hindsight, the best local action deviation at each information set yields
vanishing average gain.

\noindent \textbf{CFR.}
CFR applies regret matching independently at each information set
\citep{zinkevichRegretMinimizationGames}. Initialize
\(\texttt{Reg}_i^0(I,a)=0\) for every player \(i\), information set
\(I\in\mathcal I_i^\Delta\), and action \(a\in A(I)\). At iteration \(t+1\),
the local policy is
\[
x_i^{t+1}(I,a)
=
\begin{cases}
\dfrac{(\texttt{Reg}_i^t(I,a))^+}
{\sum_{a'\in A(I)}(\texttt{Reg}_i^t(I,a'))^+},
& \text{if } \sum_{a'\in A(I)}(\texttt{Reg}_i^t(I,a'))^+>0,\\[1.2ex]
\dfrac{1}{|A(I)|},
& \text{otherwise.}
\end{cases}
\]
After computing the counterfactual regrets \(\texttt{reg}_i^{t+1}(I,a)\), the
cumulative regret table is updated by
\[
\texttt{Reg}_i^{t+1}(I,a)
=
\texttt{Reg}_i^t(I,a)+\texttt{reg}_i^{t+1}(I,a).
\]

\noindent \textbf{CFR+.}
CFR+ uses the same counterfactual regret terms but keeps a nonnegative regret
table \citep{tammelinSolvingLargeImperfect2014}. Initialize
\(Q_i^0(I,a)=0\). At iteration \(t+1\), the local policy is
\[
x_i^{t+1}(I,a)
=
\begin{cases}
\dfrac{Q_i^t(I,a)}
{\sum_{a'\in A(I)}Q_i^t(I,a')},
& \text{if } \sum_{a'\in A(I)}Q_i^t(I,a')>0,\\[1.2ex]
\dfrac{1}{|A(I)|},
& \text{otherwise.}
\end{cases}
\]
After computing \(\texttt{reg}_i^{t+1}(I,a)\), CFR+ updates
\[
Q_i^{t+1}(I,a)
:=
\max\left\{Q_i^t(I,a)+\texttt{reg}_i^{t+1}(I,a),0\right\}.
\]
Thus CFR+ differs from CFR only in the truncation of cumulative regrets before
the next regret-matching step.

\noindent \textbf{Extensive-form coarse correlated equilibrium.}
From a game-theoretic perspective, vanishing average counterfactual regret
implies that the empirical distribution of iterated play approaches an
\emph{extensive-form coarse correlated equilibrium} (EFCCE). EFCCE extends the
coarse correlated equilibrium idea from normal-form games to extensive-form
games by using a mediator who draws contingent plans and reveals
recommendations only as information sets are reached
\citep{farinaCoarseCorrelationExtensive2020}. At a high level, before
observing the recommendation at a given information set, each player should
weakly prefer to keep following the mediator rather than switch to any
continuation plan.

Formally, let \(\mathcal P_i\) be the set of player \(i\)'s pure contingent
plans, and let \(\mathcal P:=\prod_{i\in\mathcal N}\mathcal P_i\). For an
information set \(I\in\mathcal I_i^\Delta\), let \(\mathcal P_i(I)\) denote
the set of player \(i\)'s pure continuation plans from \(I\), namely the
restrictions of pure contingent plans to \(I\) and to player \(i\)'s later
information sets. Write \(\Delta(\mathcal P_i(I))\) for mixed continuation
plans. A mediator draws a pure-plan profile
\(\alpha=(\alpha_i)_{i\in\mathcal N}\in\mathcal P\) from a distribution
\(\lambda\in\Delta(\mathcal P)\). The mediator does not reveal \(\alpha_i\)
all at once. Instead, whenever player \(i\) reaches an information set \(I\),
the mediator reveals only the action prescribed by \(\alpha_i\) at \(I\).

Given \(\epsilon\ge0\), a distribution \(\lambda\in\Delta(\mathcal P)\) is an
\(\epsilon\)-EFCCE if, for every player \(i\in\mathcal N\), every information
set \(I\in\mathcal I_i^\Delta\), and every mixed continuation plan
\(\nu_i\in\Delta(\mathcal P_i(I))\),
\[
\E_{\alpha'\sim\lambda_{I\to\nu_i}}
\left[U_i^\Delta(\alpha')\right]
- \E_{\alpha\sim\lambda}
\left[U_i^\Delta(\alpha)\right]
\le\epsilon.
\]
Here, \(U_i^\Delta(\alpha)\) is the expected payoff induced by the pure-plan
profile \(\alpha\) and chance. For fixed \(i\), the distribution
\(\lambda_{I\to\nu_i}\) is the trigger-deviation distribution obtained as
follows. Draw \(\alpha\sim\lambda\), let all players \(j\neq i\) follow
\(\alpha_j\), and let player \(i\) follow \(\alpha_i\) until \(I\) is reached.
If \(I\) is reached, then before observing the recommendation at \(I\), player
\(i\) switches to a continuation plan drawn from \(\nu_i\). If \(I\) is not
reached, player \(i\) continues to follow \(\alpha_i\). The resulting pure-plan
profile is denoted \(\alpha'\).

Let \(\bar\lambda_T\) be the empirical distribution over pure-plan profiles
induced by the \(T\) iterates \((x^1,\ldots,x^T)\). Standard regret-to-EFCCE
guarantees imply that \(\bar\lambda_T\) is an \(\epsilon_T\)-EFCCE with
\[
\epsilon_T
\le
\frac{1}{T}\max_{i\in\mathcal N}\texttt{Reg}_{i,\mathrm{cf}}^{T,+}.
\]
Thus, if \(\texttt{Reg}_{i,\mathrm{cf}}^{T,+}=o(T)\) for every player, the
empirical distribution of play converges to the EFCCE set. CFR-type procedures
typically obtain \(\texttt{Reg}_{i,\mathrm{cf}}^{T,+}=O(\sqrt T)\)
\citep{zinkevichRegretMinimizationGames,celliNoRegretLearningDynamics,anagnostidesFasterNoRegretLearning2022}.

\section{Empirical Calibration and Computational Details}
\label{app:empirical_calibration}

This appendix records the empirical calibration behind Section~\ref{sec:computational_comparison}.
The goal is to construct a finite prior \(\mathsf{F}^\Delta\) and finite extensive-form games that preserve the main valuation and timing margins in the block-building game while remaining computationally tractable.

\subsection{Data Calibration} \label{subapp:data_calibration}

\noindent \textbf{Data construction.}
The calibration uses a structured dataset assembled from Ethereum relay bid traces, winning-block metadata from Beaconcha.in, local-node slot timing, and local-node block \texttt{extraData}.
Relay bid traces are collected at the block level and contain observed builder bids, public keys, timestamps, relay labels, and bid values.
Winning-block metadata provide the realized winning builder, block reward, block MEV reward, and relay tag.
Local-node timing aligns bid timestamps to the slot clock, and block \texttt{extraData} resolves winning builder identities when public keys are not already in the identity map.
The estimator focuses on Titan and BuilderNet, the two largest canonical builders in the computational sample.
To match the notation in Section~\ref{sec:computational_comparison}, builder \(1\) is Titan and builder \(2\) is BuilderNet.
Public keys are lowercased before joins and are mapped to canonical builder identities.
The BuilderNet labels \texttt{BuilderNet (Flashbots)}, \texttt{BuilderNet (Beaver)}, and \texttt{BuilderNet (Nethermind)} are all canonicalized to \texttt{BuilderNet}.
The final estimator sample contains \(89{,}365\) accepted pairwise observations and \(9{,}087\) quarantined rows over blocks \(23{,}000{,}151\)--\(24{,}698{,}991\), corresponding to July 26, 2025 through March 20, 2026 UTC.

\noindent \textbf{Loser bids as censored valuation observations.}
The key empirical difficulty is that only the winning builder's realized block value is directly observed from block-level rewards.
For the losing builder, we use its same-relay bid trace as revealed willingness-to-pay evidence.
The lower-bound discipline is mechanical: under the first-price pay-as-bid rule in \eqref{eq:fpa}, a submitted bid \(b\) is feasible only when \(b\le V_L\).
Observed loser bids therefore give valid lower bounds on the loser's latent value.

To obtain an upper bound, we impose the passive-PBS response-window discipline used in the calibration.
On a given relay, the builder auction is treated as an open continuous bidding process up to the proposer's header request: builders who remain active on that relay observe the current rival benchmark and can submit replacement bids before the slot deadline.
We choose a response window \(\delta=100\) ms, large enough to cover relay publication, ingestion, and builder reaction latency in the preprocessing audit.
Conditional on the loser being active on the same relay during this window, failure to submit any bid above the matched winning bid is therefore interpreted as a revealed non-exceedance of the winning price.
This is the maintained empirical identifying assumption behind the interval upper bound; rows without sufficient loser activity are quarantined rather than imputed.

Formally, let \(\tau_{\max}\) be the aligned timestamp of the matched winning bid \(b_{\max}\).
For the losing builder \(L\), inspect all bids on the same block and winning relay with timestamp at most \(\tau_{\max}+\delta\).
Let
\[
\bar b_L^\delta
:=
\max\{b_{L,\tau}:\tau\le \tau_{\max}+\delta\}
\]
when this set is nonempty.
If the loser submits a bid above \(b_{\max}\) within the response window, the observation is right-censored at the tightest observed lower bound.
Otherwise, the passive-PBS response-window discipline gives the interval cap:
\[
V_L \in
\begin{cases}
[\bar b_L^\delta,\infty), & \text{if }\bar b_L^\delta>b_{\max},\\
[\bar b_L^\delta,b_{\max}], & \text{if }\bar b_L^\delta\le b_{\max}.
\end{cases}
\]

\noindent \textbf{Joint log-normal censored likelihood.}
Let \(V_\ell=(V_{1,\ell},V_{2,\ell})\) denote the latent Titan--BuilderNet value pair in auction observation \(\ell\).
Because block values are positive and right-skewed, we fit a joint log-normal model.
Define
\[
X_\ell=(X_{1,\ell},X_{2,\ell})
:=
(\log V_{1,\ell},\log V_{2,\ell})
\sim
\mathcal N(\mu,\Sigma),
\]
where
\[
\mu=
\begin{pmatrix}\mu_1\\ \mu_2\end{pmatrix},
\qquad
\Sigma=
\begin{pmatrix}
\sigma_1^2 & \rho\sigma_1\sigma_2\\
\rho\sigma_1\sigma_2 & \sigma_2^2
\end{pmatrix}.
\]
Each observation has one exact coordinate and one censored coordinate.
For notational economy, let \(i,j\in\{1,2\}\) with \(j\ne i\).
If builder \(i\)'s value is observed exactly as \(v_i\), and builder \(j\)'s value is known only to lie in a censoring set \(C_j\), then observation \(\ell\)'s likelihood contribution is
\[
L_\ell(\theta)
=
f_{V_i}(v_i;\theta)
\Pr_\theta(V_j\in C_j\mid V_i=v_i).
\]
In log space, with \(x_i=\log v_i\), this becomes
\[
L_\ell(\theta)
=
\frac{1}{v_i}f_{X_i}(x_i;\theta)
\Pr_\theta(X_j\in \log C_j\mid X_i=x_i),
\]
where the factor \(1/v_i\) is the log-normal Jacobian.
The conditional distribution is normal:
\[
X_j\mid X_i=x_i
\sim
\mathcal N\!\left(
\mu_j+\rho\frac{\sigma_j}{\sigma_i}(x_i-\mu_i),
\sigma_j^2(1-\rho^2)
\right).
\]
Thus an interval-censored loser observation \(V_j\in[a_j,b_j]\) contributes
\[
L_\ell(\theta)
=
\frac{1}{v_i}f_{X_i}(x_i)
\left[
\Phi\!\left(\frac{\log b_j-\mu_{j\mid i}(x_i)}{\sigma_{j\mid i}}\right)
-
\Phi\!\left(\frac{\log a_j-\mu_{j\mid i}(x_i)}{\sigma_{j\mid i}}\right)
\right],
\]
whereas a right-censored observation \(V_j\in[a_j,\infty)\) contributes
\[
L_\ell(\theta)
=
\frac{1}{v_i}f_{X_i}(x_i)
\left[
1-
\Phi\!\left(\frac{\log a_j-\mu_{j\mid i}(x_i)}{\sigma_{j\mid i}}\right)
\right].
\]
The formulas are symmetric when builder \(j\) is the exact winner and builder \(i\) is censored.
The estimator is the censored maximum likelihood estimator
\[
\hat\theta\in\argmax_\theta \sum_{\ell=1}^N \log L_\ell(\theta).
\]
This uses the full information in the bid trace without replacing censored values by midpoints or dropping right-censored observations.

\noindent \textbf{Common-value decomposition and normalization.}
For the calibrated game we use a structured version of the joint log-normal model in which the two builders share a block-level common component and differ through builder-specific surplus terms.
In latent log space,
\[
X_1=C+S_1,
\qquad
X_2=C+S_2,
\]
with
\[
C\sim\mathcal N(0,\sigma_c^2),\qquad
S_1\sim\mathcal N(\mu_1,\sigma_{s,1}^2),\qquad
S_2\sim\mathcal N(\mu_2,\sigma_{s,2}^2),
\]
and \(C,S_1,S_2\) mutually independent.
This implies
\[
\sigma_1^2=\sigma_c^2+\sigma_{s,1}^2,\qquad
\sigma_2^2=\sigma_c^2+\sigma_{s,2}^2,\qquad
\rho\sigma_1\sigma_2=\sigma_c^2.
\]
The normalization \(\mathbb E[C]=0\) is for identification: only the sums \(C+S_1\) and \(C+S_2\) are observed, so a nonzero mean of \(C\) could be shifted into the two surplus means without changing the distribution of \((V_1,V_2)\).

This representation converts the fitted correlated value distribution into independent latent primitives.
After obtaining \(\hat\mu_1,\hat\mu_2,\hat\sigma_c,\hat\sigma_{s,1},\hat\sigma_{s,2}\), we work with normalized independent variables
\[
\tilde C=\frac{C}{\hat\sigma_c},
\qquad
\tilde S_1=\frac{S_1-\hat\mu_1}{\hat\sigma_{s,1}},
\qquad
\tilde S_2=\frac{S_2-\hat\mu_2}{\hat\sigma_{s,2}},
\]
so that \(\tilde C,\tilde S_1,\tilde S_2\) are independent standard normal components under the fitted model.
The finite prior \(\mathsf{F}^\Delta\) is then formed by discretizing these normalized latent components and mapping each grid point back to values through
\[
v_1
=
\exp(\hat\sigma_c\tilde c+\hat\mu_1+\hat\sigma_{s,1}\tilde s_1),
\qquad
v_2
=
\exp(\hat\sigma_c\tilde c+\hat\mu_2+\hat\sigma_{s,2}\tilde s_2).
\]

\subsection{Numerical implementation.} \label{appe:cfr_numerical}
In the calibrated PBS and ePBS comparisons, we discretize the continuous valuation distribution into \(16\) value levels and the continuous bid space into \(31\) bid levels. For instance, the ePBS game has over \(73\) million histories and over \(65{,}000\) information sets. The finite games are solved with a forked GPU implementation of CFR+ \citep{kimGPUAcceleratedCounterfactualRegret2024}.
Runs are trained on an RTX 4090 machine with an Intel Silver 4130 CPU\@. 
For every reported comparison, the EFCCE error bound normalized by proposer expected payoff is below \(0.3\%\).

\section{Geographic Exposure Map}
\label{app:geo-exposure-map}
Figure~\ref{fig:geo-validator-danger-map} illustrates the Ethereum validator mass, top builder positions (i.e., East U.S., West E.U., and Japan), and the proposer locates within the dangerous zone. 
\begin{figure}[h]
\centering
\includegraphics[width=\textwidth,height=0.88\textheight,keepaspectratio]{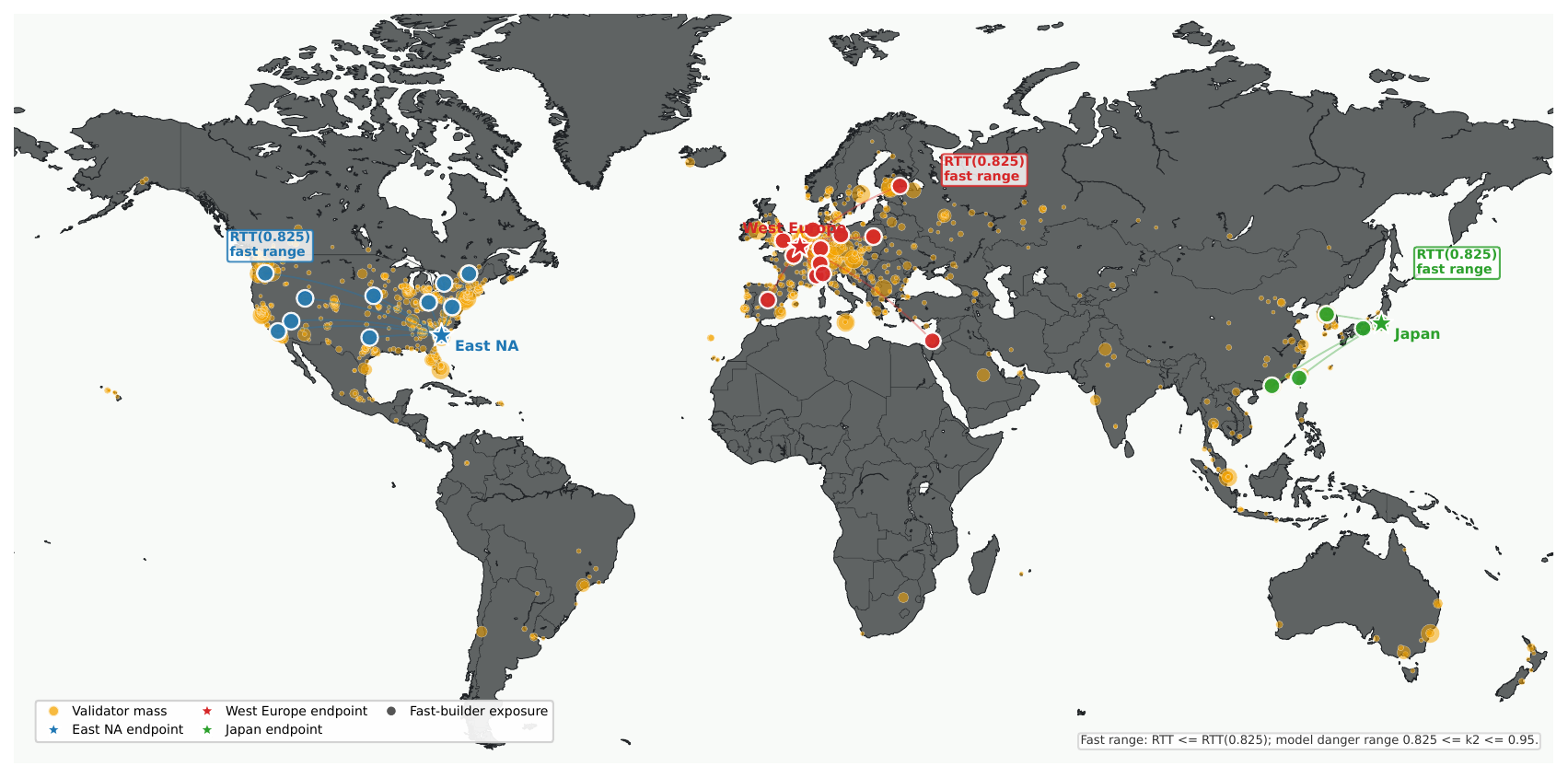}
\caption{Validator mass and fast-builder exposure under the \(\mathrm{RTT}(0.825)\) cutoff.
Orange circles scale with validator count in the EF/RIG metadata.
Stars mark the representative builder endpoints.
Colored circles mark GCP proposer regions whose measured RTT to a representative endpoint is no larger than the regional cutoff implied by \(k_2=0.825\).
These points identify where at least one representative endpoint can be classified as fast; they do not estimate \(k_2\) from RTT.}
\label{fig:geo-validator-danger-map}
\end{figure}

\section{FPA-Equivalence Benchmarks}
\label{app:fpa-equivalence-benchmarks}

This appendix proves the first-price-auction benchmark results used in
\Cref{subsec:fpa-benchmarks}.  The results are not global uniqueness claims.
They identify when the one-shot FPA outcome can be implemented by a two-stage
PBE, when passive all-slow PBS selects that benchmark inside a regular pure
class, and when the immediate-stop ePBS implementation breaks because
continuation is sequentially attractive.

\subsection{All-slow FPA implementation}
\label{app:all-slow-fpa-implementation}

\begin{proof}[Proof of \Cref{prop:all-slow-fpa-benchmark}]
Fix a builder \(i\), value \(v_i\), and the one-shot equilibrium
\(\beta^{\mathrm{FPA},\mathsf F}\).  For a deterministic bid \(x\), let
\[
U_i(x;v_i)
:=
\mathbb E_{\mathsf F,\beta_{-i}^{\mathrm{FPA},\mathsf F}}
\left[
(v_i-x)\mathbf 1\{i=w(x,y_{-i})\}
\mid v_i
\right],
\]
where each opponent draws
\(y_j\sim\beta_j^{\mathrm{FPA},\mathsf F}(\cdot\mid v_j)\), and \(w\) uses the same
tie-breaking convention as \eqref{eq:fpa}.  Let
\[
\bar U_i(v_i):=\sup_{x\in[0,v_i]}U_i(x;v_i).
\]
Because \(\beta^{\mathrm{FPA},\mathsf F}\) is a one-shot BNE, every bid in the support of
\(\beta_i^{\mathrm{FPA},\mathsf F}(\cdot\mid v_i)\) attains \(\bar U_i(v_i)\), and no
no-overbidding bid does better.

Consider PBS first.  Because all builders are slow, builder \(i\)'s stage-\(2\)
information set is \(I_{i,2}=(v_i,b_{i,1},\texttt{aux}_{i,1})\).  The
continuation event and the public bid history do not give the builder a
payoff-relevant update before the stage-\(2\) bid.  A complete deviation can
therefore be written as a pair \((x,z)\), where \(x\) is the stage-\(1\) bid and
\(z\) is the stage-\(2\) bid.  Against the candidate complete-plan strategies,
the expected payoff from \((x,z)\) is
\[
q_1k_1U_i(x;v_i)+(1-q_1)k_2U_i(z;v_i)
\le
\bigl(q_1k_1+(1-q_1)k_2\bigr)\bar U_i(v_i).
\]
The candidate plan \((y_i,y_i)\) attains this bound almost surely.  The
proposer is passive in PBS, on-path beliefs follow from Bayes' rule, and no
additional slow-builder belief update is needed at stage \(2\).  Thus the
diagonal FPA complete plan is a PBS PBE and implements the same terminal
allocation and payment rule as the one-shot FPA in either terminal stage.

For ePBS, use the same complete-plan strategies and let the proposer stop at
every on-path stage-\(1\) history.  At a reached bid profile \(b_1\), stopping
gives proposer revenue \(k_1\max_i b_{i,1}\).  If the proposer continued, all
slow builders would submit their prechosen bids \(b_{i,2}=b_{i,1}\), so
continuation revenue would be \(k_2\max_i b_{i,1}\).  Since \(k_1\ge k_2\),
stopping is sequentially optimal on path.

It remains to rule out builder deviations.  A deviation specifies a complete
plan \((x,z)\).  If the proposer stops after the induced stage-\(1\) history,
the deviator's payoff is at most \(k_1\bar U_i(v_i)\).  If the proposer
continues, the payoff is at most \(k_2\bar U_i(v_i)\le k_1\bar U_i(v_i)\).  A
mixed off-path proposer action is a convex combination of these two cases.  The
candidate on-path payoff is \(k_1\bar U_i(v_i)\), so no complete-plan deviation
is profitable.  Off-path proposer actions and feasible beliefs can be completed
by sequential best responses.  Hence the ePBS assessment is a PBE and implements
the same terminal allocation and payment rule as the one-shot FPA.
\end{proof}

\subsection{All-slow PBS selection}
\label{app:all-slow-pbs-selection}

\begin{proof}[Proof of \Cref{prop:all-slow-pbs-selection}]
Let
\[
\alpha:=q_1>0,
\qquad
\gamma:=1-q_1>0.
\]
If all other builders use a symmetric complete plan \((x,z)\), and if
\(H_x,H_z\) are the induced first- and second-stage bid distributions, then a
type \(v\)'s payoff from a feasible complete-plan deviation \((a,b)\) is
\[
        \Pi(v;a,b)
        =
        \alpha(v-a)H_x(a)^{n-1}
        +\gamma(v-b)H_z(b)^{n-1}.
\]
Let \(A(v):=F(v)^{n-1}\) and
\[
\lambda(v):=\frac{A'(v)}{A(v)}
=(n-1)\frac{f(v)}{F(v)}
\]
on the interior where \(F(v)>0\).

First, we show that no regular ordered-pure equilibrium can have a nonempty
open interval \(I=(a,b)\) on which \(x(v)<z(v)\).  By continuity, all types in
\(I\) strictly separate their two bids.  On such an interval, \(z(v)<v\).  If
\(z(v)=v\) for some \(v\in I\), continuity and monotonicity give a nearby
type \(t<v\) such that \(x(v)<z(t)<v\).  The deviation from
\((x(v),z(v))\) to \((x(v),z(t))\) preserves the first-stage payoff and changes
the second-stage payoff from zero to
\(\gamma(v-z(t))H_z(z(t))^{n-1}>0\), contradicting full-plan optimality.

Fix \(v\in I\).  Since \(x(v)<z(v)<v\), both coordinates are locally slack.
By regularity, \(H_x(x(t))^{n-1}=H_z(z(t))^{n-1}=A(t)\) on the relevant
interior range.
For nearby \(t\), the plans \((x(t),z(v))\) and \((x(v),z(t))\) are feasible
for type \(v\).  Full-plan optimality implies that \(t=v\) locally maximizes
\[
        (v-x(t))A(t)
        \quad\text{and}\quad
        (v-z(t))A(t),
\]
respectively.  At almost every differentiability point in \(I\), the first
order conditions are
\[
        x'(v)=\lambda(v)(v-x(v)),
        \qquad
        z'(v)=\lambda(v)(v-z(v)).
\]
Hence \(d(v):=z(v)-x(v)\) satisfies \(d'(v)=-\lambda(v)d(v)\) almost
everywhere, or equivalently
\[
        \frac{d}{dv}\bigl(d(v)A(v)\bigr)=0
        \quad\text{a.e. on }I.
\]
Thus \(d(v)A(v)\) is constant on \(I\).  If the left endpoint of the maximal
separating interval is above the lower support, continuity and maximality give
\(d(a)=0\).  If the left endpoint is the lower support, then \(A(v)\to0\) as
\(v\) approaches that endpoint while \(d(v)\) remains bounded.  In either case,
the constant is zero, contradicting \(d(v)>0\) on \(I\).  Therefore no
separating interval exists, and regularity implies
\[
        x(v)=z(v)=b(v)
        \quad\text{for }F\text{-almost every }v.
\]

Now consider any diagonal deviation \((c,c)\) with \(0\le c\le v\).  In a
diagonal equilibrium, this deviation gives payoff
\[
        (\alpha+\gamma)(v-c)H_b(c)^{n-1}.
\]
Full-plan optimality of \((b(v),b(v))\) therefore implies that \(b\) is a
symmetric pure equilibrium bid function of the associated one-shot FPA.  By the
assumed uniqueness of the one-shot FPA equilibrium,
\[
        b(v)=\beta^{\mathrm{FPA}}(v)
        \quad\text{for }F\text{-almost every }v.
\]
The payoff-equivalence conclusion follows because the diagonal PBS profile
implements the same allocation and payment rule as the one-shot FPA in both
terminal stages, up to the common reliability weights already present in the
two-stage payoff.
\end{proof}

\subsection{Immediate-stop FPA region in ePBS}
\label{app:epbs-immediate-stop-fpa-region}

For the next proof, maintain the selected no-overbidding continuation convention
from \Cref{prop:epbs-immediate-stop-fpa-region}: after full revelation at stage
\(2\), feasible continuation bids lie in the interval \([b_{i,1},v_i]\).

\begin{proof}[Proof of \Cref{prop:epbs-immediate-stop-fpa-region}]
Fix a value profile \(v\in V\) and the reached stage-\(1\) history generated by
\(\beta_1\).  If the proposer stops, her revenue is
\[
r_1(v)=k_1M_{\beta_1}(v).
\]

For the lower-threshold claim, consider any continuation message profile and
any sequentially rational no-overbidding continuation behavior.  No continuation
bid can exceed the bidder's value, so stage-\(2\) revenue is at most
\[
r_2\le k_2v_{(1)}.
\]
If \(k_2\le\underline k_2(\beta_1)\), then for every reached \(v\),
\[
k_2v_{(1)}
\le
k_1M_{\beta_1}(v)
=r_1(v).
\]
Thus continuation cannot yield more proposer revenue than stopping at any
reached history, and stopping is sequentially optimal.

For the upper-threshold claim, suppose all builders are fast and
\(k_2>\bar k_2(\beta_1)\).  By definition of the infimum, there is a value
profile \(v\) with \(v_{(2)}>0\) such that
\[
k_2v_{(2)}
>
k_1M_{\beta_1}(v)
=r_1(v).
\]
At the corresponding stage-\(1\) history, the proposer can choose \(\CTN\) and
publicly disclose the full bid history by setting \(m_c=\mathcal E\).  Since
\(\beta_1\) is separating and all builders are fast, this creates a
complete-information continuation subgame with values \(v\).

In any sequentially rational no-overbidding continuation outcome of this
complete-information first-price subgame, the winning bid must be at least
\(v_{(2)}\).  If the winning bid were \(p<v_{(2)}\), a builder with value at
least \(v_{(2)}\) who is not winning at price \(p\) has an equilibrium
continuation bid no larger than \(p\).  Feasibility then implies its lower bound
\(b_{i,1}\) is no larger than \(p\), so it can bid some \(p'\in(p,v_i)\), win,
and obtain strictly positive surplus.  Therefore the proposer's continuation
revenue after full disclosure is at least
\[
k_2v_{(2)}
>
r_1(v).
\]
Immediate stopping is not sequentially optimal at this reached history.

It remains to prove the FPA specialization.  Let \(v_{(1)}\ge v_{(2)}\).  Strict
monotonicity and efficiency imply that the highest value also submits the
highest first-price bid, so
\[
M_{\beta^{\mathrm{FPA}}}(v)=\beta^{\mathrm{FPA}}(v_{(1)}).
\]
Therefore
\[
\underline k_2(\beta^{\mathrm{FPA}})
=
\inf_{\{v:v_{(1)}>0\}}
k_1\frac{\beta^{\mathrm{FPA}}(v_{(1)})}{v_{(1)}}
=
k_2^\star(\beta^{\mathrm{FPA}}).
\]
For the upper bound, since \(v_{(2)}\le v_{(1)}\),
\[
k_1\frac{\beta^{\mathrm{FPA}}(v_{(1)})}{v_{(2)}}
\ge
k_1\frac{\beta^{\mathrm{FPA}}(v_{(1)})}{v_{(1)}}
\ge
k_2^\star(\beta^{\mathrm{FPA}}).
\]
Conversely, profiles with \(v_{(1)}=v_{(2)}=x\), or limits of such profiles in
the common product support, give the reverse inequality.  Hence
\(\bar k_2(\beta^{\mathrm{FPA}})=k_2^\star(\beta^{\mathrm{FPA}})\).

When \(k_2\le k_2^\star(\beta^{\mathrm{FPA}})\), the lower-threshold claim makes
stopping sequentially optimal for the proposer at every reached first-price bid
history.  When all builders are fast and
\(k_2>k_2^\star(\beta^{\mathrm{FPA}})\), the upper-threshold claim rules out the
same immediate-stop FPA path because continuation is strictly better for the
proposer at some reached history.

The uniform formula follows from the standard symmetric first-price equilibrium
\(\beta^{\mathrm{FPA}}(v)=\frac{n-1}{n}v\).
\end{proof}

\section{Proofs for Separating PBS}
\label{app:pbs-separating-proofs}

This appendix proves the separating PBS results in \Cref{subsec:pbs-separating}.  The proofs use the manuscript notation \(q_1\) for the exogenous PBS stopping probability.

\subsection{Proof of \Cref{lem:pbs-payoff-equivalence}}
\label{app:pbs-payoff-equivalence}

\begin{proof}
Let \(s\) be the strictly separating stage-1 rule.  Since \(s\) is one-to-one, an on-path bid \(s(r)\) induces posterior report \(r\).  In the stop branch, the highest stage-1 bid wins; strict monotonicity of \(s\) therefore makes the stop-branch winner the builder with the highest value.  In the continuation branch, the separating bid history reveals the posterior reports and the selected continuation is a complete-information first-price outcome.  A lower revealed value cannot win at a price below the highest revealed value, because the highest revealed value could profitably outbid it; and it cannot win at a price at or above the highest revealed value without obtaining nonpositive surplus.  Up to payoff-irrelevant ties in the atomless model, the continuation branch also allocates to the highest revealed value.  Hence a type \(v\) wins in either branch exactly when all \(n-1\) opponents have values below \(v\), an event with probability \(F(v)^{n-1}\).

The total interim allocation probability is therefore
\[
Q(v)=q_1F(v)^{n-1}+(1-q_1)F(v)^{n-1}=F(v)^{n-1}.
\]
The local revelation condition gives locally feasible upward and downward report deviations, so the envelope theorem applies to the local report problem:
\[
U'(v)=Q(v)=F(v)^{n-1}
\]
for almost every interior \(v\).  The lowest type obtains zero payoff in the atomless model.  Integrating from \(\underline v\) to \(v\) gives
\[
U(v)=\int_{\underline v}^{v}F(z)^{n-1}\,dz.
\]
For two builders with \(F(v)=v\) on \([0,1]\), this is \(U(v)=\int_0^v z\,dz=v^2/2\).
\end{proof}

\subsection{Proof of \Cref{thm:pbs-iid-regular-existence}}
\label{app:pbs-iid-regular-existence}
\begin{proof}
If \(q_1=1\), the construction reduces to the standard one-shot first-price auction bid rule in winning-rank coordinates, and continuation is off path.  The usual first-price incentive argument gives the desired separating PBE.  Hence assume \(q_1<1\) for the calculations below.

Let \(R(v)=F(v)^{n-1}\), let \(W=R^{-1}\), and write the type \(v\) in winning-rank coordinates as \(v=W(\tau)\).  If \(M\) is the maximum value among the other \(n-1\) builders and \(Y=R(M)\), then
\[
\Pr(Y\le y)=\Pr(M\le W(y))=F(W(y))^{n-1}=y,
\]
so \(Y\sim\operatorname{Unif}[0,1]\).

\textit{Step 1: the proposed bid rule is feasible and separating.}
In rank notation define
\[
\sigma_{q_1}(r):=s_{q_1}(W(r))
=\frac{1}{q_1r}\int_0^{q_1r}W(z)\,dz
\qquad(r>0).
\]
Since \(W\) is strictly increasing,
\[
\sigma_{q_1}'(r)
=\frac{W(q_1r)-\sigma_{q_1}(r)}{r}>0.
\]
Moreover,
\[
\sigma_{q_1}(r)\le W(q_1r)\le W(r),
\]
so \(s_{q_1}(v)\le v\).  The rule is therefore feasible and strictly separating.

\textit{Step 2: the selected continuation payment is feasible.}
For revealed values \(h\ge l\), define
\[
P_{q_1}(h,l):=
W\!\left(q_1R(h)+(1-q_1)R(l)\right).
\]
Since \(R(l)\le q_1R(h)+(1-q_1)R(l)\le R(h)\), we have
\[
l\le P_{q_1}(h,l)\le h.
\]
Also
\[
s_{q_1}(h)
\le W(q_1R(h))
=P_{q_1}(h,\underline v)
\le P_{q_1}(h,l).
\]
Thus the selected continuation price satisfies the usual complete-information first-price implementability bounds:
\[
\max\{s_{q_1}(h),l\}\le P_{q_1}(h,l)\le h.
\]

\textit{Step 3: truthful payoff.}
For type \(W(\tau)\), the stop payoff is
\[
q_1\tau\bigl[W(\tau)-s_{q_1}(W(\tau))\bigr]
=
\int_0^{q_1\tau}\bigl[W(\tau)-W(z)\bigr]\,dz.
\]
In the continuation branch, conditional on \(Y=y<\tau\), the selected payment rank is \(q_1\tau+(1-q_1)y\).  Hence the continuation payoff is
\[
(1-q_1)\int_0^\tau
\bigl[W(\tau)-W(q_1\tau+(1-q_1)y)\bigr]\,dy
=
\int_{q_1\tau}^{\tau}\bigl[W(\tau)-W(z)\bigr]\,dz.
\]
Adding the two terms gives
\[
U(W(\tau))=\int_0^\tau\bigl[W(\tau)-W(z)\bigr]\,dz
=\int_{\underline v}^{W(\tau)}F(x)^{n-1}\,dx.
\]

\textit{Step 4: on-range incentive compatibility.}
Let a type of true rank \(\tau\) deviate to an on-range report rank \(b\).  If \(b\le \tau\), the stop payoff plus continuation payoff when the highest opponent rank \(Y\le b\) equals
\[
\int_0^b[W(\tau)-W(z)]\,dz.
\]
When \(Y>b\), the deviator may still obtain continuation surplus.  If \(S\) is the second-highest opponent rank, its selected continuation payment rank is at least \(q_1Y+(1-q_1)b\), because \(\max\{b,S\}\ge b\).  Therefore the continuation surplus in this region is bounded above by
\[
(1-q_1)\int_b^{(\tau-(1-q_1)b)/q_1}
\bigl[W(\tau)-W(q_1y+(1-q_1)b)\bigr]\,dy
=
\frac{1-q_1}{q_1}\int_b^\tau[W(\tau)-W(z)]\,dz.
\]
Since \(q_1\ge1/2\), \((1-q_1)/q_1\le1\).  The total deviation payoff is therefore no larger than
\[
\int_0^b[W(\tau)-W(z)]\,dz
+\frac{1-q_1}{q_1}\int_b^\tau[W(\tau)-W(z)]\,dz
\le
\int_0^\tau[W(\tau)-W(z)]\,dz.
\]
No downward report is profitable.

If \(b\ge\tau\), the stop payoff is
\[
\int_0^{q_1b}[W(\tau)-W(z)]\,dz.
\]
When \(q_1b\ge\tau\), continuation surplus is zero and the extra integral over \([\tau,q_1b]\) is nonpositive.  When \(q_1b<\tau\), continuation surplus is positive only up to rank \(\tau\), and the change of variables \(z=q_1b+(1-q_1)y\) gives exactly
\[
\int_{q_1b}^{\tau}[W(\tau)-W(z)]\,dz.
\]
The total payoff is then the truthful payoff.  Hence no upward on-range report is profitable.

\textit{Step 5: off-range deviations.}
Assign every bid above the top on-path bid posterior report \(\bar v\) and selected continuation price \(\bar v\).  Such a deviation has zero continuation surplus.  A type of rank \(\tau\) obtains at most
\[
q_1W(\tau)-\int_0^{q_1}W(z)\,dz
\]
from the stop branch.  If \(\tau\ge q_1\), the truthful payoff exceeds this by
\[
\int_{q_1}^{\tau}[W(\tau)-W(z)]\,dz\ge0.
\]
If \(\tau\le q_1\), the difference is
\[
\int_\tau^{q_1}[W(z)-W(\tau)]\,dz\ge0.
\]
Bids below the bottom on-path bid are assigned the bottom posterior report and are dominated by an already checked on-range report.

The constructed assessment has feasible bids, Bayes-consistent on-path beliefs, a sequentially rational selected continuation equilibrium after every on-path history, no profitable builder deviation, and no profitable proposer deviation under the selected prices.  It is therefore a symmetric strictly separating selected-price PBE.
\end{proof}

\subsection{Proof of \Cref{prop:pbs-low-stop-nonexistence}}
\label{app:pbs-low-stop-nonexistence}

\begin{proof}
We prove the two nonexistence clauses in rank space.  Let \(m:=n-1\), \(q:=q_1\), and \(\rho:=1-q\).  Write \(x=F(v)\) for a builder's value rank and \(V(x):=F^{-1}(x)\) for the value quantile.  A symmetric strictly separating stage-1 bid rule becomes \(\sigma(x):=s(V(x))\).  If a rank-\(x\) builder is the highest reported builder and the lower reported ranks are \(\mathbf y=(y_1,\ldots,y_m)\in[0,x]^m\), let \(P(x,\mathbf y)\) be the selected continuation payment.  Denote \(y_{(1)}:=\max_r y_r\).  

At every reached continuation history, sequential rationality assigns the block to the highest revealed value.
If a lower-value builder won at a payment below \(V(x)\), the rank-\(x\) builder could profitably outbid it.
If the payment were at least \(V(x)\), the lower-value winner would obtain negative surplus and could instead lower its bid to obtain a weakly nonnegative payoff.
The winning payment cannot exceed \(V(x)\), cannot lie below the highest losing value \(V(y_{(1)})\), and cannot lie below the winner's inherited stage-1 bid \(\sigma(x)\).
Therefore continuation implementability gives
\[
        \max\{\sigma(x),V(y_{(1)})\}\le P(x,\mathbf y)\le V(x).
\]
Define the first-price payment functional
\[
        A(x):=\int_0^x V(z)\,d(z^m).
\]
Strict separation makes the allocation rule efficient in both branches: a rank-\(x\) builder wins exactly when all \(m\) opponents have ranks below \(x\), an event of probability \(x^m\).  The standard one-dimensional envelope theorem for the local report problem therefore gives
\[
        U(V(x))=\int_0^{V(x)}F(z)^m\,dz=x^mV(x)-A(x),
\]
where the last equality is the Stieltjes integration-by-parts formula after the change of variables \(z=F(v)\).  Equating this payoff with the direct stop-plus-continuation payoff gives the row identity
\begin{equation}
\label{eq:pbs-n-row-identity}
        qx^m\sigma(x)+\rho\int_{[0,x]^m}P(x,\mathbf y)\,d\mathbf y=A(x).
\end{equation}

Let
\[
        I:=\int_0^1x^m\sigma(x)\,dx,
        \qquad
        J:=\int_0^1\int_{[0,x]^m}P(x,\mathbf y)\,d\mathbf y\,dx .
\]
Integrating \eqref{eq:pbs-n-row-identity} over \(x\in[0,1]\) gives
\begin{equation}
\label{eq:pbs-n-J-row}
        J=\frac{\int_0^1A(x)\,dx-qI}{\rho}.
\end{equation}

We next derive two deviation bounds.  First fix \(k\in(0,1]\), and set \(a_k:=q^{1/m}k\).  If \(\sigma(k)\ge V(a_k)\), then
\[
        qk^m\sigma(k)\ge qk^mV(a_k)=a_k^mV(a_k)\ge A(a_k).
\]
If instead \(\sigma(k)<V(a_k)\), a rank-\(a_k\) builder can report rank \(k\).  In the stop branch this deviation wins with probability \(k^m\) and gives payoff \(qk^m(V(a_k)-\sigma(k))\); in the continuation branch the deviator can guarantee nonnegative surplus by not bidding above its value.  Incentive compatibility therefore implies
\[
        a_k^mV(a_k)-A(a_k)
        \ge qk^m\bigl(V(a_k)-\sigma(k)\bigr).
\]
Since \(a_k^m=qk^m\), this again yields \(qk^m\sigma(k)\ge A(a_k)\).  Thus
\begin{equation}
\label{eq:pbs-n-qI-lower}
        qI\ge\int_0^1A\!\left(q^{1/m}k\right)\,dk .
\end{equation}

Second fix a lower report \(k\in[0,1]\) and a true rank \(t\in[k,1]\).  If the rank-\(t\) builder reports \(k\), its stop payoff plus the continuation payoff on opponent profiles entirely below \(k\) is
\[
qk^m\bigl(V(t)-\sigma(k)\bigr)
+\rho\int_{[0,k]^m}\bigl(V(t)-P(k,\mathbf y)\bigr)\,d\mathbf y .
\]

On continuation profiles with at least one opponent rank above \(k\), the deviator can bid slightly above the selected continuation payment whenever that payment is below its true value.  Hence the deviation payoff is bounded below, up to an arbitrarily small \(\varepsilon\), by the preceding display plus
\[
\rho\int_{\{\mathbf y\in[0,1]^m:\,y_{(1)}>k\}}
\bigl[V(t)-P(y_{(1)},k,\mathbf y_{-y_{(1)}})\bigr]_+\,d\mathbf y .
\]

Letting \(\varepsilon\downarrow0\), using the row identity at rank \(k\), and comparing with the truthful payoff \(t^mV(t)-A(t)\), incentive compatibility gives
\begin{equation}
\label{eq:pbs-n-positive-part-bound}
\int_{\{\mathbf y\in[0,1]^m:\,y_{(1)}>k\}}
\bigl[V(t)-P(y_{(1)},k,\mathbf y_{-y_{(1)}})\bigr]_+\,d\mathbf y
\le
\frac1\rho\int_k^t\bigl(V(t)-V(z)\bigr)\,d(z^m).
\end{equation}

For fixed \(k\), define
\[
        C_k(a):=
\int_{\{\mathbf y\in[0,1]^m:\,y_{(1)}>k\}}
\bigl[a-P(y_{(1)},k,\mathbf y_{-y_{(1)}})\bigr]_+\,d\mathbf y
\]
and
\[
        T_q(k):=\left[k^m+\rho(1-k^m)\right]^{1/m}.
\]

The integration domain in \(C_k\) has measure \(1-k^m\), so \(C_k\) is \((1-k^m)\)-Lipschitz in \(a\).  Evaluating \eqref{eq:pbs-n-positive-part-bound} at \(t=T_q(k)\), using \(T_q(k)^m-k^m=\rho(1-k^m)\), and then applying the Lipschitz bound up to \(V(1)\) gives
\[
C_k(V(1))
\le
(1-k^m)V(1)-\frac1\rho\int_k^{T_q(k)}V(z)\,d(z^m).
\]

Because \(P\le V(1)\), the positive part is not binding at \(V(1)\).  Therefore
\[
\int_{\{\mathbf y\in[0,1]^m:\,y_{(1)}>k\}}
P(y_{(1)},k,\mathbf y_{-y_{(1)}})\,d\mathbf y
\ge
\frac1\rho\int_k^{T_q(k)}V(z)\,d(z^m).
\]

Integrating over \(k\in[0,1]\), and using symmetry of \(P\) in the lower-rank coordinates, yields
\begin{equation}
\label{eq:pbs-n-column-lower}
        J
        \ge
        \frac{1}{m\rho}
        \int_0^1\int_k^{T_q(k)}V(z)\,d(z^m)\,dk .
\end{equation}

Combining \eqref{eq:pbs-n-J-row}, \eqref{eq:pbs-n-qI-lower}, and \eqref{eq:pbs-n-column-lower}, any symmetric strictly separating PBE must satisfy
\begin{equation}
\label{eq:pbs-n-necessary-ineq}
        \int_0^1 A\!\left(q^{1/m}k\right)\,dk
        \le
        \int_0^1A(x)\,dx
        -
        \frac1m\int_0^1\int_k^{T_q(k)}V(z)\,d(z^m)\,dk.
\end{equation}

We now prove the distribution-free claim.  Suppose \(0<q<1/(m+1)=1/n\), and define
\[
\begin{aligned}
        \Delta_m(q;V)
        :={}&
        \int_0^1 A\!\left(q^{1/m}k\right)\,dk
        -
        \int_0^1A(x)\,dx \\
        &+
        \frac1m\int_0^1\int_k^{T_q(k)}V(z)\,d(z^m)\,dk .
\end{aligned}
\]

Separation implies \(\Delta_m(q;V)\le0\).  Let \(a:=q^{1/m}\) and \(b:=\rho^{1/m}\).  Fubini's theorem gives
\[
        \Delta_m(q;V)=\int_0^1D_{m,q}(z)V(z)\,dz,
\]
where
\[
D_{m,q}(z)=
\begin{cases}
 z^m\left((m+1)-\dfrac{m}{a}\right), & 0\le z\le a,\\[7pt]
 z^{m-1}\left((m+1)z-m\right), & a<z\le b,\\[7pt]
 z^{m-1}\left((m+1)z-m-\left(\dfrac{z^m-\rho}{q}\right)^{1/m}\right), & b<z\le1.
\end{cases}
\]
Indeed,
\[
        \int_0^1A(q^{1/m}k)\,dk
        =
        \int_0^a\left(1-\frac{z}{a}\right)V(z)\,d(z^m),
\]
and
\[
        \int_0^1A(x)\,dx
        =
        \int_0^1(1-z)V(z)\,d(z^m).
\]

For the last term in \(\Delta_m(q;V)\), the condition \(k\le z\le T_q(k)\) is equivalent to \(k\le z\) and \(z^m\le\rho+qk^m\).  Thus the admissible \(k\)-length is \(z\) for \(0\le z\le b\), and is \(z-\bigl((z^m-\rho)/q\bigr)^{1/m}\) for \(b<z\le1\).  Substituting these three Fubini representations gives the displayed kernel \(D_{m,q}\).

Let \(H_{m,q}(x):=\int_0^xD_{m,q}(z)\,dz\).  Direct integration gives
\[
H_{m,q}(x)=
\begin{cases}
 x^{m+1}\left(1-\dfrac{m}{(m+1)a}\right), & 0\le x\le a,\\[8pt]
 x^m(x-1)+\dfrac{q}{m+1}, & a\le x\le b,\\[8pt]
 -x^m(1-x)+\dfrac{q}{m+1}\left[1-\left(\dfrac{x^m-\rho}{q}\right)^{(m+1)/m}\right], & b\le x\le1.
\end{cases}
\]

We have \(H_{m,q}(0)=H_{m,q}(1)=0\).  Since \(q<1/(m+1)\), \(a<m/(m+1)\), so the first branch is strictly negative on \((0,a]\).  On \([a,b]\), the middle branch decreases up to \(m/(m+1)\) and increases afterward; its endpoint values are negative because \(a<m/(m+1)\) and
\[
        H_{m,q}(b)=\rho(b-1)+\frac{q}{m+1}<0,
\]
where the last inequality is equivalent to
\[
        \sum_{r=0}^{m-1}b^r<(m+1)b^m
\]
and follows from \(b^m=\rho>m/(m+1)\). Indeed, with \(t:=1/b\), this condition implies \(t^m<(m+1)/m\), and hence \(\sum_{r=0}^{m-1}b^r/b^m=\sum_{j=1}^m t^j<m+1\). On \([b,1]\), write
\[
        c(x):=\left(\frac{x^m-\rho}{q}\right)^{1/m}.
\]

The inequality \(q<1/(m+1)\) implies \(\rho>m/(m+1)\), so
\[
        x^m=\rho+qc(x)^m
        \ge
        \frac{m}{m+1}+\frac{1}{m+1}c(x)^m .
\]

By convexity of \(r\mapsto r^m\) on \([0,1]\),
\[
        \left(\frac{m+c(x)}{m+1}\right)^m
        \le
        \frac{m+c(x)^m}{m+1}.
\]

Combining the two inequalities gives \(x\ge(m+c(x))/(m+1)\), or \(c(x)\le(m+1)x-m\).  Hence \(D_{m,q}(x)\ge0\) on \([b,1]\), so the last branch of \(H_{m,q}\) increases from a negative value at \(b\) to \(0\) at \(1\).  Therefore \(H_{m,q}(x)<0\) for every \(x\in(0,1)\).

Integration by parts gives
\[
        \Delta_m(q;V)
        =
        \bigl[H_{m,q}(z)V(z)\bigr]_0^1
        -
        \int_0^1H_{m,q}(z)\,dV(z).
\]

The boundary term is zero.  Since \(H_{m,q}<0\) on the interior and \(V\) is strictly increasing, the Stieltjes integral is strictly positive after the minus sign, so \(\Delta_m(q;V)>0\).  This contradicts the necessary condition.  Thus no symmetric strictly separating PBE exists when \(0<q_1<1/n\), proving the distribution-free clause.

It remains to prove the uniform clause.  If \(F=\operatorname{Unif}[0,1]\), then \(V(z)=z\) and
\[
        A(x)=\int_0^x z\,d(z^m)=\frac{m}{m+1}x^{m+1}.
\]

Let \(\alpha:=(m+1)/m\).  The violation gap becomes
\[
\begin{aligned}
\Delta_m^U(q)
={}&
\frac{m}{(m+1)(m+2)}\left(q^\alpha-1\right) \\
&+
\frac{1}{m+1}\int_0^1\left[\left(1-q+qk^m\right)^\alpha-k^{m+1}\right]dk .
\end{aligned}
\]

Separation would imply \(\Delta_m^U(q)\le0\).  For \(1\le\alpha\le2\) and \(u\in[0,1]\),
\[
        (1-u)^\alpha
        \ge
        1-\alpha u+\frac{\alpha(\alpha-1)}{2}u^2 .
\]

The difference between the left side and the right side has value zero and first derivative zero at \(u=0\), and its second derivative is \(\alpha(\alpha-1)((1-u)^{\alpha-2}-1)\ge0\).

Applying this inequality with \(u=q(1-k^m)\), and using
\[
        \int_0^1(1-k^m)\,dk=\frac{m}{m+1},
        \qquad
        \int_0^1(1-k^m)^2\,dk=\frac{2m^2}{(m+1)(2m+1)},
\]
gives
\[
        \Delta_m^U(q)\ge \underline\Delta_m(q),
\]
where
\[
        \underline\Delta_m(q)
        :=
        \frac{1}{m+1}
        \left[
        \frac{m}{m+2}q^\alpha
        -q
        +\frac{q^2}{2m+1}
        +\frac{1}{m+2}
        \right].
\]

Let \(B_m(q):=(m+1)\underline\Delta_m(q)\).  Then
\[
        B_m'(q)=\frac{m+1}{m+2}q^{1/m}-1+\frac{2q}{2m+1}.
\]

For \(q\in(0,1/3]\),
\[
        B_m'(q)
        <
        \frac{m+1}{m+2}-1+\frac{2}{3(2m+1)}
        <0.
\]

Thus \(B_m\) is strictly decreasing on \((0,1/3]\).  At the endpoint,
\[
        3B_m(1/3)
        =
        \frac{m}{m+2}3^{-1/m}-1+\frac{1}{3(2m+1)}+\frac{3}{m+2}.
\]

Using \(3^{-1/m}=e^{-(\log 3)/m}>1-\frac{11}{10m}\), which follows from \(e^{-x}\ge1-x\) and \(\log3<11/10\), we get
\[
3B_m(1/3)
>
-\frac{1}{10(m+2)}+\frac{1}{3(2m+1)}
>0.
\]

Therefore \(B_m(q)>0\) for every \(q\in(0,1/3]\), and hence \(\Delta_m^U(q)>0\) throughout this interval.  This contradicts the necessary condition \(\Delta_m^U(q)\le0\).  No symmetric strictly separating PBE exists under uniform values when \(0<q_1\le1/3\), proving the uniform clause.
\end{proof}

\subsection{Two-builder uniform threshold calculation}
\label{app:pbs-uniform-threshold}

\begin{proof}
We first prove existence for \(q_1\ge1/2\).  If \(q_1=1\), the displayed bid rule is the standard two-builder uniform first-price bid and continuation is off path, so no deviation is profitable.  Assume \(q_1<1\) below.  Let
\[
s(v)=\frac{q_1}{2}v,
\qquad
P(h,l)=q_1h+(1-q_1)l.
\]
For \(0\le l\le h\le1\),
\[
P(h,l)-s(h)=\frac{q_1}{2}h+(1-q_1)l\ge0,
\quad
P(h,l)-l=q_1(h-l)\ge0,
\quad
h-P(h,l)=(1-q_1)(h-l)\ge0.
\]
Thus \(P\) is implementable by a complete-information selected first-price continuation equilibrium.  The truthful payoff is
\[
q_1v\left(v-\frac{q_1}{2}v\right)
+(1-q_1)\int_0^v\bigl(v-q_1v-(1-q_1)w\bigr)\,dw
=\frac{v^2}{2}.
\]
We record the normalized deviation calculation.  For \(r\ge0\), define
\[
C(r):=\int_0^\infty
\left[1-P(\max\{r,y\},\min\{r,y\})\right]_+\,dy
\]
under the affine kernel \(P(h,l)=q_1h+(1-q_1)l\), and let
\[
F_{q_1}(r):=q_1r\left(1-\frac{q_1}{2}r\right)+(1-q_1)C(r).
\]
For a type \(v>0\) mimicking report \(rv\), the change of variables \(w=vy\) in the opponent's value gives
\[
\Phi(v,rv)\le v^2 F_{q_1}(r),
\]
where
\[
F_{q_1}(r)-\frac12
=-\frac{(2q_1-1)(r-1)^2}{2q_1}\le0
\quad(0\le r\le1),
\]
\[
F_{q_1}(r)=\frac12
\quad(1\le r\le1/q_1),
\]
and
\[
F_{q_1}(r)=q_1r-\frac{q_1^2r^2}{2}\le\frac12
\quad(r\ge1/q_1).
\]
These expressions come from splitting the integral \(C(r)\) at \(y=r\).  If \(0\le r\le1\),
\[
C(r)=r-\frac{1+q_1}{2}r^2+\frac{(1-r)^2}{2q_1}.
\]
If \(1\le r\le1/q_1\),
\[
C(r)=\frac{(1-q_1r)^2}{2(1-q_1)}.
\]
If \(r\ge1/q_1\), then \(C(r)=0\).
These regions cover every feasible on-range report.  Off-range high bids are assigned posterior report \(1\) and continuation price \(1\), so the stop payoff is at most \(q_1(v-q_1/2)\).  Since
\[
\frac{v^2}{2}-q_1\left(v-\frac{q_1}{2}\right)
=\frac{(v-q_1)^2}{2}\ge0,
\]
off-range deviations are also unprofitable.  Hence the displayed assessment is a strictly separating PBE for \(q_1\ge1/2\).

It remains to rule out strict separation for \(q_1<1/2\).  Suppose a symmetric strictly increasing separating PBE exists, with bid rule \(s\) and feasible selected continuation price \(P\).  Downward IC implies the first-price lower bound
\[
U(t)\ge \frac{t^2}{2}.
\]
Indeed, if \(0\le r<t\), type \(t\) can mimic report \(r\).  Holding report \(r\) fixed, increasing the true value from \(r\) to \(t\) raises stop surplus by \(q_1r(t-r)\).  In the continuation branch, on opponent reports below \(r\), surplus rises by exactly \(t-r\) on a set of measure \(r\); on opponent reports above \(r\), the positive-part surplus cannot decrease.  Thus
\[
\Phi(t,r)\ge U(r)+r(t-r).
\]
Incentive compatibility gives \(U(t)\ge U(r)+r(t-r)\).  Summing along partitions of \([0,t]\) and taking the mesh to zero yields \(U(t)\ge\int_0^t z\,dz=t^2/2\).

Feasibility of the truthful continuation price gives
\[
U(t)
\le
q_1t(t-s(t))
+\frac{1-q_1}{2}\bigl(t^2-s(t)^2\bigr).
\]
Writing \(y(t)=s(t)/t\), these two inequalities imply
\[
(1-q_1)y(t)^2+2q_1y(t)-q_1\le0,
\]
and hence a uniform slack bound \(s(t)/t\le \sqrt{q_1}/(1+\sqrt{q_1})<1\).  This slack makes local upward and downward reports feasible.  Applying \Cref{lem:pbs-payoff-equivalence}, the exact payoff is
\[
U(t)=\frac{t^2}{2}.
\]
The local revelation step is justified as follows.  For any interior \(t_0\), choose \(\varepsilon>0\) such that \(t_0-\varepsilon>\rho_{q_1}(t_0+\varepsilon)\), where \(\rho_{q_1}:=\sqrt{q_1}/(1+\sqrt{q_1})\).  Then \(s(r)\le \rho_{q_1}r<t\) for all \(t,r\in(t_0-\varepsilon,t_0+\varepsilon)\), so both nearby upward and downward reports are feasible.  The same two-way IC comparison gives a local slope sandwich \(u\le[U(v)-U(u)]/(v-u)\le v\), which supplies the local absolute continuity required by \Cref{lem:pbs-payoff-equivalence}.

Substituting this equality into the truthful payoff identity yields the row equation
\[
\int_0^t P(t,w)\,dw
=
\frac{t^2/2-q_1t s(t)}{1-q_1}.
\]
Downward deviations to report \(k\le t\), together with the row equation, imply the positive-part column bound
\[
\int_k^1\left[t-P(w,k)\right]_+\,dw
\le
\frac{(t-k)^2}{2(1-q_1)}.
\]
To see this, write the payoff from downward report \(k\) as
\[
q_1k(t-s(k))
+(1-q_1)\left[
\int_0^k(t-P(k,w))\,dw
+\int_k^1[t-P(w,k)]_+\,dw
\right].
\]
Using the row equation at row \(k\), the first two terms reduce to \(kt-k^2/2\).  Incentive compatibility against type \(t\)'s truthful payoff \(t^2/2\) gives the displayed bound.

Let \(Y_k(u):=P(k+u,k)-k\) for \(u\in[0,1-k]\).  The preceding bound implies
\[
\int_k^1P(w,k)\,dw
\ge
k(1-k)+\frac{1-q_1}{2}(1-k)^2.
\]
Here is the conversion.  Let \(L=1-k\), \(\gamma=1-q_1\), and
\[
A_k(z):=\int_0^L [z-Y_k(u)]_+\,du.
\]
The positive-part bound says \(A_k(z)\le z^2/(2\gamma)\) for \(z\in[0,L]\).  Since \(0\le Y_k(u)\le u\le L\), \(A_k\) is \(L\)-Lipschitz.  Evaluating at \(z=\gamma L\) and using the Lipschitz bound,
\[
A_k(L)\le A_k(\gamma L)+(1-\gamma)L^2
\le \frac{\gamma L^2}{2}+(1-\gamma)L^2
=\left(1-\frac{\gamma}{2}\right)L^2.
\]
But \(A_k(L)=L^2-\int_0^L Y_k(u)\,du\), hence
\[
\int_0^L Y_k(u)\,du\ge\frac{\gamma L^2}{2}.
\]
Substituting \(Y_k(u)=P(k+u,k)-k\) gives the displayed column lower bound.

Integrating over \(k\in[0,1]\) gives
\[
J:=\int_0^1\int_k^1P(w,k)\,dw\,dk
\ge
\frac{2-q_1}{6}.
\]
By Fubini and the row equation,
\[
J
=
\int_0^1
\frac{w^2/2-q_1w s(w)}{1-q_1}\,dw.
\]
Thus, with \(I:=\int_0^1w s(w)\,dw\),
\[
I\le\frac{3q_1-q_1^2-1}{6q_1}.
\]
Finally, off-range stop deviations require the opposite pointwise lower bound \(s(k)\ge q_1k/2\), and therefore
\[
I\ge\int_0^1 w\frac{q_1}{2}w\,dw=\frac{q_1}{6}.
\]
Combining the upper and lower bounds requires
\[
(1-q_1)(1-2q_1)\le0.
\]
This is impossible when \(q_1<1/2\).  Hence no symmetric strictly increasing separating PBE exists below \(1/2\).
\end{proof}

\section{Proofs for Simplified ePBS}
\label{app:simplified-epbs-proofs}

This appendix proves the simplified ePBS results in \Cref{subsec:simplified-epbs}.
Throughout, there are two fast builders with iid values from \(F\) on \([0,1]\),
\(F\) is continuously differentiable with strictly positive density on
\((0,1)\), and
\[
A_F(v):=\int_0^v F(t)\,dt.
\]
The continuation branch is scaled by \(k_2\).  The uniform formulas used in the
main text are direct substitutions into the general cutoff conditions and do not
require a separate proof.
Recall that a stage-\(2\) bid must weakly respect the bidder's signed
stage-\(1\) bid.  When we select a no-overbidding continuation, the selected
stage-\(2\) bids also do not exceed the bidder's value.

\subsection{Proof of \Cref{thm:simplified-epbs-unipooling}}
\label{app:simplified-epbs-unipooling}

\begin{proof}
\smallskip
\noindent\textit{Step 1: Nonemptiness of the cutoff range below \(K_F^+\).}
We first prove that the nondegenerate cutoff range is nonempty exactly below the
endpoint used in the theorem.  Define
\[
        K_F^+:=\frac{1}{1+A_F(1)},\qquad
        \ell_F(c):=\frac{cF(c)}{F(c)+A_F(c)}\quad(c>0),
        \qquad
        \ell_F(0):=0.
\]
For \(c\in(0,1]\),
\[
\begin{aligned}
        \ell_F'(c)
        &=\frac{(F(c)+cf(c))(F(c)+A_F(c))-cF(c)(f(c)+F(c))}
                {(F(c)+A_F(c))^2}\\
        &=\frac{F(c)A_F(c)+(1-c)F(c)^2+cf(c)A_F(c)}
                {(F(c)+A_F(c))^2}>0 .
\end{aligned}
\]
Thus \(\ell_F\) is strictly increasing on \((0,1]\), and continuity at zero
follows from \(0\le\ell_F(c)\le c\).  Moreover
\[
        \ell_F(1)=\frac{1}{1+A_F(1)}=K_F^+ .
\]
Hence, for every \(k_2\in[0,K_F^+]\), there is a unique
\(c(k_2)\in[0,1]\) satisfying \(k_2=\ell_F(c(k_2))\).  If
\(k_2<K_F^+\), then \(c(k_2)<1\).

We now show that this cutoff is admissible.  The case \(k_2=0\) gives
\(c(k_2)=0\), and \(0\in\mathcal C_F(0)\) because \(A_F(u)\le uF(u)\) and
\(d_F(0)=0\).  For \(k_2\in(0,K_F^+)\), set \(c=c(k_2)\), \(a=A_F(c)\), and
\(p=F(c)\).  Since \(k_2=cp/(p+a)\), the upper admissibility condition binds:
\[
        d_F(c)=c-\frac{k_2a}{p}
        =c-\frac{ca}{p+a}
        =\frac{cp}{p+a}=k_2.
\]
For the lower condition, fix \(u\in[c,1]\) and set \(y=F(u)\).  Since \(F\) is
increasing,
\[
        A_F(u)=a+\int_c^u F(t)\,dt\le a+(u-c)y.
\]
It is therefore enough to show
\[
        a+(u-c)y\le(1-k_2)(uy+a).
\]
The difference between the right-hand side and the left-hand side is
\[
        (1-k_2)(uy+a)-[a+(u-c)y]=cy-k_2(uy+a).
\]
Substituting \(k_2=cp/(p+a)\) yields
\[
\begin{aligned}
        cy-\frac{cp}{p+a}(uy+a)
        &=\frac{c}{p+a}\bigl[y(p+a)-p(uy+a)\bigr]\\
        &=\frac{c}{p+a}\bigl[py(1-u)+a(y-p)\bigr]\ge0,
\end{aligned}
\]
because \(u\le1\), \(y\ge p\), and \(a\ge0\).  Thus
\(\mathcal C_F(k_2)\) is nonempty for every \(k_2<K_F^+\).

\smallskip
\noindent\textit{Step 2: Cutoff assessment and on-path sequential rationality.}
Next fix an arbitrary \(c\in\mathcal C_F(k_2)\) and write
\[
s(v)=s^F_{k_2,c}(v)
=
\begin{cases}
0, & v<c,\\[1mm]
v-\dfrac{A_F(v)-(1-k_2)A_F(c)}{F(v)}, & v\ge c.
\end{cases}
\]
When \(c=0\), the second line is used only for \(v>0\), and \(s(0)=0\).
Let
\[
d:=s(c)=c-\frac{k_2A_F(c)}{F(c)}
\quad(c>0),
\qquad
\bar x:=s(1)=1-A_F(1)+(1-k_2)A_F(c),
\]
with \(d=0\) when \(c=0\).

The candidate rule is regular uni-pooling.  For every positive-branch point
with \(F(v)>0\),
\[
v-s(v)=\frac{A_F(v)-(1-k_2)A_F(c)}{F(v)}\ge0,
\]
because \(A_F(v)\ge A_F(c)\).  Also
\[
F(v)s(v)=vF(v)-A_F(v)+(1-k_2)A_F(c)
=\int_0^v t\,dF(t)+(1-k_2)A_F(c)\ge0.
\]
On the positive branch,
\[
s'(v)=\frac{f(v)\bigl[A_F(v)-(1-k_2)A_F(c)\bigr]}{F(v)^2}>0
\quad(v>c).
\]

On-path beliefs are given by Bayes' rule.  If a bidder submits zero and \(c>0\),
the posterior is \(F(\cdot\mid v<c)\).  If \(x=s(t)>0\), strict monotonicity
gives posterior \(\delta_t\).  At the zero-zero history, select the standard
two-bidder first-price equilibrium for the conditional distribution
\(F(\cdot\mid v<c)\).  Its unscaled revenue is positive when \(c>0\), so
continuation is strictly optimal for the proposer.  When \(c=0\), the zero-zero
history is null and continuation is selected as a weakly optimal action.

At any on-path history with at least one positive bid, let \(u\) be the highest
revealed value.  Select a standard no-overbidding first-price continuation
assessment; after two positive bids this is the usual complete-information
continuation, and after a positive-zero history the zero bidder's posterior is
the zero-pool posterior.  The only property used below is that selected
unscaled revenue is at most \(u\).  Since \(c\in\mathcal C_F(k_2)\), the lower
admissibility condition is equivalent to
\[
s(u)\ge k_2u.
\]
Thus the immediate revenue is at least the selected scaled continuation revenue,
so stopping is proposer-optimal.

\smallskip
\noindent\textit{Step 3: Off-path completions for the cutoff branch.}
It remains to complete off-path histories for the cutoff branch.  If \(c=0\),
there is no gap interval.  Suppose \(c>0\) and \(x\in(0,d)\).  Since
\(d\le k_2\), one has \(x<k_2\).  At a gap-zero history \((x,0)\), assign
posterior \(\delta_1\) to the gap bidder and the zero-pool posterior
\(F(\cdot\mid v<c)\) to the zero bidder.  Choose
\[
m_x\in
\left(
\max\left\{\frac{x}{k_2},\,c,\,1-\frac{A_F(c)}{F(c)}\right\},
1
\right).
\]
The value-one gap bidder bids \(m_x\); other gap-bidder types use best responses
to the zero bidder's continuation mixture.  Zero-pool types use the mixed
continuation action with unconditional cdf
\[
H_x(b)=
\begin{cases}
0, & b<c,\\[1mm]
\dfrac{1-m_x}{1-b}, & c\le b<m_x,\\[2mm]
1, & b\ge m_x.
\end{cases}
\]
Given the gap bidder's bid \(m_x\), every zero-pool type \(w\le c<m_x\) gets
zero from bids below \(m_x\) and nonpositive payoff from bids at least \(m_x\).
Given \(H_x\), the believed value-one gap bidder is indifferent on \([c,m_x]\),
since \((1-b)H_x(b)=1-m_x\).  Because the proposer and the zero bidder assign
posterior probability one to the value-one gap type, the selected unscaled
revenue is \(m_x\), and \(k_2m_x>x\), so continuation is proposer-optimal.  The
history \((0,x)\) is completed symmetrically.

At a gap-positive history \((x,s(u))\), assign posterior \(\delta_0\) to the gap
bidder and \(\delta_u\) to the positive bidder.  The posterior-relevant gap type
bids its lower bound and loses, the positive bidder bids its lower bound, and
other gap-bidder types use best responses.  Under the selected posterior, the
unscaled continuation revenue is \(s(u)\), so stopping is proposer-optimal.  At
a gap-gap history, assign both bidders posterior \(\delta_1\).  Value-one gap
types bid \(1\), lower types bid their lower admissible bid and lose against a
believed bid of \(1\).  The selected unscaled revenue is \(1\), and continuation
is proposer-optimal because \(M(h)<d\le k_2\).

At any history with largest bid \(M(h)>\bar x\), the proposer stops.  To see
that this action is sequentially optimal under a complete assessment, first
suppose \(M(h)\le1\).  Assign posterior \(\delta_1\) to each bidder and select
the continuation behavior in which value-one types bid \(1\) while lower types
bid their lower admissible bid and lose.  This is sequentially rational: lower
types cannot obtain positive surplus against a bid of \(1\), and a value-one
type is indifferent between losing below \(1\) and bidding \(1\).  The selected
unscaled continuation revenue is \(1\).  If instead \(M(h)>1\), select the pure
continuation profile in which each bidder bids its lower bound; no type with
value in \([0,1]\) can profitably outbid the largest lower bound, and the
scaled continuation revenue is at most \(k_2M(h)\).  In both cases stopping is
optimal: the lower admissibility condition at \(u=1\) implies
\(\bar x\ge k_2\), so for \(M(h)\le1\) the selected scaled continuation revenue
is at most \(k_2\le \bar x<M(h)\), while for \(M(h)>1\) it is at most
\(k_2M(h)\le M(h)\).  These stopped histories are also what deter
above-support stage-\(1\) deviations below.

\smallskip
\noindent\textit{Step 4: Stage-\(1\) incentive compatibility in the cutoff branch.}
We verify stage-\(1\) incentive compatibility.  For \(v\ge c\), the on-path
payoff is
\[
U_+(v)=F(v)(v-s(v))=A_F(v)-(1-k_2)A_F(c).
\]
If type \(v\) mimics a positive type \(t\ge c\), its payoff is
\[
D_+(v,t)=F(t)(v-s(t))=F(t)(v-t)+A_F(t)-(1-k_2)A_F(c).
\]
Since \(\partial D_+(v,t)/\partial t=f(t)(v-t)\), the unique maximizer over
\(t\ge c\) is \(t=v\).

A deviation to the zero bid can pay only when the opponent is also in the zero
pool.  Against the selected zero-pool continuation auction, the best unscaled
payoff is bounded by
\[
\max_{t\le c}\{F(t)(v-t)+A_F(t)\}=F(c)(v-c)+A_F(c).
\]
The scaled deviation payoff is at most \(k_2[F(c)(v-c)+A_F(c)]\), so
\[
\begin{aligned}
U_+(v)-k_2[F(c)(v-c)+A_F(c)]
&=A_F(v)-A_F(c)-k_2F(c)(v-c)\\
&=\int_c^v F(t)\,dt-k_2F(c)(v-c)\ge0.
\end{aligned}
\]
A gap deviation \(x\in(0,d)\) loses against positive opponent bids.  Against a
zero-pool opponent, the selected passive continuation gives type \(v\) unscaled
payoff at most
\[
\frac{(v-c)(1-m_x)}{1-c}.
\]
The scaled expected payoff is at most
\[
k_2F(c)\frac{(v-c)(1-m_x)}{1-c}
\le k_2F(c)(1-m_x)
< k_2A_F(c)
=U_+(c)\le U_+(v),
\]
where the strict inequality uses \(m_x>1-A_F(c)/F(c)\).  An above-support
deviation is stopped and gives at most \(v-\bar x\).  But
\[
U_+(v)-(v-\bar x)
=A_F(v)-v+1-A_F(1)
=\int_v^1(1-F(t))\,dt\ge0.
\]

For a zero-pool type \(v<c\), the on-path payoff is
\[
U_0(v)=k_2A_F(v).
\]
The best positive-branch mimic is the cutoff bid, because \(D_+(v,t)\) is
decreasing for \(t\ge c\).  Its payoff is \(F(c)(v-c)+k_2A_F(c)\).  Hence
\[
U_0(v)-[F(c)(v-c)+k_2A_F(c)]
=F(c)(c-v)-k_2\int_v^cF(t)\,dt\ge0.
\]
A gap deviation gives zero payoff against positive opponents and no positive
continuation payoff against zero-pool opponents, since \(m_x>c>v\).  An
above-support deviation is stopped and gives at most \(v-\bar x\).  Let
\[
G(v)=k_2A_F(v)-v+\bar x.
\]
Because \(G'(v)=k_2F(v)-1\le0\), \(G\) is minimized on \([0,c]\) at \(v=c\).  At
\(v=c\),
\[
G(c)=A_F(c)-c+1-A_F(1)
=\int_c^1(1-F(t))\,dt\ge0.
\]
Thus zero-pool types also have no profitable signed-bid deviation.  The cutoff
assessment is a PBE.

\smallskip
\noindent\textit{Step 5: Full-pooling branch at and above \(K_F^+\).}
It remains to prove the full-pooling branch for \(k_2\ge K_F^+\).  Consider the
assessment in which every type bids zero at stage \(1\).  At the on-path
history \((0,0)\), beliefs are the prior \(F\otimes F\), the proposer chooses
\(\CTN\), and the builders play the standard two-bidder first-price auction
under \(F\), scaled by \(k_2\).  Therefore type \(v\)'s on-path payoff is
\[
U^{\mathrm{pool}}(v)=k_2A_F(v).
\]
Since \(K_F^+>0\), continuation yields strictly positive expected proposer
revenue and is sequentially optimal at \((0,0)\).

At an off-path history, let \(M(h)\) be the largest signed stage-\(1\) bid.  If
\(0<M(h)<k_2\), assign beliefs at each builder information set that the opponent
has value \(1\), and select continuation behavior in which a value-one type bids
\(1\) while every lower type bids its own lower admissible bid and loses.  Then
any type \(v<1\) obtains zero from bids below \(1\) and a nonpositive payoff from
bidding at least \(1\); type \(v=1\) is indifferent.  Under the proposer's
posterior, the selected unscaled continuation revenue is \(1\), so continuation
is proposer-optimal because \(k_2>M(h)\).

If \(M(h)\ge k_2\), the proposer stops.  When \(M(h)\le1\), use the same
value-one continuation completion as above; its scaled revenue is \(k_2\le
M(h)\).  When \(M(h)>1\), select the pure continuation profile in which each
bidder bids its lower bound.  Since all values lie in \([0,1]\), no bidder can
profitably outbid the largest lower bound, and the selected scaled revenue is at
most \(k_2M(h)\le M(h)\).  Thus stopping is sequentially optimal at every
off-path history with \(M(h)\ge k_2\).

A type \(v\) who follows full pooling receives \(k_2A_F(v)\).  A unilateral
positive stage-\(1\) deviation \(x\in(0,k_2)\) induces continuation and gives no
positive payoff under the off-path completion just described.  A deviation
\(x\ge k_2\) induces stopping, wins against the opponent's zero bid, and yields
at most \(v-x\le v-k_2\).  It is therefore enough to prove
\[
        k_2A_F(v)\ge v-k_2\qquad\forall v\in[0,1].
\]
Equivalently, \(k_2\ge v/(1+A_F(v))\) for every \(v\).  Let
\[
        g(v):=\frac{v}{1+A_F(v)}.
\]
Then
\[
        g'(v)=\frac{1+A_F(v)-vF(v)}{(1+A_F(v))^2}\ge0,
\]
because \(vF(v)\le1\).  Hence \(g\) is weakly increasing and
\[
        \sup_{v\in[0,1]}g(v)=g(1)=\frac{1}{1+A_F(1)}=K_F^+.
\]
Thus the stage-\(1\) incentive constraint holds whenever \(k_2\ge K_F^+\).  The
full-pooling assessment is a PBE.  Combining the cutoff branch for
\(k_2<K_F^+\) with the full-pooling branch for \(k_2\ge K_F^+\) proves the
global existence statement.
\end{proof}

\subsection{Proof of \Cref{prop:simplified-epbs-completeness}}
\label{app:simplified-epbs-completeness}

\begin{proof}
Consider a symmetric PBE in the regular bang-bang class with cutoff \(c<1\).
On the positive branch, a type \(v\) that mimics a positive type \(t\) wins
exactly against opponent types below \(t\) and obtains payoff
\[
F(t)(v-s(t)).
\]
Local IC at \(t=v\) gives
\[
f(v)(v-s(v))-F(v)s'(v)=0,
\]
or
\[
\frac{d}{dv}[F(v)s(v)]=vf(v).
\]
Integrating yields, for some constant \(C\),
\[
F(v)s(v)=\int_0^v t\,dF(t)+C=vF(v)-A_F(v)+C,
\]
so
\[
s(v)=v-\frac{A_F(v)-C}{F(v)}.
\]
If \(c=0\), feasibility and no-overbidding near zero force \(C=0\), so the
branch is \(s^F_{k_2,0}\).

Suppose \(c\in(0,1)\), and let \(d=\lim_{v\downarrow c}s(v)\).  A type
approaching \(c\) from above must weakly prefer the positive branch to hiding in
the zero pool, and a type approaching \(c\) from below must weakly prefer the
zero pool to the lowest positive-branch mimic.  The limiting indifference
condition is
\[
F(c)(c-d)=k_2A_F(c).
\]
Comparing this with the expression for \(s(c)\) gives
\(C=(1-k_2)A_F(c)\).  Therefore the positive branch is exactly
\(s^F_{k_2,c}\).

The lower cutoff restriction follows from positive-history stopping.  At a
positive-positive on-path history revealing values \(u>t\), the regular
bang-bang assumption says that the selected complete-information continuation
revenue is at least \(t\).  Since the proposer stops,
\[
s(u)\ge k_2 \widehat R_2(s(u),s(t))\ge k_2t.
\]
Letting \(t\uparrow u\) gives \(s(u)\ge k_2u\), which is the lower admissibility
condition.

The upper restriction follows from gap deviations.  If
\[
d=c-\frac{k_2A_F(c)}{F(c)}>k_2,
\]
choose \(x\in(k_2,d)\).  At a gap-zero history \((x,0)\), continuation revenue
is at most \(1\), so \(k_2\widehat R_2(x,0)\le k_2<x\), and the proposer must
stop.  A pooling type sufficiently close to \(c\) then wins against the zero
pool at price \(x<d\), obtaining a payoff strictly above the limiting zero-pool
payoff \(k_2A_F(c)=F(c)(c-d)\).  This contradicts pooling optimality.  Hence
\(d\le k_2\), which is the upper admissibility condition.

The two admissibility conditions are precisely \(c\in\mathcal C_F(k_2)\).
Finally, at \(u=1\), the lower condition requires
\[
s^F_{k_2,c}(1)=1-A_F(1)+(1-k_2)A_F(c)\ge k_2.
\]
If \(k_2>K_F^+=1/(1+A_F(1))\), this cannot hold for any \(c<1\).  If
\(k_2=K_F^+\), equality requires \(A_F(c)=A_F(1)\), hence \(c=1\).  Thus the
regular bang-bang cutoff branch has no nondegenerate member above \(K_F^+\), and
its closure reaches full pooling at equality.
\end{proof}

\subsection{Proof of \Cref{cor:simplified-epbs-builder-fpa-bound}}
\label{app:simplified-epbs-builder-fpa-bound}

\begin{proof}
In the symmetric two-bidder iid first-price auction under \(F\), the standard
equilibrium payoff of type \(v\) is \(A_F(v)\).  First consider the cutoff
branch from \Cref{thm:simplified-epbs-unipooling}.  The payoff identities from
the construction are
\[
        U_0(v)=k_2A_F(v)\quad (v<c),
        \qquad
        U_+(v)=A_F(v)-(1-k_2)A_F(c)\quad (v\ge c).
\]
Thus the FPA payoff dominates the cutoff-branch payoff type by type.  If
\(b_F(c,k_2)\) denotes the representative builder's ex-ante payoff in this
branch, then
\[
\begin{aligned}
b_F^{\mathrm{FPA}}-b_F(c,k_2)
&=
\int_0^c (1-k_2)A_F(v)\,dF(v)
+\int_c^1 (1-k_2)A_F(c)\,dF(v)\\
&=
(1-k_2)\left[
\int_0^c A_F(v)\,dF(v)+A_F(c)(1-F(c))
\right]\ge0.
\end{aligned}
\]
If \(c=0\), then \(A_F(c)=0\) and the first integral is empty, so the gap is
zero.  If \(c>0\) and \(k_2<1\), then the bracketed term is strictly positive,
so the gap is strictly positive.

Now consider the full-pooling branch, which exists for \(k_2\ge K_F^+\).  Its
type-\(v\) payoff is
\[
        U^{\mathrm{pool}}(v)=k_2A_F(v)\le A_F(v),
\]
because \(k_2\in[0,1]\).  Integrating over \(v\sim F\) gives the same ex-ante
upper bound for the full-pooling PBE.  Hence every uni-pooling PBE constructed
in \Cref{thm:simplified-epbs-unipooling} gives a representative builder payoff
no larger than \(b_F^{\mathrm{FPA}}\).
\end{proof}

\section{Proofs for the ePBS--TEE Design}
\label{app:tee-proofs}

\subsection{Proof of Proposition~\ref{prop:tee-myerson}}
\label{app:tee-myerson-proof}

\begin{proof}
Define
\[
\varphi_i(v_i):=v_i-\frac{1-F_i(v_i)}{f_i(v_i)}.
\]

This is the standard Myerson argument applied to the discounted allocation weights induced by our dynamic environment. Fix any truthful direct mechanism in the full-commitment ePBS benchmark. Let \(T=2\) be the stages of block building. For each reported type profile \(v\), define the allocation rule $q_{i, t}(v)$
\[
q_{i,t}(v)
:=
\Pr\{\omega_{i, t} = 1\},
\qquad
\pi_{i,t}(v)
:=
\E\brac{p_{i,t}\mid v}.
\]
Here \(\omega_{i, t} = 1\) denotes  that builder \(i\) wins  at terminal stage \(t\), and \(p_{i,t}\) is builder \(i\)'s undiscounted stage-\(t\) payment, with \(p_{i,t}=0\) if builder \(i\) is not charged at stage \(t\). The probability and expectation are over the mechanism's internal randomization, conditional on the reported profile \(v\). Thus \(q_{i,t}(v)\) is an unconditional stage-\(t\) allocation probability, \(\pi_{i,t}(v)\) is the corresponding unconditional stage-\(t\) expected payment contribution, and feasibility requires \(\sum_{i=1}^n\sum_{t=1}^T q_{i,t}(v)\le 1\). Since terminal payoffs are discounted by the canonicalization probability \(k_t\), define the discounted ex-post payment of builder \(i\), the discounted interim allocation weight, and the discounted interim payment as
\[
P_i(v):=\sum_{t=1}^T k_t\,\pi_{i,t}(v),
\qquad
Q_i(v_i):=\E\brac{\sum_{t=1}^T k_t\,q_{i,t}(v)\,\middle|\,v_i},
\qquad
\Pi_i(v_i):=\E\brac{P_i(v)\mid v_i}.
\]
Builder \(i\)'s truthful interim utility is therefore
\[
U_i(v_i)=v_iQ_i(v_i)-\Pi_i(v_i).
\]
For a deviation report \(\hat v_i\), define
\[
\widetilde U_i(v_i,\hat v_i)
:=
\E\brac{
v_i\sum_{t=1}^T k_t\,q_{i,t}(\hat v_i,v_{-i})-P_i(\hat v_i,v_{-i})
\;\middle|\; v_i}.
\]
Bayesian incentive compatibility implies
\[
U_i(v_i)=\widetilde U_i(v_i,v_i)\ge \widetilde U_i(v_i,\hat v_i)
\qquad
\forall v_i,\hat v_i\in\operatorname{supp}(F_i).
\]
Hence the one-dimensional envelope theorem yields \(U_i'(v_i)=Q_i(v_i)\) for almost every \(v_i\), and therefore
\[
U_i(v_i)=U_i(\underline v_i)+\int_{\underline v_i}^{v_i}Q_i(z)\,dz.
\]
Substituting back into the payment identity gives
\[
\Pi_i(v_i)
=
v_iQ_i(v_i)-U_i(\underline v_i)-\int_{\underline v_i}^{v_i}Q_i(z)\,dz.
\]
Taking expectation with respect to the marginal \(F_i\) and integrating by parts,
\[
\E\brac{\Pi_i(v_i)}
=
\E\brac{\varphi_i(v_i)Q_i(v_i)}
-U_i(\underline v_i)
=
\E\brac{\varphi_i(v_i)\sum_{t=1}^T k_t\,q_{i,t}(v)}
-U_i(\underline v_i).
\]
Summing over \(i\in[n]\) yields
\[
\E\brac{\sum_{i=1}^n P_i(v)}
=
\E\brac{\sum_{i=1}^n \varphi_i(v_i)\sum_{t=1}^T k_t\,q_{i,t}(v)}
-\sum_{i=1}^n U_i(\underline v_i).
\]
Interim individual rationality implies \(U_i(\underline v_i)\ge 0\), so revenue is bounded above by expected discounted virtual surplus. Since this upper bound is pointwise linear in the allocation rule, at each type profile \(v\) an optimizer places all feasible allocation probability on a builder with maximal nonnegative virtual value and assigns nothing if all virtual values are negative. Because \(k_1\ge k_t\) for every \(t\), moving any such allocation from a later stage to stage~\(1\) weakly increases virtual surplus. Hence some optimal mechanism stops immediately and uses the static allocation rule
\[
x_i(v)=\mathbf 1\brac{i\in\arg\max_{j\in[n]}\varphi_j(v_j),\ \max_{j\in[n]}\varphi_j(v_j)\ge 0},
\]
with any fixed tie-breaking rule. Under regularity, each \(\varphi_i\) is weakly increasing, so this allocation rule is monotone in builder \(i\)'s own report and the associated Myerson payments implement it. In the irregular case, replacing virtual values by their ironed counterparts gives the same conclusion.
\end{proof}
\subsection{\texorpdfstring{Proof of \Cref{lem:truthful_report_revelation_epbs_tee}}{Proof of truthful-report revelation principle}}
\label{app:truthful_report_revelation_epbs_tee}
\label{app:tee-direct-representation}

\begin{proof}
Construct $\mu'$ by absorbing the original report strategy and continuation response rule into the committed kernel.
For every reported type profile $v$ and stage-$1$ bid profile $\mathbf b_1$, set
\[
\mu'(\STOP\mid v,\mathbf b_1)
:=
\sum_{\hat v\in\mathcal V^2}
\alpha(\hat v\mid v,\mathbf b_1)\,
\mu(\STOP\mid \hat v,\mathbf b_1),
\]
where $\alpha(\hat v\mid v,\mathbf b_1):=\prod_{k=1}^2 \alpha_k(\hat v_k\mid v_k,b_{k,1})$, and
\[
\resizebox{\textwidth}{!}{$
\begin{aligned}
\mu'(\CTN,\mathbf b_2\mid v,\mathbf b_1)
:=
\sum_{\hat v\in\mathcal V^2}
\alpha(\hat v\mid v,\mathbf b_1)
\sum_{\mathbf m\in\mathcal B^2}
\mu(\CTN,\mathbf m\mid \hat v,\mathbf b_1)
\prod_{k=1}^2
\beta_{k,2}(b_{k,2}\mid v_k,b_{k,1},\hat v_k,m_k).
\end{aligned}
$}
\]
For reported histories not pinned down by this construction, define $\mu'(\cdot\mid \hat v,\mathbf b_1)$ arbitrarily subject to admissibility, \eqref{eq:mu_feasible_full}--\eqref{eq:mu_recommendation_feasible}.

Let $\sigma'$ use the same stage-$1$ bidding rule, $\beta_1'=\beta_1$. On reached reporting histories, let builders report truthfully,
\(
\alpha_i'(\cdot \mid v_i,b_{i,1})=\delta_{\hat v_i=v_i}.
\)
On reached continuation histories, let builders obey every feasible bid recommendation, \(\beta_{i,2}'(\cdot \mid v_i,b_{i,1},v_i,m_i)=\delta_{b_{i,2}=m_i}.
\) Choose feasible beliefs $\lambda'$ and sequentially rational continuation strategies at unreached information sets.

By construction, conditional on every true value profile $v$ and stage-$1$ bid profile $\mathbf b_1$, $\mu'$ generates exactly the same stopping probability and the same distribution over realized stage-$2$ bid profiles as the original pair $(\mu,\sigma)$. Since $\beta_1'=\beta_1$, the two pairs induce the same distribution over terminal bidding outcomes, so they are bidding-equivalent.

Sequential rationality follows from the usual revelation-principle argument. Any profitable deviation from truthful reporting or from obeying a reached feasible recommendation under $(\mu',\sigma')$ would induce the same distribution over terminal bids as some deviation in the original assessment $(\mu,\sigma)$, because $\mu'$ integrates the original reporting strategy and continuation responses into the kernel. This contradicts that $\sigma$ is a plain PBE of $\Gamma^{\dagger}(\mu)$. At unreached information sets, $\lambda'$ and the suppressed continuation strategy are chosen to be feasible and sequentially rational; plain PBE imposes no additional Bayes-rule restriction there, and off-path play need not obey the recommendation.
\end{proof}

\subsection{\texorpdfstring{Proof of \Cref{lem:truthful_direct_constraint_equivalence}}{Proof of truthful direct-constraint equivalence}}
\label{app:truthful_direct_constraint_equivalence}
\begin{proof}
Necessity follows directly from the definition of plain PBE.
Admissibility of $\mu$ is part of the direct bid-recommendation game.
Feasible off-path beliefs are exactly the unrestricted plain-PBE beliefs at unreached information sets.
Sequential rationality at reached continuation information sets gives \eqref{eq:stage2_obedience_full}; sequential rationality at unreached continuation information sets implies that the off-path part of the full stage-$2$ strategy induces a completion \(\chi^\lambda\) satisfying \eqref{eq:offpath_best_response}.
Finally, sequential rationality at the first information set, together with truthful reporting after every stage-$1$ bid in the support of $\beta_{i,1}$, gives the support-wise stage-$1$ IC condition \eqref{eq:stage1_IC_full}.

For sufficiency, construct an assessment as follows.
At stage $1$, each builder uses the individual bidding rule inducing $\beta_1$.
After every stage-$1$ bid in the support of this rule, the builder reports truthfully.
At reached direct recommendation information sets, the builder obeys the recommendation.
At unreached continuation information sets, the builder uses the completion \(\chi_i^\lambda\) and the belief pair $\lambda_i$; this completion is the suppressed off-path continuation strategy and is not required to equal the observed recommendation.
At report histories not reached under $\beta_1$, choose any report and continuation plan that is a best response in the finite continuation problem; these choices do not affect the direct-path bidding distribution.

Bayes' rule defines beliefs at reached information sets, and \eqref{eq:belief_feasibility}--\eqref{eq:offpath_action_feasibility} define feasible beliefs at unreached continuation information sets.
The stage-$1$ IC condition implies \eqref{eq:stage2_obedience_full}: take the same support bid and truthful report on both sides and change only the continuation response at a reached recommendation. Hence reached recommendations are sequentially obeyed, while the definition of \(\chi^\lambda\) gives sequential rationality after unreached recommendations.
Condition \eqref{eq:stage1_IC_full} rules out every profitable joint deviation in the initial bid, report, and continuation response rule from each stage-$1$ bid used with positive probability.
Hence the constructed full assessment is a plain PBE whose reduced truthful direct representation is \((\beta_1,\mu,\lambda)\).
\end{proof}

\subsection{\texorpdfstring{Proof of \Cref{thm:fastfast}}{Proof of fast-fast theorem}}
    \label{app:epigraph-reduction}
    \label{app:tee-epigraph}
\begin{proof}
This appendix proves the exact reduction used in the fast-fast optimal ePBS-TEE mechanism. All notation is as defined there: in particular
\(\mathcal D_i(b'_{i,1})\), \(U_i^v\), \(U_{i,1}^v\),
\(U_{i,2}^v\), \(T_i^v\), and \(\mathrm{Rev}(\beta_1,\mu)\).
For a support bid \(b_{i,1}\in\operatorname{supp}\beta_{i,1}(\cdot\mid v_i)\),
recall that
\[
T_i^v(v_i,b_{i,1};\beta_1,\mu)
:=
U_i^v(v_i;b_{i,1},v_i,\delta_i^*(\cdot;b_{i,1})\mid\beta_1,\mu)
\]
is the payoff from that support bid, truthful reporting, and obedient
continuation at every reached recommendation.

Before introducing epigraph variables, the canonical fast-fast design problem is
\begin{equation}
\label{eq:appendix-raw-ff-problem}
\resizebox{\linewidth}{!}{$
\begin{alignedat}{2}
V^{FF,\mathrm{TEE}}
= \max_{\beta_1,\mu}\quad & \mathrm{Rev}(\beta_1,\mu) \\
\text{s.t.}\quad
& \beta_{i,1}(\cdot\mid v_i)\in\Delta(\mathcal B_{i,1}),
&& \forall i,\ \forall v_i\in\mathcal V, \\
& \beta_{i,1}(b_{i,1}\mid v_i)>0\Longrightarrow b_{i,1}\le v_i,
&& \forall i,\ \forall v_i\in\mathcal V,\ \forall b_{i,1}\in\mathcal B_{i,1}, \\
& \eqref{eq:mu_feasible_full},\ \eqref{eq:mu_recommendation_feasible},
&& \text{(admissibility)}, \\
& T_i^v(v_i,b_{i,1};\beta_1,\mu)
\ge
U_i^v(v_i;b'_{i,1},\hat v_i,\delta_i\mid\beta_1,\mu),
&& \text{(stage-$1$ IC)}, \\
& \hspace{5.1cm}
\forall i,\ \forall v_i,\hat v_i\in\mathcal V,\
\forall b_{i,1}\in\operatorname{supp}\beta_{i,1}(\cdot\mid v_i),\
\forall b'_{i,1}\in\mathcal B_{i,1},\
\forall \delta_i\in\mathcal D_i(b'_{i,1}).
\end{alignedat}
$}
\end{equation}

The only noncompact part of \eqref{eq:appendix-raw-ff-problem} is the
stage-\(1\) IC family indexed by all continuation response rules
\(\delta_i\). For fixed
\((i,v_i,\hat v_i,b'_{i,1})\), the canonical deviation payoff decomposes as
\[
U_i^v(v_i;b'_{i,1},\hat v_i,\delta_i\mid\beta_1,\mu)
=
U_{i,1}^v(v_i;b'_{i,1},\hat v_i\mid\beta_1,\mu)
+
\sum_{m_i\in\mathcal B}
U_{i,2}^v(v_i;b'_{i,1},\hat v_i,m_i,\delta_i(m_i)\mid\beta_1,\mu).
\]
Since \(\delta_i\in\mathcal D_i(b'_{i,1})\) imposes only the pointwise
restriction
\[
\delta_i(m_i)\in\mathcal B_{i,2}(b'_{i,1})
\qquad \forall m_i\in\mathcal B,
\]
there is no coupling across recommendations. Hence
\[
\resizebox{\linewidth}{!}{$
\begin{aligned}
\max_{\delta_i\in\mathcal D_i(b'_{i,1})}
U_i^v(v_i;b'_{i,1},\hat v_i,\delta_i\mid\beta_1,\mu)
=
U_{i,1}^v(v_i;b'_{i,1},\hat v_i\mid\beta_1,\mu)
+
\sum_{m_i\in\mathcal B}
\max_{a_i\in\mathcal B_{i,2}(b'_{i,1})}
U_{i,2}^v(v_i;b'_{i,1},\hat v_i,m_i,a_i\mid\beta_1,\mu).
\end{aligned}
$}
\]

Introduce one epigraph variable
\[
\bar U_{i,2}^v(v_i,b'_{i,1},\hat v_i,m_i)
\]
for each tuple \((i,v_i,b'_{i,1},\hat v_i,m_i)\). The reduced problem is
\begin{equation}
\label{eq:appendix-reduced-ff-problem}
\resizebox{\linewidth}{!}{$
\begin{alignedat}{2}
V^{FF,\mathrm{TEE}}
= \max_{\beta_1,\mu,\bar U_2^v}\quad & \mathrm{Rev}(\beta_1,\mu) \\
\text{s.t.}\quad
& \beta_{i,1}(\cdot\mid v_i)\in\Delta(\mathcal B_{i,1}),
&& \forall i,\ \forall v_i\in\mathcal V, \\
& \beta_{i,1}(b_{i,1}\mid v_i)>0\Longrightarrow b_{i,1}\le v_i,
&& \forall i,\ \forall v_i\in\mathcal V,\ \forall b_{i,1}\in\mathcal B_{i,1}, \\
& \eqref{eq:mu_feasible_full},\ \eqref{eq:mu_recommendation_feasible},
&& \text{(admissibility)}, \\
& T_i^v(v_i,b_{i,1};\beta_1,\mu)
\ge
U_{i,1}^v(v_i;b'_{i,1},\hat v_i\mid\beta_1,\mu)
+
\sum_{m_i\in\mathcal B}
\bar U_{i,2}^v(v_i,b'_{i,1},\hat v_i,m_i),
&& \text{(aggregate stage-$1$ IC)}, \\
& \hspace{5.1cm}
\forall i,\ \forall v_i,\hat v_i\in\mathcal V,\
\forall b_{i,1}\in\operatorname{supp}\beta_{i,1}(\cdot\mid v_i),\
\forall b'_{i,1}\in\mathcal B_{i,1}, \\
& \bar U_{i,2}^v(v_i,b'_{i,1},\hat v_i,m_i)
\ge
U_{i,2}^v(v_i;b'_{i,1},\hat v_i,m_i,a_i\mid\beta_1,\mu),
&& \text{(best continuation deviation)}, \\
& \hspace{5.1cm}
\forall i,\ \forall v_i,\hat v_i\in\mathcal V,\
\forall b'_{i,1}\in\mathcal B_{i,1},\
\forall m_i\in\mathcal B,\
\forall a_i\in\mathcal B_{i,2}(b'_{i,1}).
\end{alignedat}
$}
\end{equation}

To see equivalence, first suppose \((\beta_1,\mu)\) is feasible in
\eqref{eq:appendix-raw-ff-problem}. Set
\[
\bar U_{i,2}^v(v_i,b'_{i,1},\hat v_i,m_i)
:=
\max_{a_i\in\mathcal B_{i,2}(b'_{i,1})}
U_{i,2}^v(v_i;b'_{i,1},\hat v_i,m_i,a_i\mid\beta_1,\mu).
\]
Then all epigraph inequalities hold, and the aggregate IC constraint is exactly
the raw IC constraint after maximizing over \(\delta_i\).

Conversely, suppose \((\beta_1,\mu,\bar U_2^v)\) is feasible in
\eqref{eq:appendix-reduced-ff-problem}. For any continuation response rule
\(\delta_i\in\mathcal D_i(b'_{i,1})\), the epigraph inequalities imply
\[
\sum_{m_i\in\mathcal B}
U_{i,2}^v(v_i;b'_{i,1},\hat v_i,m_i,\delta_i(m_i)\mid\beta_1,\mu)
\le
\sum_{m_i\in\mathcal B}
\bar U_{i,2}^v(v_i,b'_{i,1},\hat v_i,m_i).
\]
Therefore the aggregate IC constraint implies the original IC constraint for
every \(\delta_i\). Thus the raw and reduced formulations have the same feasible
\((\beta_1,\mu)\)-projection. Since proposer revenue depends only on
\((\beta_1,\mu)\), the two problems have the same optimal value.

\end{proof}

\subsection{A Report-Capped Strong-IC Program}
\label{app:report-capped-lower-bound}

This subsection is not part of the exact fast-fast proof above. It records the
computational restriction used in the main text: the committed kernel is
report-capped, and the honest builder's off-path continuation action is
evaluated through the recommendation itself. The point of the restriction is not
to relax incentive compatibility. It gives a fixed algebraic form for the
stage-\(1\) IC block while making the relevant deviation payoff weakly more
favorable to the deviator.

Specifically, for every report profile \(\hat v=(\hat v_i,\hat v_j)\in\mathcal V^2\), every stage-$1$ bid profile \(\mathbf b_1\in\mathcal B^2\), and every continuation recommendation profile \((m_{i,1},m_{j,1})\in\mathcal B^2\), impose
\begin{equation}
\label{eq:appendix-report-cap}
\mu(\CTN,(m_{i,1},m_{j,1})\mid \hat v,\mathbf b_1)=0
\quad\text{whenever}\quad
m_{i,1}>\hat v_i
\ \text{or}\
m_{j,1}>\hat v_j.
\end{equation}
This is an added linear constraint on \(\mu\). Once \eqref{eq:appendix-report-cap} is imposed, we may evaluate builder \(i\)'s stage-$1$ deviation payoff against the recommendation \(m_{j,1}\) itself:
\begin{equation}
\label{eq:appendix-original-deviation-payoff-cap}
\underline U_i(v_i;b'_{i,1},\hat v_i,\delta_i)
=
U_{i,1}^v(v_i;b'_{i,1},\hat v_i\mid\beta_1,\mu)
+
\sum_{m_{i,1}\in\mathcal B}
\underline U_{i,2}^v(v_i,b'_{i,1},\hat v_i,m_{i,1},\delta_i(m_{i,1})),
\end{equation}
where
\begin{equation}
\resizebox{\linewidth}{!}{$
\begin{aligned}
\underline U_{i,2}^v(v_i,b'_{i,1},\hat v_i,m_{i,1},a_i)
:=\;&
\sum_{v_j\in\mathcal V}\sum_{b_{j,1}\in\mathcal B}
\mathsf F(v_i,v_j)\beta_{j,1}(b_{j,1}\mid v_j)
\\
&\sum_{m_{j,1}\in\mathcal B}\mu(\CTN,(m_{i,1},m_{j,1})\mid(\hat v_i,v_j),(b'_{i,1},b_{j,1})) \cdot u_{i,2}((a_i,m_{j,1});v_i).
\end{aligned}
$}
\label{eq:appendix-U2-definition-cap}
\end{equation}
Here builder \(j\) remains honest, so in the relevant stage-$1$ IC calculation
her report is \(v_j\), and \eqref{eq:appendix-report-cap} implies
\(m_{j,1}\le v_j\).

\begin{proposition}[The report cap does not weaken stage-\(1\) IC]
\label{prop:report-cap-does-not-weaken-ic}
Fix \((\beta_1,\mu)\) satisfying \eqref{eq:appendix-report-cap}. For every
builder \(i\) and every stage-\(1\) deviation
\((b'_{i,1},\hat v_i,\delta_i)\) whose continuation rule uses undominated
actions, the capped deviation payoff
\(\underline U_i(v_i;b'_{i,1},\hat v_i,\delta_i)\) is weakly larger than the
canonical-completion payoff
\(U_i^v(v_i;b'_{i,1},\hat v_i,\delta_i\mid\beta_1,\mu)\). Consequently, any
stage-\(1\) IC constraint that holds with \(\underline U_i\) on the deviation
side also holds for the exact canonical-completion IC constraint.
\end{proposition}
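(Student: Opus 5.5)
The plan is a termwise comparison. Both $\underline U_i$ and $U_i^v$ share the stop term $U_{i,1}^v(v_i;b'_{i,1},\hat v_i\mid\beta_1,\mu)$, so it suffices to compare the continuation terms message by message. For fixed $(v_i,b'_{i,1},\hat v_i,m_{i,1})$ and action $a_i=\delta_i(m_{i,1})$, we will compare $\underline U_{i,2}^v(v_i,b'_{i,1},\hat v_i,m_{i,1},a_i)$ with $U_{i,2}^v(v_i;b'_{i,1},\hat v_i,m_{i,1},a_i\mid\beta_1,\mu)$. Both are sums over the same weights $\mathsf F(v_i,v_j)\beta_{j,1}(b_{j,1}\mid v_j)\mu(\CTN,(m_{i,1},m_{j,1})\mid(\hat v_i,v_j),(b'_{i,1},b_{j,1}))$. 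They differ only in the opponent's stage-$2$ bid: the capped version uses $m_{j,1}$, while the canonical version uses $b_{j,2}^v(I_{j,2}^{\mathrm{tru}}(m_{j,1}))$. The latter equals $m_{j,1}$ when $R_j(v_j,b_{j,1},m_{j,1})=1$ and equals $v_j$ otherwise. So it suffices to show, pointwise on the support of the weights, that
\[
u_{i,2}((a_i,m_{j,1});v_i)\ \ge\ u_{i,2}((a_i,b_{j,2}^v);v_i).
\]

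First, on reached recommendations the two bids coincide and there is nothing to prove. Second, on unreached recommendations the weight has positive $\mu$-mass with builder $j$ honest, so her report is $v_j$. The report cap \eqref{eq:appendix-report-cap} then gives $m_{j,1}\le \hat v_j=v_j=b_{j,2}^v$. Third, we use monotonicity of the first-price stage-$2$ payoff in the opponent's bid. For a fixed own bid $a_i$, raising the opponent's bid weakly lowers builder $i$'s winning probability under \eqref{eq:fpa}, including symmetric tie-breaking. The payment is $a_i$ upon winning, so the payoff weakly decreases as long as the surplus upon winning satisfies $v_i-a_i\ge 0$.

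This last point is where the hypothesis "undominated actions" enters, and it is the main obstacle. An overbid $a_i>v_i$ has negative surplus upon winning, so a higher opponent bid could actually help the deviator and the inequality could reverse. We will handle this by noting that any $a_i>v_i$ is weakly dominated in the stage-$2$ first-price game by $\max\{b'_{i,1},\min(a_i,v_i)\}$. The restriction to undominated actions therefore lets us assume $a_i\le v_i$ whenever $b'_{i,1}\le v_i$.

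The residual case is $b'_{i,1}>v_i$, where every feasible action overbids. Here we will argue that any feasible $a_i\ge b'_{i,1}>v_i$ gives nonpositive payoff in either version. Taking $a_i=b'_{i,1}$ as the undominated choice, winning then yields negative surplus, and a higher opponent bid only lowers the loss. Since $k_2\ge0$ multiplies both sides, the pointwise inequality survives.

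Summing over $m_{i,1}$ and adding the common stop term gives $\underline U_i\ge U_i^v$. The consequence for IC then follows because both constraints share the left-hand side $T_i^v$.
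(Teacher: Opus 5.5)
Your main line matches the paper's proof: the same stop term; coincidence at reached recommendations; the report cap with $j$'s truthful report giving $m_{j,1}\le v_j$ at unreached ones; and first-price monotonicity in the opponent's bid. The gap is your residual case $b'_{i,1}>v_i$, where your conclusion is backwards. There the undominated action is $a_i=b'_{i,1}$ and winning yields $k_2(v_i-a_i)<0$. As you say yourself, a higher opponent bid only lowers the loss, so the payoff is weakly \emph{increasing} in the opponent's bid. With $m_{j,1}\le v_j$ this gives $u_{i,2}((a_i,m_{j,1});v_i)\le u_{i,2}((a_i,v_j);v_i)$, the reverse of what you need, and multiplying by $k_2\ge0$ does not change the sign. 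The reversal can be strict. Take $v_i=0$, $a_i=b'_{i,1}>0$, and an unreached recommendation with $m_{j,1}<a_i\le v_j$: the capped term is $-k_2a_i$, while the canonical term is $0$. So $\underline U_i\ge U_i^v$ is false for stage-$1$ overbids. The paper's one-line monotonicity claim silently assumes $b'_{i,1}\le v_i$ as well, in line with the ``no-overbidding deviation domain'' mentioned in the main text.

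What saves the IC conclusion is a direct check, not a payoff comparison. If $b'_{i,1}>v_i$, the stop-branch bid and every feasible continuation bid exceed $v_i$, so $U_i^v\le0$ (and likewise $\underline U_i\le0$). On the other hand $T_i^v\ge0$. The support bid satisfies $b_{i,1}\le v_i$ by the canonical support restriction. Under a truthful report, admissibility and the report cap give $b_{i,1}\le m_{i,1}\le v_i$, so every stop and continuation term of the obedient payoff is nonnegative whatever $j$ bids. Hence the canonical IC constraint holds automatically for every overbidding stage-$1$ deviation, and the ``Consequently'' clause survives. The payoff-comparison clause should be stated only for $b'_{i,1}\le v_i$, where your dominance reduction to $a_i\le v_i$ is valid and the termwise argument goes through.
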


\begin{proof}
The stop payoff is the same in both calculations, so it suffices to compare the
continuation terms. Fix a continuation recommendation pair
\((m_{i,1},m_{j,1})\) that receives positive probability in the deviation
calculation. If builder \(j\)'s truthful direct information set is reached, the
canonical action is \(m_{j,1}\), and the two continuation payoffs coincide. If
that information set is unreached, the canonical completion used in
\Cref{thm:fastfast} is \(v_j\). Since builder \(j\) reports truthfully in
builder \(i\)'s stage-\(1\) IC calculation, the report cap implies
\(m_{j,1}\le v_j\). Under the first-price terminal rule, the payoff from any
undominated continuation action of builder \(i\) is weakly decreasing in builder
\(j\)'s realized bid, so for every such action \(a_i\),
\[
u_{i,2}((a_i,m_{j,1});v_i)
\ge
u_{i,2}((a_i,v_j);v_i).
\]
Term-by-term comparison gives
\[
\underline U_i(v_i;b'_{i,1},\hat v_i,\delta_i)
\ge
U_i^v(v_i;b'_{i,1},\hat v_i,\delta_i\mid\beta_1,\mu).
\]
Thus replacing the canonical-completion payoff by \(\underline U_i\) can only
make the deviation side of stage-\(1\) IC larger, so satisfying the capped IC
block implies the corresponding canonical IC inequality.
\end{proof}

Let $\mathcal X^{FF}$ denote the base pairs $(\beta_1,\mu)$ satisfying the stage-$1$ bid simplexes, the canonical support restriction
\(
\beta_{i,1}(b_{i,1}\mid v_i)>0\Rightarrow b_{i,1}\le v_i,
\)
and the kernel admissibility conditions \eqref{eq:mu_feasible_full}--\eqref{eq:mu_recommendation_feasible}.
The corresponding capped strong-IC original problem is
\begin{equation}
\label{eq:appendix-report-capped-original-problem}
\resizebox{\linewidth}{!}{$
\begin{alignedat}{2}
\max_{\beta_1,\mu}\quad & \mathrm{Rev}(\beta_1,\mu) \\
\text{s.t.}\quad
& (\beta_1,\mu)\in\mathcal X^{FF}, && \text{(base feasibility)}, \\
& \mu(\CTN,(m_{i,1},m_{j,1})\mid \hat v,\mathbf b_1)=0
\quad\text{whenever}\quad
m_{i,1}>\hat v_i
\ \text{or}\
m_{j,1}>\hat v_j,
&& \text{(report cap)},
\\
& T_i^v(v_i,b_{i,1};\beta_1,\mu)
\ge
U_{i,1}^v(v_i;b'_{i,1},\hat v_i\mid\beta_1,\mu)
+
\sum_{m_{i,1}\in\mathcal B}
\underline U_{i,2}^v(v_i,b'_{i,1},\hat v_i,m_{i,1},\delta_i(m_{i,1})),
&& \text{(stage-$1$ IC)},
\\
& \hspace{5.2cm}
\forall i,\ \forall v_i,\hat v_i\in\mathcal V,\
\forall b_{i,1}\in\operatorname{supp}\beta_{i,1}(\cdot\mid v_i),\
\forall b'_{i,1}\in\mathcal B_{i,1},\ \forall \delta_i\in\mathcal D_i(b'_{i,1}).
\end{alignedat}
$}
\end{equation}

Applying the same epigraph argument as in \Cref{thm:fastfast}, we obtain the
capped reduced problem
\begin{equation}
\label{eq:appendix-report-capped-reduced-problem}
\resizebox{\linewidth}{!}{$
\begin{alignedat}{2}
\underline V^{\mathrm{FF}}
 := \max_{\beta_1,\mu,\bar{U}_2^v}& \mathrm{Rev}(\beta_1,\mu) \\
\text{s.t.}\quad
& (\beta_1,\mu)\in\mathcal X^{FF}, && \text{(base feasibility)}, \\
& \mu(\CTN,(m_{i,1},m_{j,1})\mid \hat v,\mathbf b_1)=0
\quad\text{whenever}\quad
m_{i,1}>\hat v_i
\ \text{or}\
m_{j,1}>\hat v_j,
&& \text{(report cap)},
\\
& T_i^v(v_i,b_{i,1};\beta_1,\mu)
\ge
U_{i,1}^v(v_i;b'_{i,1},\hat v_i\mid\beta_1,\mu)
+
\sum_{m_{i,1}\in\mathcal B}
\bar{U}_{i,2}^v(v_i,b'_{i,1},\hat v_i,m_{i,1}),
&& \text{(aggregate IC)},
\\
& \hspace{5.2cm}
\forall i,\ \forall v_i,\hat v_i\in\mathcal V,\
\forall b_{i,1}\in\operatorname{supp}\beta_{i,1}(\cdot\mid v_i),\
\forall b'_{i,1}\in\mathcal B_{i,1},
\\[0.3em]
& \bar{U}_{i,2}^v(v_i,b'_{i,1},\hat v_i,m_{i,1})
\ge
\underline U_{i,2}^v(v_i,b'_{i,1},\hat v_i,m_{i,1},a_i),
&& \text{(best deviation)},
\\
& \hspace{5.2cm}
\forall i,\ \forall v_i,\hat v_i\in\mathcal V,\
\forall b'_{i,1}\in\mathcal B_{i,1},\ \forall m_{i,1}\in\mathcal B,\ \forall a_i\in\mathcal B_{i,2}(b'_{i,1}).
\end{alignedat}
$}
\end{equation}

\section{Additional Evaluation of the ePBS--TEE Mechanism}
\label{app:tee-evaluation}

This appendix reports the numerical evaluation summarized in the main text.
The theoretical characterization in the main text focuses on the case in which both builders can condition on TEE continuation messages.
The additional slow--slow and fast--slow cases are reported as simulation diagnostics.

\subsection{Evaluation setting and scope}
\label{app:tee-evaluation-setting}
\label{app:tee-latency-robustness}

We evaluate the best-found report-capped ePBS-TEE designs on a uniform
  \(5\times 5\) value--bid grid over \([0,1]\), with value and bid support
  \(\{0,0.25,0.5,0.75,1\}\). The experiments use four i.i.d. value
  distributions obtained by discretizing continuous laws on this grid: Uniform,
  truncated Normal \(\mathcal{N}(0.5,0.2^2)\), Beta Low
  \(\mathrm{Beta}(2,5)\), and Beta High \(\mathrm{Beta}(5,2)\). For each
  distribution, we fix the stage-\(1\) discount at \(k_1=1\) and sweep
  \(k_2\in\{0.01,0.25,0.5,0.75,1.0\}\). We compare the resulting designs across
  the fast--fast, slow--slow, and fast--slow latency regimes, reporting proposer
  utility, builder utilities, and welfare efficiency. As external benchmarks, we
  include the best-proposer FPA equilibrium, truthful SPA, and the i.i.d.
  Myerson virtual-surplus benchmark.
The main-text theory corresponds to the fast-builder case.
For comparison only, we also compute the same finite-grid design problem under slow--slow and fast--slow timing restrictions.
In the slow--slow simulation, neither builder conditions the second-stage bid on the continuation message.
In the fast--slow simulation, one builder can condition on the continuation message and the other builder submits an ex-ante two-stage bid plan.
These cases are useful numerical stress tests, but they should be read as simulations rather than additional theoretical characterizations.

\subsection{Metric and covariate definitions}
\label{app:tee-metrics}

Fix a value environment $\mathsf F$, a latency profile $\ell\in\{\mathrm{FF},\mathrm{FS},\mathrm{SS}\}$, and a stage-$2$ discount $k_2$.
Let $\theta=(\beta_1,\mu)$ denote a computed report-capped ePBS--TEE design, and define the top stage-$1$ bid by
\[
b_1^{\max}(\mathbf b_1):=\max_i b_{i,1}.
\]
For a benchmark $M\in\{\mathrm{FPA},\mathrm{SPA},\mathrm{Myerson}\}$ with $\mathrm{Rev}^{M}>0$, the proposer-revenue gain is
\[
\Delta_{\mathrm{Rev}}(M)
:=
\frac{\mathrm{Rev}^{\theta}-\mathrm{Rev}^{M}}{\mathrm{Rev}^{M}},
\qquad
\mathrm{Rev}^{\theta}:=\mathrm{Rev}(\beta_1,\mu).
\]
We report $\E_\theta[b_1^{\max}]$ to distinguish early bidding intensity from revenue obtained only after deferral.

For the fast--slow profile, the builder-utility gap is
\[
\mathrm{Gap}^{\mathrm{FS}}_{\mathcal U}
:=
\left|
\mathcal U_{\mathrm{fast}}^\theta
-
\mathcal U_{\mathrm{slow}}^\theta
\right|.
\]
Let $Y_i^\ell$ denote builder $i$'s stage-$2$ observation.
For a fast builder, $Y_i^\ell=(a,m_i)$, with $m_i$ replaced by a null symbol when $a=\STOP$; for a slow builder, $Y_i^\ell=a$.
The information-leakage diagnostic is
\[
L_i^\ell
:=
I\!\left((v_j,b_{j,1});Y_i^\ell\right)
=
H(v_j,b_{j,1})-H(v_j,b_{j,1}\mid Y_i^\ell),
\]
reported in bits.

To describe the stopping rule, define the normalized bid-gap index
\begin{equation}
\label{eq:tee_eval_gap_index}
\mathrm{gap}_1(\mathbf b_1)
:=
\begin{cases}
0, & \text{if }b_1^{\max}(\mathbf b_1)=0,\\
\dfrac{|b_{1,1}-b_{2,1}|}{b_1^{\max}(\mathbf b_1)},
& \text{otherwise.}
\end{cases}
\end{equation}
A smaller $\mathrm{gap}_1$ corresponds to a more competitive stage-$1$ history.

For the stopping-rule regressions, let $o$ index a truthful on-path bid-history observation and define
\[
s_o=\mu(\mathrm{PROPOSE}\mid v_o,b_{1,o}),
\qquad
g_o=\mathrm{gap}_1(b_{1,o}),
\qquad
\bar b_o=\frac{b_{1,1,o}+b_{2,1,o}}{2}.
\]
The mean bid $\bar b_o$ is distinct from the top-bid statistic $b_1^{\max}$ above.
Let $c(o)$ denote the value-environment, latency, and $k_2$ instance containing observation $o$.
The three reach-weighted regression specifications are reported in the main text.

\subsection{Additional diagnostics.}\label{app:tee-stop-profile}
\Cref{fig:tee_uniform_iid_stop_profile} reports a single \(6\times 6\) uniform-i.i.d. instance as \(k_2\) varies. Dark cells are low-stop, high-continuation states, while yellow cells are states in which the TEE almost surely settles at stage \(1\). The low-stop region expands with \(k_2\), so continuation becomes more common without eliminating immediate settlement.

\begin{figure}[h]
  \centering
  \includegraphics[width=\linewidth]{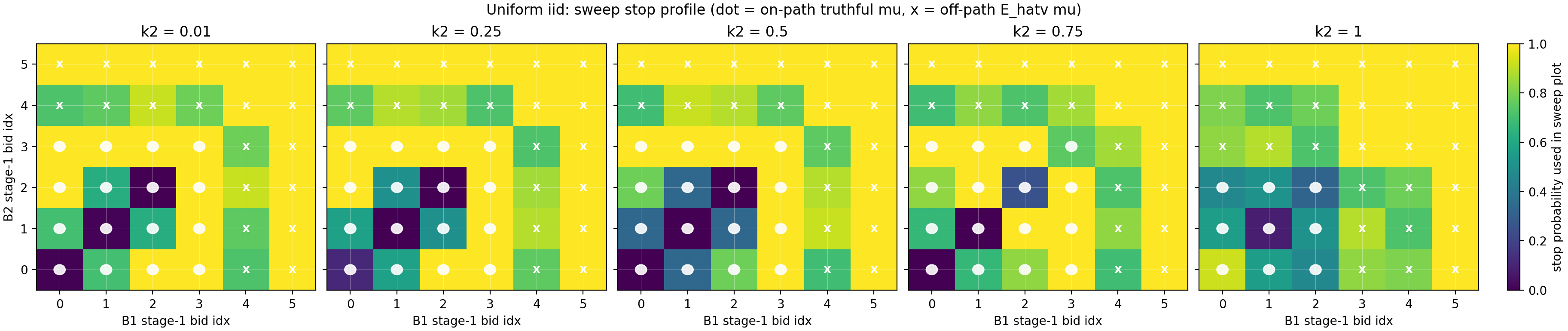}
  \vspace{-0.4cm}
  \caption{Uniform-i.i.d. \(6\times 6\) stop profile. Each panel plots the diagnostic stop statistic over stage-\(1\) bid pairs for a different \(k_2\): on-path cells (marked in \(\cdot\)) use the truthful reach-weighted stop probability, while off-path cells (marked with \(x\)) use the saved policy averaged over reported values at that bid pair. White dots mark truthful positive-reach bid pairs; crosses mark bid pairs with zero truthful reach.}
  \label{fig:tee_uniform_iid_stop_profile}
\end{figure}

\begin{table}[htbp]
  \centering
  \small
  \setlength{\tabcolsep}{4pt}
  \resizebox{0.9\textwidth}{!}{
  \begin{tabular}{lcccccc}
    \toprule
    Case & $k_2=0.01$ & $k_2=0.25$ & $k_2=0.5$ & $k_2=0.75$ & $k_2=1.0$ & Avg. \\
    \midrule
    fast-fast & (1.78, 1.41) & (1.78, 1.41) & (1.87, 1.44) & (1.90, 1.47) & (1.85, 1.06) & (1.83, 1.36) \\
    fast-slow:fast & (1.78, 1.36) & (1.78, 1.24) & (1.78, 1.41) & (1.78, 1.43) & (1.78, 1.48) & (1.78, 1.38) \\
    fast-slow:slow & (1.78, 1.38) & (1.78, 1.54) & (1.78, 1.12) & (1.78, 1.43) & (1.78, 1.26) & (1.78, 1.35) \\
    slow-slow & (1.78, 1.48) & (1.78, 1.38) & (1.78, 1.45) & (1.78, 1.40) & (1.78, 1.21) & (1.78, 1.38) \\
    avg & (1.78, 1.40) & (1.78, 1.39) & (1.80, 1.36) & (1.81, 1.43) & (1.80, 1.25) & (1.79, 1.37) \\
    \bottomrule
  \end{tabular}
  }
  \caption{Each cell reports \((H(v_j,b_{j,1}),L_i^\ell)\) in bits. The entropy term uses the raw prior on the opponent-side pair \((v_j,b_{j,1})\). The leakage term uses the observer's stage-\(2\) observation \(Y_i^\ell\): continuation plus recommendation for fast builders, and continuation alone for slow builders. All entries are averaged across the four value environments.}
  \label{tab:tee_info_leak}
\end{table}

The slow--slow and fast--slow computations use the same setting.
The purpose is descriptive: these simulations ask whether the revenue improvement from TEE commitment survives when some builders cannot react to continuation messages. Taken together, Figure \ref{fig:tee_design_overview}, Figure \ref{fig:tee_uniform_iid_stop_profile}, and Table~\ref{tab:tee_info_leak} provide evidence on this question. The diagnostics suggest that the answer is yes. Within the context of this finite benchmark, the answer is yes.
Average proposer-revenue gain relative to FPA is $26.05\%$ in the slow--slow simulation and $24.54\%$ in the fast--slow simulation.
Average top stage-$1$ bids are higher than in FPA by $16.78\%$ in the slow--slow simulation and $16.19\%$ in the fast--slow simulation.
Thus the revenue gains are not driven solely by the fast-builder information structure; the commitment to a stop-and-message policy also changes first-round bidding incentives.

The fast--slow simulation is useful for checking whether the TEE policy creates a large payoff premium for the builder who can condition on continuation messages.
In the computed benchmark, the average fast--slow utility gap is only $0.006$.
The information-leakage diagnostic explains why the gap is small.
The slow builder's continuation event alone reveals $1.35$ bits about the opponent-side pair $(v_j,b_{j,1})$, close to the fast builder's $1.38$ bits.
The continuation decision is already a strong public signal, so the private recommendation adds little payoff-relevant asymmetry in this benchmark.

\section{Stage-1 Bids and Proposer Stage-1 Proposal Region in ePBS}\label{app:epbs_bid}

Figures in this section demonstrate the stage-1 bid profile for each builder.

\begin{figure}[p]
  \centering
  \includegraphics[width=\textwidth,height=0.95\textheight,keepaspectratio]{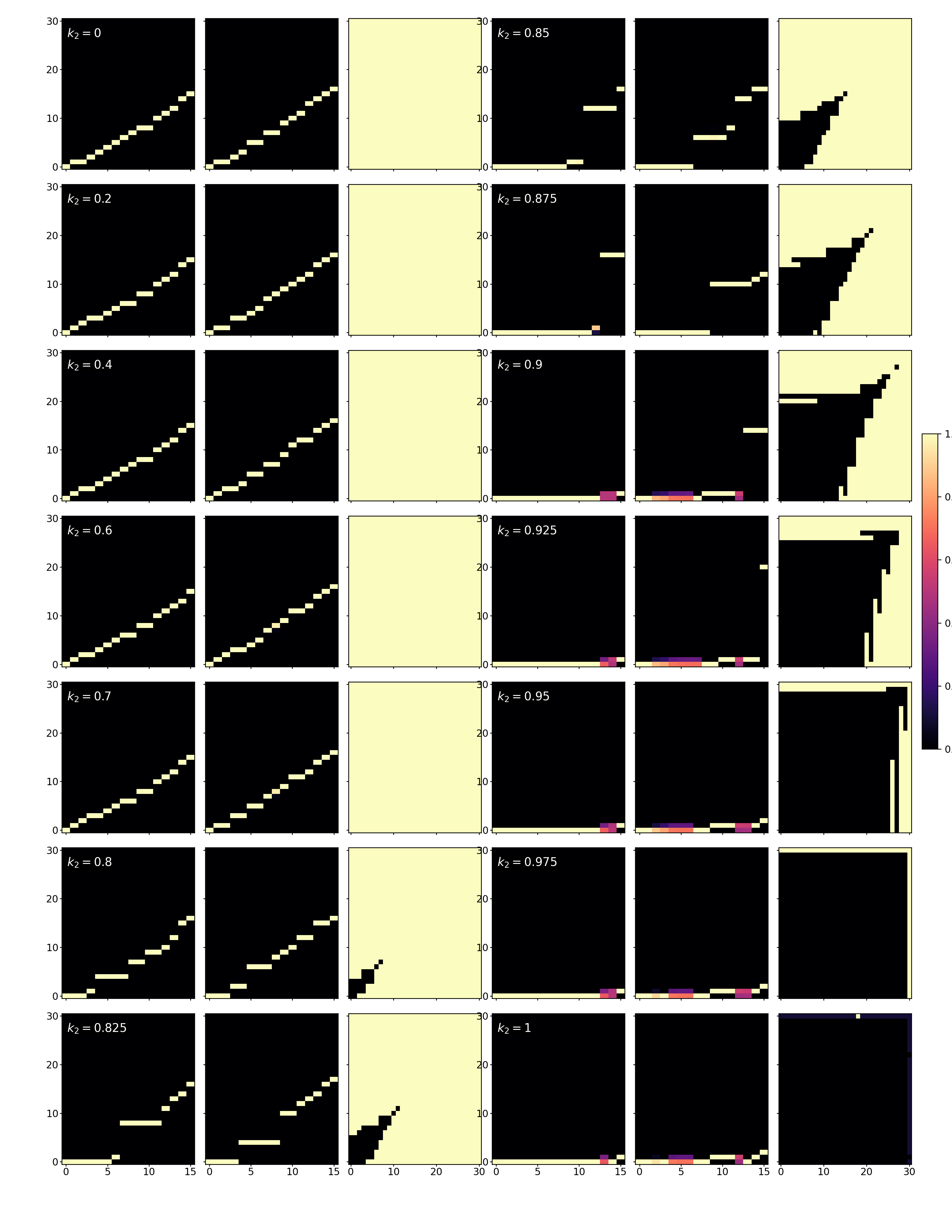}
  \caption{ePBS fast-fast behavior heatmap. The image is divided into two columns. Each column consists of three subfigures: the left panels show Titan's stage-1 bid heatmap, the middle panels show BuilderNet's stage-1 bid profile, and the right panels show the stage-1 proposal heatmap given observed stage-1 bid history.}
\end{figure}
\clearpage

\begin{figure}[p]
  \centering
  \includegraphics[width=\textwidth,height=0.95\textheight,keepaspectratio]{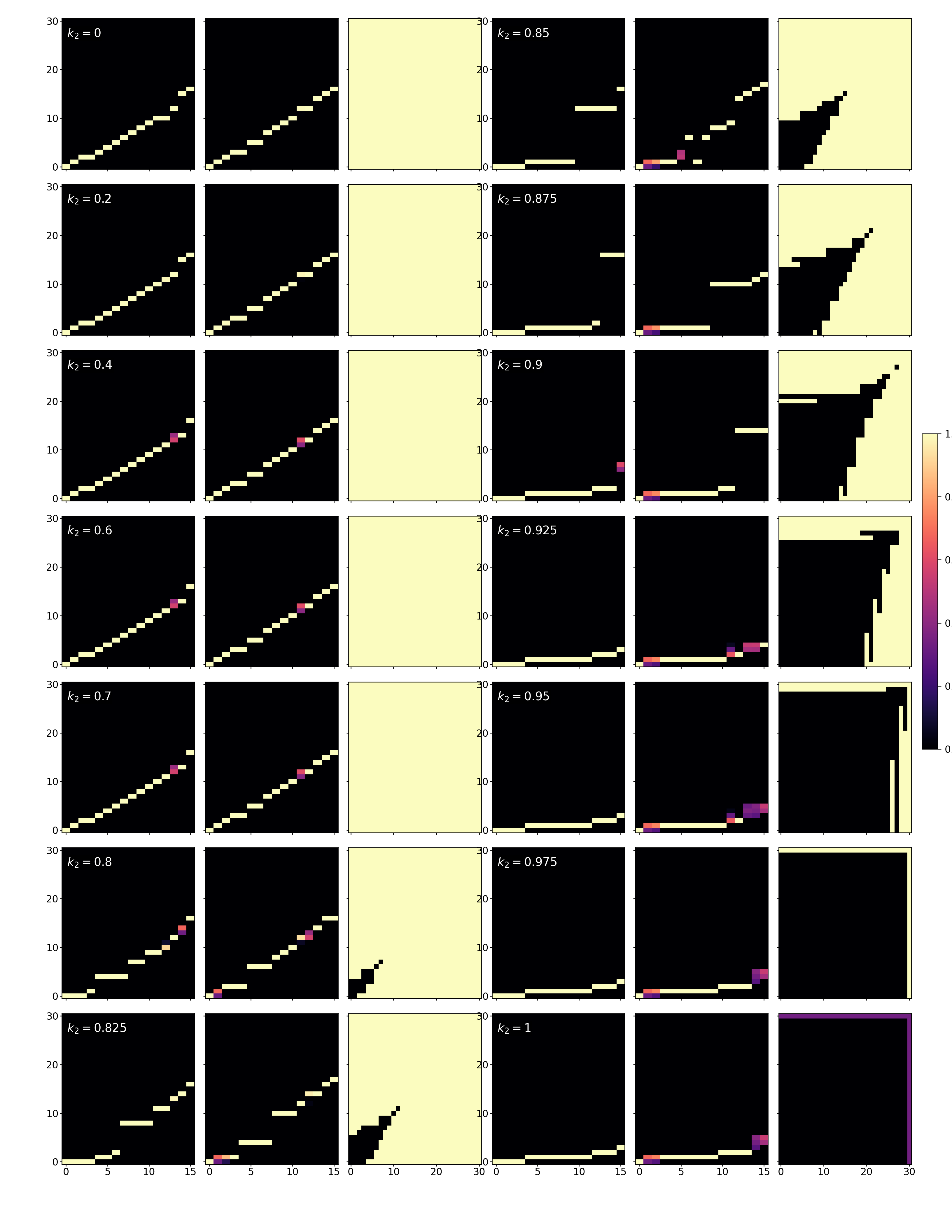}
  \caption{ePBS slow-fast. Each row is one $k_2$ value; within each row, the panels show Titan's stage-1 bid profile, BuilderNet's stage-1 bid profile, and the stage-1 stop profile.}
\end{figure}
\clearpage

\begin{figure}[p]
  \centering
  \includegraphics[width=\textwidth,height=0.95\textheight,keepaspectratio]{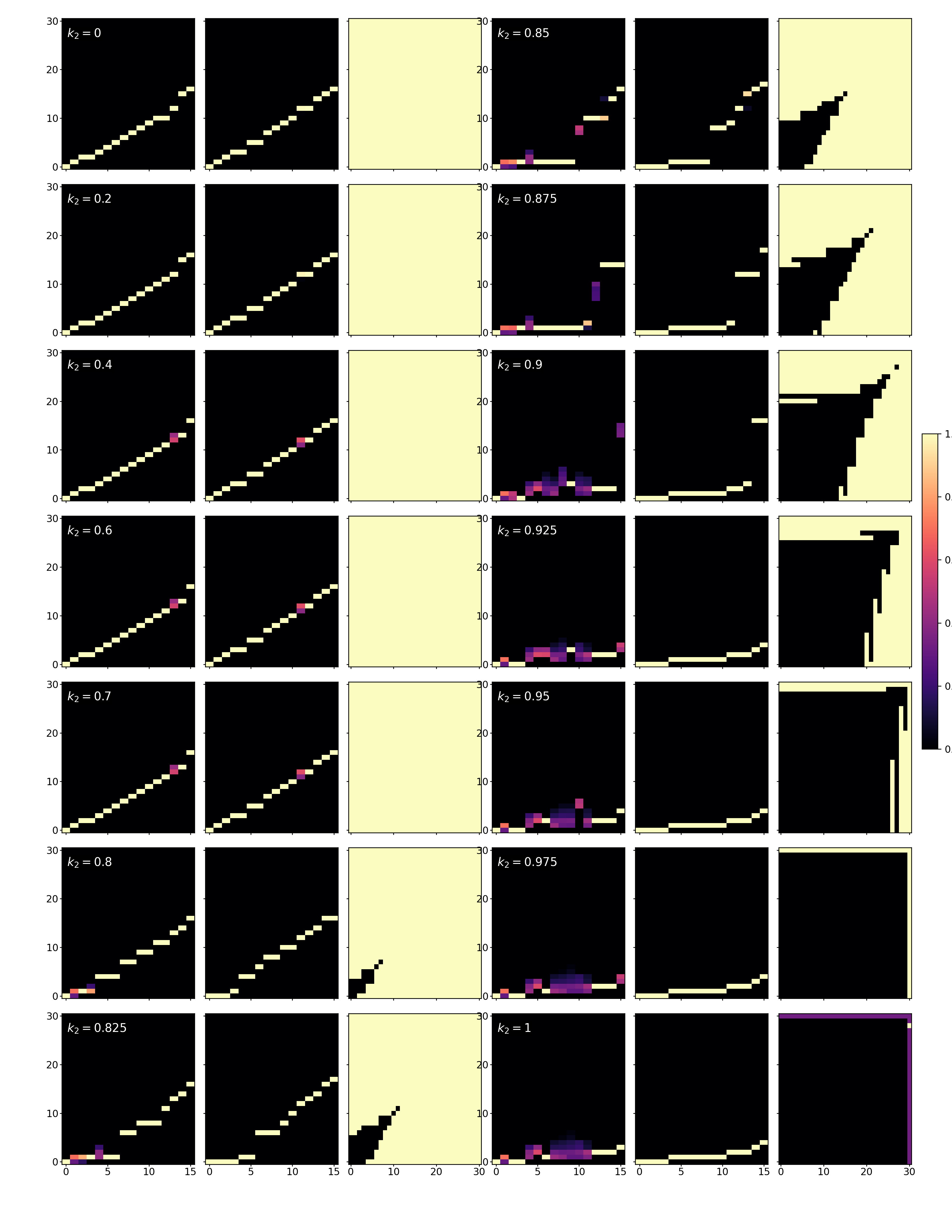}
  \caption{ePBS fast-slow. Each row is one $k_2$ value; within each row, the panels show Titan's stage-1 bid profile, BuilderNet's stage-1 bid profile, and the stage-1 stop profile.}
\end{figure}

\clearpage

\section{Optimal ePBS-TEE Design in other Latency Profiles}

\subsection{Optimal ePBS-TEE with fast-slow builders}
\label{app:epigraph-reduction-fast-slow-hidden}

We now study the fast-slow optimal TEE ePBS environment in which builder \(i\)
is fast and builder \(j\) is slow. The type grid \(\mathcal V\) and bid grid
\(\mathcal B\) are the same as in the baseline model, with feasible first-stage
and second-stage bid sets
\[
\mathcal B_{k,1}:=\mathcal B,
\qquad
\mathcal B_{k,2}(b_{k,1}):=\{b\in\mathcal B:b_{k,1}\le b\},
\qquad k\in\{i,j\}.
\]
The timing asymmetry is that builder \(j\)'s stage-\(2\) bid is fixed before
builder \(j\) observes either the TEE's stop/continue realization or any continuation
recommendation. Hence builder \(j\)'s two bids must be chosen ex ante as a
single bid plan, while builder \(i\) can still react to a continuation message
at stage \(2\).

\paragraph{Primitive objects.}
Define the slow builder's feasible bid-plan set by
\[
\Lambda_j
:=
\bigl\{
(b_{j,1},b_{j,2})\in\mathcal B^2:
b_{j,1}\le b_{j,2}
\bigr\}.
\]
A slow-builder strategy is therefore a distribution
\[
\beta_j(\cdot\mid v_j)\in\Delta(\Lambda_j).
\]
Its stage-\(1\) marginal is
\[
\beta_{j,1}(b_{j,1}\mid v_j)
:=
\sum_{b_{j,2}:(b_{j,1},b_{j,2})\in\Lambda_j}
\beta_j((b_{j,1},b_{j,2})\mid v_j).
\]
The fast builder still chooses only a stage-\(1\) bid according to
\[
\beta_{i,1}(\cdot\mid v_i)\in\Delta(\mathcal B_{i,1}),
\]
so the induced stage-\(1\) bid-history distribution is
\[
\beta_1(b_{i,1},b_{j,1}\mid v_i,v_j)
:=
\beta_{i,1}(b_{i,1}\mid v_i)\beta_{j,1}(b_{j,1}\mid v_j).
\]

\paragraph{Direct continuation kernel.}
Because only builder \(i\) can use a continuation recommendation strategically,
the direct continuation kernel recommends only builder \(i\)'s stage-\(2\) bid:
\[
\mu^{FS}:\mathcal V^2\times \mathcal B^2
\to
\Delta\!\Bigl(\{\STOP\}\cup(\{\CTN\}\times\mathcal B)\Bigr).
\]
Thus \(\mu^{FS}(\STOP\mid \hat v,\mathbf b_1)\) is the probability of stopping
at stage \(1\). The continuation probability is
\[
\mu^{FS}(\CTN,m_i\mid \hat v,\mathbf b_1),
\]
which corresponds to recommending \(m_i\in\mathcal B\) to builder \(i\).

\begin{lemma}[Truthful direct-report reduction for fast-slow ePBS-TEE]
\label{lem:fs-direct-report-reduction}
Every PBE outcome of a fast-slow ePBS-TEE game is bidding-equivalent to a
truthful direct-report representation \((\beta_{i,1},\beta_j,\mu^{FS})\) in
which both builders report their true types on the direct path, the slow builder
chooses an ex-ante bid plan in \(\Lambda_j\), and the fast builder obeys every
reached continuation recommendation.
\end{lemma}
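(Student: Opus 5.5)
We follow the proof of \Cref{lem:truthful_report_revelation_epbs_tee}, folding the original report strategy and the fast builder's continuation response into a new committed kernel. The slow builder's stage-$2$ behavior is handled as part of the stage-$1$ plan rather than through the kernel.

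Fix a fast-slow ePBS-TEE game $\Gamma^\dagger(\mu)$ and a PBE $\sigma=(\beta_1,\alpha,\beta_2,\lambda)$.

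\emph{Step 1: the slow builder's plan.} Builder $j$ is slow, so $I_{j,2}=(v_j,b_{j,1},\texttt{aux}_{j,1})$ contains no realization of $\mu$. The pair (stage-$1$ bid, stage-$2$ bid) therefore has a joint law conditional on $v_j$ alone. We define $\beta_j(\cdot\mid v_j)\in\Delta(\Lambda_j)$ as the joint law of $(b_{j,1},b_{j,2})$ induced by $\beta_{j,1}$, $\alpha_j$ and $\beta_{j,2}$, summing out the report $\hat v_j$. Feasibility $b_{j,2}\ge b_{j,1}$ gives $\operatorname{supp}\beta_j\subseteq\Lambda_j$. The report $\hat v_j$ matters only through $\mu$, so we absorb $\alpha_j$ into the kernel: the new kernel conditions on the true $v_j$ and averages the old kernel over $\alpha_j(\hat v_j\mid v_j,b_{j,1})$. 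This averaging requires $\hat v_j$ to be independent of $b_{j,2}$ given $(v_j,b_{j,1})$. If the original strategy correlates them, we instead let the kernel condition on the conditional law of $\hat v_j$ given $(v_j,b_{j,1},b_{j,2})$.

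This correlation is the delicate point. The kernel does not observe $b_{j,2}$, so the marginalized kernel $\mu^{FS}(\cdot\mid v,\mathbf b_1)$ must still reproduce the joint law of $(\text{stop/continue}, m_i, b_{j,2})$. It does, because $b_{j,2}$ and the kernel's draw are conditionally independent given $(v_j,b_{j,1},\hat v_j)$. Averaging over $\hat v_j$ then yields a joint law that agrees with the original once stage-$1$ objects are fixed. Any dependence of $b_{j,2}$ on $\hat v_j$ is preserved by keeping the plan-level randomization inside $\beta_j$, together with an explicit mixture representation.

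\emph{Step 2: the fast builder's message.} We set
\[
\mu^{FS}(\CTN,b_{i,2}\mid v,\mathbf b_1):=\sum_{\hat v}\alpha(\hat v\mid v,\mathbf b_1)\sum_{\mathbf m}\mu(\CTN,\mathbf m\mid\hat v,\mathbf b_1)\,\beta_{i,2}(b_{i,2}\mid v_i,b_{i,1},\hat v_i,m_i),
\]
and define the stop probability analogously. The recommendation lies in $\mathcal B_{i,2}(b_{i,1})$ because the original $\beta_{i,2}$ was feasible. Taking $\beta_{i,1}$ unchanged, this construction delivers bidding equivalence.

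\emph{Step 3: incentives.} We check that truthful reporting, the plan $\beta_j$, and obedience of builder $i$ form a PBE of the new game. Any joint deviation of builder $i$ in (bid, report, response rule) under $\mu^{FS}$ can be replicated in the original game: builder $i$ submits the original report through $\alpha_i$, privately simulates the original recommendation, and then applies the deviation to the simulated response. For the slow builder, any alternative plan $(b'_{j,1},b'_{j,2})\in\Lambda_j$ combined with a report $\hat v_j$ corresponds to the original deviation that uses the same bids and draws the report from $\alpha_j(\cdot\mid\hat v_j,b'_{j,1})$. Each deviation therefore induces the same terminal-bid distribution as an original deviation, which is unprofitable by the PBE property of $\sigma$. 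At unreached information sets we assign feasible beliefs, with the opponent bidding $\bar b$, exactly as in the canonical completion, and we define the kernel arbitrarily subject to admissibility.
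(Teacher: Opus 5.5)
Your route is the paper's: put the slow builder's two bids into an ex-ante plan $\beta_j(\cdot\mid v_j)\in\Delta(\Lambda_j)$, fold the reports and the fast builder's continuation response into $\mu^{FS}$ as in the fast-fast construction, and conclude by replication. You also correctly locate where fast-slow differs from fast-fast. Since $\hat v_j\in\texttt{aux}_{j,1}$ lies in the slow builder's stage-$2$ information set, the original PBE may correlate $\hat v_j$ with $b_{j,2}$ given $(v_j,b_{j,1})$.

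\textbf{The gap is your resolution of this point.} In any truthful triple $(\beta_{i,1},\beta_j,\mu^{FS})$, $b_{j,2}$ is drawn from $\beta_j$, while the stop/continue event and $m_i$ are drawn from $\mu^{FS}(\cdot\mid v,\mathbf b_1)$. So, conditional on $(v,\mathbf b_1)$, the slow builder's stage-$2$ bid is necessarily independent of the kernel's draw. In the original game they are independent only conditional on $\hat v_j$. Averaging over $\hat v_j$ gives a mixture of product laws, which in general is not a product law. The sentence beginning ``It does, because'' is therefore false.

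Concretely, suppose that after $b_{j,1}$ type $v_j$ either reports $a$ and bids low, or reports $b$ and bids high, each with probability $1/2$. Suppose continuation ends with the fast builder bidding $x$ after report $a$ and $y\neq x$ after report $b$. Then the original law of $(b_{i,2},b_{j,2})$ is supported on $\{(x,\text{low}),(y,\text{high})\}$, and no truthful triple can reproduce it.

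Your fallback, letting the kernel condition on the law of $\hat v_j$ given $b_{j,2}$, is outside the admissible class. $\mu^{FS}$ is a function of $(\hat v,\mathbf b_1)$ only, and the TEE must stop or recommend without seeing $b_{j,2}$. The ``explicit mixture representation'' has no slot in the triple.

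The same defect breaks Step~3. Its chain is: new-game deviation payoff $=$ payoff of a replicating original deviation $\le$ original equilibrium payoff $=$ new truthful payoff. Once the joint law of (continuation event, $m_i$, $b_{j,2}$) changes, both equalities can fail.
\begin{itemize}
\item The fast builder's new-game deviations face a $b_{j,2}$ that is independent of its recommendation. Your simulation strategy does not reproduce that law in the original game.
\item The slow builder's truthful plan now pairs the kernel's behaviour after one report with the stage-$2$ bid used after the other. Its truthful payoff can therefore fall strictly below its original equilibrium value.
\end{itemize}

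To close the argument you need one of two things.
\begin{itemize}
\item An extra hypothesis, for instance that in the original PBE $\hat v_j$ is conditionally independent of $b_{j,2}$ given $(v_j,b_{j,1})$. This holds if the slow builder's stage-$2$ bid ignores its own report, or if its report is pure given $(v_j,b_{j,1})$. Under this hypothesis your construction and replication go through.
\item A richer slow-builder message on which $\mu^{FS}$ may condition, such as a label for the component of its plan mixture. This changes the form of the representation the lemma asserts.
\end{itemize}

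The paper's own proof takes your route and absorbs the slow builder's stage-$2$ behaviour into $\beta_j$ ``as in the fast-fast construction'' without addressing this correlation. The correlation is harmless in fast-fast only because there both stage-$2$ bids are generated inside the kernel. So the paper does not supply the missing step either.
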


\begin{proof}
The argument is the fast-slow analogue of
\Cref{lem:app-tee-direct-reduction}. The slow builder's stage-\(2\)
bid is already fixed in the ex-ante plan, so the continuation response can be
absorbed into the plan kernel \(\beta_j(\cdot\mid v_j)\). The fast builder's
report and reached continuation response can be absorbed into the direct kernel
\(\mu^{FS}\), exactly as in the fast-fast construction. This produces the same
distribution over stop events and realized bid profiles conditional on each
type profile. Sequential rationality is preserved by the same revelation
argument: any deviation from truthful reporting or from obeying a reached fast
recommendation in the direct representation induces a terminal-bid distribution
that is available through a deviation in the original PBE.
\end{proof}

\paragraph{Fast-builder block.}
Fix \(v_i,\hat v_i\in\mathcal V\) and a feasible deviating first-stage bid
\(b'_{i,1}\in\mathcal B_{i,1}\). Since builder \(i\) is fast, after
observing \(m_i\) builder \(i\) can choose a continuation rule
\[
\delta_i:\mathcal B\to\mathcal B,
\qquad
\delta_i(m_i)\in\mathcal B_{i,2}(b'_{i,1})
\quad
\forall m_i\in\mathcal B.
\]
Let
\[
\mathcal D_i(b'_{i,1})
:=
\left\{
\delta_i:\mathcal B\to\mathcal B
\;\middle|\;
\delta_i(m_i)\in\mathcal B_{i,2}(b'_{i,1})
\ \forall m_i\in\mathcal B
\right\}.
\]
Define the fast builder's stage-\(1\) deviation payoff by
\begin{equation}
\label{eq:appendix-hidden-fast-deviation-payoff}
U_i^{FS}(v_i;b'_{i,1},\hat v_i,\delta_i\mid \beta_j,\mu^{FS})
=
R_i^{FS}(v_i;b'_{i,1},\hat v_i)
+
\sum_{m_i\in\mathcal B}
G_i^{FS}(v_i,b'_{i,1},\hat v_i,m_i,\delta_i(m_i)\mid \beta_j,\mu^{FS}),
\end{equation}
where
\begin{equation}
\resizebox{0.8\linewidth}{!}{$
\begin{aligned}
R_i^{FS}(v_i;b'_{i,1},\hat v_i)
:=\;&\sum_{v_j\in\mathcal V}
\sum_{(b_{j,1},b_{j,2})\in\Lambda_j}
\mathsf{F}(v_i,v_j)\beta_j((b_{j,1},b_{j,2})\mid v_j)
\\
&\cdot
\mu^{FS}(\STOP\mid (\hat v_i,v_j),(b'_{i,1},b_{j,1}))
u_{i,1}((b'_{i,1},b_{j,1});v_i).
\end{aligned}
$}
\label{eq:appendix-hidden-fast-R}
\end{equation}
\begin{equation}
\resizebox{0.9\linewidth}{!}{$
\begin{aligned}
G_i^{FS}(v_i,b'_{i,1},\hat v_i,m_i,a_i\mid \beta_j,\mu^{FS})
:=\;&\sum_{v_j\in\mathcal V}
\sum_{(b_{j,1},b_{j,2})\in\Lambda_j}
\mathsf{F}(v_i,v_j)\beta_j((b_{j,1},b_{j,2})\mid v_j)
\\
&\cdot
\mu^{FS}(\CTN,m_i\mid (\hat v_i,v_j),(b'_{i,1},b_{j,1}))
u_{i,2}((a_i,b_{j,2});v_i).
\end{aligned}
$}
\label{eq:appendix-hidden-fast-G}
\end{equation}
Let \(\delta_i^*(m)=m\) denote obedient continuation play. The truthful payoff
of builder \(i\) is
\[
T_i^{FS}(v_i)
:=
\E_{b_{i,1}\sim\beta_{i,1}(\cdot\mid v_i)}
\brac{
U_i^{FS}(v_i;b_{i,1},v_i,\delta_i^*\mid \beta_j,\mu^{FS})
}.
\]
The fast builder's exact stage-\(1\) IC family is
\begin{equation}
\label{eq:appendix-hidden-fast-IC}
\resizebox{0.8\linewidth}{!}{$
\displaystyle
T_i^{FS}(v_i)
\ge
U_i^{FS}(v_i;b'_{i,1},\hat v_i,\delta_i\mid \beta_j,\mu^{FS})
\qquad
\forall \hat v_i\in\mathcal V,\ \forall b'_{i,1}\in\mathcal B_{i,1},\ \forall \delta_i\in\mathcal D_i(b'_{i,1}).
$}
\end{equation}

\begin{proposition}[Fast-builder stage-\(2\) obedience is implied by original-form stage-\(1\) IC]
\label{prop:hidden-fast-stage2-implied}
Fix \((\beta_{i,1},\beta_j,\mu^{FS})\). Suppose the original-form stage-\(1\) IC
constraints \eqref{eq:appendix-hidden-fast-IC} hold for builder \(i\). Then the
fast builder's stage-\(2\) obedience constraints are redundant: at every reached
continuation information set \((v_i,b_{i,1},m_i)\), obedience is
sequentially optimal against every feasible stage-\(2\) deviation.
\end{proposition}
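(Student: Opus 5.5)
The plan is to show that each stage-\(2\) obedience constraint is a one-coordinate specialization of \eqref{eq:appendix-hidden-fast-IC}, in the same way that \eqref{eq:stage2_obedience_full} was shown to be a special case of \eqref{eq:stage1_IC_full} in the fast-fast block. The additive message-by-message decomposition \eqref{eq:appendix-hidden-fast-deviation-payoff} then turns the difference of the two payoffs into a single continuation term.

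\textbf{Step 1: fix a reached information set.} Let \((v_i,b_{i,1},m_i^0)\) be a reached continuation information set. Reachedness means \(b_{i,1}\in\operatorname{supp}\beta_{i,1}(\cdot\mid v_i)\) and the truthful reach weight of \(m_i^0\) is positive. Fix any feasible stage-\(2\) deviation \(a_i\in\mathcal B_{i,2}(b_{i,1})\). Up to the positive normalizing constant given by the reach weight, the obedience constraint at this set reads
\[
G_i^{FS}(v_i,b_{i,1},v_i,m_i^0,m_i^0\mid\beta_j,\mu^{FS})\ \ge\ G_i^{FS}(v_i,b_{i,1},v_i,m_i^0,a_i\mid\beta_j,\mu^{FS}).
\]
The conditional expectation over \((v_j,b_{j,1},b_{j,2})\) given \(m_i^0\) is exactly the \(G_i^{FS}\) sum divided by the reach weight.

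\textbf{Step 2: build the one-message deviation.} Keep the same stage-\(1\) bid \(b'_{i,1}=b_{i,1}\) and the truthful report \(\hat v_i=v_i\). Define \(\delta_i(m)=\delta_i^*(m)\) for every \(m\neq m_i^0\), and \(\delta_i(m_i^0)=a_i\). This rule lies in \(\mathcal D_i(b_{i,1})\). Admissibility guarantees that every recommendation with positive probability after \(b_{i,1}\) satisfies \(m\ge b_{i,1}\), so the obedient coordinates are feasible; at unsendable messages the coordinate is irrelevant.

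By \eqref{eq:appendix-hidden-fast-deviation-payoff}:
\begin{itemize}
\item the stop term \(R_i^{FS}\) is identical under obedience and under \(\delta_i\);
\item all continuation terms with \(m\neq m_i^0\) coincide, because the slow builder's plan \(\beta_j\) does not depend on builder \(i\)'s stage-\(2\) behavior;
\item the only difference is therefore \(G_i^{FS}(\cdot,m_i^0,m_i^0)-G_i^{FS}(\cdot,m_i^0,a_i)\).
\end{itemize}

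\textbf{Step 3: the main obstacle.} The truthful payoff \(T_i^{FS}(v_i)\) is an expectation over \(b_{i,1}\sim\beta_{i,1}(\cdot\mid v_i)\). The deviation in Step 2 conditions on one particular support bid, so the constraint does not directly compare like with like.

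I would first observe that \eqref{eq:appendix-hidden-fast-IC} applied with obedient \(\delta_i^*\) at each support bid \(b\) gives \(T_i^{FS}(v_i)\ge U_i^{FS}(v_i;b,v_i,\delta_i^*)\). Since \(T_i^{FS}(v_i)\) is the \(\beta_{i,1}\)-average of these values, equality must hold at every support bid. Hence \(T_i^{FS}(v_i)=U_i^{FS}(v_i;b_{i,1},v_i,\delta_i^*)\) for each \(b_{i,1}\) in the support, which is the support-wise form used in \eqref{eq:stage1_IC_full}.

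\textbf{Step 4: conclude.} Combining this equality with \eqref{eq:appendix-hidden-fast-IC} at the deviation from Step 2 gives
\[
U_i^{FS}(v_i;b_{i,1},v_i,\delta_i^*)\ \ge\ U_i^{FS}(v_i;b_{i,1},v_i,\delta_i).
\]
By Step 2 this is exactly the displayed \(G_i^{FS}\) inequality. Dividing by the positive reach weight yields sequential optimality of obedience at \((v_i,b_{i,1},m_i^0)\) against every feasible \(a_i\).
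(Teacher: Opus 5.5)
Your proposal is correct and follows the paper's proof essentially step for step. It uses the same one-message deviation rule (the paper's \(\delta_i^{m_i,b_i'}\)), the same support-wise equality \(T_i^{FS}(v_i)=U_i^{FS}(v_i;b_{i,1},v_i,\delta_i^*\mid\beta_j,\mu^{FS})\), which you justify more explicitly through the averaging argument, and the same cancellation of every continuation term with message other than \(m_i\). Like the paper, you rely on admissibility of \(\mu^{FS}\) (every sent recommendation satisfies \(m_i\ge b_{i,1}\)) to make the obedient coordinates feasible; the fast-slow base set \(\mathcal X_0^{FS}\) leaves this assumption implicit.
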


\begin{proof}
Fix \(v_i\), an on-path bid \(b_{i,1}\) in the support of
\(\beta_{i,1}(\cdot\mid v_i)\), a continuation recommendation \(m_i\), and
a feasible alternative \(b_i'\in\mathcal B_{i,2}(b_{i,1})\). Define
\[
\delta_i^{m_i,b_i'}(\tilde m_i)
:=
\begin{cases}
b_i', & \text{if } \tilde m_i=m_i,\\
\tilde m_i, & \text{otherwise.}
\end{cases}
\]
Take \(\hat v_i=v_i\) and \(b'_{i,1}=b_{i,1}\) in
\eqref{eq:appendix-hidden-fast-IC}. Because
the IC family compares truthful play to every pure stage-\(1\) bid deviation,
every bid in the support of \(\beta_{i,1}(\cdot\mid v_i)\) attains the same
maximal truthful value. Hence
\[
T_i^{FS}(v_i)
=
U_i^{FS}(v_i;b_{i,1},v_i,\delta_i^*\mid \beta_j,\mu^{FS}).
\]
Applying \eqref{eq:appendix-hidden-fast-IC} to the deviation
\((b_{i,1},v_i,\delta_i^{m_i,b_i'})\) gives
\[
U_i^{FS}(v_i;b_{i,1},v_i,\delta_i^*\mid \beta_j,\mu^{FS})
\ge
U_i^{FS}(v_i;b_{i,1},v_i,\delta_i^{m_i,b_i'}\mid \beta_j,\mu^{FS}).
\]
Since the two continuation rules differ only at the single message \(m_i\),
all terms corresponding to \(\tilde m_i\neq m_i\) cancel. The remaining
inequality is exactly the fast builder's stage-\(2\) obedience inequality at
\((v_i,b_{i,1},m_i)\). If \(b_{i,1}\) is outside the support of
\(\beta_{i,1}(\cdot\mid v_i)\), the corresponding obedience constraint is
vacuous.
\end{proof}

\paragraph{Canonical fast-builder completion.}
For each candidate triple \((\beta_{i,1},\beta_j,\mu^{FS})\), define the
truthful-path reach indicator
\begin{equation}
\resizebox{\linewidth}{!}{$
\begin{aligned}
\rho_i^{FS}(v_i,b_{i,1},m_i;\beta_{i,1},\beta_j,\mu^{FS})
:=\;&
\beta_{i,1}(b_{i,1}\mid v_i)
\\
&\cdot
\sum_{v_j\in\mathcal V}\sum_{b_{j,1}\in\mathcal B}
\\
&\hspace{1.2em}\cdot\,
\!\mathsf{F}(v_i,v_j)\beta_{j,1}(b_{j,1}\mid v_j)\,
\mu^{FS}(\CTN,m_i\mid (v_i,v_j),(b_{i,1},b_{j,1})).
\end{aligned}
$}
\label{eq:appendix-hidden-reach-indicator}
\end{equation}
We use the canonical completion
\begin{equation}
\label{eq:appendix-fs-canonical-completion}
b_{i,2}^{v,FS}(v_i,b_{i,1},m_i)
:=
\begin{cases}
m_i,
& \text{if }\rho_i^{FS}(v_i,b_{i,1},m_i;\beta_{i,1},\beta_j,\mu^{FS})>0,\\
v_i,
& \text{otherwise.}
\end{cases}
\end{equation}
This completion affects only the truthful fast builder's action at continuation
histories reached by a slow-builder deviation. At every truthful-path reached
history it coincides with the recommendation. At an unreached history generated
by a slow-builder deviation, the fast builder's first-stage bid is still drawn
from the support of \(\beta_{i,1}(\cdot\mid v_i)\). As in the fast-fast
case, stage-\(1\) rationality rules out selected
on-path overbidding, so \(v_i\in\mathcal B_{i,2}(b_{i,1})\) at the relevant
histories.

\begin{proposition}[Canonical completion is without loss in fast-slow]
\label{prop:fs-canonical-completion-wlog}
Under a similar selected-equilibrium convention in the fast-fast case, the
fast-slow optimal value is unchanged when the truthful fast builder's unreached
continuation action is fixed to \eqref{eq:appendix-fs-canonical-completion}.
\end{proposition}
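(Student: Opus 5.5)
The argument has two inclusions and parallels the fast-fast canonical-completion step in \Cref{thm:fastfast}. Write $V^{FS}$ for the fast-slow optimal value with unrestricted feasible off-path beliefs and completions. Write $V^{FS,v}$ for the value when the truthful fast builder's unreached continuation action is fixed to \eqref{eq:appendix-fs-canonical-completion}.

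\textbf{Step 1: the canonical completion is supportable, so $V^{FS,v}\le V^{FS}$.} Fix any $(\beta_{i,1},\beta_j,\mu^{FS})$ that is feasible once the completion is fixed to $b^{v,FS}_{i,2}$. Take an unreached truthful fast-builder information set $(v_i,b_{i,1},m_i)$. Choose any feasible history belief consistent with $\mu^{FS}$, and let the action belief put mass one on the slow builder bidding $\bar b=\max\mathcal B$. This is a feasible slow-builder plan, since $\bar b\ge b_{j,1}$.

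Under this belief, every bid below $\bar b$ loses and earns zero, and winning at or above $\bar b$ earns at most $v_i-\bar b\le0$. Bidding $v_i$ is therefore a weak best response. It is feasible because the canonical support restriction gives $b_{i,1}\le v_i$ for every $b_{i,1}$ in the support of $\beta_{i,1}(\cdot\mid v_i)$.

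The on-path distribution and $\mathrm{Rev}$ do not depend on this completion. Hence any such triple is implementable as a PBE, which gives the inclusion.

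\textbf{Step 2: the canonical completion minimizes every deviation payoff, so $V^{FS}\le V^{FS,v}$.} Take an optimal assessment with an arbitrary sequentially rational off-path completion $\chi_i$. The completion enters only the slow builder's stage-$1$ IC, through the payoff of a deviating plan $(b'_{j,1},b'_{j,2})$ and report $\hat v_j$ that reaches an unreached fast-builder information set. The fast builder's own IC block \eqref{eq:appendix-hidden-fast-IC} involves only reached information sets of the honest slow builder, whose plan is fixed, so it is unaffected.

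\begin{enumerate}[(i)]
\item Reduce to undominated slow-builder plans, meaning $b'_{j,2}\le v_j$. This is the main obstacle: without it, monotonicity in the opponent's bid fails, since a slow builder who overbids may prefer to be outbid. I would first show that weakly dominated overbidding plans can be dropped from the deviation side without changing the IC constraint. If that fails, I would replace the plan by its truncation $\min\{b'_{j,2},v_j\}$ and argue that the truncation weakly raises the deviator's stage-$2$ payoff against any opponent bid.
\item For an undominated plan, the first-price payoff $u_{j,2}((b'_{j,2},b_{i,2});v_j)$ is weakly decreasing in $b_{i,2}$.
\item Any sequentially rational fast-builder completion satisfies $\chi_i\le v_i$, because it uses an undominated action. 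Therefore replacing $\chi_i$ by $v_i$ weakly lowers every deviation payoff.
\end{enumerate}

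Term-by-term comparison then shows that the deviation side under the canonical completion is weakly smaller than under $\chi_i$. The original IC thus implies the canonical IC. The optimal $(\beta_{i,1},\beta_j,\mu^{FS})$ remains feasible under the canonical completion with the same revenue.

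\textbf{Step 3: conclusion.} Combining the two inclusions gives $V^{FS}=V^{FS,v}$. This follows the selected-equilibrium convention, under which the proposer picks the revenue-maximizing PBE.
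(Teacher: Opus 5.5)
Your route is the paper's. Step~2 is exactly its argument: reached histories and the fast-builder IC block do not involve the completion, and raising the unreached completion to $v_i$ lowers slow-builder deviation payoffs. Step~1 supplies the supportability that the paper leaves to the fast-fast discussion. You are also right that the paper's one-line appeal to monotonicity is false for slow plans with $b'_{j,2}>v_j$, whose stage-$2$ payoff is weakly \emph{increasing} in the opponent's bid. However, two steps do not go through as written.

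\textbf{Step (iii) is not a consequence of sequential rationality.} Take the very belief you use in Step~1, that the slow builder bids $\bar b$. Under it, every feasible bid strictly below $\bar b$ is a best response, including any grid bid in $(v_i,\bar b)$. Such a completion weakly lowers every undominated slow-builder deviation payoff relative to $v_i$. Replacing it by $v_i$ can therefore tighten rather than relax the slow-builder IC. So $V^{FS}\le V^{FS,v}$ is not established if the convention means only revenue-maximizing selection, as your Step~3 says. The paper obtains $\chi_i\le v_i$ only by selection (``take an undominated representative''). You need the same explicit restriction in the convention. It must also include the fast builder's on-path no-overbidding support restriction that your Step~1 uses, since $\mathcal X_0^{FS}$ does not impose it.

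\textbf{Step (i): the truncation fallback fails when $b'_{j,1}>v_j$.} In that case $(b'_{j,1},v_j)$ violates $b_{j,1}\le b_{j,2}$ and is not in $\Lambda_j$. Your option (a) can be proved by a case split instead.
\begin{enumerate}[(1)]
\item If $b'_{j,1}\le v_j<b'_{j,2}$, the plan $(b'_{j,1},v_j)$ is feasible, since $\mathcal V\subseteq\mathcal B$. Its stage-$2$ surplus is identically zero under any completion, while the overbidding plan's stage-$2$ surplus is nonpositive. Hence the canonical deviation payoff is at most the stop payoff $R_j^{FS}(v_j;b'_{j,1},\hat v_j)$. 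That stop payoff equals the original-completion payoff of $(b'_{j,1},v_j)$ under the same report, so it is at most $T_j^{FS}(v_j)$.
\item If $b'_{j,1}>v_j$, every win in either branch costs more than $v_j$. The deviation payoff is therefore nonpositive under any completion. Meanwhile, the original IC against the plan $(0,0)$ gives $T_j^{FS}(v_j)\ge 0$.
\end{enumerate}
With these repairs your argument is a complete version of the paper's proof, and it closes a hole the paper's own proof leaves open.
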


\begin{proof}
Consider any selected fast-slow PBE representation with an arbitrary
sequentially rational completion for the truthful fast builder. On reached
continuation histories, obedience requires the same action \(m_i\), so the
proposer revenue and truthful-path payoffs are unchanged. On unreached histories, take an
undominated representative of the fast builder's completion. Under the
first-price payoff, any bid above \(v_i\) is weakly dominated by bidding
\(v_i\), so the representative is weakly below \(v_i\). Replacing it by \(v_i\)
therefore weakly lowers the slow builder's payoff from every deviation, because
a builder's first-price payoff is weakly decreasing in the opponent's bid. The
slow-builder IC constraints are therefore relaxed, while the fast-builder IC
block and proposer revenue are unchanged. Hence every selected feasible outcome
has a canonical-completion representation with the same objective value.
\end{proof}

\paragraph{Slow-builder block.}
Because builder \(j\)'s stage-\(2\) bid is fixed ex ante, a pure deviation by
builder \(j\) is indexed by a report \(\hat v_j\in\mathcal V\) and a feasible
pair
\[
(b'_{j,1},b'_{j,2})\in\Lambda_j.
\]
Define the slow builder's stage-\(1\) deviation payoff by
\begin{align}
\label{eq:appendix-hidden-slow-deviation-payoff}
U_j^{FS}(v_j;b'_{j,1},b'_{j,2},\hat v_j\mid \beta_{i,1},\beta_j,\mu^{FS})
=\;&
R_j^{FS}(v_j;b'_{j,1},\hat v_j)
\notag\\
&+
\sum_{m_i\in\mathcal B}
G_j^{FS}(v_j,b'_{j,1},b'_{j,2},\hat v_j,m_i\mid \beta_{i,1},\beta_j,\mu^{FS}).
\end{align}
where
\begin{equation}
\resizebox{0.8\linewidth}{!}{$
\begin{aligned}
R_j^{FS}(v_j;b'_{j,1},\hat v_j)
:=\;&
\sum_{v_i\in\mathcal V}\sum_{b_{i,1}\in\mathcal B}
\mathsf{F}(v_i,v_j)\beta_{i,1}(b_{i,1}\mid v_i)
\\
&\cdot
\mu^{FS}(\STOP\mid (v_i,\hat v_j),(b_{i,1},b'_{j,1}))
u_{j,1}((b_{i,1},b'_{j,1});v_j).
\end{aligned}
$}
\label{eq:appendix-hidden-slow-R}
\end{equation}
\begin{align}
\label{eq:appendix-hidden-slow-G}
G_j^{FS}(v_j,b'_{j,1},b'_{j,2},\hat v_j,m_i
\mid \beta_{i,1},\beta_j,\mu^{FS})
:=\;&
\sum_{v_i\in\mathcal V}\sum_{b_{i,1}\in\mathcal B}
\notag\\
&\hspace{1.2em}\cdot\,
\!\mathsf{F}(v_i,v_j)\beta_{i,1}(b_{i,1}\mid v_i)\,
\notag\\
&\hspace{2.4em}\cdot\,
\mu^{FS}(\CTN,m_i\mid (v_i,\hat v_j),(b_{i,1},b'_{j,1}))
\notag\\
&\cdot\,
u_{j,2}((b_{i,2}^{v,FS}(v_i,b_{i,1},m_i),b'_{j,2});v_j).
\end{align}
The truthful payoff of builder \(j\) is
\[
T_j^{FS}(v_j)
:=
\sum_{(b_{j,1},b_{j,2})\in\Lambda_j}
\beta_j((b_{j,1},b_{j,2})\mid v_j)
\,U_j^{FS}(v_j;b_{j,1},b_{j,2},v_j\mid \beta_{i,1},\beta_j,\mu^{FS}).
\]
The slow builder's exact IC family is
\begin{align}
\label{eq:appendix-hidden-slow-IC}
T_j^{FS}(v_j)
\ge\;&
R_j^{FS}(v_j;b'_{j,1},\hat v_j)
+
\sum_{m_i\in\mathcal B}
G_j^{FS}(v_j,b'_{j,1},b'_{j,2},\hat v_j,m_i
\mid \beta_{i,1},\beta_j,\mu^{FS})
\notag\\
&\forall \hat v_j\in\mathcal V,\ 
\forall (b'_{j,1},b'_{j,2})\in\Lambda_j.
\end{align}
No separate stage-\(2\) obedience block is needed for builder \(j\), because
\((b_{j,1},b_{j,2})\) is already chosen as a single ex-ante plan.

\paragraph{Revenue.}
The proposer revenue induced by \((\beta_{i,1},\beta_j,\mu^{FS})\) is
\begin{equation}
\label{eq:appendix-fs-rev}
\begin{aligned}
\mathrm{Rev}^{FS}(\beta_{i,1},\beta_j,\mu^{FS})
:=\;&
\sum_{v_i,v_j\in\mathcal V}
\sum_{b_{i,1}\in\mathcal B}
\sum_{(b_{j,1},b_{j,2})\in\Lambda_j}
\mathsf{F}(v_i,v_j)\beta_{i,1}(b_{i,1}\mid v_i)
\beta_j((b_{j,1},b_{j,2})\mid v_j)
\\
&\cdot
\left[
\mu^{FS}(\STOP\mid (v_i,v_j),(b_{i,1},b_{j,1}))
r_1(b_{i,1},b_{j,1})
\right.
\\
&\hspace{1.2cm}\left.
+
\sum_{m_i\in\mathcal B}
\mu^{FS}(\CTN,m_i\mid (v_i,v_j),(b_{i,1},b_{j,1}))
r_2(m_i,b_{j,2})
\right].
\end{aligned}
\end{equation}

\paragraph{Exact benchmark.}
Define the fast-slow base feasible set by
\begin{equation}
\label{eq:appendix-hidden-X0-definition}
\resizebox{\linewidth}{!}{$
\displaystyle
\begin{aligned}
\mathcal X_0^{FS}
:=
\bigl\{(\beta_{i,1},\beta_j,\mu^{FS})\ \big|\ 
& \beta_{i,1}(\cdot\mid v_i)\in\Delta(\mathcal B_{i,1})
\quad \forall v_i\in\mathcal V,
\\
& \beta_j(\cdot\mid v_j)\in\Delta(\Lambda_j)
\quad \forall v_j\in\mathcal V,
\\
& \mu^{FS}(\cdot\mid \hat v,\mathbf b_1)
\in
\Delta\!\Bigl(\{\STOP\}\cup(\{\CTN\}\times\mathcal B)\Bigr)
\quad \forall \hat v\in\mathcal V^2,\ \forall \mathbf b_1\in\mathcal B^2
\bigr\}.
\end{aligned}
$}
\end{equation}
That is, \(\mathcal X_0^{FS}\) contains exactly the ordinary simplex and support
restrictions on the fast-slow bid kernels and the TEE continuation kernel,
with no IC constraints built in. By \Cref{prop:fs-canonical-completion-wlog},
the exact fast-slow benchmark can be posed directly over
\(\bigl(\beta_{i,1},\beta_j,\mu^{FS}\bigr)\). By
\Cref{prop:hidden-fast-stage2-implied}, no
separate fast-builder stage-\(2\) obedience block is needed once the original-form
fast-builder IC family \eqref{eq:appendix-hidden-fast-IC} is imposed.

The original optimization problem is
\begin{equation}
\label{eq:appendix-hidden-original-problem}
\resizebox{\linewidth}{!}{$
\begin{alignedat}{2}
\max_{\beta_{i,1},\beta_j,\mu^{FS}}\quad & \mathrm{Rev}^{FS}(\beta_{i,1},\beta_j,\mu^{FS}) \\
\text{s.t.}\quad
& (\beta_{i,1},\beta_j,\mu^{FS})\in\mathcal X_0^{FS}, && \text{(base feasibility)}, \\
& T_j^{FS}(v_j)
\ge
R_j^{FS}(v_j;b'_{j,1},\hat v_j)
+
\sum_{m_i\in\mathcal B}
G_j^{FS}(v_j,b'_{j,1},b'_{j,2},\hat v_j,m_i\mid \beta_{i,1},\beta_j,\mu^{FS}),
&& \text{(slow-builder IC)},
\\
&\hspace{4.2cm}
\forall v_j,\ \forall \hat v_j,\ 
\forall (b'_{j,1},b'_{j,2})\in\Lambda_j,
\\
& T_i^{FS}(v_i)
\ge
R_i^{FS}(v_i;b'_{i,1},\hat v_i)
+
\sum_{m_i\in\mathcal B}
G_i^{FS}(v_i,b'_{i,1},\hat v_i,m_i,\delta_i(m_i)\mid \beta_j,\mu^{FS}),
&& \text{(fast-builder IC)},
\\
&\hspace{4.2cm}
\forall v_i,\ \forall \hat v_i,\ 
\forall b'_{i,1}\in\mathcal B_{i,1},\ 
\forall \delta_i\in\mathcal D_i(b'_{i,1}).
\end{alignedat}
$}
\end{equation}
Write
\[
V^{FS}
:=
\max_{(\beta_{i,1},\beta_j,\mu^{FS})\text{ feasible in }\eqref{eq:appendix-hidden-original-problem}}
\mathrm{Rev}^{FS}(\beta_{i,1},\beta_j,\mu^{FS})
\]
for the fast-slow optimal TEE ePBS benchmark value.

\paragraph{Epigraph reduction.}
Applying the epigraph reduction from \Cref{app:epigraph-reduction}, we replace
the continuation rule \(\delta_i\) by auxiliary variables. For each tuple
\((v_i,\hat v_i,b'_{i,1},m_i)\) with
\(b'_{i,1}\in\mathcal B_{i,1}\), introduce
\[
w_i^{FS}(v_i,b'_{i,1},\hat v_i,m_i).
\]
The reduced problem is
\begin{equation}
\label{eq:appendix-hidden-reduced-problem}
\resizebox{\linewidth}{!}{$
\begin{alignedat}{2}
\max_{\beta_{i,1},\beta_j,\mu^{FS},w}\quad & \mathrm{Rev}^{FS}(\beta_{i,1},\beta_j,\mu^{FS}) \\
\text{s.t.}\quad
& (\beta_{i,1},\beta_j,\mu^{FS})\in\mathcal X_0^{FS}, && \text{(base feasibility)}, \\
& T_j^{FS}(v_j)
\ge
R_j^{FS}(v_j;b'_{j,1},\hat v_j)
+
\sum_{m_i\in\mathcal B}
G_j^{FS}(v_j,b'_{j,1},b'_{j,2},\hat v_j,m_i\mid \beta_{i,1},\beta_j,\mu^{FS}),
&& \text{(slow-builder IC)},
\\
&\hspace{4.2cm}
\forall v_j,\ \forall \hat v_j,\ 
\forall (b'_{j,1},b'_{j,2})\in\Lambda_j,
\\
& T_i^{FS}(v_i)
\ge
R_i^{FS}(v_i;b'_{i,1},\hat v_i)
+
\sum_{m_i\in\mathcal B}
w_i^{FS}(v_i,b'_{i,1},\hat v_i,m_i),
&& \text{(fast-builder IC)},
\\
&\hspace{4.2cm}
\forall v_i,\ \forall \hat v_i,\ 
\forall b'_{i,1}\in\mathcal B_{i,1},
\\[0.3em]
& w_i^{FS}(v_i,b'_{i,1},\hat v_i,m_i)
\ge
G_i^{FS}(v_i,b'_{i,1},\hat v_i,m_i,a_i\mid \beta_j,\mu^{FS}),
&& \text{(best deviation)},
\\
&\hspace{4.2cm}
\forall v_i,\ \forall \hat v_i,\ 
\forall b'_{i,1}\in\mathcal B_{i,1},\ 
\forall m_i\in\mathcal B,\ 
\forall a_i\in\mathcal B_{i,2}(b'_{i,1}).
\end{alignedat}
$}
\end{equation}

\subsection{Optimal ePBS-TEE in the Slow-Slow Benchmark}
\label{app:slow-slow-committed-stop-qcqp}

We now record the finite-grid counterpart of the slow-slow problem. Both
builders are slow, so neither builder observes the stop/continue realization or
any continuation message before its stage-\(2\) bid is fixed. Consequently, each
builder reports a type and chooses a complete bid plan ex ante, and there is no
off-path continuation action to complete.

\paragraph{Primitive objects.}
For each builder \(k\in\{i,j\}\), define the slow-builder bid-plan set
\[
\Lambda_k
:=
\{(x_k,y_k)\in\mathcal B^2:x_k\le y_k\}.
\]
For a plan \(\lambda_k=(x_k,y_k)\), write
\[
x(\lambda_k):=x_k,\qquad y(\lambda_k):=y_k.
\]
A slow-builder strategy is a type-contingent distribution
\[
\beta_k(\cdot\mid v_k)\in\Delta(\Lambda_k).
\]
In the direct benchmark, the prescribed report is truthful; the incentive
constraints below check deviations to arbitrary reports and feasible bid plans.
The proposer commits before play to a report-dependent stop rule
\[
\phi:\mathcal V^2\times\mathcal B^2\to[0,1],
\]
where \(\phi(\hat v_i,\hat v_j,x_i,x_j)\) is the probability of stopping
after observing the reported type profile \((\hat v_i,\hat v_j)\) and stage-\(1\)
bids \((x_i,x_j)\). This is the only committed policy variable in the
slow-slow benchmark.

Given reported types \((\hat v_i,\hat v_j)\) and plans
\((\lambda_i,\lambda_j)\), define the builder and proposer payoffs under the
committed stop rule by
\begin{align}
g_i^{SS}(\hat v_i,\hat v_j,\lambda_i,\lambda_j;v_i,\phi)
&:=
\phi(\hat v_i,\hat v_j,x(\lambda_i),x(\lambda_j))\,
u_{i,1}((x(\lambda_i),x(\lambda_j));v_i)
\nonumber\\
&\quad+
\bigl(1-\phi(\hat v_i,\hat v_j,x(\lambda_i),x(\lambda_j))\bigr)\,
u_{i,2}((y(\lambda_i),y(\lambda_j));v_i),
\label{eq:appendix-ss-gi}\\
g_j^{SS}(\hat v_i,\hat v_j,\lambda_i,\lambda_j;v_j,\phi)
&:=
\phi(\hat v_i,\hat v_j,x(\lambda_i),x(\lambda_j))\,
u_{j,1}((x(\lambda_i),x(\lambda_j));v_j)
\nonumber\\
&\quad+
\bigl(1-\phi(\hat v_i,\hat v_j,x(\lambda_i),x(\lambda_j))\bigr)\,
u_{j,2}((y(\lambda_i),y(\lambda_j));v_j),
\label{eq:appendix-ss-gj}\\
r^{SS}(\hat v_i,\hat v_j,\lambda_i,\lambda_j;\phi)
&:=
\phi(\hat v_i,\hat v_j,x(\lambda_i),x(\lambda_j))\,
r_1(x(\lambda_i),x(\lambda_j))
\nonumber\\
&\quad+
\bigl(1-\phi(\hat v_i,\hat v_j,x(\lambda_i),x(\lambda_j))\bigr)\,
r_2(y(\lambda_i),y(\lambda_j)).
\label{eq:appendix-ss-revenue-kernel}
\end{align}

\paragraph{Truthful-reporting incentive constraints.}
Given \((\beta_i,\beta_j,\phi)\), builder \(i\)'s payoff from reporting
\(\hat v_i\) and deviating to a pure feasible plan
\(\lambda_i'\in\Lambda_i\) is
\begin{equation}
\label{eq:appendix-ss-Ui-deviation}
U_i^{SS}(v_i;\hat v_i,\lambda_i'\mid \beta_j,\phi)
:=
\sum_{v_j\in\mathcal V}
\sum_{\lambda_j\in\Lambda_j}
\mathsf{F}(v_i,v_j)\beta_j(\lambda_j\mid v_j)\,
g_i^{SS}(\hat v_i,v_j,\lambda_i',\lambda_j;v_i,\phi).
\end{equation}
As elsewhere in this appendix, these payoffs are written in unnormalized form;
dividing by the positive marginal probability of \(v_i\) gives the usual
conditional interim payoff and leaves all best-response inequalities unchanged.
The truthful-path payoff generated by \(\beta_i(\cdot\mid v_i)\) is
\begin{equation}
\label{eq:appendix-ss-Ti}
T_i^{SS}(v_i)
:=
\sum_{\lambda_i\in\Lambda_i}
\beta_i(\lambda_i\mid v_i)\,
U_i^{SS}(v_i;v_i,\lambda_i\mid \beta_j,\phi).
\end{equation}
Builder \(i\)'s truthful-reporting constraints are
\begin{equation}
\label{eq:appendix-ss-i-ic}
\displaystyle
T_i^{SS}(v_i)
\ge
U_i^{SS}(v_i;\hat v_i,\lambda_i'\mid \beta_j,\phi)
\qquad
\forall v_i\in\mathcal V,\ \forall \hat v_i\in\mathcal V,\
\forall \lambda_i'\in\Lambda_i.
\end{equation}
The constraints for builder \(j\) are defined symmetrically:
\begin{align}
U_j^{SS}(v_j;\hat v_j,\lambda_j'\mid \beta_i,\phi)
&:=
\sum_{v_i\in\mathcal V}
\sum_{\lambda_i\in\Lambda_i}
\mathsf{F}(v_i,v_j)\beta_i(\lambda_i\mid v_i)\,
g_j^{SS}(v_i,\hat v_j,\lambda_i,\lambda_j';v_j,\phi),
\label{eq:appendix-ss-Uj-deviation}\\
T_j^{SS}(v_j)
&:=
\sum_{\lambda_j\in\Lambda_j}
\beta_j(\lambda_j\mid v_j)\,
U_j^{SS}(v_j;v_j,\lambda_j\mid \beta_i,\phi),
\label{eq:appendix-ss-Tj}\\
T_j^{SS}(v_j)
&\ge
U_j^{SS}(v_j;\hat v_j,\lambda_j'\mid \beta_i,\phi)
\qquad
\forall v_j\in\mathcal V,\ \forall \hat v_j\in\mathcal V,\
\forall \lambda_j'\in\Lambda_j.
\label{eq:appendix-ss-j-ic}
\end{align}

The proposer revenue induced by \((\beta_i,\beta_j,\phi)\) is
\begin{equation}
\label{eq:appendix-ss-rev}
\mathrm{Rev}^{SS}(\beta_i,\beta_j,\phi)
:=
\sum_{v_i,v_j\in\mathcal V}
\sum_{\lambda_i\in\Lambda_i}
\sum_{\lambda_j\in\Lambda_j}
\mathsf{F}(v_i,v_j)\beta_i(\lambda_i\mid v_i)\beta_j(\lambda_j\mid v_j)\,
r^{SS}(v_i,v_j,\lambda_i,\lambda_j;\phi).
\end{equation}
The optimistic committed-stop slow-slow value is therefore
\begin{equation}
\label{eq:appendix-ss-native-problem}
\resizebox{0.9\linewidth}{!}{$
\displaystyle
V^{SS}
:=
\max_{\beta_i,\beta_j,\phi}
\mathrm{Rev}^{SS}(\beta_i,\beta_j,\phi)
\quad
\text{s.t.}\quad
\eqref{eq:appendix-ss-i-ic},\ \eqref{eq:appendix-ss-j-ic},
\quad
\beta_k(\cdot\mid v_k)\in\Delta(\Lambda_k),\
\phi\in[0,1]^{\mathcal V^2\times\mathcal B^2}.
$}
\end{equation}
Equivalently,
\[
V^{SS}
=
\max_{\phi}
\max_{(\beta_i,\beta_j)\in IC^{SS}(\phi)}
\mathrm{Rev}^{SS}(\beta_i,\beta_j,\phi),
\]
where \(IC^{SS}(\phi)\) is the set of plan kernels satisfying the
truthful-reporting incentive constraints induced by the committed stop rule
\(\phi\).

\paragraph{Lifted QCQP formulation.}
The native formulation \eqref{eq:appendix-ss-native-problem} contains cubic
products if it is written only in \((\beta_i,\beta_j,\phi)\). To obtain a
quadratic formulation, introduce the lifted joint plan variable
\[
q(v_i,v_j,\lambda_i,\lambda_j)
\]
and impose the exact factorization
\begin{equation}
\label{eq:appendix-ss-factorization}
q(v_i,v_j,\lambda_i,\lambda_j)
=
\beta_i(\lambda_i\mid v_i)\beta_j(\lambda_j\mid v_j)
\qquad
\forall v_i,v_j,\lambda_i,\lambda_j.
\end{equation}
Then the tight lifted problem is
\begin{equation}
\label{eq:appendix-ss-qcqp}
\resizebox{\linewidth}{!}{$
\begin{alignedat}{2}
\max_{\beta_i,\beta_j,q,\phi,T_i,T_j}\quad
&
\sum_{v_i,v_j}\sum_{\lambda_i,\lambda_j}
\mathsf{F}(v_i,v_j)q(v_i,v_j,\lambda_i,\lambda_j)
r^{SS}(v_i,v_j,\lambda_i,\lambda_j;\phi)
\\
\text{s.t.}\quad
& \beta_k(\cdot\mid v_k)\in\Delta(\Lambda_k),
&& \forall k\in\{i,j\},\ \forall v_k\in\mathcal V,
\\
& 0\le \phi(\hat v_i,\hat v_j,x_i,x_j)\le 1,
&& \forall \hat v_i,\hat v_j\in\mathcal V,\
\forall (x_i,x_j)\in\mathcal B^2,
\\
& q(v_i,v_j,\lambda_i,\lambda_j)
=\beta_i(\lambda_i\mid v_i)\beta_j(\lambda_j\mid v_j),
&& \forall v_i,v_j,\lambda_i,\lambda_j,
\\
& T_i(v_i)
=
\sum_{v_j}\sum_{\lambda_i,\lambda_j}
\mathsf{F}(v_i,v_j)q(v_i,v_j,\lambda_i,\lambda_j)
g_i^{SS}(v_i,v_j,\lambda_i,\lambda_j;v_i,\phi),
&& \forall v_i,
\\
& T_i(v_i)
\ge
\sum_{v_j}\sum_{\lambda_j}
\mathsf{F}(v_i,v_j)\beta_j(\lambda_j\mid v_j)
g_i^{SS}(\hat v_i,v_j,\lambda_i',\lambda_j;v_i,\phi),
&& \forall v_i,\ \forall \hat v_i\in\mathcal V,\
\forall \lambda_i'\in\Lambda_i,
\\
& T_j(v_j)
=
\sum_{v_i}\sum_{\lambda_i,\lambda_j}
\mathsf{F}(v_i,v_j)q(v_i,v_j,\lambda_i,\lambda_j)
g_j^{SS}(v_i,v_j,\lambda_i,\lambda_j;v_j,\phi),
&& \forall v_j,
\\
& T_j(v_j)
\ge
\sum_{v_i}\sum_{\lambda_i}
\mathsf{F}(v_i,v_j)\beta_i(\lambda_i\mid v_i)
g_j^{SS}(v_i,\hat v_j,\lambda_i,\lambda_j';v_j,\phi),
&& \forall v_j,\ \forall \hat v_j\in\mathcal V,\
\forall \lambda_j'\in\Lambda_j.
\end{alignedat}
$}
\end{equation}
Every term in \eqref{eq:appendix-ss-qcqp} is at most quadratic in the decision
variables: the nonconvexity comes from the factorization constraints
\eqref{eq:appendix-ss-factorization} and from the bilinear interaction between
the committed stop rule and the induced plan distribution. Thus
\eqref{eq:appendix-ss-qcqp} is a nonconvex QCQP after the standard lifting.
If one wants the finite problem to approximate a continuous stop rule, linear
Lipschitz restrictions on adjacent grid values of \(\phi\) can be added
without changing the QCQP nature of the formulation.

\begin{proposition}[Tightness of the lifted slow-slow QCQP]
\label{prop:slow-slow-qcqp-tight}
The lifted problem \eqref{eq:appendix-ss-qcqp} has the same optimal value as
the optimistic committed-stop problem \eqref{eq:appendix-ss-native-problem}.
\end{proposition}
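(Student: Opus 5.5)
The plan is to show that the two problems have the same feasible set once we project away the auxiliary variables. The lifted problem \eqref{eq:appendix-ss-qcqp} adds the variables \(q\), \(T_i\), and \(T_j\). Each of these is pinned down by equality constraints: \(q\) by the factorization \eqref{eq:appendix-ss-factorization}, and \(T_i,T_j\) by the defining sums. So the argument is a two-direction feasibility correspondence that preserves the objective value.

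\emph{Native to lifted.} Take any \((\beta_i,\beta_j,\phi)\) feasible in \eqref{eq:appendix-ss-native-problem}. Define \(q(v_i,v_j,\lambda_i,\lambda_j):=\beta_i(\lambda_i\mid v_i)\beta_j(\lambda_j\mid v_j)\), and set \(T_i(v_i):=T_i^{SS}(v_i)\) and \(T_j(v_j):=T_j^{SS}(v_j)\). Substituting the product form of \(q\) into the \(T_i\)-defining equality of \eqref{eq:appendix-ss-qcqp} and comparing with \eqref{eq:appendix-ss-Ti}--\eqref{eq:appendix-ss-Ui-deviation}, the two agree term by term. The truthful report \(\hat v_i=v_i\) enters \(g_i^{SS}\) as its first argument in both. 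The right-hand sides of the lifted IC inequalities are exactly \(U_i^{SS}(v_i;\hat v_i,\lambda_i'\mid\beta_j,\phi)\), so \eqref{eq:appendix-ss-i-ic} transfers verbatim, and the \(j\)-block transfers by symmetry. Finally, the lifted objective with product \(q\) equals \(\mathrm{Rev}^{SS}\) in \eqref{eq:appendix-ss-rev}. Hence the lifted value is at least the native value.

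\emph{Lifted to native.} Take a feasible lifted tuple. The factorization constraint forces \(q=\beta_i\beta_j\) exactly, so the same term-by-term identification shows that \(T_i(v_i)=T_i^{SS}(v_i)\) and that the lifted IC inequalities are precisely \eqref{eq:appendix-ss-i-ic} and \eqref{eq:appendix-ss-j-ic}. The simplex and box constraints coincide. So \((\beta_i,\beta_j,\phi)\) is feasible natively with identical objective, and the native value is at least the lifted value.

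The only delicate point I expect is bookkeeping: the \(T\)-equality must use the truthful reports \((v_i,v_j)\) inside \(g^{SS}\), matching the truthful-path payoff. The unnormalized weighting by \(\mathsf F(v_i,v_j)\) is applied identically on both sides of every inequality, so dividing by the marginal probability is unnecessary. Existence of maxima on both sides follows from compactness of the finite simplices and \([0,1]\)-boxes together with continuity of the polynomial constraints. If the feasible set were empty on one side it would be empty on the other by the same correspondence, so "same optimal value" holds in all cases.
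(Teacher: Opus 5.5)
Your proposal is correct and follows the paper's own proof: a two-direction feasibility correspondence in which the factorization constraint forces \(q=\beta_i\beta_j\) and the defining equalities force \(T_i=T_i^{SS}\), \(T_j=T_j^{SS}\), so the lifted IC blocks and objective coincide term by term with \eqref{eq:appendix-ss-i-ic}, \eqref{eq:appendix-ss-j-ic}, and \(\mathrm{Rev}^{SS}\). Your explicit term-by-term bookkeeping and the remarks on attainment of the maximum and the empty-feasible-set case go beyond what the paper writes out, but they are harmless additions to the same argument.
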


\begin{proof}
Take any feasible triple \((\beta_i,\beta_j,\phi)\) in
\eqref{eq:appendix-ss-native-problem}. Define \(q\) by
\eqref{eq:appendix-ss-factorization}, and define \(T_i,T_j\) by the displayed
equalities in \eqref{eq:appendix-ss-qcqp}. The lifted constraints reproduce
exactly the truthful-reporting inequalities \eqref{eq:appendix-ss-i-ic} and
\eqref{eq:appendix-ss-j-ic}, and the lifted objective equals
\(\mathrm{Rev}^{SS}(\beta_i,\beta_j,\phi)\).

Conversely, any feasible point of \eqref{eq:appendix-ss-qcqp} satisfies the
factorization constraints, so \(q\) is precisely the independent joint
distribution induced by \((\beta_i,\beta_j)\) type by type. The two IC blocks
then say that truthful reporting and the prescribed plan kernels are incentive
compatible in the game induced by \(\phi\). The objective is the
corresponding proposer revenue. Hence the feasible projections and objective
values coincide.
\end{proof}


\begin{thebibliography}{}

\bibitem[Bui, 2024]{BuilderBiddingBehaviors2024}
 (2024).
\newblock Builder {{Bidding Behaviors}} in {{ePBS}} - {{Economics}}.
\newblock https://ethresear.ch/t/builder-bidding-behaviors-in-epbs/20129.

\bibitem[Tru, 2024]{TrustedAdvantageSlot2024}
 (2024).
\newblock Trusted {{Advantage}} in {{Slot Auction ePBS}} - {{Proof-of-Stake}} /
  {{Economics}}.
\newblock https://ethresear.ch/t/trusted-advantage-in-slot-auction-epbs/20456.

\bibitem[Geo, 2025]{GeographyBlockBuilding2025}
 (2025).
\newblock The {{Geography}} of {{Block Building}} - {{Research}}.
\newblock
  https://collective.flashbots.net/t/the-geography-of-block-building/5367.

\bibitem[Akbarpour and Li, 2020]{akbarpourCredibleAuctionsTrilemma2020}
Akbarpour, M. and Li, S. (2020).
\newblock Credible {{Auctions}}: {{A Trilemma}}.
\newblock {\em Econometrica}, 88(2):425--467.

\bibitem[Anagnostides et~al., 2022]{anagnostidesFasterNoRegretLearning2022}
Anagnostides, I., Farina, G., Kroer, C., Celli, A., and Sandholm, T. (2022).
\newblock Faster {{No-Regret Learning Dynamics}} for {{Extensive-Form
  Correlated}} and {{Coarse Correlated Equilibria}}.

\bibitem[Bahrani et~al., 2024]{bahraniCentralizationBlockBuilding2024}
Bahrani, M., Garimidi, P., and Roughgarden, T. (2024).
\newblock Centralization in {{Block Building}} and {{Proposer-Builder
  Separation}}.

\bibitem[Bergemann et~al., 2017]{bergemannFirstPriceAuctionsGeneral2017}
Bergemann, D., Brooks, B., and Morris, S. (2017).
\newblock First-{{Price Auctions With General Information Structures}}:
  {{Implications}} for {{Bidding}} and {{Revenue}}.
\newblock {\em Econometrica}, 85(1):107--143.

\bibitem[Bergemann and Morris, 2016a]{bergemannBAYESCORRELATEDEQUILIBRIUM}
Bergemann, D. and Morris, S. (2016a).
\newblock Bayes correlated equilibrium and the comparison of information
  structures in games.
\newblock {\em Theoretical Economics}, 11(2):487--522.

\bibitem[Bergemann and Morris, 2016b]{bergemannInformationDesignBayesian}
Bergemann, D. and Morris, S. (2016b).
\newblock Information {{Design}}, {{Bayesian Persuasion}} and {{Bayes
  Correlated Equilibrium}}.
\newblock {\em American Economic Review}, 106(5):586--591.

\bibitem[Best and Quigley, 2024]{bestPersuasionLongRun2024}
Best, J. and Quigley, D. (2024).
\newblock Persuasion for the {{Long Run}}.
\newblock {\em Journal of Political Economy}, 132(5):1740--1791.

\bibitem[Brown and Sandholm, 2019]{brownSolvingImperfectInformationGames2019}
Brown, N. and Sandholm, T. (2019).
\newblock Solving {{Imperfect-Information Games}} via {{Discounted Regret
  Minimization}}.
\newblock {\em Proceedings of the AAAI Conference on Artificial Intelligence},
  33(01):1829--1836.

\bibitem[Capponi et~al., 2024]{capponiProposerBuilderSeparationExclusive}
Capponi, A., Jia, R., and Olafsson, S. (2024).
\newblock Proposer-{{Builder Separation}}, {{Payment}} for {{Order Flows}}, and
  {{Centralization}} in {{Blockchain}}.

\bibitem[Celli et~al., 2020]{celliNoRegretLearningDynamics}
Celli, A., Marchesi, A., Farina, G., and Gatti, N. (2020).
\newblock No-{{Regret Learning Dynamics}} for {{Extensive-Form Correlated
  Equilibrium}}.
\newblock In {\em Advances in Neural Information Processing Systems}.

\bibitem[{Chainbound / Dune}, 2026]{chainbound2026geolocating}
{Chainbound / Dune} (2026).
\newblock Geolocating validators.
\newblock \url{https://dune.com/chainbound/geolocating-validators}.

\bibitem[{dataalways}, 2026]{dataalwaysMevBoostData2026}
{dataalways} (2026).
\newblock Mev-boost winning bid data.
\newblock \url{https://github.com/dataalways/mevboost-data}.
\newblock Daily winning-bid parquet files, coverage beginning October 11, 2023.

\bibitem[Doval and Skreta, 2022]{dovalMechanismDesignLimited2022}
Doval, L. and Skreta, V. (2022).
\newblock Mechanism {{Design With Limited Commitment}}.
\newblock {\em Econometrica}, 90(4):1463--1500.

\bibitem[{Ethereum Builder API}, 2026]{ethereumBuilderSpecs}
{Ethereum Builder API} (2026).
\newblock Ethereum builder api specification.
\newblock \url{https://github.com/ethereum/builder-specs}.

\bibitem[{Ethereum Consensus Specifications}, 2026a]{gloasBuilderGuide}
{Ethereum Consensus Specifications} (2026a).
\newblock Gloas honest builder guide.
\newblock
  \url{https://ethereum.github.io/consensus-specs/specs/gloas/builder/}.

\bibitem[{Ethereum Consensus Specifications}, 2026b]{gloasValidatorGuide}
{Ethereum Consensus Specifications} (2026b).
\newblock Gloas honest validator guide.
\newblock
  \url{https://ethereum.github.io/consensus-specs/specs/gloas/validator/}.

\bibitem[{Ethereum Consensus Specifications}, 2026c]{gloasP2PInterface}
{Ethereum Consensus Specifications} (2026c).
\newblock Gloas p2p interface.
\newblock
  \url{https://ethereum.github.io/consensus-specs/specs/gloas/p2p-interface/}.

\bibitem[{Ethereum Improvement Proposals}, 2026]{eip7732}
{Ethereum Improvement Proposals} (2026).
\newblock Eip-7732: Enshrined proposer-builder separation.
\newblock \url{https://eips.ethereum.org/EIPS/eip-7732}.

\bibitem[{ethereum.org}, 2026a]{ethereumGlamsterdam2026}
{ethereum.org} (2026a).
\newblock Glamsterdam.
\newblock \url{https://ethereum.org/roadmap/glamsterdam/}.
\newblock Ethereum roadmap. Page last updated: April 13, 2026.

\bibitem[{ethereum.org}, 2026b]{ethereumMevDocs2026}
{ethereum.org} (2026b).
\newblock Maximal extractable value (mev).
\newblock \url{https://ethereum.org/developers/docs/mev/}.

\bibitem[{ethPandaOps (samcm)}, 2025a]{ethpandaops60MGasSepoliaHoodi2025}
{ethPandaOps (samcm)} (2025a).
\newblock 60m gas limit on sepolia \& hoodi.
\newblock ethPandaOps Blog.
\newblock Testnet analysis of 60M gas limit impact.

\bibitem[{ethPandaOps (samcm)}, 2025b]{ethpandaopsEIP7691Retrospective2025}
{ethPandaOps (samcm)} (2025b).
\newblock Eip-7691 retrospective.
\newblock ethPandaOps Blog.
\newblock Analysis of EIP-7691 enablement and network impact.

\bibitem[Farina et~al., 2020]{farinaCoarseCorrelationExtensive2020}
Farina, G., Bianchi, T., and Sandholm, T. (2020).
\newblock Coarse correlation in extensive-form games.
\newblock In {\em Proceedings of the AAAI Conference on Artificial
  Intelligence}, volume~34, pages 1934--1941.

\bibitem[{Flashbots}, 2026]{flashbotsMevBoostDocs}
{Flashbots} (2026).
\newblock mev-boost readme.
\newblock \url{https://github.com/flashbots/mev-boost}.

\bibitem[Gehrlein et~al., 2025]{gehrleinCandleAuctionField2025}
Gehrlein, J., H{\"a}fner, S., and Oechssler, J. (2025).
\newblock The {{Candle Auction}} in the {{Field}} and the {{Lab}}.

\bibitem[Gerardi and Maestri,
  2020]{gerardiDynamicContractingLimitedCommitment2020}
Gerardi, D. and Maestri, L. (2020).
\newblock Dynamic {{Contracting}} with {{Limited Commitment}} and the {{Ratchet
  Effect}}.
\newblock {\em Theoretical Economics}, 15(2):583--623.

\bibitem[{Google Looker Studio}, 2026]{google_datastudio_report_2026}
{Google Looker Studio} (2026).
\newblock Google cloud inter-region latency and throughput.
\newblock
  \url{https://datastudio.google.com/reporting/fc733b10-9744-4a72-a502-92290f608571/page/p_854mo2jmcd}.

\bibitem[Gupta et~al., 2023]{guptaCentralizingEffectsPrivate2023}
Gupta, T., Pai, M.~M., and Resnick, M. (2023).
\newblock The {{Centralizing Effects}} of {{Private Order Flow}} on
  {{Proposer-Builder Separation}}.
\newblock {\em LIPIcs, Volume 282, AFT 2023}, 282:20:1--20:15.

\bibitem[H{\"a}fner and Stewart, 2021]{hafnerFrontRunningCandleAuctions}
H{\"a}fner, S. and Stewart, A. (2021).
\newblock Front-{{Running}} and {{Candle Auctions}}.

\bibitem[Heimbach et~al., 2025]{validatoranon}
Heimbach, L., Vonlanthen, Y., Villacis, J., Kiffer, L., and Wattenhofer, R.
  (2025).
\newblock Deanonymizing ethereum validators: The {P2P} network has a privacy
  issue.
\newblock In {\em 34th USENIX Security Symposium (USENIX Security 25)}, pages
  1319--1338, Seattle, WA. USENIX Association.

\bibitem[Kamenica and Gentzkow, 2011]{kamenicaBayesianPersuasion2011}
Kamenica, E. and Gentzkow, M. (2011).
\newblock Bayesian {{Persuasion}}.
\newblock {\em American Economic Review}, 101(6):2590--2615.

\bibitem[Kim, 2024]{kimGPUAcceleratedCounterfactualRegret2024}
Kim, J. (2024).
\newblock {{GPU-Accelerated Counterfactual Regret Minimization}}.

\bibitem[Kreutzkamp and Lou, 2024]{kreutzkampPersuasionExPostCommitment2024}
Kreutzkamp, S. and Lou, Y. (2024).
\newblock Persuasion {{Without Ex-Post Commitment}}.

\bibitem[Lanctot et~al., 2009]{lanctotMonteCarloSampling2009}
Lanctot, M., Waugh, K., Zinkevich, M., and Bowling, M. (2009).
\newblock Monte carlo sampling for regret minimization in extensive games.
\newblock In {\em Advances in Neural Information Processing Systems 22}, pages
  1078--1086.

\bibitem[Lipnowski et~al., 2022]{lipnowskiPersuasionWeakInstitutions2022}
Lipnowski, E., Ravid, D., and Shishkin, D. (2022).
\newblock Persuasion via {{Weak Institutions}}.
\newblock {\em Journal of Political Economy}, 130(10):2705--2730.

\bibitem[Liu et~al., 2019]{liuAuctionsLimitedCommitment2019}
Liu, Q., Mierendorff, K., Shi, X., and Zhong, W. (2019).
\newblock Auctions with {{Limited Commitment}}.
\newblock {\em American Economic Review}, 109(3):876--910.

\bibitem[Mazorra et~al., 2025]{mazorraFreeOptionProblem2025}
Mazorra, B., {\"O}z, B., Schlegel, C., and Wu, F. (2025).
\newblock The {{Free Option Problem}} of {{ePBS}}.

\bibitem[Mazorra et~al., 2026]{mazorraCompetingAuctionsIntermediated2026}
Mazorra, B., Pan, M., and Schlegel, C. (2026).
\newblock Competing {{Auctions}} in {{Intermediated Markets}}.

\bibitem[Moallemi et~al., 2025]{moallemiLatencyAdvantagesCommonValue2025}
Moallemi, C.~C., Pai, M.~M., and Robinson, D. (2025).
\newblock Latency {{Advantages}} in {{Common-Value Auctions}}.

\bibitem[Morrill et~al., 2022]{morrillEfficientDeviationTypes2022}
Morrill, D., D'Orazio, R., Lanctot, M., Wright, J.~R., Bowling, M., and
  Greenwald, A. (2022).
\newblock Efficient {{Deviation Types}} and {{Learning}} for {{Hindsight
  Rationality}} in {{Extensive-Form Games}}.

\bibitem[Myerson, 1981]{myerson1981optimal}
Myerson, R.~B. (1981).
\newblock Optimal auction design.
\newblock {\em Mathematics of Operations Research}, 6(1):58--73.

\bibitem[{\"O}z et~al., 2024]{ozWhoWinsEthereum2024}
{\"O}z, B., Sui, D., Thiery, T., and Matthes, F. (2024).
\newblock Who {{Wins Ethereum Block Building Auctions}} and {{Why}}?
\newblock {\em LIPIcs, Volume 316, AFT 2024}, 316:22:1--22:25.

\bibitem[Pai and Resnick, 2023]{paiStructuralAdvantagesIntegrated2023}
Pai, M. and Resnick, M. (2023).
\newblock Structural {{Advantages}} for {{Integrated Builders}} in
  {{MEV-Boost}}.

\bibitem[Pardalos and Vavasis, 1991]{pardalosQuadraticProgrammingOne1991}
Pardalos, P.~M. and Vavasis, S.~A. (1991).
\newblock Quadratic programming with one negative eigenvalue is {{NP}}-hard.
\newblock {\em Journal of Global Optimization}, 1(1):15--22.

\bibitem[{Rated Network}, 2026a]{rated_relays_ethereum_2026}
{Rated Network} (2026a).
\newblock Ethereum mainnet relay explorer.
\newblock
  \url{https://explorer.rated.network/relays?network=mainnet&timeWindow=30d}.

\bibitem[{Rated Network}, 2026b]{ratedNetworkOverview2026}
{Rated Network} (2026b).
\newblock Ethereum network overview.
\newblock
  \url{https://explorer.rated.network/network?geoDistType=all&hostDistType=all&network=mainnet&rewardsMetric=average&soloProDist=stake&timeWindow=30d}.
\newblock Reports 30-day mainnet missed-block rate.

\bibitem[{Schwarz-Schilling} et~al.,
  2023]{schwarz-schillingTimeMoneyStrategic2023}
{Schwarz-Schilling}, C., Saleh, F., Thiery, T., Pan, J., Shah, N., and Monnot,
  B. (2023).
\newblock Time is {{Money}}: {{Strategic Timing Games}} in {{Proof-of-Stake
  Protocols}}.

\bibitem[Silva, 2025]{silvaAttestationTimings2025}
Silva, M.~I. (2025).
\newblock An analysis of attestation timings in a 6-s slot.
\newblock
  \url{https://ethresear.ch/t/an-analysis-of-attestation-timings-in-a-6-s-slot/23016}.
\newblock Ethereum Research post.

\bibitem[Skreta, 2006]{skretaSequentiallyOptimalMechanisms2006}
Skreta, V. (2006).
\newblock Sequentially {{Optimal Mechanisms}}.
\newblock {\em The Review of Economic Studies}, 73(4):1085--1111.

\bibitem[Skreta, 2015]{skretaOptimalAuctionDesignNonCommitment2015}
Skreta, V. (2015).
\newblock Optimal {{Auction Design}} under {{Non-Commitment}}.
\newblock {\em Journal of Economic Theory}, 159:854--890.

\bibitem[Tammelin, 2014]{tammelinSolvingLargeImperfect2014}
Tammelin, O. (2014).
\newblock Solving large imperfect information games using {{CFR+}}.

\bibitem[{Titan Builder}, 2025]{titanbuilder_epbs_2025}
{Titan Builder} (2025).
\newblock Builders and relays in epbs.
\newblock
  \url{https://titanbuilder.substack.com/p/builders-and-relays-in-epbs}.
\newblock Titan's Substack.

\bibitem[{Titan Relay}, 2026]{titanBuilderIntegration}
{Titan Relay} (2026).
\newblock Builder integration.
\newblock \url{https://docs.titanrelay.xyz/builders/builder-integration}.

\bibitem[{Ultra Sound Relay}, 2026a]{ultrasoundBidAdjustment}
{Ultra Sound Relay} (2026a).
\newblock Bid adjustment.
\newblock \url{https://docs.ultrasound.money/builders/bid-adjustment}.

\bibitem[{Ultra Sound Relay}, 2026b]{ultrasoundTopBidWebsocket}
{Ultra Sound Relay} (2026b).
\newblock Top bid websocket.
\newblock \url{https://docs.ultrasound.money/builders/top-bid-websocket}.

\bibitem[Wang et~al., 2024]{wangPrivateOrderFlows2024}
Wang, S., Huang, Y., Zhang, W., Huang, Y., Wang, X., and Tang, J. (2024).
\newblock Private {{Order Flows}} and {{Builder Bidding Dynamics}}: {{The
  Road}} to {{Monopoly}} in {{Ethereum}}'s {{Block Building Market}}.

\bibitem[Wang et~al., 2026]{wangEnshrinedProposerBuilder2026}
Wang, Y., Feng, Y., Li, Y., and Xu, J. (2026).
\newblock Enshrined {{Proposer Builder Separation}} in the presence of
  {{Maximal Extractable Value}}.

\bibitem[Wu et~al., 2025]{wu_et_al:LIPIcs.AFT.2025.26}
Wu, F., Sui, D., Thiery, T., and Pai, M. (2025).
\newblock {Measuring CEX-DEX Extracted Value and Searcher Profitability: The
  Darkest of the MEV Dark Forest}.
\newblock In {\em 7th Conference on Advances in Financial Technologies (AFT
  2025)}, pages 26:1--26:23.

\bibitem[Wu et~al., 2024a]{wuCompetitionCentralizationOligopoly2024}
Wu, F., Thiery, T., Leonardos, S., and Ventre, C. (2024a).
\newblock From {{Competition}} to {{Centralization}}: {{The Oligopoly}} in
  {{Ethereum Block Building Auctions}}.

\bibitem[Wu et~al., 2024b]{wuStrategicBiddingWars2024}
Wu, F., Thiery, T., Leonardos, S., and Ventre, C. (2024b).
\newblock Strategic {{Bidding Wars}} in {{On-chain Auctions}}.

\bibitem[Yang et~al., 2025a]{yangDecentralizationEthereumsBuilder2025}
Yang, S., Nayak, K., and Zhang, F. (2025a).
\newblock Decentralization of {{Ethereum}}'s {{Builder Market}}.
\newblock In {\em 2025 {{IEEE Symposium}} on {{Security}} and {{Privacy}}
  ({{SP}})}, pages 1512--1530.

\bibitem[Yang et~al., 2025b]{yang2025designing}
Yang, S., {\"O}z, B., Wu, F., and Zhang, F. (2025b).
\newblock Geographical {{Centralization Resilience}} in {{Ethereum}}'s
  {{Block-Building Paradigms}}.
\newblock Published in ACM SIGMETRICS 2026; related DOI: 10.1145/3805637.

\bibitem[Zhang et~al., 2026]{zhangBoostEquitableIncentiveCompatible2026}
Zhang, M., Yang, S., Nayak, K., and Zhang, F. (2026).
\newblock Boost+: {{Equitable}}, {{Incentive-Compatible Block Building}}.

\bibitem[Zinkevich et~al., 2007]{zinkevichRegretMinimizationGames}
Zinkevich, M., Johanson, M., Bowling, M., and Piccione, C. (2007).
\newblock Regret {{Minimization}} in {{Games}} with {{Incomplete Information}}.
\newblock {\em Advances in Neural Information Processing Systems},
  20:1729--1736.

\end{thebibliography}
\end{document}